\numberwithin{equation}{section}
\newcommand{\be}{\begin{equation}}
\newcommand{\ee}{\end{equation}}
\newcommand{\ba}{\begin{eqnarray}}
\newcommand{\ea}{\end{eqnarray}}
\newcommand{\hf}{\frac{1}{2}}
\newcommand\req[1]{(\ref{#1})}
\newcommand\F{\mathbb{F}}
\newcommand\C{\mathbb{C}}
\newcommand\N{\mathbb{N}}
\newcommand\Q{\mathbb{Q}}
\newcommand\R{\mathbb{R}}
\newcommand\Z{\mathbb{Z}}
\newcommand\KKK{\mathcal{K}}
\newcommand\MMM{\mathcal{M}}
\newcommand\OOO{\mathcal{O}}
\newcommand\PPP{\mathcal{P}}
\newcommand\TTT{\mathcal{T}}
\newcommand\Aff{\mathop{\mathrm{Aff}}}
\newcommand\Aut{\mathop{\mathrm{Aut}}}
\newcommand\diag{\mathop{\mathrm{diag}}}
\newcommand\disc{\mathop{\mathrm{disc}}}
\newcommand\GL{\mathop{\mathrm{GL}}}
\newcommand\Hom{\mathop{\mathrm{Hom}}}
\newcommand\Pic{\mathop{\mathrm{Pic}}}
\newcommand\PSL{\mathop{\mathrm{PSL}}}
\newcommand\SO{\mathop{\mathrm{SO}}}
\newcommand\spann{\mathop{\mathrm{span}}\nolimits}
\newcommand\SU{\mathop{\mathrm{SU}}}
\newcommand\fa{\forall\,}
\newcommand\m{\mathcal}
\newcommand\qu{\overline}
\newcommand\rk{\mathop{\mathrm{rk}}\,}
\newcommand\wt{\widetilde}
\newcommand\wh{\widehat}
\newcommand\ord{\mathop{\mathrm{ord}}}
\newtheorem{definition}{Definition}[subsection]
\newtheorem{corollary}[definition]{Corollary}
\newtheorem{prop}[definition]{Proposition}
\newtheorem{theorem}[definition]{Theorem}
\newtheorem{remark}[definition]{Remark}
\newenvironment{proof}{\textsl{Proof:}}{\hspace*{\fill}$\blacksquare$\\}
\newenvironment{proofsketch}{\textsl{Proof (sketch):}}{\hspace*{\fill}$\blacksquare$\\}
\newenvironment{example}{\noindent\textsl{Example:}}{\hspace*{\fill}\\}
\title{\mbox{}\hfill {\small DCPT-11/31}\\[40pt]
The overarching finite symmetry group of Kummer surfaces in the Mathieu group $M_{24}$}
\date{\vspace{-5ex}}
\author{\\ \Large{Anne Taormina\footnote{anne.taormina@durham.ac.uk}\;\; and Katrin Wendland\footnote{katrin.wendland@math.uni-freiburg.de}}\\ \\ \\
 \normalsize{$^*$Centre for Particle Theory \& Department of Mathematical Sciences,} \\ \normalsize{Durham University, Science Laboratories, South Road, Durham, DH1 3LE, U.K. }\\
 \normalsize{$^{\dagger}$Mathematics Institute, Albert-Ludwigs-Universit\"at Freiburg,}\\
 \normalsize{Eckerstra\ss e 1, Freiburg im Breisgau, D-79104, Germany.  }}
\begin{document}

\maketitle

\setlength{\parindent}{0pt}
\begin{abstract}
In view of a potential interpretation of the role of the Mathieu group $M_{24}$ in 
the context of strings compactified on K3 surfaces, we develop techniques 
to combine groups of symmetries from  different K3 surfaces to
larger `overarching' symmetry groups. 
We construct a bijection between 
the full integral homology lattice of K3 and the Niemeier lattice of type $A_1^{24}$,
which is simultaneously compatible with the finite symplectic automorphism
groups of  all  Kummer surfaces lying on {an appropriate path in moduli space 
connecting the square and the tetrahedral Kummer surfaces}. The Niemeier lattice
serves to express all these symplectic automorphisms as
elements of the Mathieu group $M_{24}$, generating the `overarching finite symmetry group'
$(\Z_2)^4\rtimes A_7$ of Kummer surfaces. This group has order $40320$, thus
surpassing the size of the largest finite symplectic automorphism
group of a K3 surface by orders of magnitude. For every Kummer surface this group
contains the group of symplectic automorphisms 
leaving the K\"ahler class invariant which is
induced from the underlying torus.
Our results are in line with the existence proofs of Mukai and Kondo,
that finite groups of symplectic automorphisms of K3 are subgroups of one of eleven
subgroups of $M_{23}$, and we extend their techniques of lattice embeddings
for all Kummer surfaces with K\"ahler class induced from the underlying torus.
\end{abstract}

\section{Introduction}
$M_{24}$ is the largest in a family of five sporadic groups - amongst the 26 appearing in the 
classification of finite simple groups - that has rekindled interest in the mathematical physics 
community following an intriguing remark published by Eguchi, Ooguri and Tachikawa \cite{eot10}. This remark stems from an 
expression for the elliptic genus of a K3 surface that uses knowledge of 2-dimensional $N=4$ superconformal field theory and Witten's construction of elliptic genera {\cite{akmw87,wi87}}. That the K3 elliptic genus, 
which is a weak Jacobi form of weight 0 and index 1, may be expanded in a linear combination of $N=4$ 
superconformal characters is not surprising. Indeed in the context of superstring theory, it has long been established that compactification on a K3 surface, which is a hyperk\"ahler manifold, yields a world-sheet theory that is invariant under $N=4$ superconformal transformations. The K3 elliptic genus may be calculated as a specialisation of the corresponding partition function, which is a sesquilinear expression in the $N=4$ characters \cite{eoty89}.

What {\em is} surprising and remains to be fully understood, is that the coefficients of the non-BPS $N=4$ characters in the elliptic genus 
decomposition coincide with the dimensions of some irreducible and reducible representations of the sporadic group $M_{24}$. Actually, on the basis of the information encoded in the K3 elliptic genus  alone, the dimensions could be those of representations of the Mathieu group $M_{23}$, the stabilizer in $M_{24}$ of an element in the set ${\cal I}=\{1,2,...,24\}$, when viewing $M_{24}$ as the group of permutations of 24 elements preserving the extended binary Golay code ${\cal G}_{24}$.  It turns out that this Mathieu group information is encoded in the K3 elliptic genus in the form of a weakly holomorphic mock modular form of weight $\hf$ on $SL(2,\mathbb{Z})$, suggesting the existence of a ``Mathieu Moonshine'' phenomenon
\cite{ch10,ghv10a,ghv10b,go10}.

Yet, an interpretation of the appearance of $M_{24}$ as a symmetry group within a theory of strings compactified on K3 is lacking. One difficulty is that the $M_{24}$ sporadic group does not act in a conventional way on the string states. Rather, the symmetry manifests itself when a specific subset of BPS states is considered. Another particularity is that by nature, 
the elliptic genus is an invariant on each irreducible component of the moduli space of $N=(2,2)$ superconformal
field theories. The moduli
space of  superconformal 
theories on K3 is such an irreducible component. Hence, the $M_{24}$-information the elliptic genus carries is unaltered when surfing the K3 moduli space between generic points and isolated points with enhanced symmetry, of which Gepner models are the main examples. This suggests that $M_{24}$, in a sense yet to be uncovered, `overarches' the symmetries associated with all superconformal field theories in the K3 moduli space.

{Although we have, at this stage, little new to say regarding the interpretation of an $M_{24}$-action within the framework of 
strings compactified on K3 surfaces, we show here how an overarching symmetry, smaller than $M_{24}$, emerges when considering 
the finite groups of symplectic automorphisms of {\em all}  K3 surfaces of a particular type.
Namely, we consider symmetry groups of Kummer surfaces, by which we mean the groups of 
those symplectic automorphisms that
preserve the K\"ahler class 
induced by that  of the underlying complex torus. 
The results 
presented in this work should be regarded as an attempt to set the scene for further investigations pinning down 
an `overarching' $M_{24}$-action as alluded to above. We revisit the results obtained by Mukai in \cite{mu88},  
stating that any finite group of symplectic automorphisms of a K3 surface is isomorphic to  a subgroup of the 
Mathieu group $M_{23}$ which has at least 5 orbits on the set ${\cal I}$ of 24 elements.} We develop a technique 
which allows to combine the symmetry
groups of several different Kummer surfaces. With this technique, we obtain the group
$(\Z_2)^4\rtimes A_7$ of order $2^7\cdot 3^2\cdot 5\cdot 7=40320$, which is  a maximal subgroup of $M_{23}$. 
This group contains as proper subgroups all symplectic automorphism groups of 
Kummer surfaces preserving the induced K\"ahler class. In this sense, we find an `overarching' 
symmetry group of all Kummer surfaces.

{In \cite{ko98}, Kondo rederives Mukai's result, using
ingenious lattice techniques. The framework we set up here is in this spirit. We
use the Niemeier lattice of type $A_1^{24}$, denoted $N$ hereafter, as a device which for all Kummer surfaces
encodes the action of the symmetry groups
as automorphisms of the extended binary Golay code, 
represented by permutations on 24 elements belonging to $M_{24}$. This encoding is linked to the construction 
of a bijection {$\theta$}
between the full integral homology lattice $H_\ast(X,\Z)$ of a Kummer surface $X$ 
and the negative definite version of the Niemeier lattice $N$, denoted $N(-1)$. 
Here, the lattice $H_\ast(X,\Z)$ is identified with the standard unimodular lattice of the appropriate
signature by an isometry which is induced by the Kummer construction and thus is naturally fixed and
compatible with all Kummer surfaces.
Although in general, 
the bijection {$\theta$} between $H_\ast(X,\Z)$ and $N(-1)$
depends on the complex structure and K\"ahler class of the Kummer surface considered, 
we are able to construct a unique bijection ${\Theta}$, up to a few choices of signs, 
which for two specific Kummer surfaces
denoted $X_0$ and $X_{D_4}$  yields 
an action of their  distinct  symmetry groups, ${\cal{T}}_{64}$ and ${\cal{T}}_{192}$ respectively, on the same 
Niemeier lattice $N$. 
{Moreover, $\Theta$ is compatible with the \textsl{generic symmetry group} 
$G_t\cong(\Z_2)^4$ of Kummer surfaces.}
In fact, we argue that we are naturally working in {a smooth connected cover} 
$\wt\MMM_{hk}$ of the moduli space of hyperk\"ahler structures on K3. 
Namely, the symplectic automorphism group of a K3 surface
which preserves a given K\"ahler class solely depends on the hyperk\"ahler structure determined by the invariant
complex structure and K\"ahler class. The choice of an isometric identification of $H_\ast(X,\Z)$ with a fixed standard
lattice amounts to the transition to the {smooth connected cover $\wt\MMM_{hk}$ of the moduli space}. 
We show that  
our bijection
${\Theta}$ is compatible with the {symmetry group of all K3 surfaces} 
along a particular path 
which connects  $X_0$ and $X_{D_4}$ in $\wt\MMM_{hk}$. 
{Indeed, there exists such a path consisting of Kummer surfaces, all of whose symmetry groups 
away from beginning and end of the path
restrict to the generic one, $G_t\cong(\Z_2)^4$.}
Our bijection ${\Theta}$
therefore effectively  reveals the presence of a larger, overarching and 
somehow hidden symmetry that could manifest itself  in an unsuspected way within  string theories 
compactified on K3 surfaces. In our eyes, the significance of our result is that it provides a framework 
allowing to pin down {overarching symmetry groups that transcend known symmetry groups of 
K3 surfaces given by distinct points in our moduli space.} In the specific case of $\wt\MMM_{hk}$, one 
ultimately wishes the overarching {symmetry group} to be $M_{24}$, in accordance with the 
information encoded in the K3 elliptic genus. Although the overarching symmetry group $(\Z_2)^4\rtimes A_7$ 
we find is two orders 
of magnitude larger than the biggest finite symplectic
automorphism group of a K3 surface, 
it is still by orders of magnitude smaller than the Mathieu group $M_{24}$. 
In order to obtain $M_{24}$,
 further techniques and different lattices are required, and this is beyond the scope of the present work.

Mathematically, one of our main results  is the following theorem,
which holds for all Kummer surfaces  and which we prove in section \ref{k3geometry},
see theorem  \ref{biglattice}:

{\bf Theorem.}
{\em Let $X$ denote a Kummer surface with underlying torus $T$, equipped with the 
complex structure and K\"ahler class which are induced from $T$.
By $E_{\vec a}, \vec a\in\F_2^4$, we denote the classes in $H_2(X,\Z)$  obtained 
in the Kummer construction by blowing up the $16$ singular points of
$T/\Z_2$.  Let $G$ denote the
group of symplectic automorphisms of $X$ which preserve the K\"ahler class. With 
$L_G= \left( H_\ast (X,\Z)^G \right)^\perp \cap H_\ast(X,\Z)$ and
$\upsilon_0\in H_0(X,\Z)$,
$\upsilon\in H_4(X,\Z)$ such that
$\langle\upsilon_0,\upsilon\rangle=1$, and $e:={1\over2}\sum_{\vec a\in\F_2^4} E_{\vec a}$,
the lattice $M_G(-1)$ with
$$ 
M_G:= L_G \oplus \mbox{span}_\Z \left\{ e, \upsilon_0-\upsilon \right\}
$$
can be primitively embedded in the Niemeier lattice $N$ of type $A_1^{24}$.}

The theorem generalizes Kondo's proof \cite{ko98} to Mukai's theorem \cite{mu88}
in the following sense:
For a K3 surface $X$, Kondo proves that there exists some Niemeier lattice
$\wt N$, such that the lattice $L_G$ can be primitively
embedded in $\wt N(-1)$. Although Kondo primitively embeds a lattice $L_G\oplus\langle-2\rangle$,
the additional direction $\langle-2\rangle$ is not identified geometrically within $H_\ast(X,\Z)$.
Our first contribution thus is the natural geometric interpretation
$\langle-2\rangle=\spann_\Z\{\upsilon_0-\upsilon\}$. Second, for every Kummer surface
with induced K\"ahler class, our theorem shows that in Kondo's proof, the special Niemeier
lattice $N$ of type $A_1^{24}$ can always be used\footnote{This nicely ties in with 
Mukai's result \cite[Appendix]{ko98} that there exists a symplectic action on the special Niemeier lattice $N$
for each
group $G$ in Mukai's classification.}.
Third, the lattice $M_G$ which we embed
contains Kondo's lattice $L_G$ and obeys ${\rm rk}\,(M_G)={\rm rk}\,(L_G)+2$. 
This embedding is the restriction to $M_G$ of the bijection $\Theta$ between
$H_\ast(X,\Z)$ and $N(-1)$
described above. The construction of such a bijection is an application of 
the ``gluing techniques"
due to Nikulin \cite{ni80b,ni80} that build on previous work by Witt \cite{wit41} and 
Kneser \cite{kn57}.  
Our specific bijection ${\Theta}$ isometrically  identifies two
{\em different} primitive sublattices $M_{\m T_{192}}$ and $M_{\m T_{64}}$  
in $H_\ast(X,\Z)$ with their images. 
Furthermore, we force these embeddings to be equivariant with respect to the two
corresponding symmetry groups
$G=\m T_{192}$ and $G=\m T_{64}$, thus obtaining  natural actions of these two groups
on $N$. This allows us to view $N$ as a device which carries both these
group actions, and in fact the actions of all symmetry groups along an appropriate
path in moduli space,
which  combine to the action of a larger, overarching
group. Interestingly, the resulting group already gives the overarching 
symmetry group of all Kummer surfaces.
{Since our techniques confine us to the study of symmetries
of Kummer surfaces, the fact that we can generate the entire overarching symmetry group
of Kummer surfaces may very well mean that on transition to more general techniques, employing
the Leech lattice instead of the Niemeier lattice $N$, say, one could obtain the entire group $M_{24}$.
We are currently working on such a generalisation.}

Our techniques were originally designed to study non-classical symmetries of superconformal
field theories compactified on K3 surfaces, particularly to develop a device which could maybe
distinguish between those symmetries which play a role for ``Mathieu Moonshine" and
those that don't. Indeed\footnote{We are grateful to Rob Curtis for explaining to us in 
April 2010 that the well-known symmetry  group   \break $C_4^3\rtimes S_4$
of the Gepner model $(2)^4$ is not a subgroup of $M_{24}$.},
in the beautiful work \cite{ghv11} it is proved that the symmetry groups of superconformal
field theories on K3 in general are not subgroups of $M_{24}$, but instead
that they are all contained in the Conway group $C\!o_1$. Our findings could in fact indicate
that the classical symmetries of K3 are the only ones that play a role for ``Mathieu Moonshine".

The paper is organised as follows. 

{Section \ref{lattice} starts with a short review of Nikulin's  lattice gluing prescription and  illustrates it in two cases of direct interest to us:  the gluing of the Kummer lattice to the lattice stemming from the underlying torus of a given Kummer surface, including an extension of this technique to recover the full integral homology, as well as the reconstruction of the Niemeier lattice $N$ associated with the root lattice of $A_1^{24}$ through the gluing of two different pairs of sublattices. 
This construction of the integral K3-homology provides an isometric identification of the homology
lattice $H_\ast(X,\Z)$ with a standard lattice which is induced by the Kummer construction and which
is used throughout this work.
Built in this section is  proposition \ref{piprim}, stating and proving  that 
the Niemeier lattice $N$ possesses a rank 16 primitive sublattice $\wt \Pi$ which, up to a total reversal of signature, is isometric to the Kummer lattice $\Pi$. That the primitive embedding of the Kummer lattice in $N(-1)$ is unique
up to automorphisms of $N$ is proven in proposition \ref{uniquepi}. Moreover, an explicit realisation of the primitive sublattice $\wt \Pi$ that proves extremely useful in our construction  is provided via the map \req{map}. These properties of the Niemeier lattice $N$ of type $A_1^{24}$, to our knowledge, 
had not been noticed before. }

{After reviewing  basic albeit crucial aspects of complex structures, K\"ahler forms and symplectic automorphisms of K3 surfaces, 
we argue that by our assumptions, we are in fact working on the smooth connected cover $\wt\MMM_{hk}$ of the
moduli space of hyperk\"ahler structures on K3. 
Next, after
providing a summary of the results obtained by Mukai and Kondo that are important for our analysis, Section \ref{k3geometry} generalizes Kondo's results to the full integral homology lattice to fit our purpose. Moreover, it introduces the lattice 
$M_G\supsetneq L_G$ that is so crucial for our construction, and proves that this lattice, after a total reversal of signature, may be embedded primitively in the Niemeier lattice $N$, as stated in theorem \ref{biglattice} and as discussed above.}

{Section \ref{overarching} builds on the previous sections to bring to light the overarching symmetry that is at the centre of this work. This is achieved by constructing a linear bijection $\Theta$ that extends the isometry between the Kummer lattice and $\wt \Pi(-1)$ to the full integral homology $H_\ast(X,\Z)$, which as mentioned above is explicitly identified with a fixed standard
lattice by means of the Kummer construction,
and $N(-1)$. We start in  subsection \ref{genericKummer} with the discussion of the translational automorphism group that is a normal subgroup of all symplectic automorphism groups of Kummer surfaces which preserve the induced K\"ahler class, and show how its action on $X$ induces an action of $(\Z_2)^4$ on $N$. Subsections \ref{tetrahedral} and \ref{square} are 
dedicated to the tetrahedral Kummer surface $X_{D_4}$ with symmetry group ${\cal T}_{192} \subset M_{24}$, and to the Kummer surface $X_{0}$ obtained from the square torus $T_0$, whose symmetry 
group is ${\cal T}_{64}\subset M_{24}$. In subsection \ref{overmen} we prove that these
two Kummer surfaces enjoy the same `overarching' linear bijection ${\Theta}\colon H_\ast(X,\Z)\longrightarrow N(-1)$, leading to the concept of overarching symmetry.}
Moreover, we construct a smooth path from $X_0$ to $X_{D_4}$ in the smooth connected cover $\wt\MMM_{hk}$ of
the moduli space of hyperk\"ahler structures on K3 such that the symmetry group of every K3 surface along the path
is compatible with   ${\Theta}$.



{We end by giving our conclusions and interpretations of our results, including some evidence
for the expectation
that our techniques may eventually show Mathieu moonshine to be linked to the group $M_{24}$
rather than $M_{23}$.  
{Appendix} \ref{MOG}
provides a brief review of some of the notions and techniques we borrowed from group theory in the course of our work, while appendix \ref{latticeappendix}
gathers notations for most lattices introduced and used in the main body of the text.}

\section{Lattices: Gluing and primitive embeddings}\label{lattice}
In this work, we investigate certain finite groups of symplectic 
automorphisms of Kummer surfaces in terms of subgroups
of the Mathieu group $M_{24}$. 
As we explain below, due to the Torelli Theorem \ref{torellithm} 
for K3 surfaces, on 
the one hand, and the natural action of $M_{24}$ on the Niemeier lattice $N$
of type $A_1^{24}$
by proposition \ref{M24onN}, 
on the other hand, both group actions are naturally described in terms of lattice isometries. 
Therefore, some of the main techniques of our work rest on lattice constructions, which shall be 
recalled in this section. As a guide to the reader, table \ref{latticetable} in appendix 
\ref{latticeappendix} lists a number of lattices 
that we introduce in this section and that we continue to use throughout this work.  

Let us first fix some standard terminology (see for example \cite{ni80,mo84}):
\begin{definition}
Consider a $d$-dimensional real vector space $V$ with  scalar product 
$\langle\cdot,\cdot\rangle$ of  signature $(\gamma_+,\gamma_-)$.
\begin{enumerate}
\item
A \textsc{lattice} $\Gamma$ in $V$ is a free $\Z$-module $\Gamma\subset V$ together
with the symmetric bilinear form induced by $\langle\cdot,\cdot\rangle$
on $\Gamma$. Its \textsc{discriminant} $\disc(\Gamma)$  
is the determinant of the associated bilinear form on $\Gamma$.  
The lattice $\Gamma$ is \textsc{non-degenerate} if $\disc(\Gamma)\neq0$, and it is 
\textsc{unimodular} if $|\!\disc(\Gamma)|=1$.  
By $\Gamma(N)$ with $N$ an integer we denote the same $\Z$-module as $\Gamma$, 
but with bilinear form rescaled by a factor of $N$.
\item
A lattice $\Gamma$ in $V$ is \textsc{integral} if its bilinear form has 
values in $\Z$ only. 
It is \textsc{even} if the associated quadratic form has values in $2\Z$ only.  
\item
If $\Gamma$ is a non-degenerate even lattice in $V$, then there is a natural 
embedding $\Gamma\hookrightarrow\Gamma^\ast=\Hom(\Gamma,\Z)$ by means 
of the bilinear form on $\Gamma$. Thus there is an induced $\Q$-valued
symmetric bilinear form on $\Gamma^\ast$. The 
\textsc{discriminant form} $q_\Gamma$  of $\Gamma$
is the map $q_\Gamma: \Gamma^\ast/\Gamma\rightarrow\Q/2\Z$ together 
with the symmetric bilinear form on the \textsc{discriminant group}
$\Gamma^\ast/\Gamma$ with values in $\Q/\Z$, 
induced by the quadratic form and symmetric bilinear form on $\Gamma^\ast$, respectively.
\item
A sublattice $\Lambda\subset\Gamma$ of a lattice $\Gamma$ 
in $V$ is a \textsc{primitive sublattice}
if $\Gamma/\Lambda$ is free.
\end{enumerate}
\end{definition}
By this definition, we can view every sublattice $\Lambda\subset\Gamma$ of $\Gamma\subset V$ 
as a lattice in the vector space $\Lambda\otimes\R$ with the induced quadratic form. In fact, 
since $V=\Gamma\otimes\R$ for non-degenerate $\Gamma$, we can dispense with the mention of the
vector space that any of our lattices is constructed in. Note that
$\Lambda$ is a primitive sublattice of $\Gamma$ if and only if
$\Lambda=\left(\Lambda\otimes\Q\right)\cap\Gamma$. 
Moreover,  for every non-degenerate even lattice $\Gamma$ we have
$|\disc(\Gamma)| = |\Gamma^\ast/ \Gamma|$, hence  such
$\Gamma$ is unimodular if and only if $\Gamma=\Gamma^\ast$.
The classification of even unimodular lattices is known to some extent:
\begin{theorem}\label{latticeclassification}
Let $V$ denote  a $d$-dimensional real vector space 
with scalar product $\langle\cdot,\cdot\rangle$ of signature $(\gamma_+,\gamma_-)$
with $d=\gamma_++\gamma_->0$.
\begin{enumerate}
\item
An even unimodular lattice $\Gamma\subset V$ exists if and only if $\gamma_+-\gamma_-\equiv 0\mod8$.
\item 
If $\gamma_+>0$ and $\gamma_->0$ and $\gamma_+-\gamma_-\equiv 0\mod8$,  then up to lattice isometries, there
exists a unique even unimodular lattice 
$\Gamma\subset V$.
\item
If $d=24$ and $\gamma_-=0$, then there are $24$ distinct isometry classes of 
even unimodular lattices
$\Gamma\subset V$. Each of these isometry classes is uniquely determined
by the sublattice
\be\label{rootlattice}
R:=\spann_\Z\left\{\delta\in\Gamma \mid \langle\delta,\delta\rangle =2 \right\}
\ee
which has rank $0$ or $24$ and is isometric to the
root lattice of some semi-simple complex Lie algebra. 
\end{enumerate}
\end{theorem}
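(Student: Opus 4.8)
The plan is to treat the three parts separately, since each rests on a different circle of ideas: parts (1) and (2) are classical facts about even unimodular quadratic forms, whereas part (3) is the Niemeier classification, whose proof is of a genuinely harder character. For part (1), the \emph{only if} direction I would deduce from the Milgram--Gauss sum formula (equivalently van der Blij's lemma): for any non-degenerate even lattice $\Gamma$ the residue of $\gamma_+-\gamma_-$ modulo $8$ is determined by the discriminant form $q_\Gamma$ through the Gauss sum $\sum_{x\in\Gamma^\ast/\Gamma}\exp(\pi i\, q_\Gamma(x))$. When $\Gamma$ is unimodular the discriminant group is trivial, the Gauss sum equals $1$, and one reads off $\gamma_+-\gamma_-\equiv 0\bmod 8$. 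For the \emph{if} direction I would exhibit explicit models: the hyperbolic plane $U$ of signature $(1,1)$ and the root lattice $E_8$ of signature $(8,0)$ are even and unimodular, as is $E_8(-1)$ of signature $(0,8)$. Given a target signature with $\gamma_+-\gamma_-=8k$, an orthogonal direct sum of $\min(\gamma_+,\gamma_-)$ copies of $U$ together with $|k|$ copies of $E_8$ or of $E_8(-1)$, according to the sign of $k$, realizes it; when $\gamma_-=0$ one simply uses $E_8^{\oplus k}$. Since orthogonal sums of even unimodular lattices are even unimodular, this settles existence.

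For part (2) the strategy is to reduce uniqueness to two inputs. First, all even unimodular lattices of a fixed signature lie in a single genus: the genus is pinned down by the local invariants at every place, and for an even unimodular lattice these are rigid, unimodularity fixing the $p$-adic behaviour at all $p$ and the signature fixing the real place, so they agree with those of the standard model constructed in part (1). Second, I would invoke the theorem of Eichler and Kneser \cite{kn57} (strong approximation for the spin group) that an indefinite lattice of rank at least three has class number one, i.e.\ its genus is a single isometry class; the rank-two case reduces to the uniqueness of $U$, which is elementary. The hypotheses $\gamma_+>0$ and $\gamma_->0$ guarantee indefiniteness, so the genus found above consists of one class and the lattice is unique up to isometry.

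For part (3) I would first establish the structural constraints on the root sublattice $R$ and then enumerate. The main tool is the theta series $\Theta_\Gamma(\tau)=\sum_{v\in\Gamma}q^{\langle v,v\rangle/2}$, a modular form of weight $12$ for $SL(2,\Z)$; since that space is two-dimensional and $\Theta_\Gamma$ has constant term $1$, it is determined by its number $N_2$ of norm-$2$ vectors, and $N_2=24h$ where $h$ turns out to be a common Coxeter number of all irreducible components of $R$. I would obtain this balance condition by pairing the lattice with a harmonic polynomial of degree $2$: the twisted theta series then has weight $14$ and no constant term, hence lies in the space of cusp forms of weight $14$, which vanishes; the resulting vanishing forces the roots into a tight (eutactic) configuration, from which equality of the Coxeter numbers across components follows and, in particular, $R$ is either empty or spans $\R^{24}$. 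It then remains to enumerate the $24$-dimensional root systems whose components share a single Coxeter number and, for each, to determine the unique even unimodular overlattice obtained by adjoining an appropriate glue code to the root lattice, the empty case $R=\varnothing$ yielding the Leech lattice. Carrying this out produces exactly the $23$ lattices with roots together with the Leech lattice.

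I expect the principal obstacle to be twofold. In part (2) the single-class-per-genus statement is not elementary and genuinely requires strong approximation, which I would cite rather than reprove. In part (3) the honest labour lies in the combinatorial enumeration of admissible root systems and, for each, the verification that a compatible glue code exists and is unique up to isometry; this is the content of Niemeier's classification, streamlined by Venkov's theta-series argument, and reproducing it in full is lengthy. Accordingly I would give part (1) and the reduction steps of parts (2)--(3) in detail, and cite the deep inputs (Milgram's formula, Eichler--Kneser, Niemeier--Venkov) for the remainder.
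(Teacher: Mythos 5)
Your proposal is correct, but a direct comparison with ``the paper's proof'' is not quite possible: the paper offers no proof of this theorem at all. It states parts (1) and (2) as Milnor's theorem, citing \cite{mi58}, \cite[Ch.~5]{se73} and \cite[Ch.~2]{mihu73}, and attributes part (3) to Niemeier \cite{ni73}; the result is used as a black box throughout. What you have written is a faithful reconstruction of the standard arguments behind those citations, and all three sketches are sound: Milgram's Gauss-sum formula does force $\gamma_+-\gamma_-\equiv 0\bmod 8$ for unimodular discriminant, your $U^{\min(\gamma_+,\gamma_-)}\oplus E_8^{|k|}(\pm 1)$ model realizes every admissible signature, and the Venkov weight-$14$ cusp-form argument correctly yields the eutaxy condition, the common Coxeter number, and the rank-$0$-or-$24$ dichotomy for $R$. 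The one place where your route diverges from the source the paper actually points to is part (2): Serre's proof in \cite[Ch.~5]{se73} is more elementary than genus theory plus Eichler--Kneser strong approximation --- it uses Meyer's theorem to produce an isotropic vector in any indefinite unimodular lattice, splits off a hyperbolic plane $U$, and inducts on the rank. Your genus-theoretic argument is equally valid and arguably more conceptual (it explains \emph{why} definite lattices behave differently, namely failure of class number one), but if you wanted the most self-contained proof you would follow Serre. Your honest flagging of the deep inputs (strong approximation, the Niemeier--Venkov enumeration) as citations rather than reproved steps matches exactly how the paper itself treats the theorem.
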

Part 1.~and 2.~of the above theorem are known as
\textsc{Milnor's theorem} (see \cite{mi58} and \cite[Ch.~5]{se73}, \cite[Ch.~2]{mihu73} for a proof), 
while the classification 3.~of positive definite even unimodular lattices of rank $24$ is due to 
\textsc{Niemeier} \cite{ni73}, and such lattices are named after him:
\begin{definition}\label{niemeierdef}
A \textsc{Niemeier lattice} is a positive definite even unimodular lattice $\Gamma$
of rank $24$. Its \textsc{root sublattice} \mbox{\rm(\ref{rootlattice})} is the lattice $R\subset\Gamma$ 
that is generated by those elements of $\Gamma$ on which the quadratic form yields the
value $2$. If $R$ is isometric to the root lattice of the semi-simple Lie algebra
$\mathfrak g$, then we say that $\Gamma$ is a \textsc{Niemeier lattice of type $\mathfrak g$}.
\end{definition}
In our applications, we often find ourselves in a situation where a primitive sublattice 
$\Lambda\subset\Gamma$ of an even unimodular lattice $\Gamma$ is well understood, 
and where we need to deduce properties of the lattice $\Gamma$ from those of $\Lambda$. 
In such situations,   \textsc{gluing techniques} as well as \textsc{criteria for primitivity
of sublattices} that were developed by Nikulin in \cite{ni80b,ni80} prove tremendously useful, 
see also \cite{mo84}. We recall these techniques in the next subsection.
%
%
\subsection{Even unimodular lattices from primitive sublattices}\label{glue}
If $\Gamma$ is an even unimodular lattice and $\Lambda\subset\Gamma$ 
is a primitive sublattice, 
then according to \cite[Prop.~1.1]{ni80b} the discriminant forms $q_\Lambda$ and $q_{\m V}$ 
of $\Lambda$ and its orthogonal complement $\m V:=\Lambda^\perp\cap\Gamma$ obey
$q_\Lambda=-q_{\m V}$. Moreover,
\begin{prop}\label{glueperp}
Let $\Gamma$ denote an even unimodular lattice and $\Lambda\subset\Gamma$ 
a non-degenerate primitive sublattice. Then 
the embedding $\Lambda\hookrightarrow\Gamma$ is uniquely determined by an isomorphism 
$\gamma:\Lambda^\ast/\Lambda\rightarrow {\m V}^\ast/{\m V}$, where
$\m V\cong\Lambda^\perp\cap\Gamma$ and 
the discriminant forms  
obey $q_\Lambda=-q_{\m V}\circ\gamma$. Moreover,
\be\label{glueconstruction}
\Gamma\cong\left\{ (\lambda,v)\in \Lambda^\ast\oplus {\m V}^\ast\mid
\gamma(\qu\lambda)=\qu v \right\},
\ee
where for $L$ a non-degenerate even lattice,
$\qu l$ denotes the projection of $l\in L^\ast$  to $L^\ast/L$. 
\end{prop}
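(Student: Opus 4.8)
The plan is to realize $\Gamma$ as a \emph{gluing} of $\Lambda$ and its orthogonal complement through a finite overlattice computation. Set $\mathcal V := \Lambda^\perp\cap\Gamma$. Since $\Lambda$ is non-degenerate, so is $\mathcal V$, and $\mathcal V$ is automatically primitive in $\Gamma$: if $nx\in\mathcal V$ for $x\in\Gamma$ and $n\neq 0$, then $x\perp\Lambda$, whence $x\in\mathcal V$. Because $\Lambda\otimes\R$ and $\mathcal V\otimes\R$ are orthogonal and together span $\Gamma\otimes\R$, the sublattice $\Lambda\oplus\mathcal V$ has finite index in $\Gamma$. First I would record the chain of inclusions
\[
\Lambda\oplus\mathcal V\ \subseteq\ \Gamma\ =\ \Gamma^\ast\ \subseteq\ (\Lambda\oplus\mathcal V)^\ast=\Lambda^\ast\oplus\mathcal V^\ast,
\]
where the middle equality is the unimodularity of $\Gamma$ and the outer inclusions are dual to $\Lambda\oplus\mathcal V\subseteq\Gamma$. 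Hence the finite group $H:=\Gamma/(\Lambda\oplus\mathcal V)$ sits inside $A_\Lambda\oplus A_{\mathcal V}$, with $A_\Lambda:=\Lambda^\ast/\Lambda$ and $A_{\mathcal V}:=\mathcal V^\ast/\mathcal V$, and I would analyse the two coordinate projections $p_\Lambda\colon H\to A_\Lambda$ and $p_{\mathcal V}\colon H\to A_{\mathcal V}$.

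The heart of the argument is to show that both projections are \emph{isomorphisms}. For injectivity of $p_\Lambda$: a class in $\ker p_\Lambda$ is represented by $g=g_1+g_2\in\Gamma$ with $g_1\in\Lambda^\ast$, $g_2\in\mathcal V^\ast$ (the decomposition over $\R$) and $g_1\in\Lambda$; then $g_2=g-g_1\in\Gamma\cap(\mathcal V\otimes\Q)=\mathcal V$ by primitivity of $\mathcal V$, so the class vanishes. The symmetric argument, using primitivity of $\Lambda$, shows $p_{\mathcal V}$ is injective. To upgrade injectivity to bijectivity I would compare orders: from $\disc(\Lambda\oplus\mathcal V)=[\Gamma:\Lambda\oplus\mathcal V]^2\,\disc(\Gamma)$ together with $|\disc\Gamma|=1$ and $\disc(\Lambda\oplus\mathcal V)=\disc(\Lambda)\disc(\mathcal V)$, one obtains $|H|^2=|A_\Lambda|\,|A_{\mathcal V}|$. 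Since $|H|\le|A_\Lambda|$ and $|H|\le|A_{\mathcal V}|$ by injectivity, this forces $|H|=|A_\Lambda|=|A_{\mathcal V}|$, so both $p_\Lambda$ and $p_{\mathcal V}$ are bijective. I would then set $\gamma:=p_{\mathcal V}\circ p_\Lambda^{-1}\colon A_\Lambda\to A_{\mathcal V}$; by construction $H$ is exactly the graph of $\gamma$, and since $\Gamma$ is the full preimage of $H$ under $\Lambda^\ast\oplus\mathcal V^\ast\to A_\Lambda\oplus A_{\mathcal V}$, this is precisely the asserted description \req{glueconstruction}, namely $\Gamma\cong\{(\lambda,v)\in\Lambda^\ast\oplus\mathcal V^\ast\mid\gamma(\qu\lambda)=\qu v\}$.

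Finally I would extract the discriminant-form condition from the fact that $\Gamma$ is even and integral. For $(\lambda,v)\in\Gamma$ orthogonality gives $\langle(\lambda,v),(\lambda,v)\rangle=\langle\lambda,\lambda\rangle+\langle v,v\rangle\in2\Z$; reducing mod $2\Z$ yields $q_\Lambda(\qu\lambda)+q_{\mathcal V}(\qu v)=0$, and as $\qu v=\gamma(\qu\lambda)$ this is exactly $q_\Lambda=-q_{\mathcal V}\circ\gamma$. Polarizing, integrality of the bilinear form likewise gives the matching of the associated $\Q/\Z$-valued forms with a sign, so $\gamma$ is an isometry of discriminant forms up to the overall sign. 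I expect the main obstacle to be the passage from injectivity to bijectivity of the projections: the injectivity is soft and uses only primitivity, whereas identifying $H$ with all of $A_\Lambda$ genuinely requires the unimodularity of $\Gamma$, which enters only through the index--discriminant bookkeeping above; getting the absolute values and the block-diagonal discriminant identity exactly right is the one point demanding care.
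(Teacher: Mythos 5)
Your proof is correct and complete. Note that the paper does not actually prove Proposition \ref{glueperp}; it quotes it as \cite[Prop.~1.1]{ni80b}, so there is no in-text argument to compare against. What you have written is precisely the standard ``glue group'' argument behind Nikulin's result: the chain $\Lambda\oplus\m V\subseteq\Gamma=\Gamma^\ast\subseteq\Lambda^\ast\oplus\m V^\ast$, injectivity of the two projections of $H=\Gamma/(\Lambda\oplus\m V)$ via primitivity of $\Lambda$ and of $\m V$, the index--discriminant identity $|H|^2=|A_\Lambda|\,|A_{\m V}|$ forcing both projections to be bijections, and evenness of $\Gamma$ yielding $q_\Lambda=-q_{\m V}\circ\gamma$. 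All the delicate points you flag (primitivity of $\m V$ being automatic, the passage from injectivity to bijectivity resting solely on unimodularity) are handled correctly.
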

Note that \req{glueconstruction} allows us to describe $\Gamma$ entirely by means of its 
sublattices $\Lambda$ and $\Lambda^\perp\cap\Gamma$ along with the isomorphism 
$\gamma$.
\vspace{0.5em}

\begin{example}
Recall the \textsc{hyperbolic lattice}, i.e.\ the even unimodular lattice of signature $(1,1)$  with quadratic form
\be\label{hyperbolic}
\left( \begin{array}{cc} 0&1\\1&0 \end{array}\right)
\ee
with respect to generators $\upsilon_0,\,\upsilon$ over $\Z$. 
We generally denote the hyperbolic lattice by $U$. 
As a useful exercise, the reader should convince herself that  the gluing procedure described 
above allows a reconstruction of $U$ from the two definite sublattices $A^\pm$ generated by 
$a_\pm:=\upsilon_0\pm\upsilon$, respectively, with
$$
U = \left\{ \left.{\textstyle{1\over2}}(n_+ a_+ + n_-a_-) \right| n_\pm\in\Z,\; n_+ + n_-\in 2\Z\right\}
$$
by \req{glueconstruction}.
\end{example}

To apply proposition \ref{glueperp} one needs to know that 
the lattice $\Lambda$ can be primitively embedded in some even, unimodular lattice. 
Nikulin has also developed powerful techniques which determine
whether or not this is the case for abstract non-degenerate lattices 
$\Gamma$ and $\Lambda$. To recall this, we need some 
additional terminology:
\begin{definition}
Consider a finite abelian group $A$. 
The minimum number of generators of $A$ is called 
the \textsc{length $\ell(A)$ of $A$}. 
If
$A=A_q$ carries a quadratic form $q\colon A_q\rightarrow \Q/2\Z$,
then for every prime $p$ let $A_{q_p}$ denote the
\textsc{Sylow $p$-group in $A_q$}, that is, 
$A_{q_p}\subset A_q$ is the maximal subgroup  whose order is a power of $p$.
Then by  $q_p$ we denote the restriction of $q$ to $A_{q_p}$. 
\end{definition}
Note that for every finite abelian group $A$ and for every prime $p$, there exists
a unique Sylow $p$-group in $A$. Moreover, if
$A=A_q$ carries a quadratic form $q$ with values in $\Q/2\Z$, then  $q$ decomposes into an
orthogonal direct sum of all the $q_p$ with prime $p$. 
As a special case of \cite[Thm.~1.12.2]{ni80} used in
precisely this form in Kondo's proof of \cite[Lemma 5]{ko98}, we now have:
\begin{theorem}\label{nik}
Let $\Lambda$ denote a non-degenerate even lattice of signature $(l_+,$ $l_-)$ and
$A_q=\Lambda^\ast/\Lambda$ its discriminant group, 
equipped with the induced quadratic form $q$ with values in $\Q/2\Z$.
Furthermore assume that all the following conditions hold
for the integers $\gamma_+,\gamma_-$ with $\gamma_+\equiv \gamma_-\mod 8$:
\begin{enumerate}
\item $\gamma_+\geq l_+$ and  $\quad \gamma_-\geq l_-$,
\item  $\ell(A_{q})\leq \gamma_++\gamma_--l_+-l_-$,
\item  i. $q_2=({1\over2}) \oplus q_2^\prime$, where $({1\over2})$ is the quadratic form
on the discriminant of a root lattice $\Z(2)$ of type $A_1$ and $q_2^\prime$ is an arbitrary quadratic form
with values in $\Q/2\Z$

or 
ii. $\ell(A_{q_2})< \gamma_++\gamma_--l_+-l_-$,
\item for every prime $p\neq2$, $\ell(A_{q_p})< \gamma_++\gamma_--l_+-l_-$.
\end{enumerate}
Then $\Lambda$ can be primitively embedded in some even unimodular lattice $\Gamma$
of signature $(\gamma_+,\gamma_-)$.
\end{theorem}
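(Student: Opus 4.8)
The plan is to run the gluing construction of Proposition~\ref{glueperp} in reverse. To primitively embed $\Lambda$, of signature $(l_+,l_-)$ and discriminant form $q$, into an even unimodular lattice $\Gamma$ of signature $(\gamma_+,\gamma_-)$, it suffices to produce an even lattice $\m V$ of the \emph{complementary} signature
$$
(v_+,v_-):=(\gamma_+-l_+,\;\gamma_--l_-)
$$
whose discriminant form satisfies $q_{\m V}\cong-q$. Indeed, given such a $\m V$, the condition $q_{\m V}=-q$ furnishes an isomorphism $\gamma\colon\Lambda^\ast/\Lambda\to{\m V}^\ast/{\m V}$ with $q_\Lambda=-q_{\m V}\circ\gamma$, and \req{glueconstruction} then assembles $\Lambda$ and $\m V$ into an even unimodular lattice $\Gamma$ in which $\Lambda$ sits primitively with $\m V\cong\Lambda^\perp\cap\Gamma$. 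Since the signature of $\Gamma$ is $(\gamma_+,\gamma_-)$ by construction, the whole statement reduces to an \emph{existence} question: is there an even lattice with the prescribed signature $(v_+,v_-)$ and discriminant form $-q$?

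That existence question is governed by the genus-theoretic criterion of Nikulin \cite[Thm.~1.10.1]{ni80}, which I would invoke as the engine of the proof: an even lattice with signature $(v_+,v_-)$ and discriminant form $-q$ exists precisely when $v_\pm\geq0$, when $v_++v_-\geq\ell(A_{-q})$, when the Milgram congruence $v_+-v_-\equiv\mathrm{sign}(-q)\bmod8$ holds, and when at each prime $p$ the $p$-adic part of $-q$ can be realised within the available rank. I would check these in turn. Conditions~1 and~2 of the theorem give $v_\pm\geq0$ and, since $\ell(A_{-q})=\ell(A_q)$, also $v_++v_-=\gamma_++\gamma_--l_+-l_-\geq\ell(A_q)$.

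Next I would observe that the signature congruence is automatic. By Milgram's theorem applied to the even lattice $\Lambda$ one has $\mathrm{sign}(q)\equiv l_+-l_-\bmod8$, whence $\mathrm{sign}(-q)\equiv-(l_+-l_-)$, while
$$
v_+-v_-=(\gamma_+-\gamma_-)-(l_+-l_-)\equiv-(l_+-l_-)\bmod8
$$
using the hypothesis $\gamma_+\equiv\gamma_-\bmod8$; the two sides agree. The remaining local conditions are exactly the content of conditions~3 and~4: decomposing $q=\bigoplus_p q_p$ over the Sylow subgroups, at each odd prime the strict inequality $\ell(A_{q_p})<v_++v_-$ of condition~4 leaves a unimodular $\Z_p$-part of positive rank, which can be chosen to adjust the $p$-adic determinant and Hasse invariant freely and so realise $q_p$; at $p=2$ the dichotomy of condition~3 — either the presence of the distinguished summand $(\hf)$, the discriminant form of an $A_1$ root lattice $\Z(2)$, or else the strict inequality $\ell(A_{q_2})<v_++v_-$ — supplies the flexibility needed to realise the $2$-adic form.

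I expect the analysis at $p=2$ to be the crux. Over $\Z_2$ the even and odd quadratic forms interact with the global signature through the Gauss sum, so realising $q_2$ by a $2$-adic lattice of the prescribed rank \emph{compatibly} with $v_+-v_-\bmod8$ is delicate; this is precisely why condition~3 must be stated as a case distinction rather than a single strict inequality. Once each local piece is available, the Hasse--Minkowski local--global principle for lattices in a fixed genus produces a single even lattice $\m V$ realising all of them simultaneously, and the gluing step above then completes the argument. In this way the stated result appears as the special case of \cite[Thm.~1.10.1]{ni80} in which the ambient lattice is forced to be unimodular, so that its complement $\m V$ carries exactly the negated discriminant form of $\Lambda$.
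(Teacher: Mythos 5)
Your proposal is sound, but it is worth noting that the paper does not prove this statement at all: it presents theorem \ref{nik} as a direct quotation of a special case of Nikulin's \cite[Thm.~1.12.2]{ni80}, in the form used in Kondo's proof of \cite[Lemma 5]{ko98}, and offers no argument. What you have written is essentially a reconstruction of how Nikulin himself derives that embedding criterion, namely by reducing the primitive embedding of $\Lambda$ into an even unimodular $\Gamma$ to the existence of an even ``complementary'' lattice $\m V$ of signature $(\gamma_+-l_+,\gamma_--l_-)$ with discriminant form $-q$, gluing via \req{glueconstruction} as in proposition \ref{glueperp}, and then feeding the existence question into the genus-theoretic criterion \cite[Thm.~1.10.1]{ni80}. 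Your verification of the reduction is correct: the glued lattice is even, unimodular, of the right signature, and contains $\Lambda$ primitively; conditions 1 and 2 give the rank inequalities; and the Milgram computation showing that $v_+-v_-\equiv\mathrm{sign}(-q)\bmod 8$ follows from $\gamma_+\equiv\gamma_-\bmod 8$ is exactly right and explains why no separate signature hypothesis appears in the statement. The one place where you do not actually prove anything is the local analysis: that condition 4 (strict inequality at odd $p$) and the dichotomy in condition 3 at $p=2$ suffice to realise the local forms within the prescribed rank is precisely the content of the local clauses of \cite[Thm.~1.10.1]{ni80}, and your paragraph about ``adjusting the $p$-adic determinant and Hasse invariant freely'' and the distinguished summand $({1\over2})$ gestures at, rather than establishes, that fact. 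Since you explicitly invoke that theorem as the engine, this is an acceptable level of detail for a result the paper itself only cites; what your route buys, compared with the paper's bare citation, is an explanation of where each of the four hypotheses enters.
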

%
\subsection{Example: The K3-lattice for Kummer surfaces}\label{exampleK3}
A classical application of the gluing technique summarized in proposition \ref{glueperp} is the description 
of the integral homology of a Kummer surface in terms of the integral homology of its underlying torus 
and the contributions  from the blow-up of singularities. We review this
construction in the present subsection, 
following \cite{pss71,ni75}. 
First recall the definition of K3 surfaces:
\begin{definition}\label{k3definition}
A \textsc{K3 surface} is a compact complex  
surface with trivial canonical bundle and vanishing first Betti number.
\end{definition}
By a seminal result of Kodaira's \cite[Thm. 19]{ko64}, 
as real four-manifolds all K3 surfaces are diffeomorphic. 
The integral homology of a K3 surface $X$ has the following properties 
(see e.g.\ \cite{mi58,sa65}): First,
$H_2(X,\Z)$ has no torsion,
and equipped with the intersection form it is
an even  unimodular lattice of signature $(3,19)$. 
By theorem \ref{latticeclassification} this means that $H_2(X,\Z)$ is uniquely determined
up to isometry. In fact, $H_2(X,\Z)\cong U^3\oplus E_8^2(-1)$, where $U$ is the
hyperbolic lattice with quadratic form \req{hyperbolic} and
$E_8$ is isometric to
the root lattice of the Lie algebra $\mathfrak e_8$. The lattice $U^3\oplus E_8^2(-1)$
 is often called the \textsc{K3-lattice}.
Furthermore, $H_\ast(X,\Z)= H_0(X,\Z)\oplus H_2(X,\Z)\oplus H_4(X,\Z)$ is an even 
unimodular lattice of signature $(4,20)$ and thus $H_\ast(X,\Z)\cong U^4\oplus E_8^2(-1)$
by theorem \ref{latticeclassification} and the above.
A choice of isometries $H_2(X,\Z)\cong U^3\oplus E_8^2(-1)$,
$H_\ast(X,\Z)\cong U^4\oplus E_8^2(-1)$ is called a \textsc{marking} of our K3 surface $X$.
{Such a marking is specified by fixing a standard basis of $H_2(X,\Z)$, $H_\ast(X,\Z)$,
which does not necessarily need to exhibit the structure of $U^3\oplus E_8^2(-1)$,
$U^4\oplus E_8^2(-1)$ in the first instance.}

Let us now recall the \textsc{Kummer construction}, which yields a K3 surface by
a $\mathbb{Z}_2$-orbifold construction from every
complex torus $T$ of dimension $2$. 
Let $T=T(\Lambda)=\mathbb{C}^2/\Lambda$, with $\Lambda\subset\C^2$ a lattice of 
rank $4$ over $\Z$, {and with generators $\vec{\lambda_i}, i=1,\ldots,4$}. The group $\Z_2$ acts 
naturally on $\C^2$ by $(z_1,z_2)\mapsto (-z_1,-z_2)$ and thereby on $T(\Lambda)$. Using Euclidean 
coordinates $\vec{x}=(x_1,x_2,x_3,x_4)$, where $z_1=x_1+ix_2$ and $z_2=x_3+ix_4$, points on the 
quotient $T(\Lambda)/\Z_2$ are identified according to
\ba
\vec{x}&\sim& \vec{x}+\sum_{i=1}^4n_i {\vec\lambda_i},\quad n_i \in \mathbb{Z},\nonumber\\
\vec{x}&\sim&-\vec{x}.\nonumber
\ea
Hence $T(\Lambda)/\Z_2$ has $16$ singularities of type $A_1$, located at the fixed points of 
the $\Z_2$-action. These fixed points are conveniently labelled by the \textsc{hypercube} 
$\mathbb {F}_2^4\cong{1\over2}\Lambda/\Lambda$, where $\mathbb {F}_2=\{0,1\}$ is the finite field with two elements, as
\be\label{labels}
\vec{F}_{\vec a}:=\left[{\textstyle\hf} \sum_{i=1}^4a_i \vec{\lambda_i}\right]\in T(\Lambda)/\Z_2,\quad
\vec{a}=(a_1,a_2,a_3,a_4) \in \mathbb {F}_2^4.
\ee
It is known that the complex surface obtained by minimally
resolving $T(\Lambda)/\Z_2$ is a K3 surface
(see e.g.\ \cite{ni75}):
\begin{definition}\label{defKummer}
Let $T=T(\Lambda)$ denote a complex torus of dimension $2$ with underlying lattice
$\Lambda\subset\C^2$. The
\textsc{Kummer surface with underlying torus $T$} is the complex surface $X$
obtained by  minimally resolving each of the singularities 
$\vec F_{\vec a},\, \vec a\in\F_2^4\cong{1\over2}\Lambda/\Lambda$,
of $T/\Z_2$ given in \mbox{\rm\req{labels}}. The blow-up of $\vec F_{\vec a}$ yields
an exceptional divisor whose class in $H_2(X,\Z)$  we denote by $E_{\vec a}$. 
The natural rational map of degree $2$ from $T$ to $X$, which is defined outside the fixed points of 
$\Z_2$ on $T$, is denoted $\pi\colon T\dashrightarrow X$, and  the  linear map it induces on homology
is denoted $\pi_\ast\colon H_2(T,\Z)\rightarrow H_2(X,\Z)$,
$\pi_\ast\colon H_2(T,\R)\rightarrow H_2(X,\R)$, respectively.
\end{definition}
Note that by definition, the Kummer surface $X$ with underlying torus $T(\Lambda)=\C^2/\Lambda$
carries the  complex structure which is induced from the universal cover $\C^2$ of $T(\Lambda)$. Moreover,
the integral classes $E_{\vec a}\in H_2(X,\Z)$ with $\vec a\in\F_2^4$ arise from blowing up
singularities of type $A_1$ and thus represent rational two--cycles on $X$. Hence by construction,
they generate a sublattice of $H_2(X,\Z)$ of type $A_1^{16}(-1)$, i.e.\ a sublattice of rank $16$ with 
quadratic form $\diag(-2,\ldots,-2)$. This lattice, however, is not primitively embedded
in $H_2(X,\Z)$. Instead, by \cite{pss71,ni75} we have
\begin{prop}\label{Kummerform}
Let $X$ denote a Kummer surface with underlying torus $T$ as in definition 
\mbox{\rm\ref{defKummer}}.
Furthermore, let $\Pi$ denote the smallest primitive sublattice of $H_2(X,\Z)$ containing 
all the $E_{\vec a}, \vec a\in\F_2^4$, and let $K:=\pi_\ast (H_2(T,\Z))$. Then the following holds:
\begin{enumerate}
\item
The  lattice $\Pi$ is the \textsc{Kummer lattice}
\begin{equation}\label{Kummer}
\Pi = \spann_\Z\left\{ E_{\vec a}\mbox{ with } \vec a\in\F_2^4; \;
{\textstyle\hf} \sum_{\vec a\in H} E_{\vec a} \mbox{ with } H\subset \F_2^4 \mbox{ an affine hyperplane} \right\}.
\end{equation}
The embedding of $\Pi$ in $H_2(X,\Z)$ is unique up to automorphisms of $H_2(X,\Z)$.
\item
The lattice $K$ is a primitive sublattice of $H_2(X,\Z)$ with $K \cong U^3(2)$
and $K=\Pi^\perp\cap H_2(X,\Z)$.
\end{enumerate}
\end{prop}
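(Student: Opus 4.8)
The plan is to compute the two lattices $\Pi$ and $K$ simultaneously, exploiting that $H_2(X,\Z)$ is even unimodular of signature $(3,19)$ so that the gluing description of Proposition~\ref{glueperp} applies to any primitive sublattice and its orthogonal complement. I will show that $\Pi$ and $K$ are mutually orthogonal primitive complements, each with discriminant group $(\Z/2)^6$, and that $\Pi$ coincides with the explicit Kummer lattice \req{Kummer}. Throughout, write $P_0:=\spann_\Z\{E_{\vec a}\mid\vec a\in\F_2^4\}\cong A_1^{16}(-1)$, so that $\Pi$ is by definition the saturation $\overline{P_0}=(P_0\otimes\Q)\cap H_2(X,\Z)$; this is indeed the smallest primitive sublattice containing all the $E_{\vec a}$, since any primitive $\Lambda\supseteq P_0$ satisfies $\Lambda=(\Lambda\otimes\Q)\cap H_2(X,\Z)\supseteq\overline{P_0}$. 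Let $\Pi_{\mathrm{expl}}$ denote the lattice on the right-hand side of \req{Kummer} and set $v_H:=\tfrac12\sum_{\vec a\in H}E_{\vec a}$.

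For Part~2 I would first pass from the rational map $\pi$ to the genuine degree-two morphism $\hat\pi\colon\hat T\to X$ obtained by blowing up the $16$ fixed points of $T$ and quotienting by the lifted involution. Since the involution acts as $+\mathrm{id}$ on $H_2(T,\Z)\cong U^3$ (itself the unique even unimodular lattice of signature $(3,3)$ by theorem~\ref{latticeclassification}), the projection formula for the degree-two cover yields injectivity of $\pi_\ast$ together with $\langle\pi_\ast x,\pi_\ast y\rangle_X=2\langle x,y\rangle_T$, whence $K\cong U^3(2)$ with discriminant group $(\Z/2)^6$. A class $\pi_\ast x$ has a representative disjoint from the finite fixed-point set, so $E_{\vec a}\cdot\pi_\ast x=0$; thus $P_0\subseteq K^\perp$ and therefore $\Pi=\overline{P_0}\subseteq K^\perp\cap H_2(X,\Z)$. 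As $\rk K^\perp=22-6=16=\rk\Pi$ and orthogonal complements are primitive, two nested primitive sublattices of equal rank must coincide, giving $\Pi=K^\perp\cap H_2(X,\Z)$ and hence $\Pi^\perp\cap H_2(X,\Z)=K$ once $K$ is known to be primitive. For primitivity of $K$, pairing $\tfrac1n\pi_\ast x$ (with $x$ primitive) against $K$ and using that $\langle x,\pi_\ast z\rangle\in\Z$ together with unimodularity of $U^3$ forces $n\mid 2$, so the only possible overdivisibility is by $2$; ruling this out, i.e.\ showing that no $\tfrac12\pi_\ast x$ is integral, is the geometric input extracted from the branched-cover structure as in \cite{pss71,ni75}.

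For Part~1 the combinatorial core is that, under the identification $\tfrac12 P_0/P_0\cong\F_2^{16}$, the cosets of the generators $v_H$ are the indicator vectors $\mathbf 1_H$ of the affine hyperplanes $H\subset\F_2^4$, and these span the first-order Reed--Muller code $\mathrm{RM}(1,4)$ of length $16$ and dimension $5$, whose nonzero words have weight $8$ (the $30$ affine hyperplanes) or $16$. A direct check of $\langle v_H,v_{H'}\rangle=-\tfrac12|H\cap H'|$ with $|H\cap H'|\in\{0,4,8\}$ confirms that $\Pi_{\mathrm{expl}}$ is even and integral, with $[\Pi_{\mathrm{expl}}:P_0]=2^5$ and hence $|\disc\Pi_{\mathrm{expl}}|=2^{16-10}=2^6$. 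The geometric statement I would invoke is that each $v_H$ genuinely lies in $H_2(X,\Z)$, i.e.\ $\sum_{\vec a\in H}E_{\vec a}$ is $2$-divisible, proved via the double cover $\hat\pi$ and curves on $T$ through the eight two-torsion points labelled by $H$ \cite{pss71,ni75}; this gives $\Pi_{\mathrm{expl}}\subseteq\Pi$. Since $\Pi=K^\perp$ with $K\cong U^3(2)$ forces $|\disc\Pi|=2^6$, and $\Pi_{\mathrm{expl}}\subseteq\Pi$ are even sublattices of equal rank with $|\disc\Pi_{\mathrm{expl}}|=|\disc\Pi|=2^6$, the index $[\Pi:\Pi_{\mathrm{expl}}]$ equals $1$, so $\Pi=\Pi_{\mathrm{expl}}$ is exactly the Kummer lattice \req{Kummer}.

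Finally, uniqueness of the embedding $\Pi\hookrightarrow H_2(X,\Z)$ up to $\Aut(H_2(X,\Z))$ follows from Nikulin's uniqueness criterion for primitive embeddings into an even unimodular lattice: the discriminant form of $\Pi$ is $-q_{U^3(2)}$ on the group $(\Z/2)^6$, the ambient signature is $(3,19)$ and the discriminant group has length $6$, so the isometry class of the complement $U^3(2)$ together with the gluing isomorphism $\gamma$ of Proposition~\ref{glueperp} pins down the embedding up to automorphisms. I expect the main obstacle to be the $2$-divisibility lemma of the third paragraph, equivalently the primitivity of $K$ established in the second: this is the single genuinely geometric input, while everything else reduces to lattice theory and the weight structure of $\mathrm{RM}(1,4)$.
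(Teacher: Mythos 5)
The paper does not prove this proposition; it cites \cite{pss71,ni75} and only appends a three-sentence remark justifying $K\cong U^3(2)$ (degree two of $\pi$ plus $H_2(T,\Z)\cong U^3$), the orthogonality $K\perp\Pi$ (general-position cycles avoid the fixed points), and $K=\Pi^\perp\cap H_2(X,\Z)$ (a primitive sublattice of the complement of equal rank $6=22-16$). Your second paragraph reproduces exactly this remark, and the rest of your argument is a reasonable expansion of what the cited references contain: you correctly isolate the two genuinely geometric inputs --- the $2$-divisibility of $\sum_{\vec a\in H}E_{\vec a}$ for affine hyperplanes $H$, and the primitivity of $K$ --- and defer both to \cite{pss71,ni75}, while the Reed--Muller/discriminant bookkeeping you add on top is correct ($[\Pi_{\mathrm{expl}}:P_0]=2^5$, $|\disc\Pi_{\mathrm{expl}}|=2^6=|\disc K|$, forcing $\Pi=\Pi_{\mathrm{expl}}$). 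So in substance you are following the same route, just with more of the scaffolding made explicit.

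Two cautions. First, the two geometric inputs are not ``equivalent'' as you assert: primitivity of $K$ alone gives $|\disc\Pi|=2^6$, hence $[\Pi:P_0]=2^5$, but does not identify \emph{which} $5$-dimensional doubly even code in $\tfrac12P_0/P_0\cong\F_2^{16}$ occurs; and $2$-divisibility of the hyperplane sums alone gives $\Pi\supseteq\Pi_{\mathrm{expl}}$ but leaves $[\Pi:\Pi_{\mathrm{expl}}]=[\overline{K}:K]$ undetermined. Both facts are needed, and they only become interchangeable once the other is assumed. Second, and more seriously, your uniqueness argument does not close as stated: the standard Nikulin criterion for uniqueness of a primitive embedding into an even unimodular lattice requires $\ell(\Pi^\ast/\Pi)\leq\rk(\Pi^\perp)-2$, whereas here $\ell(\Pi^\ast/\Pi)=6=\rk(K)$, so the generic machinery (determination of the complement's genus and surjectivity onto automorphisms of the discriminant form) is not automatic. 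The uniqueness of the embedding of the Kummer lattice is a specific theorem of \cite{ni75}, and if you want a self-contained argument you would need something in the spirit of the paper's proposition \ref{uniquepi} (which does the analogous work explicitly for the Niemeier lattice) rather than an appeal to the general criterion.
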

Note that $K \cong U^3(2)$ follows by construction, since $H_2(T,\Z)\cong U^3$ and
$\pi$ has degree $2$. Moreover, each element of $K$ 
represents the image of a two--cycle on the torus which is in general 
position and thus does not contain any fixed points of the $\Z_2$-action, hence
$K\perp\Pi$.
Then $K=\Pi^\perp\cap H_2(X,\Z)$ since $K$ is a primitive sublattice of
$\Pi^\perp\cap H_2(X,\Z)$ of same rank $6=22-16$.

For the remainder of this work, we continue to use the notations introduced
in definition \ref{defKummer} and proposition \ref{Kummerform} above. 
According to this proposition, $K$ and $\Pi$ are orthogonal complements of one another
in the even unimodular lattice $H_2(X,\Z)$.  
By the gluing construction of proposition \ref{glueperp}, $H_2(X,\Z)$ can therefore
be reconstructed from its sublattices $K$ and $\Pi$. 

Indeed, one first checks $K^\ast/K \cong \Pi^\ast/\Pi \cong (\Z_2)^6$:
With\footnote{The generators $\vec\lambda_i$, $i=1,\ldots,4$, of the lattice $\Lambda$ are 
naturally identified with generators $\lambda_i$, $i=1,\ldots,4$, of $H_1(T,\Z)$, such that 
$H_2(T,\Z)$ is generated by the $\lambda_i\vee\lambda_j$.}
\be\label{lambdaij}
\lambda_{ij}:=\lambda_i\vee\lambda_j \in H_2(T,\Z)\mbox{ for } i,j\in\{1,2,3,4\},
\ee
standard generators of
$K^\ast/K$ and the discriminant form with respect to these generators are given by
$$
\textstyle
\qu{{1\over2}\pi_\ast\lambda_{ij}},\;
ij=12,\, 34,\, 13,\, 24,\, 14,\, 23,\quad
q_K = \left( \begin{array}{cc}0&\hf\\\hf&0\end{array}\right)^3.
$$
Analogously, with
\be\label{planes}
P_{ij} := \left\{ \vec a=(a_1,a_2,a_3,a_4)\in\F_2^4 \mid a_k=0\; \fa k\neq i,j\right\}
\mbox{ for } i,j\in\{1,2,3,4\},
\ee
standard generators of $\Pi^\ast/\Pi$ and the discriminant form with respect to these generators are given by
$$
\textstyle\qu{\hf\sum\limits_{\vec a\in P_{ij}}E_{\vec a}},\;
ij=12,\, 34,\, 13,\, 24,\, 14,\, 23,\quad
q_\Pi = -\left( \begin{array}{cc}0&\hf\\\hf&0\end{array}\right)^3.
$$
The bilinear forms associated with $q_K,\, q_\Pi$ take values in $\Q/\Z$ 
and thus $q_K=-q_K$, $q_\Pi=-q_\Pi$. Hence we obtain a natural isomorphism $\gamma$ 
between the two discriminant groups which  obeys $q_K=-q_\Pi\circ\gamma$:
\be\label{Kummerglue}
\gamma\colon K^\ast/K \longrightarrow \Pi^\ast/\Pi,\quad
\gamma\left(\textstyle\qu{{1\over 2}\pi_\ast\lambda_{ij}}\right)
:=\textstyle\qu{\hf\sum\limits_{\vec a\in P_{ij}}E_{\vec a}}.\\
\ee
Now proposition \ref{glueperp} implies that the K3-lattice $H_2(X,\Z)$ is generated by the
$\pi_\ast\lambda_{ij}\in\pi_\ast (H_2(T,\Z))$, the elements of the Kummer lattice $\Pi$, and
two--cycles of type ${1\over2}\pi_\ast\lambda_{ij}+\hf\sum_{\vec a\in P_{ij}}E_{\vec a}\,\in K^*\oplus \Pi^*$.
In this case, the gluing procedure can be visualised geometrically as follows: Consider a real 
$2$-dimensional subspace of $\C^2$ which on the torus $T$ yields a 
$\Z_2$-invariant submanifold $\kappa$ 
containing the four fixed points labelled by a plane $P\subset\F_2^4$. Then $\kappa\rightarrow\kappa/\Z_2$ 
is a $2\colon\!\!1$ cover of an $\mathbb S^2$ with branch points $\vec F_{\vec a},\, \vec a\in P$, which under
blow-up are replaced by the corresponding exceptional divisors representing the 
$E_{\vec a}\in H_2(X,\Z)$.
Hence {$\left(\kappa\setminus\{\vec F_{\vec a}\mid \vec a\in P\}\right)$} is a $2\colon\!\!1$ unbranched covering of a 
two--cycle on the Kummer surface $X$ representing $\pi_\ast[\kappa]-\sum_{\vec a\in P}E_{\vec a}\in H_2(X,\Z)$. 
In other words,
$\hf \pi_\ast[\kappa]\mp\hf\sum_{\vec a\in P}E_{\vec a}\in H_2(X,\Z)$.
Indeed, note that for $P$ as above and $P^\prime\subset\F_2^4$ a plane parallel to $P$,
$\hf\sum_{\vec a\in P}E_{\vec a}\mp\hf\sum_{\vec a\in P^\prime}E_{\vec a}\in\Pi$ according to \req{Kummer}.

For later use, instead of restricting our attention to the K3-lattice $H_2(X,\Z)$ of a Kummer surface $X$, 
we need to work on the full integral homology $H_\ast(X,\Z)=H_0(X,\Z)\oplus H_2(X,\Z)\oplus H_4(X,\Z)$. 
Since $H_0(X,\Z)\oplus H_4(X,\Z)\cong U$ is an even unimodular lattice, 
we can use the gluing prescription (\ref{Kummerglue}) either replacing $K$ by $K\oplus U$ with 
$(K\oplus U)^\ast/(K\oplus U)\cong K^\ast/K$, or replacing $\Pi$ by $\Pi\oplus U$ with 
$(\Pi\oplus U)^\ast/(\Pi\oplus U)\cong \Pi^\ast/\Pi$. However, yet another option will turn out to 
be even more useful: We combine the gluing prescription (\ref{Kummerglue}) with the exercise 
posed in section \ref{glue} to obtain, as another application of proposition \ref{glueperp},
\begin{prop}\label{fullk3}
Consider a Kummer surface $X$ with underlying torus $T$ and notations as in 
definition \mbox{\rm\ref{defKummer}} and proposition \mbox{\rm\ref{Kummerform}}. Furthermore, let
$\upsilon_0\in H_0(X,\Z)$ and $\upsilon\in H_4(X,\Z)$ denote generators of
$H_0(X,\Z)\oplus H_4(X,\Z)\cong U$ such that the quadratic form 
with respect to $\upsilon_0, \upsilon$ is \mbox{\rm(\ref{hyperbolic})}. Let
$$
\m K:=K\oplus\spann_\Z\{\upsilon_0+\upsilon\}, \quad
\m P:=\Pi\oplus\spann_\Z\{\upsilon_0-\upsilon\}.
$$
Then $\m K^\ast/\m K\cong \m P^\ast/\m P\cong(\Z_2)^7$ under an isomorphism $g$ with
\begin{eqnarray*}
&g\colon& \m K^\ast/\m K\longrightarrow \m P^\ast/\m P, \quad
g(\qu\kappa):=\gamma(\qu\kappa) \;\fa\kappa\in K^\ast,\quad
g(\qu{\textstyle\hf (\upsilon_0+\upsilon)}) := \qu{\textstyle\hf (\upsilon_0-\upsilon)};\nonumber\\[10pt]
&&H_\ast(X,\Z)
\cong \left\{ (\kappa,\pi)\in \m K^\ast\oplus {\m P}^\ast\mid
g(\qu\kappa)=\qu \pi \right\}.
\end{eqnarray*}
\end{prop}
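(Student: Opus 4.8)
The plan is to deduce proposition \ref{fullk3} as a single application of the gluing construction of proposition \ref{glueperp} to the even unimodular lattice $H_\ast(X,\Z)$, assembling it from the two gluings already in hand: the reconstruction of $H_2(X,\Z)$ from $K$ and $\Pi$ via \req{Kummerglue}, and the reconstruction of $U=H_0(X,\Z)\oplus H_4(X,\Z)$ from the definite rank-one lattices $\spann_\Z\{\upsilon_0\pm\upsilon\}$ carried out in the example following proposition \ref{glueperp}.

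First I would check that $\m K$ and $\m P$ meet the hypotheses of proposition \ref{glueperp}. They are mutually orthogonal: $K\perp\Pi$ holds in $H_2(X,\Z)$ by proposition \ref{Kummerform}, the lattice $H_2(X,\Z)$ is orthogonal to $H_0(X,\Z)\oplus H_4(X,\Z)$, and $\langle\upsilon_0+\upsilon,\upsilon_0-\upsilon\rangle=0$; both are non-degenerate. Since an orthogonal direct sum of primitive sublattices of orthogonal summands is again primitive, $\m K$ is primitive in $H_\ast(X,\Z)$ — as $K$ is primitive in $H_2(X,\Z)$ and $\upsilon_0+\upsilon$ is a primitive vector of $U$ — and $\m P$ is primitive for the same reason. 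Because $\rk\m K+\rk\m P=7+17=24=\rk H_\ast(X,\Z)$, the orthogonal complement $\m K^\perp\cap H_\ast(X,\Z)$ has rank $17$; as it contains the primitive rank-$17$ sublattice $\m P$, the two coincide, so $\m P=\m K^\perp\cap H_\ast(X,\Z)$ and proposition \ref{glueperp} applies.

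Next I would compute the discriminant data. The generators $\upsilon_0\pm\upsilon$ span rank-one lattices with forms $(\pm2)$, hence with discriminant forms $(\pm\hf)$ on $\Z_2$, so that $\m K^\ast/\m K\cong K^\ast/K\oplus\Z_2\cong(\Z_2)^7$ and $\m P^\ast/\m P\cong\Pi^\ast/\Pi\oplus\Z_2\cong(\Z_2)^7$. Declaring $g$ to equal $\gamma$ on the $K^\ast/K$-summand and to send $\qu{\hf(\upsilon_0+\upsilon)}\mapsto\qu{\hf(\upsilon_0-\upsilon)}$ on the extra $\Z_2$-summand then manifestly defines an isomorphism, and the required compatibility $q_{\m K}=-q_{\m P}\circ g$ splits orthogonally into the relation $q_K=-q_\Pi\circ\gamma$ of the Kummer gluing \req{Kummerglue} and the elementary check $\hf=-(-\hf)$ in $\Q/2\Z$ on the $\Z_2$-summand.

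The decisive step — and the one I expect to be the main obstacle — is to identify the gluing isomorphism produced \emph{abstractly} by proposition \ref{glueperp} with this explicit $g$. Here I would exploit the orthogonal splitting $H_\ast(X,\Z)=H_2(X,\Z)\oplus U$, which forces every glue vector to lie entirely in $H_2(X,\Z)$ or entirely in $U$, so that no glue vector mixes the two blocks. The glue vectors from $H_2(X,\Z)$ are the $\hf\pi_\ast\lambda_{ij}+\hf\sum_{\vec a\in P_{ij}}E_{\vec a}$, which by \req{Kummerglue} impose $g=\gamma$ on $K^\ast/K$; the single extra glue vector from $U$ is $\upsilon_0=\hf(\upsilon_0+\upsilon)+\hf(\upsilon_0-\upsilon)$, which imposes $\qu{\hf(\upsilon_0+\upsilon)}\mapsto\qu{\hf(\upsilon_0-\upsilon)}$. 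These seven glue vectors are independent modulo $\m K\oplus\m P$ and hence generate the full order-$2^7$ quotient $H_\ast(X,\Z)/(\m K\oplus\m P)$, so the gluing isomorphism is exactly $g$ and the stated description of $H_\ast(X,\Z)$ follows from \req{glueconstruction}.
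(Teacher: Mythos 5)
Your proof is correct and follows exactly the route the paper intends: Proposition \ref{fullk3} is stated there without a written-out proof, merely as the result of combining the gluing \req{Kummerglue} of $K$ and $\Pi$ with the hyperbolic-lattice exercise after Proposition \ref{glueperp}, which is precisely what you carry out. Your verification of primitivity, the discriminant computation, and the identification of the glue vectors (in particular $\upsilon_0=\hf(\upsilon_0+\upsilon)+\hf(\upsilon_0-\upsilon)$) supply the details the paper leaves implicit.
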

By the above, the Kummer construction yields a natural marking $H_2(X,\Z)\cong U^3\oplus E_8^2(-1)$,
$H_\ast(X,\Z)\cong U^4\oplus E_8^2(-1)$. Indeed, the $\lambda_{ij}$, $i, j\in \{1,\ldots,4\}$, $i < j$,
form a basis
of $H_2(T,\Z)$, where each of the pairs $\{\lambda_{12},\lambda_{34}\}, \{\lambda_{13},\lambda_{24}\}, 
\{\lambda_{14},\lambda_{23}\}$ generates a sublattice which is isometric to the
hyperbolic lattice $U$. Furthermore, \req{Kummer} gives an
abstract construction of the Kummer lattice $\Pi$ in terms of the roots $E_{\vec a}, \vec a\in\F_2^4$, with
$\langle E_{\vec a},E_{\vec b}\rangle=-2\delta_{\vec a,\vec b}$. Then the above gluing prescription 
of $H_2(X,\Z)$ from the lattices $\pi_\ast H_2(T,\Z)$ and $\Pi$ specifies an isometry between $H_2(X,\Z)$
for $X=\wt{T/\Z_2}$ with a standard even, unimodular
lattice of signature $(3,19)$ (though not written in the form $U^3\oplus E_8^2(-1)$).
Denoting generators of $H_0(X,\Z)$ and $H_4(X,\Z)$ by $\upsilon_0,\, \upsilon$, respectively, where
$\langle\upsilon_0,\upsilon\rangle=1$ as in proposition \ref{fullk3}, we obtain 
$H_\ast(X,\Z)\cong U^4\oplus E_8^2(-1)$.

In all examples studied below, we use this fixed marking, as it is natural for all Kummer surfaces.
Note that the marking allows us to smoothly vary the {generators $\vec\lambda_1,\ldots,\vec\lambda_4$
of the}
defining lattice $\Lambda\subset\C^2$ of $T=T(\Lambda)$ 
for $X=\wt{T(\Lambda)/\Z_2}$; the marking is thus compatible with the deformation of any Kummer
surface into any other one. 
%
\subsection{Example: The Niemeier lattice of type $A_1^{24}$}\label{niemeier}
A second example for the application of Nikulin's gluing techniques from proposition \ref{glueperp}, 
which we find extremely useful, involves the Niemeier lattice $N$ of type 
$A_1^{24}$ (see  theorem \ref{latticeclassification} and definition \ref{niemeierdef}). In other words, 
$N$ is the Niemeier lattice with root sublattice $R\subset N$ of rank $24$, where $R$  
has quadratic form $\diag(2,\ldots,2)$. 
Since $N$ is unimodular, we
have
$$
R\subset N\subset R^\ast,
$$
where $R^\ast={1\over2}R$ and 
thus $R^\ast/R\cong\F_2^{24}$. Hence $N/R$ can be viewed as a subspace of $\F_2^{24}$,
and in fact $N/R\cong\m G_{24}$, the \textsc{extended binary Golay code} 
\cite[Ch.\ 16, 18]{cosl88}. 
Up to isometry, the extended binary Golay code is uniquely determined by the fact that it is
a $12$-dimensional subspace of $\F_2^{24}$ over $\F_2$ such that every $v\in\m G_{24}$
has \textsc{weight}\footnote{that is, the number of non-zero entries}  zero, $8$ (\textsc{octad}), 
$12$ (\textsc{dodecad}), $16$ (\textsc{complement octad}), or $24$. 
For further details concerning the extended binary Golay code, which for brevity we simply 
call the \textsc{Golay code} from now on, see  {appendix \ref{MOG}}. 

In terms of the Golay code $\m G_{24}\subset\F_2^{24}$, the Niemeier lattice $N$ can be constructed
from its root sublattice $R$ as a sublattice of $R^\ast={1\over2}R$:
\begin{prop}\label{niefromroot}
Consider the Niemeier lattice of type $A_1^{24}$, which for the remainder of this work is
denoted by $N$, and its root sublattice $R$.  Then
$$
N = \left\{ v\in R^\ast \mid \qu v\in \m G_{24}\right\},
$$
where $\qu v$ denotes the projection of $v\in R^\ast$  to $\F_2^{24}\cong R^\ast/R$.
\end{prop}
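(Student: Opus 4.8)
The plan is to translate the three defining properties of $N$ --- \emph{even}, \emph{unimodular}, and \emph{of root type} $A_1^{24}$ --- into combinatorial conditions on the binary code $C:=N/R\subseteq R^\ast/R\cong\F_2^{24}$, and then to identify $C$ with the Golay code via the uniqueness criterion recalled just above the statement. First I would fix pairwise orthogonal generators $\delta_1,\dots,\delta_{24}$ of $R$, each of norm $2$, so that $R^\ast=\hf R$ and every $v\in R^\ast$ can be written $v=\hf\sum_i c_i\delta_i$ with $c_i\in\Z$, its image $\qu v\in R^\ast/R\cong\F_2^{24}$ being the parity vector $(c_i\bmod 2)_i$. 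Since $R\subseteq N\subseteq R^\ast$, the lattice $N$ is precisely the preimage of $C$ under the projection $R^\ast\to R^\ast/R$, so the proposition reduces to the single assertion $C=\m G_{24}$.

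Next I would read off the code-theoretic content of each lattice property from $\langle v,w\rangle=\hf\sum_i c_id_i$ and $\langle v,v\rangle=\hf\sum_i c_i^2$. Integrality of $N$ says that any two codewords of $C$ overlap in an even number of coordinates, i.e.\ $C$ is self-orthogonal; evenness, using $\sum_i c_i^2\equiv \mathrm{wt}(\qu v)\bmod 4$, says that every codeword has weight divisible by $4$ (double-evenness); and unimodularity says $\dim_{\F_2}C=12$. The last point follows from a discriminant count: $\disc(R)=2^{24}$ and $[N:R]=|C|$ give $|\disc(N)|=2^{24}/|C|^2$, which equals $1$ exactly when $|C|=2^{12}$. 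Self-orthogonality together with $\dim C=12=\hf\cdot 24$ then forces $C=C^\perp$, so $C$ is a doubly-even self-dual code of length $24$.

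The crucial extra input is the hypothesis that the root sublattice of $N$ is exactly $R$. A norm-$2$ vector $\hf\sum_i c_i\delta_i$ satisfies $\sum_i c_i^2=4$, so up to signs it is either some $\delta_i$ (already in $R$) or is supported on a four-element set $S$, in which case $\qu v$ is the weight-$4$ indicator of $S$. Hence $N$ acquires a root outside $R$ precisely when $C$ contains a word of weight $4$, and demanding that the root system be $A_1^{24}$ is equivalent to $C$ having no weight-$4$ words. Because $C$ is self-dual with only even weights, the all-ones vector lies in $C$, so the complementation $c\mapsto c+\mathbf 1$ also rules out words of weight $20$; combined with double-evenness this confines the nonzero weights to $\{8,12,16,24\}$. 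A $12$-dimensional code with exactly these weights is the Golay code by the uniqueness statement recalled above, whence $C=\m G_{24}$ and therefore $N=\{v\in R^\ast\mid \qu v\in\m G_{24}\}$.

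I expect the main obstacle to be conceptual rather than computational: evenness and unimodularity alone are met by several codes, corresponding to the other Niemeier lattices, and it is exactly the type-$A_1^{24}$ hypothesis, through the absence of weight-$4$ roots, that singles out $\m G_{24}$. The one delicate point is bridging "no weight-$4$ words" and the full weight spectrum, which is handled by the complementation argument using $\mathbf 1\in C$; this in turn relies on $C$ being genuinely self-dual rather than merely self-orthogonal, which is why the dimension count must be carried out first.
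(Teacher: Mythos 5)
Your argument is correct, but it is worth noting that the paper does not actually prove this proposition at all: it states $N/R\cong\m G_{24}$ as a known fact with a citation to Conway--Sloane and then records the preimage description as Proposition \ref{niefromroot}. What you have written is essentially the standard derivation that the cited reference supplies, and every step checks out. The translation of evenness into double-evenness via $\sum_i c_i^2\equiv\mathrm{wt}(\qu v)\bmod 4$, the discriminant count $|\disc(N)|=2^{24}/|C|^2$ forcing $\dim_{\F_2}C=12$ and hence $C=C^\perp$, and the identification of extra roots with weight-$4$ codewords are all accurate; you also correctly flag the one place where self-duality (rather than mere self-orthogonality) is needed, namely to get $\mathbf 1\in C$ and thereby exclude weight $20$ by complementation. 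With the weight spectrum reduced to $\{0,8,12,16,24\}$ and $\dim C=12$, the uniqueness characterisation of $\m G_{24}$ quoted in the paper just above the proposition finishes the job. The only caveat, which does not affect correctness, is that this uniqueness holds up to a coordinate permutation, so the conclusion is that $C$ is a copy of the Golay code under a suitable labelling of the roots $\delta_1,\dots,\delta_{24}$ --- which is exactly the sense in which the proposition is meant. Your closing observation is also the right one: evenness and unimodularity alone only say that $C$ is doubly-even self-dual, and it is the $A_1^{24}$ hypothesis, via the absence of weight-$4$ words, that singles out the Golay code.
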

As stated in theorem \ref{latticeclassification}, every Niemeier lattice $\wt N$ is
uniquely determined by its root sublattice $\wt R$ up to isometry. In fact, if $\wt R$ has rank $24$, then
$\wt N$ can always be constructed from $\wt R$ analogously to proposition \ref{niefromroot}.
In this paper, the Niemeier lattice $N$ of type $A_1^{24}$ plays a special role, though,
since the \textsc{Mathieu group $M_{24}$} acts so naturally on it\footnote{By giving an extensive list of examples,
in \cite{ni11} Nikulin emphasizes that all $24$ Niemeier lattices are important for the study
of K3 surfaces; while this may be true, our objective is the clarification of the role
of the Mathieu group $M_{24}$, justifying our preference for the lattice $N$.}. Namely, $M_{24}$ is the automorphism group
of the Golay code \cite{tod59,tod66,co71}, and thus we determine the automorphisms of $N$ in accord
with \cite[Ch.~4.3, 16.1]{cosl88}:
\begin{prop}\label{M24onN}
Consider the Niemeier lattice $N$ of type $A_1^{24}$ 
with root sublattice $R$
generated by the roots $f_1,\ldots, f_{24}$. 
The automorphism group of $N$ is 
$$
\Aut(N)=(\Z_2)^{24}\rtimes M_{24}, 
$$
where the action of the $n^{th}$ factor of $(\Z_2)^{24}$
is induced by $f_n\mapsto -f_n$, while the 
\textsc{Mathieu group $M_{24}$} is viewed as a subgroup
of the permutation group $S_{24}$ on $24$ elements whose action on $N$ is
induced by permuting the roots $f_1,\ldots, f_{24}$.
\end{prop}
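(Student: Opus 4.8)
The plan is to show that every isometry of $N$ is a signed permutation of the roots $f_1,\ldots,f_{24}$ whose underlying permutation preserves the Golay code, and conversely that all such signed permutations are isometries. The essential first step is to pin down the root system of $N$, i.e.\ the set of $v\in N$ with $\langle v,v\rangle=2$. Writing $v=\hf\sum_{n=1}^{24}c_nf_n\in R^\ast$ with $c_n\in\Z$ (using $R^\ast=\hf R$ and $\langle f_i,f_j\rangle=2\delta_{ij}$), the norm condition becomes $\sum_n c_n^2=4$, while proposition \ref{niefromroot} requires the parity vector $\qu v\in\F_2^{24}$ to lie in $\m G_{24}$. The integer solutions of $\sum_n c_n^2=4$ are of exactly two shapes: a single $c_n=\pm2$ with all others zero (parity vector $0$, hence $v=\pm f_n$), or precisely four entries equal to $\pm1$ (parity vector of weight $4$). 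Since $\m G_{24}$ has no codewords of weight $4$, the second family is excluded, so the roots of $N$ are precisely $\pm f_1,\ldots,\pm f_{24}$.

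Next I would exploit that any $\sigma\in\Aut(N)$ maps roots to roots, hence permutes the $24$ coordinate axes $\{\pm f_n\}$: there is a permutation $\pi\in S_{24}$ and signs $\epsilon_n=\pm1$ with $\sigma(f_n)=\epsilon_n f_{\pi(n)}$. In particular $\sigma$ preserves the root lattice $R=\spann_\Z\{f_1,\ldots,f_{24}\}$, so it restricts to an element of $\Aut(R)=(\Z_2)^{24}\rtimes S_{24}$, the group of signed permutations of the $f_n$. Because $R$ has full rank $24$ in $N\otimes\Q=R\otimes\Q$, this restriction determines $\sigma$ completely; this yields an injective homomorphism $\Aut(N)\hookrightarrow(\Z_2)^{24}\rtimes S_{24}$.

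Finally I would identify the image. By proposition \ref{niefromroot}, $N$ is recovered from $R^\ast$ by the condition $\qu v\in\m G_{24}\subset\F_2^{24}\cong R^\ast/R$, so $\sigma$ preserves $N$ if and only if it preserves $\m G_{24}$ under its induced action on $R^\ast/R$. The sign changes $f_n\mapsto-f_n$ act trivially on $R^\ast/R$ (since $2f_n\in R$), so the whole factor $(\Z_2)^{24}$ preserves $N$, and the induced action of a general $\sigma$ on $R^\ast/R$ is just the coordinate permutation $\pi$. Hence $\sigma$ preserves $N$ precisely when $\pi\in\Aut(\m G_{24})=M_{24}$. Combining, $\Aut(N)$ is exactly the subgroup of signed permutations whose underlying permutation lies in $M_{24}$, which is $(\Z_2)^{24}\rtimes M_{24}$, with the sign changes acting by $f_n\mapsto-f_n$ and $M_{24}\subset S_{24}$ acting by permuting the $f_n$, as asserted.

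I expect the genuine content, and hence the main obstacle, to be the first step: the determination of the root system. This is where the defining property of the Golay code, namely the absence of weight-$4$ codewords guaranteed by its minimum weight $8$, enters in an essential way; it is exactly what forbids the spurious roots $\hf(\pm f_{n_1}\pm f_{n_2}\pm f_{n_3}\pm f_{n_4})$ and forces the root system to be the rigid configuration $\pm f_1,\ldots,\pm f_{24}$. Once this is in place, the remainder is a routine reduction to the known identification $\Aut(\m G_{24})=M_{24}$.
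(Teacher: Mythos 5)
Your proposal is correct and follows essentially the same route as the paper: determine the root system $\Delta=\{\pm f_1,\ldots,\pm f_{24}\}$ using the minimum weight $8$ of the Golay code, observe that any isometry is then a signed permutation of the roots determined by its action on $\Delta$ (since $N\subset\hf R$), and reduce the preservation of $N$ to preservation of $\m G_{24}\cong N/R$ under the induced coordinate permutation, i.e.\ to membership in $M_{24}$. Your write-up is somewhat more explicit than the paper's (e.g.\ enumerating the solutions of $\sum_n c_n^2=4$ and noting that the sign flips act trivially on $R^\ast/R$), but there is no substantive difference in strategy.
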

\begin{proof}
Since every non-trivial codeword in the Golay code has at least weight $8$, 
from proposition \ref{niefromroot} we see that
the set $\Delta$ of roots in $N$ is
\be\label{nieroot}
\Delta=\left\{ \delta\in N\mid \langle\delta,\delta\rangle=2\right\}
=\left\{\pm f_1, \ldots, \pm f_{24}\right\}. 
\ee
Hence every lattice isometry $\gamma\in\Aut(N)$ acts as
a permutation on $\Delta$. Proposition \ref{niefromroot} implies
$N\subset\hf R$ and thus that $\gamma$
is uniquely determined by its action on $\Delta$. Thus $\gamma=\iota\circ\alpha$ 
with $\iota\in (\Z_2)^{24}\subset\Aut(N)$ induced by
a composition of involutions $f_n\mapsto -f_n$, and with
$\alpha\in \Aut(N)$ induced by a permutation of $f_1,\ldots,f_{24}$. 
Now proposition \ref{niefromroot}
implies that $\alpha$ induces an action on $\F_2^{24}\cong R^\ast/R$ by permuting
the binary coordinates of $\F_2^{24}$, which must leave the Golay code 
$\m G_{24}\subset \F_2^{24}$
invariant. In other
words, $\alpha\in M_{24}$. Vice versa, every involution 
$f_n\mapsto -f_n$ and every permutation of the binary coordinates of $\F_2^{24}$
which preserves the Golay code induces an automorphism of $N$.
From this the claim follows.
\end{proof}

The above proposition allows us to view the Niemeier lattice $N$ 
as a device which yields a geometric interpretation of 
the Mathieu group $M_{24}$. On the other hand, this lattice turns out to share a 
number of properties
with the integral homology of Kummer surfaces that we discussed in section \ref{exampleK3}. 
First we observe
\begin{prop}\label{piprim}
The Niemeier lattice $N$ of type $A_1^{24}$ possesses a primitive sublattice
$\wt\Pi\subset N$ which up to a total reversal of signature is isometric to
the Kummer lattice  \mbox{\rm\req{Kummer}}.

No other Niemeier lattice possesses such a sublattice.
\end{prop}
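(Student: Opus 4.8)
The plan is to build $\wt\Pi$ from $16$ of the $24$ mutually orthogonal roots $f_1,\dots,f_{24}$ of $N$, matching the Kummer gluing against the Golay gluing. First I reinterpret the Kummer lattice combinatorially: by \req{Kummer}, $\Pi$ arises from the root lattice $A_1^{16}(-1)=\spann_\Z\{E_{\vec a}\mid\vec a\in\F_2^4\}$ by adjoining one glue vector $\hf\sum_{\vec a\in H}E_{\vec a}$ for each affine hyperplane $H\subset\F_2^4$. Reducing modulo $A_1^{16}(-1)$ into $\F_2^{16}\cong(A_1^{16})^\ast/A_1^{16}$, these glue vectors become the indicator vectors of the affine hyperplanes, which span the first-order Reed--Muller code $RM(1,4)\subset\F_2^{16}$ (its $2^5$ codewords being $0$, the all-ones word, and the $30$ weight-$8$ hyperplane indicators). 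Hence, after the total reversal of signature, $\Pi(-1)$ is exactly ``$A_1^{16}$ glued by $RM(1,4)$'', and its only roots are the $16$ generators, forming $A_1^{16}$. My task thus reduces to finding inside $N$ a copy of $A_1^{16}$ whose Golay gluing reproduces $RM(1,4)$.

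\emph{Existence and primitivity.} Fix an octad $O$ of the Golay code and set $S:=\{1,\dots,24\}\setminus O$; the $16$ roots $f_i$ ($i\in S$) span $A_1^{16}$, and I let $\wt\Pi\subset N$ be generated by these together with the vectors $\hf\sum_{i\in O'}f_i$ for every octad $O'\subseteq S$, each of which lies in $N$ by proposition \ref{niefromroot} since $O'$ is a codeword. A short count in the Steiner system $S(5,8,24)$, using that two distinct octads meet in $0,2$ or $4$ points, gives exactly $30$ octads disjoint from $O$; these endow $S$ with the structure of the affine space $AG(4,2)\cong\F_2^4$ in which they are precisely the affine hyperplanes (equivalently, the Golay code shortened on $S$ has weight enumerator $1+30x^8+x^{16}$ and is thus equivalent to $RM(1,4)$). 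Under this identification $S\cong\F_2^4$ the glue code of $\wt\Pi$ is $RM(1,4)$, so $\wt\Pi\cong\Pi(-1)$. Primitivity is then immediate: since the $f_j$ are orthogonal and $R^\ast=\hf R$, any $v\in N$ lying in $\wt\Pi\otimes\Q=\spann_\Q\{f_i\mid i\in S\}$ is supported on $S$, so by proposition \ref{niefromroot} its reduction $\qu v\in\m G_{24}$ is a codeword supported on $S$, i.e. an element of the shortened code $\cong RM(1,4)$ --- which is exactly the defining condition for membership in $\wt\Pi$. Thus $\wt\Pi=(\wt\Pi\otimes\Q)\cap N$.

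\emph{No other Niemeier lattice.} Suppose $\Pi(-1)$ embeds primitively in a Niemeier lattice $\wt N$. Its $16$ roots are then mutually orthogonal roots of $\wt N$, so $\wt N$ has a nonempty root system and the Leech lattice is excluded at once. Writing $V:=\wt\Pi^\perp\cap\wt N$, proposition \ref{glueperp} shows that $V$ is a rank-$8$ positive-definite even lattice with $q_V=-q_{\Pi(-1)}$ and discriminant group $(\Z_2)^6$. The decisive point --- which I expect to be the main obstacle --- is to show that having $16$ mutually orthogonal roots whose half-sums over the $30$ affine hyperplanes all lie in $\wt N$ forces the root system of $\wt N$ to be of type $A_1^{24}$ with Golay gluing. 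I would settle this by confronting the required $RM(1,4)$ glue pattern with the rigid glue code $\wt N/\wt R\subseteq\wt R^\ast/\wt R$ of each remaining Niemeier type listed in theorem \ref{latticeclassification}: in any irreducible component of $\wt R$ other than $A_1$, an affine-hyperplane half-sum would have to glue roots lying within that component or across several of them, and one checks that the glue codes of those Niemeier lattices admit no self-orthogonal, minimum-weight-$8$ pattern equivalent to $RM(1,4)$. Only type $A_1^{24}$ survives, and by the existence part this lattice is $N$. All the real work is therefore concentrated in this final case analysis, the existence and primitivity resting on the single structural fact that the complement of an octad is an affine $4$-space whose hyperplanes are precisely the disjoint octads.
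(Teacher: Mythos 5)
Your proposal is correct, but it follows a genuinely different route from the paper's. For existence, the paper does not construct $\wt\Pi$ at all at this stage: it applies Nikulin's embedding criterion (theorem \ref{nik}) to the abstract positive-definite lattice $\Pi(-1)$ --- signature $(16,0)$, discriminant group $(\Z_2)^6$ of length $6<8$ --- to conclude that a primitive embedding into \emph{some} Niemeier lattice exists; your explicit octad-complement construction is instead the content of the paper's later proposition \ref{piinn}, where the $30$ octads disjoint from $\m O_9$ and the affine $\F_2^4$-structure on $\m I\setminus\m O_9$ enter via the basis \req{hyperplanes} and the map \req{map}. Your route therefore buys constructivity and a direct, clean primitivity argument (in effect proving \ref{piprim} and \ref{piinn} in one stroke), at the price of invoking the Steiner-system count and the identification of the shortened Golay code with the first-order Reed--Muller code; the paper's route is shorter and purely lattice-theoretic, but non-constructive. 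For the uniqueness statement the two proofs also diverge: the paper disposes of the remaining Niemeier lattices with the single observation that only the $A_1^{24}$ root system contains $16$ pairwise perpendicular roots of the required kind, while you propose a component-by-component glue-vector analysis. Your heavier criterion is not wasted effort: a bare count of pairwise orthogonal roots does not by itself exclude, say, $A_3^8$, $D_4^6$ or $E_8^3$, each of which contains $16$ such roots, and what does exclude them is precisely the obstruction you name --- the requirement that all $30$ hyperplane half-sums lie in $\wt N\subset\wt R^\ast$ is incompatible with the dual quotients and glue codes of the other root systems, and this must be verified type by type (for $A_3^8$, for instance, each component receives exactly two orthogonal roots $\delta_1\perp\delta_2$ with $\hf\delta_1\notin A_3^\ast$, while some hyperplane always separates such a pair). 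You leave that verification as ``one checks'' over the remaining types, which matches the level of detail of the paper's own proof, but be aware that essentially all of the remaining work in the uniqueness half is concentrated there.
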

\begin{proof}
Given the Kummer lattice $\Pi$ of \req{Kummer}, consider the lattice $\wt\Pi:=\Pi(-1)$.
We first show that $\wt\Pi$ can be primitively embedded in some Niemeier lattice
$\widetilde N$ by applying theorem \ref{nik} to the lattice $\Lambda=\wt\Pi$.
 
Recall  from definition \ref{niemeierdef}
that a Niemeier lattice is an even unimodular lattice of signature 
$(\gamma_+,\gamma_-)=(24,0)$. Furthermore, by proposition \ref{Kummerform} and
the explanations following it, the lattice $\wt\Pi$ is a non-degenerate even
lattice with signature
$(l_+,l_-)=(16,0)$ and discriminant group $A_q=\wt\Pi^\ast/\wt\Pi\cong(\Z_2)^6$.
In particular, the length of this group is $\ell(A_q)=6$, and for the reductions
modulo primes we find $\ell(A_{q_2})=6$ and $\ell(A_{q_p})=1$ for all primes $p\neq2$.
Since $\gamma_++\gamma_--l_+-l_-=\gamma_+-l_+=8>6>1$, 
the conditions 1., 2., 3.ii.~and 4.~of theorem \ref{nik} hold, and thus $\widetilde\Pi$
can be primitively embedded in some Niemeier lattice $\widetilde N$.

For every Niemeier lattice $\widetilde N$, the root lattice $\widetilde R$
is known  (see e.g.\ {\cite[Ch.\ 16.1]{cosl88}}), and one checks that the 
Niemeier lattice $N$ of type $A_1^{24}$ is the only one containing $16$
pairwise perpendicular roots that generate a sublattice $R$ of type $A_1^{16}$,
such that $(R\otimes\Q)\cap N$ contains no further
roots,
as $\widetilde\Pi$ does. From this
the claim follows. 
\end{proof}

In \cite[Case 23]{ni11}, Nikulin remarks that $N$ is the only Niemeier lattice which can contain
a primitive sublattice that up to a total signature reversal is isomorphic to the Kummer 
lattice.\footnote{Nikulin however fails to mention that the existence
of a primitive sublattice $\wt\Pi$ in $N$ as in proposition \ref{piprim}
was first observed and proved in our first
installment arXiv:1008.0954 of this work. In fact he fails to observe that existence needs to be 
proved, whatsoever.} 

For our investigations, we need an \textsc{explicit realisation of a primitive sublattice}
$\wt\Pi$ as in proposition \ref{piprim}. Our construction of the lattice $\wt\Pi$ 
and its orthogonal complement $\wt K$ in $N$ depends on the choice of an
arbitrary \textsc{special octad} in the Golay code, i.e.\ a vector of weight $8$ 
in $\m G_{24}$. For historical reasons, our choice of special octad is 
the codeword $o_9:=(0,0,1,0,1,1,0,0,1,0,0,0,0,0,1,0,0,0,1,0,0,0,1,1)\in\m G_{24}$
whose non-zero entries are the binary coordinates with labels 
$\left\{ 3,\,5,\,6,\,9,\,15,\,19,\,23,\,24\right\}\subset\m I$
with $\m I:=\left\{1,\ldots,24\right\}$.
For ease of notation, we regularly denote 
a codeword $v\in \m G_{24}\subset\F_2^{24}$ of the Golay code by listing the set 
$A_v\subset\m I$ of coordinate labels with non-zero entries. With this 
notation, calculating the sum of codewords 
$v,\, w\in\m G_{24}\subset\F_2^{24}$ amounts to taking the \textsc{symmetric 
difference} of sets $A_v+A_w=(A_v\setminus A_w)\cup (A_w\setminus A_v)$.
Our special octad $o_9$ is thus described by
\ba\label{o9octad}
\m O_9 &:=& \left\{ 3,\,5,\,6,\,9,\,15,\,19,\,23,\,24\right\}, \\
o_9 &=& (0,0,1,0,1,1,0,0,1,0,0,0,0,0,1,0,0,0,1,0,0,0,1,1)\in\m G_{24}\subset\F_2^{24}.\nonumber
\ea
This is the octad corresponding to the standard MOG configuration
described in appendix \ref{MOG}, where the two first 
columns have entries $1$, and the others are $0$.
We now claim the following
\begin{prop}\label{piinn}
Consider the Niemeier lattice $N$ of type $A_1^{24}$ with root sublattice $R$ generated
by roots $f_1,\ldots, f_{24}$. With our choice \mbox{\rm\req{o9octad}} of a special octad 
$\m O_9$ let
\be \label{specialchoice}
\wt K := \left\{ \nu\in N\mid  \fa n\not\in \m O_9\colon \langle \nu, f_n\rangle=0 \right\}, \quad
\wt\Pi := \left\{ \nu\in N\mid \fa n\in \m O_9\colon \langle \nu, f_n\rangle=0 \right\}.
\ee
Then $\wt K$ and $\wt\Pi$ are primitive sublattices of $N$ which are orthogonal complements
of one another. 
{Moreover}, $\wt\Pi(-1)$ is isometric to the Kummer lattice $\Pi$ as in \mbox{\rm\req{Kummer}}.
\end{prop}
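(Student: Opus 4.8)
The plan is to translate the defining conditions \req{specialchoice} into coordinates and then recognise $\wt\Pi$ as a gluing of $A_1^{16}$ of precisely the type that produces the Kummer lattice. Writing $\nu\in N\subseteq R^\ast=\tfrac12 R$ as $\nu=\tfrac12\sum_{m}c_m f_m$ with $c_m\in\Z$, the relation $\langle f_n,f_m\rangle=2\delta_{nm}$ gives $\langle\nu,f_n\rangle=c_n$. Hence by \req{specialchoice}, $\wt\Pi$ consists of those $\nu\in N$ whose $R^\ast$-coordinates are supported on the $16$ labels in $\m I\setminus\m O_9$, while $\wt K$ consists of those supported on the $8$ labels in $\m O_9$. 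The easy structural claims then follow: the two lattices are orthogonal because the form is diagonal in the $f_m$ and their supports are disjoint; each is primitive, since $\nu\in N\cap(\wt\Pi\otimes\Q)$ has coordinates vanishing on $\m O_9$ and hence lies in $\wt\Pi$ (and symmetrically for $\wt K$); and since $\wt\Pi$ contains the $16$ roots $f_n$ ($n\notin\m O_9$) and $\wt K$ the remaining $8$, they have ranks $16$ and $8$. As both are primitive of complementary rank in the rank-$24$ lattice $N$ with $\wt\Pi\subseteq\wt K^\perp\cap N$, they are orthogonal complements of one another.

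The substance is the isometry $\wt\Pi(-1)\cong\Pi$. By proposition \ref{niefromroot}, $\nu=\tfrac12\sum_{n\notin\m O_9}c_nf_n$ lies in $N$ exactly when its reduction $\qu\nu\in\F_2^{24}$ is a Golay codeword; being supported on $\m I\setminus\m O_9$, this codeword lies in the subcode
$$
C:=\left\{v\in\m G_{24}\mid A_v\subseteq\m I\setminus\m O_9\right\}.
$$
Thus $\wt\Pi$ is generated over $\Z$ by the roots $f_n$, $n\notin\m O_9$ (spanning a copy of $A_1^{16}$), together with the glue vectors $\tfrac12\sum_{n\in A_v}f_n$ for $v\in C$. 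The whole problem reduces to identifying $C$, as a family of subsets of the $16$-element set $\m I\setminus\m O_9$, with the family of affine hyperplanes of $\F_2^4$ entering \req{Kummer}.

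This combinatorial identification is the main obstacle, and I would attack it by first pinning down $C$ abstractly. Every codeword meets $\m O_9$ in an even number of points, since $\m G_{24}$ is self-dual and $o_9\in\m G_{24}$; so restriction to $\m O_9$ maps $\m G_{24}$ into the $[8,7]$ even-weight code. Using the standard octad intersection numbers of $S(5,8,24)$ — for a fixed octad there are $1,448,280,30$ octads meeting it in $8,2,4,0$ points — every pair $\{i,j\}\subset\m O_9$ arises as an exact intersection $B\cap\m O_9$, so the restriction map is onto the even-weight code and $\dim C=12-7=5$. On the other hand $C$ visibly contains $0$, the $30$ octads disjoint from $\m O_9$, and the complement octad $\m I\setminus\m O_9$; these already number $1+30+1=32=2^5$, so $C$ is exactly this set, with weight enumerator $1+30x^8+x^{16}$ and no dodecads. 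For two weight-$8$ words $v,w\in C$ the sum $v+w\in C$ has weight $8$ or $16$: in the latter case $A_v,A_w$ are complementary $8$-subsets, in the former $|A_v\cap A_w|=4$. This is precisely the incidence pattern of parallel, respectively non-parallel, affine hyperplanes in $\F_2^4$. I would then fix the labelling $\m I\setminus\m O_9\cong\F_2^4$ supplied by the explicit MOG coordinatisation of our chosen octad \req{o9octad}, under which these $30$ octads become exactly the $30$ affine hyperplanes $H\subset\F_2^4$ of \req{Kummer}; equivalently one invokes that a $[16,5]$ code with this weight enumerator containing the all-ones word is monomially equivalent to the first-order Reed--Muller code $RM(1,4)$.

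With $\m I\setminus\m O_9\cong\F_2^4$ in place, set $f_n\mapsto E_{\vec a}$ for the corresponding $\vec a\in\F_2^4$ and extend $\R$-linearly. After the total signature reversal, $\langle f_n,f_m\rangle_{(-1)}=-2\delta_{nm}=\langle E_{\vec a},E_{\vec b}\rangle$, so the $A_1^{16}$ parts match. The $30$ weight-$8$ glue vectors $\tfrac12\sum_{n\in A_v}f_n$ are carried onto the hyperplane half-sums $\tfrac12\sum_{\vec a\in H}E_{\vec a}$ of \req{Kummer}, and the lone weight-$16$ glue vector $\tfrac12\sum_{n\notin\m O_9}f_n$ is redundant, being the sum of the half-sums over any two complementary (parallel) hyperplanes. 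Thus the generating set of $\wt\Pi(-1)$ is mapped bijectively onto the generating set of $\Pi$ in \req{Kummer} by an isometry, proving $\wt\Pi(-1)\cong\Pi$ and completing the proposition.
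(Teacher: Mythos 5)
Your proof is correct, and for the substantive claim --- the isometry $\wt\Pi(-1)\cong\Pi$ --- it takes a genuinely different route from the paper's. The paper proceeds by explicit computation: it exhibits a concrete basis $\m H_1,\ldots,\m H_5$ of the subcode of Golay words avoiding $\m O_9$ (equation \req{hyperplanes}), writes down the bijection $I\colon\m I\setminus\m O_9\to\F_2^4$ of \req{map}, and checks directly that the $\m H_i$ correspond to affine hyperplanes which generate all hyperplanes under symmetric differences. You instead pin down the subcode $C$ abstractly: self-duality of $\m G_{24}$ together with the octad intersection numbers gives $\dim C=5$, the $32$ visible elements then exhaust $C$ and force the weight enumerator $1+30x^8+x^{16}$, and the identification with $RM(1,4)$ (up to permutation equivalence) supplies the required labelling $\m I\setminus\m O_9\cong\F_2^4$. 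Your argument is more conceptual and makes transparent that the conclusion is independent of the choice of special octad, at the cost of leaning on two standard but not entirely trivial coding-theoretic inputs: the even distribution of the $448$ octads over the $28$ pairs of $\m O_9$ (needed for surjectivity of the restriction map onto the even-weight code --- the raw intersection numbers alone do not give this without the $2$-transitivity of the octad stabilizer), and the uniqueness of the $[16,5]$ code with this weight enumerator. Both facts are true and citable, but if you want the proof self-contained you should either justify them or note that $\dim C\geq 5$ already follows from the $32$ explicit elements being closed under addition, with the upper bound coming from the surjectivity argument. One further point of comparison: the paper's computational route has the side benefit of producing the \emph{specific} map $I$ of \req{map}, which is used verbatim throughout the rest of the paper; your abstract argument only yields \emph{some} such bijection, which suffices for the proposition as stated but not for the later explicit constructions. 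The preliminary claims (primitivity, orthogonality, ranks, complementarity) are handled essentially identically in both proofs.
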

\begin{proof}
That $\wt K$ and $\wt\Pi$ are primitive sublattices of $N$
follows immediately from \req{specialchoice}. For $n\in\m I$, by construction, $f_n\in\wt K$ if 
and only if $n\in\m O_9$, while $f_n\in\wt\Pi$ if and only if $n\not\in\m O_9$. 
Thus $\wt K$ has rank at least $8$ and $\wt\Pi$ has rank at least $16$ 
with $\wt K^\perp\cap N\subset\wt\Pi$, $\wt\Pi^\perp\cap N\subset\wt K$. Thus $8+16=24=\rk(N)$
implies that these lattices are orthogonal complements of one another.

To prove that $\wt\Pi(-1)$ is isometric to the Kummer lattice $\Pi$ given in \req{Kummer}, first 
observe that there is a $5$-dimensional subspace of the Golay code 
$\m G_{24}$, defined as the space of all those codewords which have 
no intersection with the octad $\m O_9$. A basis of this space is
\be \label{hyperplanes}
\begin{array}{rcl}
{\m H}_1&:=&\{1,2,4,12,13,14,17,18\},\\[5pt]
{\m H}_2&:=&\{1,2,8,11,14,16,17,22\},\\[5pt]
{\m H}_3&:=&\{1,8,10,11,13,17,18,21\},\\[5pt]
{\m H}_4&:=&\{1,4,11,13,14,16,20,21\},\\[5pt]
{\m H}_5&:=&\{2, 7, 8, 10,12, 17, 18,  22\}.
\end{array}
\ee
Hence we have
\be\label{GolayKummer}
\wt\Pi = \spann_\Z\left\{ f_n,\, n\not\in\m O_9;\;
\textstyle\hf\sum\limits_{n\in\m H_i} f_n,\, i=1,\ldots,5\right\}.
\ee
Now consider the map 
$I\colon\m I\setminus\m O_9\longrightarrow\F_2^4$ with
\be
\label{map}
I\colon\left\{
 \begin{array}{rcrcrcr}
1 \mapsto (0, 0, 0, 0),&&8 \mapsto (1, 0, 0, 1),&&13 \mapsto (0, 1, 0, 0),&&18 \mapsto (0, 1, 0, 1),\\
2 \mapsto (0, 0, 1, 1),&&10 \mapsto (1, 1, 0, 1),&&14 \mapsto (0, 0, 1, 0),&&20 \mapsto (1, 1, 1, 0),\\
4 \mapsto (0, 1, 1, 0),&&11 \mapsto (1, 0, 0, 0),&&16 \mapsto (1, 0, 1, 0),&&21 \mapsto (1, 1, 0, 0),\\
7 \mapsto (1, 1, 1, 1),&&12 \mapsto (0, 1, 1, 1),&&17 \mapsto (0, 0, 0, 1),&&22 \mapsto (1, 0, 1, 1).
\end{array}\quad
\right.
\ee
Under this map, the elements of $\m H_i$ with $i=1,\ldots,4$ correspond to the 
hypercube points $\vec a=(a_1,a_2,a_3,a_4)\in\F_2^4$ with $a_i=0$, while 
the hypercube points
corresponding to elements of $\m H_5$ are obtained from those corresponding to 
$\m H_4$ by a shift by $(1,1,1,1)\in\F_2^4$. In other words, in terms of the hypercube 
labels, each $\m H_i$ contains the labels corresponding to an affine hyperplane 
$H_i\subset\F_2^{4}$, such that every hyperplane in $\F_2^{4}$ can be 
obtained from $H_1,\ldots, H_5$ by means of symmetric differences. 
Hence (\ref{Kummer}) and (\ref{GolayKummer}) show that the map 
$f_n\mapsto E_{I(n)}$ for $n\not\in\m O_9$ induces an isometry of lattices
$\wt\Pi(-1)\longrightarrow\Pi$. 
\end{proof}

To our knowledge, the relation of the Golay code to the Kummer lattice 
found in proposition \ref{piinn} is a new 
observation. It is certainly crucial for our analysis below.
Our map $I$ in \req{map} induces the structure of a $4$-dimensional
vector space over $\F_2$
on the $16$ labels of the Golay code in $\m I\setminus\m O_9$. 
This linear structure
is known to group theorists\footnote{as we learned after the previous installment
of our work arXiv:1107.3834}: In \cite[Thm.~2.10]{co71}, Conway proves 
the previously known fact (see e.g. \cite{jo70}) that the general
linear group $L_4(2)=\PSL(4,\F_2)=\GL(4,\F_2)$ of $\F_2^4$ is isomorphic to the permutation group
$A_8$ of even permutations on $8$ elements. In our notations,
the crucial step in the proof identifies the action of the stabilizer subgroup
of $\m O_9\cup\{7\}$ in $M_{24}$ with the action of a subgroup of 
$L_4(2)$ on the
$4$-dimensional subspace of $\m G_{24}$ generated by the octads
$\m H_1,\ldots,\m H_4$ as in \req{hyperplanes}. By our construction, this vector space is the
dual of our hypercube $\F_2^4$ built on $\m I\setminus\m O_9$. The vector space 
structure of the latter,
to our knowledge, is first mentioned by Curtis in \cite{cu89}. Hence altogether our
proposition \ref{piinn} gives a novel geometric meaning  in terms of Kummer geometry
to the known vector space
structure on $\m I\setminus\m O_9$.

Note that both lattices $\wt K$ and $\wt\Pi$ in proposition \ref{piinn}
are contained in the $\Q$-span of their root sublattices. 
The gluing techniques of proposition \ref{glueperp} apply and allow us to 
reconstruct the Niemeier lattice $N$ from these lattices.
Indeed, first note that $\wt K^\ast/\wt K\cong \wt\Pi^\ast/\wt \Pi\cong(\Z_2)^6$ 
with associated discriminant forms obeying $q_{\wt K}=-q_{\wt\Pi}$. Namely, 
as representatives $q_{ij}\in\wt K^\ast$ of a minimal set of generators of 
$\wt K^\ast/\wt K$ we identify, for example,
\be\label{kijchoice}
\begin{array}{rclrcl}
q_{12}&:=&\hf\left( f_3+f_6-f_{15}-f_{19}\right),\;&
q_{34}&:=&\hf\left(f_6+f_9-f_{15}-f_{19}\right),\\[5pt]
q_{13}&:=&\hf\left( -f_6+f_{15}-f_{23}+f_{24}\right),\; &
q_{24}&:=&\hf\left( -f_{15}+f_{19}+f_{23}-f_{24}\right),\\[5pt]
q_{14}&:=&\hf\left( f_3-f_{9}-f_{15}+f_{24}\right),\;&
q_{23}&:=&\hf\left( f_3-f_{9}-f_{15}+f_{23}\right),\;
\end{array}
\ee
where the choices of signs at this stage are arbitrary but will come useful later on. 
The resulting quadratic form is thus calculated to
\be\label{quadform}
q_{\wt K} = \left( \begin{array}{cc}0&\hf\\\hf&0\end{array}\right)^3
\ee
with the associated bilinear form taking values in $\Q/\Z$. An analogous analysis yields 
representatives $p_{ij}\in\wt\Pi^\ast$ of generators  of $\wt\Pi^\ast/\wt \Pi$ which are 
glued to the $q_{ij}$ under an appropriate isomorphism 
\be\label{Niemeierglue}
\wt\gamma\colon \wt K^\ast/\wt K\longrightarrow \wt\Pi^\ast/\wt \Pi, 
\quad\wt\gamma(\qu{q_{ij}})=\qu{p_{ij}}, 
\ee
such that $q_{\wt K}=-q_{\wt\Pi}\circ\wt\gamma$ for the associated quadratic forms. 
In fact, we can use
\be\label{pdef}
\begin{array}{rclrcl}
p_{12}&:=&\hf\left( f_1+f_{11}+f_{13}+f_{21}\right),\; &
p_{34}&:=&\hf\left( f_1+f_{2}+f_{14}+f_{17}\right),\\[5pt]
p_{13}&:=&\hf\left( f_1+f_{11}+f_{14}+f_{16}\right),\; &
p_{24}&:=&\hf\left(f_1+f_{13}+f_{17}+f_{18}\right),\\[5pt]
p_{14}&:=&\hf\left( f_1+f_{8}+f_{11}+f_{17}\right),\; &
p_{23}&:=&\hf\left( f_1+f_{4}+f_{13}+f_{14}\right).
\end{array}
\ee
Denoting by $\wt P_{ij}\subset\m I$ the sets of labels such that
$$
{p_{ij}} = {\textstyle\hf} \sum\limits_{n\in\wt P_{ij}} f_n\quad\in\wt \Pi^\ast
$$
in (\ref{pdef}) above, we find that the map $I$ 
of \req{map} maps $\wt P_{ij}$ to the plane $P_{ij}\subset \F_2^{4}$ given in (\ref{planes}). 
In other words, the isometry between $\Pi$ and $\wt\Pi(-1)$ induced by $I$ is compatible
with the gluing prescriptions described in section \ref{exampleK3} and above.

As to uniqueness of the embedding of the Kummer lattice in the Niemeier lattice $N(-1)$, 
we have
\begin{prop}\label{uniquepi}
Consider the Niemeier lattice $N$ of type $A_1^{24}$. The primitive embedding
of the Kummer lattice $\Pi$ in $N(-1)$ found in proposition \mbox{\rm\ref{piprim}} is unique up to 
automorphisms of $N$.
\end{prop}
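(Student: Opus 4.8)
The plan is to pin down any primitive copy of the Kummer lattice inside $N$ by its roots and then move it into the standard position of proposition \ref{piinn} by an element of $\Aut(N)=(\Z_2)^{24}\rtimes M_{24}$. Let $\iota\colon\Pi\hookrightarrow N(-1)$ be an arbitrary primitive embedding, so that $\wt\Pi':=\iota(\Pi)(-1)\subset N$ is a primitive sublattice with $\wt\Pi'(-1)\cong\Pi$. First I would observe that $\Pi$ contains exactly the $16$ pairwise orthogonal roots $\pm E_{\vec a}$, $\vec a\in\F_2^4$. Under $\iota$ these map to $16$ pairwise orthogonal roots of $N$, and by \req{nieroot} the roots of $N$ are precisely $\{\pm f_1,\ldots,\pm f_{24}\}$ with $f_n\perp f_m$ exactly when $n\neq m$. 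Hence the images are $\{\epsilon_n f_n\mid n\in S\}$ for some $16$-element subset $S\subset\m I$ and signs $\epsilon_n\in\{\pm1\}$. Since $\wt\Pi'$ is primitive of rank $16$ and these roots span $V_S:=\spann_\Q\{f_n\mid n\in S\}$, primitivity forces $\wt\Pi'=V_S\cap N$. Using $N=\{v\in R^\ast\mid\qu v\in\m G_{24}\}$ from proposition \ref{niefromroot}, this intersection is exactly the code lattice built from the subcode $C_S:=\{v\in\m G_{24}\mid \mathrm{supp}(v)\subseteq S\}$, so that $[\wt\Pi':\spann_\Z\{f_n\mid n\in S\}]=2^{\dim C_S}$ and $\wt\Pi'$ is determined by $S$.

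Second I would compute $\dim C_S$. Since $\spann_\Z\{f_n\mid n\in S\}\cong A_1^{16}$ has $|\disc|=2^{16}$ while $|\disc(\Pi)|=2^{6}$, the index above gives $16-2\dim C_S=6$, i.e.\ $\dim C_S=5$. Now the self-duality of the Golay code yields, for any $T\subseteq\m I$, the identity $\dim C_T-\dim C_{\m I\setminus T}=|T|-12$, obtained by relating the shortened and punctured codes on $T$; for $T=S$ this reads $\dim C_S-\dim C_{S^c}=4$, whence $\dim C_{S^c}=1$. A single nonzero Golay word supported in the $8$-element set $S^c$ must have weight $8$ and support all of $S^c$, so $S^c$ is an octad. (Conversely, for $S^c=\m O_9$ this reproduces the $5$-dimensional space \req{hyperplanes}, consistent with proposition \ref{piinn}.)

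Third, since $M_{24}$ acts transitively on the octads, I would choose $\sigma\in M_{24}$ with $\sigma(S^c)=\m O_9$, equivalently $\sigma(S)=\m I\setminus\m O_9$. Viewed through proposition \ref{M24onN} as an element of $\Aut(N)$ permuting the $f_n$, this $\sigma$ is an isometry of $N$ with $\sigma(V_S)=V_{\m I\setminus\m O_9}$, hence it maps $V_S\cap N$ onto $V_{\m I\setminus\m O_9}\cap N=\wt\Pi$, the standard sublattice of proposition \ref{piinn}. This establishes that the image of every primitive embedding of $\Pi$ is $\Aut(N)$-equivalent to $\wt\Pi(-1)$.

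Finally, to upgrade this equality of images to uniqueness of the embedding itself, I would compare $\sigma\circ\iota$ with the standard embedding of proposition \ref{piinn}; the two share the target $\wt\Pi(-1)$ and therefore differ by an automorphism $\phi\in\Aut(\Pi)$ of that sublattice. It then remains to realise $\phi$ by an element of the stabiliser of $\wt\Pi$ in $\Aut(N)$. Here the octad stabiliser $\mathrm{Stab}_{M_{24}}(\m O_9)\cong 2^4{:}A_8$ acts on the sixteen labels $\m I\setminus\m O_9$ as the full affine group of the hypercube $\F_2^4$, via \req{map} and Conway's isomorphism $\GL(4,\F_2)\cong A_8$, thereby inducing exactly the automorphisms of $\Pi$ that permute the roots $E_{\vec a}$, while the sign subgroup $(\Z_2)^{24}$ accounts for the residual sign freedom. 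Showing that these together surject onto $\Aut(\Pi)$ is, to my mind, the main obstacle: the root combinatorics of the first three steps is essentially forced, but extending every Kummer-lattice automorphism to $N$ requires the concrete matching of $\Aut(\Pi)$ with the octad stabiliser inside $M_{24}$ rather than a soft dimension count.
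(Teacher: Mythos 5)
Your argument is correct in substance but takes a genuinely different route from the paper's, most visibly in the endgame. You and the paper agree on the opening move: the sixteen pairwise orthogonal roots force the image of $\Pi$ onto a $16$-element support set $S\subset\m I$, and primitivity pins the image down as $\spann_\Q\{f_n\mid n\in S\}\cap N$. But you detect the octad $S^c$ via the discriminant count $\dim C_S=5$ together with the shortened/punctured-code identity for the self-dual Golay code, whereas the paper gets it in one line from $e=\hf\sum_{\vec a}E_{\vec a}\in\Pi$: its image $\hf\sum_{n\in S}f_n$ lies in $N$, so $S$ is a weight-$16$ codeword by proposition \ref{niefromroot} and $S^c$ is an octad. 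More importantly, the paper never invokes transitivity of $M_{24}$ on octads nor the structure of the octad stabiliser; it fixes the specific isometry $i(E_{I(n)})\mapsto f_n$ on the image and then shows directly that the induced isomorphism of discriminant groups $\wh K^\ast/\wh K\to\wt K^\ast/\wt K$ lifts to an isometry of the orthogonal complements $\wh K\to\wt K$, by an explicit combinatorial matching of quadruplets $\wh Q_{ij}$ in the octad with the $Q_{ij}$ of \req{kijchoice}; gluing then produces the required automorphism of $N$. Your route --- move the image into standard position by octad-transitivity, then absorb the residual $\phi\in\Aut(\Pi)$ into the stabiliser of $\wt\Pi$ in $\Aut(N)$ --- is structurally cleaner but leans on more $M_{24}$ group theory; the paper's is more computational but self-contained.

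The step you flag as the main obstacle is real but closable, and the paper itself supplies the missing fact in its conclusions: by Todd and by Conway's Thm.~2.10 (both cited there), $\mathrm{Stab}_{M_{24}}(\m O_9)\cong(\Z_2)^4\rtimes A_8$ acts on $\m I\setminus\m O_9$ precisely as the full affine group $\Aff(\F_2^4)=(\Z_2)^4\rtimes\GL(4,\F_2)$. To get the surjection onto $\Aut(\Pi)$ you then only need the complementary bound $\Aut(\Pi)\subseteq(\Z_2)^{16}\rtimes\Aff(\F_2^4)$, which is elementary: every nonzero codeword of the binary code on $\F_2^4$ spanned by the affine hyperplanes in \req{Kummer} has weight at least $8$, so $\pm E_{\vec a}$ are the only vectors of $\Pi$ of norm $-2$; hence any isometry of $\Pi$ is a signed permutation of the $E_{\vec a}$ whose permutation part preserves that code, and the permutation automorphism group of that code is exactly $\Aff(\F_2^4)$. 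Since the sign subgroup $(\Z_2)^{24}\subset\Aut(N)$ of proposition \ref{M24onN} realises all sign flips on the $f_n$ with $n\notin\m O_9$, the stabiliser of $\wt\Pi$ in $\Aut(N)$ surjects onto $\Aut(\wt\Pi)$ and your proof is complete.
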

\begin{proof}
Assume that $i\colon\Pi\hookrightarrow N(-1)$ is a primitive embedding of the 
Kummer lattice $\Pi$ in $N(-1)$. 
Then from \req{Kummer} we deduce that the
$i(E_{\vec a})$, $\vec a\in\F_2^4$, yield 
$16$ pairwise perpendicular vectors in $\wh\Pi:=i(\Pi)\subset N(-1)$ on which 
the quadratic form takes value $-2$. Hence by \req{nieroot} and with the notations
used there, for every $n\in\m I\setminus\m O_9$
we find some $\wh I(n)\in\m I$ such that
$i(E_{I(n)}) = \pm f_{\wh I(n)}\in\wh\Pi$, where $I$ is the map \req{map}
and the indices $\wh I(n)$ are pairwise distinct. In particular, $i$
induces the hypercube structure of $\F_2^4$ on the labels 
$\{\wh I(n)\mid n\in\m I\setminus\m O_9\}$, and the map induced by
$i(E_{I(n)})\mapsto f_n$ for all $n\in\m I\setminus\m O_9$ is an
isometry $\wh\Pi\cong\wt\Pi$. 

Again by \req{Kummer} we have $e:={1\over2}\sum_{\vec a\in\F_2^4} E_{\vec a}\in\Pi$
and thus ${1\over2} \sum_{n\in\m I\setminus\m O_9} f_{\wh I(n)}\in N$. 
However, by proposition \ref{niefromroot}
this implies that $\check{\m O}:=\{ \wh I(n) \mid n\in\m I\setminus\m O_9\}$ 
gives a codeword of the Golay code. It follows that
$\m O:=\m I\setminus\check{\m O}$ is an octad in the Golay code. This means that the lattice
$\wh K:=\wh\Pi^\perp\cap N$ is isometric to the lattice $\wt K$ defined in \req{specialchoice}.
Note that $\wh K$ is generated by the $f_n$ with $n\in\m O$ along with $\hf\sum_{n\in\m O}f_n$.

Using proposition \ref{glueperp} we can recover $N$ either  from gluing $\wh K$
to $\wh\Pi$, or  from gluing $\wt K$
to $\wt\Pi$. On the level of discriminant groups, our isometry $\wh\Pi\cong\wt\Pi$ 
together with the respective gluing isometries yields an
isometry of the discriminant groups $\wh K^\ast/\wh K\longrightarrow \wt K^\ast/\wt K$.
We need to show that this isometry can be lifted to an isometry  
$\wh K\longrightarrow \wt K$ in order to yield the isometry
$\wh K\oplus\wh\Pi\cong\wt K\oplus\wt\Pi$
compatible with gluing. This can be done by explicit calculation:

Given the quadratic form \req{quadform} on our discriminant groups, 
it follows from the gluing prescription \req{Niemeierglue} that the  preimages under 
$\wh K^\ast/\wh K\longrightarrow \wt K^\ast/\wt K$ of  the $\qu{q_{ij}}$ 
with $q_{ij}$ given by 
\req{kijchoice} 
are of the form $\qu{\hf\sum_{n\in\wh Q_{ij}} f_n}$,
where $\hf\sum_{n\in\wh Q_{ij}} f_n\in\wh K^\ast$ is glued to $\hf\sum_{n\in\wh P_{ij}} f_n\in\wh\Pi^\ast$
with $\wh P_{ij}=\wh I(\wt P_{ij})$ and $I(\wt P_{ij})=P_{ij}\subset\F_2^4$ in
\req{planes}. In other words, $\wh Q_{ij}\subset\m O$ is a quadruplet of labels 
such that $\wh P_{ij}\cup\wh Q_{ij}$ is an octad in the Golay code.
Since $\m O$ is an octad, by replacing 
$\wh Q_{ij}$ by $\m O\setminus \wh Q_{ij}$ where necessary, we may assume that
all $\wh Q_{ij}$ share a common label, say $a$. Comparing to the $Q_{ij}\subset\m O_9$
which in \req{kijchoice} yield $q_{ij}=\hf\sum_{n\in Q_{ij}}(\pm f_n)$, we map $f_{a}\mapsto f_{15}$.
By means of symmetric differences, for example $\wh P_{12}+\wh P_{13}+\wh Q_{12}+\wh Q_{13}$
must yield an octad, where $\wh P_{12}+\wh P_{13}=\wh I(\wt P_{12}+\wt P_{13})$ is a quadruplet of labels.
Thus $\wh Q_{12}$ and $\wh Q_{13}$ share precisely two entries, $a, b$, say.
Again comparing to \req{kijchoice}, we map $f_{b}\mapsto f_{6}$. Continuing 
in this fashion one obtains the desired isometry $\wh K\longrightarrow \wt K$
which by construction induces $\wh K^\ast/\wh K\longrightarrow \wt K^\ast/\wt K$.
In other words, it is compatible with gluing and we obtain an isometry
$\wh K\oplus\wh\Pi\cong\wt K\oplus\wt\Pi$ which can be extended
to a lattice automorphism $\alpha$ of $N$. Then $\alpha\circ i$
gives the primitive embedding
of the Kummer lattice $\Pi$ in $N(-1)$ found in proposition \ref{piprim}.
\end{proof}

Finally, we give yet another description of the Niemeier lattice $N$ in terms 
of the gluing techniques of proposition \ref{glueperp}, which resembles the 
description of the full integral K3-homology  in proposition \ref{fullk3}:
\begin{prop}\label{fullniemeierprop}
Consider the Niemeier lattice $N$ of type $A_1^{24}$ with root sublattice $R$ generated
by roots $f_1,\ldots, f_{24}$. With our choice \mbox{\rm\req{o9octad}} of a special octad 
$\m O_9$ let $n_0\in \m O_9$ and 
$$
\begin{array}{rcl}
\wt{\m K}_{n_0}&:=&\left\{ \nu\in N\mid  \fa n\not\in\m O_9\setminus\{n_0\}\colon \langle \nu, f_n\rangle=0 \right\},\\[5pt]
\wt{\m P}_{n_0}&:=&\left\{ \nu\in N\mid \fa n\in\m O_9\setminus\{n_0\}\colon \langle \nu, f_n\rangle=0 \right\}.
\end{array}
$$
Then $\wt{\m K}_{n_0}$ is a root lattice of type $A_1^{7}$ and 
$\wt{\m P}_{n_0}=\wt\Pi\oplus\spann_\Z\{f_{n_0}\}$. 
Both lattices are primitive sublattices of $N$ which are
orthogonal complements of one 
another. Furthermore, $\wt{\m K}_{n_0}^\ast/\wt{\m K}_{n_0}\cong 
\wt{\m P}_{n_0}^\ast/\wt{\m P}_{n_0}\cong(\Z_2)^7$ under an isomorphism 
$\wt g_{n_0}\colon \wt{\m K}_{n_0}^\ast/\wt{\m K}_{n_0}{\longrightarrow} \wt{\m P}_{n_0}^\ast/\wt{\m P}_{n_0}$ with
\ba\label{fullniemeier}
\wt g_{n_0}(\qu{q_{ij}})&:=&\wt\gamma(\qu{q_{ij}}) \mbox{ for }
ij=12,\, 34,\, 13,\, 24,\, 14,\, 23,\nonumber\\
\wt g_{n_0}(\qu{\textstyle\hf \sum\limits_{n\in\m O_9\setminus\{n_0\}} f_n}) &:=& \qu{\textstyle\hf f_{n_0}};\nonumber\\[5pt]
\wt{\m P}_{n_0}\cong {\m P}(-1),\quad \quad\quad
N &\cong& \left\{ (k,p)\in \wt{\m K}_{n_0}^\ast\oplus \wt{\m P}_{n_0}^\ast\mid\wt g_{n_0}(\qu k)=\qu p \right\}
\ea
with notations as in \mbox{\rm\req{kijchoice}--\req{pdef}}.
\end{prop}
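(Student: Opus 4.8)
The plan is to feed the orthogonal pair $(\wt{\m K}_{n_0},\wt{\m P}_{n_0})$ inside the unimodular lattice $N$ into the gluing machine of proposition \ref{glueperp}, after first identifying both lattices explicitly through the Golay-code description of $N$ in proposition \ref{niefromroot}. Writing a general $\nu=\sum_n\tfrac{c_n}{2}f_n\in N$ with parity vector $\qu\nu\in\m G_{24}$, I would first treat $\wt{\m K}_{n_0}$: the defining conditions $\langle\nu,f_n\rangle=c_n=0$ for all $n\not\in\m O_9\setminus\{n_0\}$ force $\nu$ to be supported on $\m O_9\setminus\{n_0\}$, and since the only Golay codeword contained in the octad $\m O_9$ besides $\m O_9$ itself is empty, $\qu\nu$ must vanish; hence all $c_n$ are even and $\wt{\m K}_{n_0}=\spann_\Z\{f_n\mid n\in\m O_9\setminus\{n_0\}\}$, a root lattice of type $A_1^7$. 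For $\wt{\m P}_{n_0}$ the support lies in $(\m I\setminus\m O_9)\cup\{n_0\}$, so $\qu\nu$ can meet $\m O_9$ only in $\{n_0\}$; self-duality of $\m G_{24}$ forces $|\qu\nu\cap\m O_9|$ even, hence $c_{n_0}$ is even and $\nu-\tfrac{c_{n_0}}{2}f_{n_0}\in\wt\Pi$. This yields $\wt{\m P}_{n_0}=\wt\Pi\oplus\spann_\Z\{f_{n_0}\}$.

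Primitivity of both lattices is immediate, since each is the kernel in $N$ of pairing against a set of roots; the rank count $7+17=24=\rk(N)$ together with the evident orthogonality then shows they are orthogonal complements of one another. The isometry $\wt{\m P}_{n_0}\cong\m P(-1)$ follows by combining $\wt\Pi(-1)\cong\Pi$ from proposition \ref{piinn} with $\langle f_{n_0},f_{n_0}\rangle=2$: after reversal of signature the summand $\spann_\Z\{f_{n_0}\}$ becomes $\langle-2\rangle=\spann_\Z\{\upsilon_0-\upsilon\}$, matching $\m P=\Pi\oplus\spann_\Z\{\upsilon_0-\upsilon\}$ of proposition \ref{fullk3}.

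Both discriminant groups are then $(\Z_2)^7$, generated for $\wt{\m K}_{n_0}\cong A_1^7$ by the classes $\qu{\tfrac12 f_n}$, and for $\wt{\m P}_{n_0}$ as the sum of $\wt\Pi^\ast/\wt\Pi\cong(\Z_2)^6$ and the $\Z_2$ coming from $f_{n_0}$. To pin down the gluing isomorphism I would read off directly which vectors of $\wt{\m K}_{n_0}^\ast\oplus\wt{\m P}_{n_0}^\ast$ land in $N$. The decisive pairing is the new generator: because $\m O_9$ is a codeword, $\tfrac12\sum_{n\in\m O_9}f_n\in N$ by proposition \ref{niefromroot}, and splitting this as $\tfrac12\sum_{n\in\m O_9\setminus\{n_0\}}f_n+\tfrac12 f_{n_0}$ exhibits the gluing $\wt g_{n_0}(\qu{\tfrac12\sum_{n\in\m O_9\setminus\{n_0\}}f_n})=\qu{\tfrac12 f_{n_0}}$. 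For the six classes $\qu{q_{ij}}\mapsto\qu{p_{ij}}$ I would reuse the gluing $\wt\gamma$ already established around \req{Niemeierglue}: taking $n_0$ to be the label of $\m O_9$ not occurring in \req{kijchoice}, each $q_{ij}$ lies in $\wt{\m K}_{n_0}^\ast$, each $p_{ij}\in\wt\Pi^\ast\subset\wt{\m P}_{n_0}^\ast$, and the support of $q_{ij}+p_{ij}$ is an octad, so $q_{ij}+p_{ij}\in N$ and the relation $\wt\gamma(\qu{q_{ij}})=\qu{p_{ij}}$ persists in the new duals. Since these seven pairings realise $\wt g_{n_0}$ as the gluing isomorphism attached to the primitive pair $(\wt{\m K}_{n_0},\wt{\m P}_{n_0})$, proposition \ref{glueperp} delivers both $q_{\wt{\m K}_{n_0}}=-q_{\wt{\m P}_{n_0}}\circ\wt g_{n_0}$ and the reconstruction of $N$ as the stated fibre product.

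The main obstacle is the bookkeeping in this last step: checking that the representatives of \req{kijchoice}--\req{pdef} really lie in the enlarged duals $\wt{\m K}_{n_0}^\ast$, $\wt{\m P}_{n_0}^\ast$ rather than only in $\wt K^\ast$, $\wt\Pi^\ast$, that the six classes $\qu{q_{ij}}$ together with the new octad-half generator are $\F_2$-independent and so generate all of $(\Z_2)^7$ with the quadratic form \req{quadform}, and that every support occurring is an octad — a Golay/MOG computation. For a label $n_0$ other than the distinguished one, I would either interpret $\qu{q_{ij}}$ through orthogonal projection onto $\wt{\m K}_{n_0}$ or, more cleanly, transport the whole configuration by an element of the octad stabiliser $2^4{:}A_8\subset M_{24}$, which acts transitively on the eight labels of $\m O_9$, thereby reducing to the distinguished case.
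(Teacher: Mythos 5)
Your proposal is correct and follows essentially the same route as the paper's proof: identify $\wt{\m K}_{n_0}$ and $\wt{\m P}_{n_0}$ explicitly via the Golay-code description of $N$ (using that no nontrivial codeword has weight below $8$ and that codewords intersect $\m O_9$ evenly), deduce primitivity and complementarity from the rank count, and then invoke proposition \ref{glueperp}. Your treatment of the gluing isomorphism and of labels $n_0\neq 5$ is somewhat more explicit than the paper's, which simply cites proposition \ref{glueperp} for the final step, but this is added detail rather than a different argument.
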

\begin{proof}
The lattices
$\wt{\m K}_{n_0},\, \wt{\m P}_{n_0}$  are perpendicular primitive sublattices of $N$ of rank $7$ and $17$
and thus orthogonal complements of one another by construction.

By proposition \ref{niefromroot} the lattice
$\wt{\m K}_{n_0}\subset\wt{\m K}$ is generated by the $f_n$ with $n\in \m O_9\setminus\{n_0\}$ along 
with linear combinations $\hf \sum_{n\in A} f_n$, if $A\subset \m O_9\setminus\{n_0\}$
 corresponds to a codeword of the Golay code. However, since $O_9\setminus\{n_0\}$ 
 contains only $7$ elements, while the shortest non-trivial codeword in the Golay code 
 has weight $8$, we find $\wt{\m K}_{n_0}=\spann_\Z\{ f_n\mid n\in \m O_9\setminus\{n_0\}\}$, 
 which is a root lattice of type $A_1^{7}$ as claimed. 
 
 Similarly, by \req{specialchoice} the lattice
 $\wt{\m P}_{n_0}$ is generated by the elements of $\wt\Pi$ along with $f_{n_0}$ and 
 any linear combination $\hf \sum_{n\in A} f_n$ with $A\cap\m O_9=\{n_0\}$, if 
 $A\subset\m I$ corresponds to a codeword of the Golay code. 
 However, since $\m O_9$ corresponds to a codeword in the Golay code, 
 any two codewords of which intersect in an even number of labels, 
 no such $A$ can exist. In other words, 
 $\wt{\m P}_{n_0}=\wt\Pi\oplus\spann_\Z\{f_{n_0}\}$, and thus $\wt{\m P}_{n_0}\cong {\m P}(-1)$
 by proposition \ref{piinn} along with
 the very definition of $\m P$ in proposition \ref{fullk3}. 
 
Using proposition \ref{glueperp}, the Niemeier lattice $N$ can be glued from
$\wt{\m K}_{n_0}$ and $\wt{\m P}_{n_0}$ as claimed in \req{fullniemeier}.
\end{proof}
\section{The complex geometry and the symmetries of K3 surfaces}\label{k3geometry}
In the preceding section \ref{exampleK3}, we have already addressed some properties 
of K3 surfaces and in particular of Kummer surfaces. 
Recall that the objective of this work is the investigation of finite groups of 
symplectic automorphisms of 
Kummer surfaces, and the role of the Mathieu group $M_{24}$ in their realisation. 
We need a number of additional techniques to describe and investigate 
symplectic automorphisms of Kummer surfaces. This section reviews some of 
these techniques and introduces new ones.

\subsection{Complex structures and dual K\"ahler classes}\label{complex}
Recall from definition \ref{k3definition} that we view a K3 surface $X$ as a complex
surface, which thus in particular comes with a choice of complex structure.
Moreover, $X$ has trivial canonical bundle and hence there exists a 
holomorphic $(2,0)$-form on $X$ which never vanishes and which 
represents a Hodge-de Rham class $\wh\Omega\in H^2(X,\C)$. 
Having worked in homology, so far, we introduce the $2$-cycle 
$\Omega\in H_2(X,\C)$ which is Poincar\'e dual to $\wh\Omega$. 
By construction, it obeys $\Omega\vee\Omega=0$, and 
$H_4(X,\R)\ni\Omega\vee\qu\Omega$ is positive with respect to the intersection
form $\langle\cdot,\cdot\rangle$ on $H_\ast(X,\R)$. 
Decomposing $\Omega$ into its real and its imaginary part,
$$
\Omega = \Omega_1 + i\Omega_2,\quad \Omega_k\in H_2(X,\R),
$$
the above conditions on $\Omega$ immediately imply
$$
\langle \Omega_1,\Omega_2\rangle = 0, \quad
\langle \Omega_1,\Omega_1\rangle = \langle \Omega_2,\Omega_2\rangle >0.
$$
In other words, $\Omega_1,\, \Omega_2\in H_2(X,\R)$ form an orthogonal basis of a 
positive definite oriented $2$-dimensional subspace of $H_2(X,\R)$, which is 
traditionally denoted by $\Omega$, too. While the choice of complex structure on $X$
obviously determines the position of $\Omega$  
relative to the lattice of integral homology $H_2(X,\Z)$, it is a deep theorem, which is 
equivalent to the 
\textsc{global Torelli theorem for K3 surfaces} \cite{ku77,lo81,na83,si81,to80}, that the
converse is also true:
\begin{theorem}[Torelli Theorem]\label{torellithm}
Consider a K3 surface $X$ and the $2$-di\-men\-sio\-nal oriented subspace $\Omega$ of $H_2(X,\R)$
whose basis is represented by the real and imaginary part of the Poincar\'e dual of 
a holomorphic $(2,0)$-form on $X$ which
vanishes nowhere.
The complex structure of  $X$ is uniquely determined by the 
position of $\Omega$ relative to the lattice $H_2(X,\Z)$ of integral 
homology.
\end{theorem}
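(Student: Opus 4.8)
The plan is to pass from homology to cohomology by Poincar\'e duality, so that the oriented $2$-plane $\Omega\subset H_2(X,\R)$ corresponds to the complex line $\C\wh\Omega\subset H^2(X,\C)$ spanned by a nowhere-vanishing holomorphic $(2,0)$-form, and to phrase ``the position of $\Omega$ relative to $H_2(X,\Z)$'' as a \emph{period point}. Concretely I would introduce the \emph{period domain}
\be
D = \left\{ [\omega]\in\mathbb{P}(H^2(X,\C)) \mid \langle\omega,\omega\rangle = 0,\ \langle\omega,\qu\omega\rangle > 0 \right\},
\ee
a $20$-dimensional complex manifold, and record that the Hodge decomposition is already recovered algebraically from $[\wh\Omega]$ via $H^{2,0}=\C\wh\Omega$, $H^{0,2}=\C\qu{\wh\Omega}$ and $H^{1,1}=(\C\wh\Omega\oplus\C\qu{\wh\Omega})^\perp$. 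Thus the real content of the theorem is not that the Hodge structure is determined---that part is formal---but that the \emph{complex structure} of $X$ itself is reconstructed from this lattice-theoretic datum.

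The first genuine step is \emph{local Torelli}. Here I would invoke Kodaira--Spencer deformation theory: the deformations of the complex structure of $X$ are unobstructed and their tangent space is $H^1(X,T_X)$, which contraction with $\wh\Omega$ identifies with $H^1(X,\Omega^1_X)=H^{1,1}$, precisely the tangent space to $D$ at $[\wh\Omega]$. Hence the period map is a local biholomorphism, so distinct complex structures near a given one have distinct periods. This already yields the infinitesimal, and thus local, form of the statement.

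The global statement requires a separate input, and the classical route---which I would follow---proceeds through Kummer surfaces, exactly the objects of section~\ref{exampleK3}. A Kummer surface $X$ determines and is determined by its underlying complex torus $T$: by proposition~\ref{Kummerform} the sublattice $K=\pi_\ast H_2(T,\Z)$ together with the position of $\Omega$ reconstructs the period of $T$, and for complex tori the Torelli theorem is classical (a torus is recovered from its polarized Hodge structure on $H^1$). One then upgrades from Kummer to arbitrary K3 surfaces using the \emph{surjectivity of the period map} together with the density of Kummer, and more generally algebraic, surfaces in the moduli space and the connectedness of the latter: a marked K3 surface and a Kummer surface sharing a period are joined by a path in moduli along which the local isomorphism of the previous step propagates the identification, forcing the two complex structures to agree.

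The main obstacle is precisely this last passage from the Kummer and algebraic case to arbitrary, possibly non-algebraic, K3 surfaces. It demands the surjectivity of the period map together with a delicate control of how Hodge isometries deform---one must ensure that the isometry realizing coincidence of periods is genuinely induced by a biholomorphism and that no degeneration or monodromy obstruction intervenes. These analytic and deformation-theoretic inputs are exactly what the cited works \cite{ku77,lo81,na83,si81,to80} supply, and assembling them is far beyond a self-contained sketch; accordingly, in the sequel the theorem is used as an established black box.
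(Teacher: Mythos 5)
The paper does not prove this statement at all: it is quoted as a classical result, with the proof delegated entirely to the cited literature (\cite{ku77,lo81,na83,si81,to80}), and the surrounding text only explains how the theorem will be \emph{used} (to parametrize complex structures by the position of $\Omega$ in the cover $\wt\MMM_{cpx}$ of the moduli space). So there is no in-paper argument to compare against; your decision to treat the theorem as a black box is exactly what the authors do.

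As a sketch of the standard proof your outline is broadly faithful — local Torelli via Kodaira--Spencer, global Torelli for Kummer surfaces reduced to tori, density of Kummer/algebraic periods, then a limiting argument — but one step is phrased in a way that would not survive being made precise. You propose to ``join a marked K3 surface and a Kummer surface sharing a period by a path in moduli along which the local isomorphism propagates.'' The moduli space in question is non-Hausdorff (the paper itself notes this, citing \cite{at58}): distinct marked K3 surfaces can be non-separated points with the same period, and naive propagation along a path is exactly what fails there. The actual global argument (Burns--Rapoport, Looijenga--Peters) is an open-and-closed argument on the period domain, where the closedness half requires controlling degenerating families and the chamber structure of the K\"ahler cone under the Weyl group generated by $(-2)$-classes. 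Relatedly, your reduction to tori glosses over the fact that from a Kummer surface one recovers only the weight-two Hodge structure on $H^2(T,\Z)$, and recovering $T$ from that (rather than from $H^1$) is itself a nontrivial theorem of Shioda. Since you explicitly flag that these are the inputs supplied by the references and not reproved, the sketch is acceptable as an orientation, but the ``path in moduli'' step should not be mistaken for the actual mechanism of the global proof.
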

By the Torelli Theorem \ref{torellithm}, the moduli space {$\MMM_{cpx}$}
of complex structures on a K3 surface is the Grassmannian $\wt\MMM_{cpx}$ of
positive definite oriented $2$-dimensional subspaces $\Omega$ of
$\R^{3,19}=\left(U^3\oplus E_8^2(-1)\right)\otimes\R$, modulo the action of
the automorphism group of $U^3\oplus E_8^2(-1)$. This moduli space is equipped with
its natural topology, which however does not have the Hausdorff property \cite{at58}. By choosing
a marking $H_2(X,\Z)\cong U^3\oplus E_8^2(-1)$, $H_\ast(X,\Z)\cong U^4\oplus E_8^2(-1)$,
we work in the smooth connected cover $\wt\MMM_{cpx}$ of the moduli space. This 
allows us to explicitly specify the complex structure
of a given K3 surface in our applications by writing out the basis 
$\Omega_1,\,\Omega_2$ of $\Omega$ in terms of lattice vectors in $H_2(X,\Z)$.
Indeed, thereby we specify the very location of $\Omega$ relative to $H_2(X,\Z)$
and thus the complex structure, by theorem \ref{torellithm}. 

In terms of local holomorphic coordinates $z_1,\, z_2$, the procedure works as follows:
Locally, a  holomorphic $(2,0)$-form representing $\wh\Omega$ has the form 
$dz_1\wedge dz_2$, which with respect to  real coordinates 
$\vec{x}=(x_1,x_2,x_3,x_4)$, $z_1=x_1+ix_2$, $z_2=x_3+ix_4$, 
as before, yields
\be\label{cohomologyinv}
dz_1 \wedge dz_2=[dx_1 \wedge dx_3 - dx_2\wedge dx_4]
+i\,[dx_1 \wedge dx_4 + dx_2\wedge dx_3] =: \wh\Omega_1+i\wh\Omega_2.
\ee
Here, the real valued two-forms $\wh\Omega_1,\,\wh\Omega_2$ 
represent the Poincar\'e duals of $\Omega_1,\,\Omega_2$. 
Hence with respect  
to standard real coordinate vector fields $\vec e_1,\dots, \vec e_4$, 
the latter are readily identified as
\be \label{homologyinv1}
\Omega_1 = e_1\vee e_3-e_2\vee e_4,\quad\quad
\Omega_2 = e_1\vee e_4+e_2\vee e_3.
\ee
Recall from definition \ref{defKummer} that a Kummer surface
$X$ with underlying torus $T(\Lambda)$ carries the complex structure
induced from the universal cover $\C^2\cong\R^4$ of $T(\Lambda)$. 
Moreover, as remarked at the end of section \ref{exampleK3}, the
Kummer construction provides us with a natural marking
$H_2(X,\Z)\cong U^3\oplus E_8^2(-1)$, $H_\ast(X,\Z)\cong U^4\oplus E_8^2(-1)$,
which we use throughout our discussion of Kummer surfaces. In other words,
by the {above-mentioned} procedure we
calculate the 
complex structure of the Kummer surface $X=\wt{T(\Lambda)/\Z_2}$ in 
terms of the lattice data $\Lambda\subset\C^2\cong\R^4${, where the generators
$\vec\lambda_1,\ldots,\vec\lambda_4$ of $\Lambda$ are part of the data. Indeed, 
$\vec\lambda_1,\ldots,\vec\lambda_4$  are expressed} in terms of 
the standard basis vectors $\vec e_1,\ldots,\vec e_4$ of $\R^4$. Thus we obtain expressions 
for the Poincar\'e duals of $dx_1 \wedge dx_3 - dx_2\wedge dx_4$ and 
$dx_1 \wedge dx_4 + dx_2\wedge dx_3$ in terms of the $\lambda_{ij}=\lambda_i\vee\lambda_j$, 
our standard generators of $H_2(T(\Lambda),\Z)$. 
Now recall from definition \ref{defKummer} the linear map 
$\pi_\ast\colon H_2(T(\Lambda),\R)\longrightarrow H_2(X,\R)$ which is induced by the
rational map $\pi\colon T(\Lambda)\dashrightarrow X$.
The images of the Poincar\'e duals of $dx_1 \wedge dx_3 - dx_2\wedge dx_4$ and 
$dx_1 \wedge dx_4 + dx_2\wedge dx_3$ under $\pi_\ast$ yield the two--cycles 
$\Omega_1,\, \Omega_2$ specifying the complex structure of the Kummer surface $X$. 
One thus immediately obtains expressions for the $\Omega_k$ in terms of the lattice 
$H_2(X,\Z)$, uniquely specifying the complex structure of $X$.
Note that this procedure allows us to vary the underlying lattice $\Lambda$ of $T(\Lambda)=\C^2/\Lambda$
and thereby the induced complex structure of $X=\wt{T(\Lambda)/\Z_2}$, giving the relative position
of $\Omega=\spann_{\R}\{\Omega_1,\Omega_2\}$ in $H_2(X,\R)=H_2(X,\Z)\otimes\R$ in terms of 
$H_2(X,\Z)$ with our choice of marking. Here, the lattice $H_2(X,\Z)$ remains fixed, while the position
of $\Omega$ varies with $\Lambda$, thus describing a path in the smooth connected cover
$\wt\MMM_{cpx}$ of the moduli space of complex structures on K3.
\vspace{0.5em}

\begin{example}
Consider the standard \textsc{square torus} 
$T_0:=T(\Z^4)=\C^2/\Z^4$, where we simply have $\vec e_i=\vec\lambda_i,\, i=1,\ldots,4$, 
and thus $\Omega_1=\pi_\ast\lambda_{13}-\pi_\ast\lambda_{24},\, 
\Omega_2=\pi_\ast\lambda_{14}+\pi_\ast\lambda_{23}\in H_2(X,\Z)$. 
Hence the Kummer surface $X_0$ with underlying torus $T_0$ has the special property that the 
$2$-dimensional space $\Omega\subset H_2(X_0,\R)$ which specifies its complex structure contains 
a sublattice of $H_2(X_0,\Z)$ of (the maximal possible) rank $2$. For such K3 surfaces, by a seminal 
result of Shioda and Inose \cite{shin77}, the quadratic form of the \textsc{transcendental lattice} 
$\Omega\cap H_2(X,\Z)$ already uniquely determines the complex structure of $X$. 
In other words, the complex structure of the Kummer surface $X_0$ with underlying torus 
$T_0$ is uniquely determined by the following quadratic form of its transcendental lattice:
\be\label{standardKummer}
\left( \begin{array}{cc}4&0\\0&4\end{array}\right).
\ee
According to the final remark of \cite{shin77}, this means that $X_0$ agrees with the 
so--called \textsc{elliptic modular surface of level $4$} defined over $\Q\left(\sqrt{-1}\right)$ of \cite[p.\ 57]{sh72}.
\end{example}

By a deep result due to Siu \cite{si81}, every K3 surface is \textsc{K\"ahler}. In fact, in
addition to a complex structure $\Omega\subset H_2(X,\R)$, we always fix a 
\textsc{K\"ahler class} on each of our K3 surfaces $X$. By definition, a K\"ahler class is the 
cohomology class of the two-form which is associated to a K\"ahler metric on $X$. 
By \cite{to83} this amounts to choosing a real, positive, effective element of 
$H^{1,1}(X,\C)$. Under Poincar\'e duality, this translates into the choice of some 
$\omega\in\Omega^\perp\cap H_2(X,\R)$ with $\langle\omega,\omega\rangle>0$, 
ensuring effectiveness by replacing $\omega$ by $-\omega$ if necessary. 
\begin{definition}\label{pol}
Consider a K3 surface $X$, and
let $\Omega\subset H_2(X,\R)$ denote the oriented $2$-dimensional subspace which
specifies the complex structure of $X$ according to the Torelli Theorem \mbox{\rm\ref{torellithm}}. 
A choice of \textsc{dual K\"ahler class} on $X$ is the choice of some 
$\omega\in\Omega^\perp\cap H_2(X,\R)$ with $\langle\omega,\omega\rangle>0$.
If the dual K\"ahler class obeys $\omega\in H_2(X,\Z)$, then $\omega$
is called a \textsc{polarization} of $X$.

If $X$ is a Kummer surface with underlying torus $T(\Lambda)$, let $\omega_T$ denote
the Poincar\'e dual of the standard K\"ahler class induced from  the standard 
Euclidean metric on $\C^2$. Then with $\pi_\ast$ as in definition \mbox{\rm\ref{defKummer}},
we call $\pi_\ast\omega_T\in\Omega^\perp\cap H_2(X,\R)$ the \textsc{induced dual
K\"ahler class on $X$}.
\end{definition}
By the above, a choice of a dual K\"ahler class is equivalent to the choice
of a K\"ahler structure on $X$, and the choice of a polarization is equivalent
to the choice of a K\"ahler structure 
which is represented by an integral K\"ahler form.
It is known that a K3 surface $X$ is \textsc{algebraic}, that is, $X$ can be viewed
as a complex subvariety of some complex projective space $\mathbb P^N$,
if and only if it admits the choice
of a polarization (see e.g.\ \cite[pp.~163, 191]{grha78}). 

We equip all Kummer surfaces with the dual K\"ahler class
induced from its underlying torus. That this 
special type of dual K\"ahler class 
has been chosen is an important assumption which we make throughout our
work. Without loss of generality we are also assuming a coordinate description
for all our tori $T=T(\Lambda)=\C^2/\Lambda$ where the dual K\"ahler class $\omega_T$
of $T$ is induced from the standard K\"ahler structure of $\C^2$.
The K\"ahler class dual to  $\pi_\ast\omega_T$ is actually
located on a wall of the K\"ahler cone of $X$. In other words, it represents a degenerate
(or \textsc{orbifold limit}  of a) K\"ahler metric on $X$. In this sense, the induced (dual)
K\"ahler classes
on our Kummer surfaces yield \textsc{degenerate K\"ahler structures}.

For later convenience, note that in terms of the standard local holomorphic coordinates 
on $T(\Lambda)=\C^2/\Lambda$ the standard K\"ahler class is represented by
\ba\label{standardKaehler}
{1\over2i}(dz_1\wedge d\qu z_1+dz_2\wedge d\qu z_2)
= dx_1\wedge dx_2+dx_3\wedge dx_4
\ea
and hence 
\be \label{homologyinv2}
\omega=\pi_\ast\omega_T=e_1\vee e_2+e_3\vee e_4
\ee
with notations as above. The induced dual K\"ahler class  on the
Kummer surface obtained from $T(\Lambda)$ can thus be immediately calculated in terms 
of the lattice $H_2(X,\Z)$ and our fixed marking, given the lattice $\Lambda$ of the underlying torus. 
\vspace{0.5em}

\begin{example}
For our square torus $T_0=T(\Z^4)$ above we argued that we have 
$\omega=\pi_\ast\omega_T=\pi_\ast\lambda_{12}+\pi_\ast\lambda_{34}\in H_2(X_0,\Z)$ for the associated 
Kummer surface $X_0$. Hence this Kummer surface is algebraic. 
The real $3$-dimensional subspace $\Sigma$ of  $H_2(X_0,\R)$ 
containing $\Omega$ and $\omega$ has the property that 
$\Sigma\cap H_2(X_0,\Z)$ yields a lattice of (the maximal possible) 
rank $3$, with quadratic form
\be\label{standardPolar}
\left( \begin{array}{ccc}4&0&0\\0&4&0\\0&0&4\end{array}\right).
\ee
%
\end{example}
If $X$ denotes a K3 surface with complex structure specified by $\Omega\subset H_2(X,\R)$
as in the Torelli Theorem \ref{torellithm} and with dual K\"ahler class $\omega$ according to
definition \ref{pol}, then the oriented $3$-dimensional subspace $\Sigma$ of $H_2(X,\R)$ containing
$\Omega$ and $\omega$ uniquely determines a real Einstein metric on $X$, up to its volume,
or equivalently a hyperk\"ahler structure on $X$ (see for example \cite{asmo94,nawe00} for
a review). {Vice versa, $\Sigma$ is uniquely determined by such a hyperk\"ahler structure.}
The moduli space $\MMM_{hk}$ of hyperk\"ahler structures on $X$ is thus the Grassmannian
$\wt\MMM_{hk}$ of positive definite oriented $3$-dimensional subspaces $\Sigma$
of $\R^{3,19}=\left(U^3\oplus E_8^2(-1)\right)\otimes\R$, modulo the action of 
the automorphism group of $U^3\oplus E_8^2(-1)$. This moduli space carries a natural
Hausdorff topology. By choosing a fixed marking, as we are doing in this work, we are 
effectively working in the smooth connected cover $\wt\MMM_{hk}$ of this moduli space.
A smooth variation of the {generators $\vec\lambda_1,\ldots,\vec\lambda_4$
of the underlying lattice $\Lambda$} for Kummer surfaces
$\wt{T(\Lambda)/\Z_2}$ thus amounts to the description of a smooth path in
$\wt\MMM_{hk}$, which we simply call a \textsc{Kummer path}.
%
\subsection{Holomorphic symplectic automorphisms of K3 surfaces}\label{autos}
In this subsection, we consider a K3 surface $X$ with a complex structure that is 
encoded in terms of a real $2$-dimensional oriented positive definite subspace 
$\Omega\subset H_2(X,\R)$ according to the Torelli Theorem \ref{torellithm}. 
We discuss the notion of \textsc{symplectic 
automorphisms}\footnote{Here, we follow the slightly misleading terminology which has 
become standard, by now. Note however that in Nikulin's original work such 
automorphisms are called \textsc{algebraic}  \cite[Def.\ 0.2]{ni80b},
and that the definition of symplectic automorphisms does not refer to a symplectic
structure on $X$.} and \textsc{holomorphic symplectic automorphisms} of $X$:
\begin{definition}\label{symmgrp}
Consider a K3 surface $X$.
A map $f\colon X\longrightarrow X$ of finite order is called a
\textsc{symplectic automorphism} if and only if $f$ is biholomorphic
and the induced map 
$f_\ast\colon H_\ast(X,\R)\longrightarrow H_\ast(X,\R)$ leaves 
the complex structure $\Omega\subset H_\ast(X,\R)$
invariant.

If $\omega$ is a dual K\"ahler class on $X$ and $f_\ast\omega=\omega$,
then $f$ is a \textsc{holomorphic symplectic automorphism}
with respect to $\omega$. 

When a dual K\"ahler class $\omega$ on $X$ has been specified, then the group
of holomorphic symplectic {automorphisms} of $X$ with respect to $\omega$ is called
the \textsc{symmetry group} of $X$.
\end{definition}
Consider a map $f\colon X\longrightarrow X$ of finite order
which is bijective, such that
 both $f$ and $f^{-1}$ respect the complex structure of $X$. 
The induced linear map $f_\ast$ on $H_\ast(X,\R)$
restricts to a lattice automorphism on $H_2(X,\Z)$ and, according to our description of complex
structures by means of the Torelli Theorem \ref{torellithm}, it
acts as a multiple of the identity on $\Omega$. 
By definition \ref{symmgrp},
$f$ is a symplectic automorphism if and only
if $f_\ast$ acts as  the identity on $\Omega$.
In fact, by another version of the Torelli Theorem
(see e.g.\ \cite[Thm.~2.7$^\prime$]{ni80b}) the 
{restriction of $f_\ast$ to $H_2(X,\Z)$ (also denoted $f_\ast$)}
uniquely determines the symplectic automorphism $f$. The following version of the Torelli
Theorem is most adequate for our applications to symplectic automorphisms, and it
is obtained from 
\cite[Thms.~2.7$^\prime$\&4.3]{ni80b} in conjunction with \cite{ku77} and \cite[Prop.~VIII.3.10]{bpv84}:
\begin{theorem}[Torelli Theorem] \label{niktorelli}
Consider a K3 surface $X$ and a
lattice automorphism $\alpha$ of $H_2(X,\Z)$, i.e.\ a linear map which respects the 
intersection form. Assume that after linear extension to $H_2(X,\R)$, $\alpha$ 
leaves $\Omega$ invariant. 

Then $\alpha$ 
is induced by a symplectic automorphism of $X$ if and only if the 
following holds: $\alpha$
preserves effectiveness for every $E\in \Omega^\perp\cap H_2(X,\Z)$ with $\langle E,E\rangle=-2$, 
the invariant sublattice $L^\alpha:=H_2(X,\Z)^\alpha
=\{v \in H_2(X,\mathbb{Z})\,\vert \alpha v=v\}$ 
has a negative definite orthogonal complement 
$L_\alpha:=(L^\alpha)^\perp\cap H_2(X,\Z)$, and for all $v\in L_\alpha$, $\langle v,v\rangle\neq-2$.

In this case the symplectic automorphism $f$ with $\alpha=f_\ast$ is uniquely determined.
\end{theorem}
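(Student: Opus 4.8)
The plan is to derive this result as a packaging of the strong global Torelli theorem together with Nikulin's structure theory of finite symplectic automorphisms, rather than to reprove either. The organising principle I would use is a dictionary in which conditions (a)--(c) are read as the lattice-theoretic shadow of the three geometric facts characterising a symplectic automorphism: that its induced isometry is a Hodge isometry (from $\alpha$ fixing $\Omega$), that it preserves the K\"ahler cone (from (a)), and that it has finite order together with an invariant K\"ahler class (from (b)). The single deep input I would cite is the form of Torelli asserting that a Hodge isometry of $H_2(X,\Z)$ which maps the K\"ahler cone to itself is induced by a unique automorphism of $X$; I would import this from \cite[Thm.~2.7$^\prime$]{ni80b}, with the surjectivity of the period map and the effectivity statements supplied by \cite{ku77} and \cite[Prop.~VIII.3.10]{bpv84}, treating its analytic content as a black box.

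For the ``only if'' direction I would start from a symplectic automorphism $f$ with $f_\ast=\alpha$. Since $f$ is biholomorphic it carries effective divisor classes to effective divisor classes, and restricting to $\Omega^\perp\cap H_2(X,\Z)$ gives condition (a) at once. For (b) I would average: summing the images of any K\"ahler class under the finite cyclic group $\langle\alpha\rangle$ yields an $\alpha$-invariant class which, being a positive combination of K\"ahler classes, is again K\"ahler. Together with $\Omega\subset L^\alpha\otimes\R$ this places all three positive directions of $H_2(X,\R)$ inside $L^\alpha\otimes\R$, so that $L_\alpha=(L^\alpha)^\perp\cap H_2(X,\Z)$ is negative definite. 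For (c) I would run an orbit-sum argument: since $L_\alpha\subseteq\Omega^\perp$, any $\delta\in L_\alpha$ with $\langle\delta,\delta\rangle=-2$ is effective up to sign by Riemann--Roch, hence by (a) every $\alpha^k\delta$ is an effective $(-2)$-class; but $\sum_k\alpha^k\delta$ lies in $L^\alpha\cap L_\alpha=\{0\}$, which is impossible for a nonempty sum of effective classes since each pairs strictly positively with a K\"ahler class. Equivalently, one may quote directly Nikulin's theorem that the coinvariant lattice of a symplectic automorphism contains no $(-2)$-vectors, \cite[Thm.~4.3]{ni80b}.

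For the ``if'' direction I would assume (a)--(c). Because $\alpha$ fixes $\Omega$ it is a Hodge isometry, and (a) shows that $\alpha$ permutes the effective $(-2)$-classes and therefore sends the K\"ahler cone to itself; Torelli then produces a biholomorphism $f$ with $f_\ast=\alpha$, unique because the action of $\Aut(X)$ on $H_2(X,\Z)$ is faithful. Finite order, required for $f$ to be a symplectic automorphism in the sense of definition \ref{symmgrp}, I would obtain from (b): a negative definite lattice has finite isometry group, so $\alpha|_{L_\alpha}$ has finite order, and as $\alpha$ is the identity on $L^\alpha$ the whole of $\alpha$, and hence $f$, has finite order. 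I would note in passing that the orbit-sum argument above shows (c) to be automatic once (a) and (b) hold, so it is not strictly needed for sufficiency. The main obstacle I anticipate is the hinge between (a) and Torelli, namely making precise that class-by-class preservation of effective $(-2)$-curves forces preservation of the whole K\"ahler cone, which is the exact hypothesis under which the cited Torelli statement applies; the genuinely hard analysis behind Torelli itself I would not attempt, deferring entirely to \cite{ni80b,ku77,bpv84}.
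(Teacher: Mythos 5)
The paper offers no proof of this theorem at all: it is stated as a direct consequence of \cite[Thms.~2.7$^\prime$\,\&\,4.3]{ni80b} together with \cite{ku77} and \cite[Prop.~VIII.3.10]{bpv84}, which are precisely the black boxes you invoke. Your derivation is a correct unpacking of that citation — the averaging argument for an invariant K\"ahler class, the orbit-sum argument ruling out $(-2)$-vectors in $L_\alpha$, and finiteness from the negative definiteness of $L_\alpha$ are all sound — so it takes essentially the same approach, merely filling in details the paper leaves implicit.
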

This theorem allows us to carry out the entire discussion of symplectic automorphisms
of a K3 surface $X$ in terms of lattice automorphisms on $H_\ast(X,\Z)$.
To treat finite symplectic automorphism groups, we
\begin{remark}\label{finitetorelli}
If $G$ is a finite group of symplectic automorphisms of a K3 surface $X$, let
$L_G:=(L^G)^\perp\cap H_2(X,\Z)$ denote the orthogonal complement of the lattice
$L^G\subset H_\ast(X,\Z)$ which is pointwise invariant under the induced action of
$G$ on $H_\ast(X,\Z)$, generalizing the lattice {$L_\alpha$} in the above version
of the Torelli Theorem \mbox{\rm\ref{niktorelli}}. By \mbox{\rm\cite[Lemma 4.2]{ni80b}}, the lattice
{$L_G$} is negative definite. To see why this only follows when $G$ is finite, note that
the proof given in \mbox{\rm\cite{ni80b}} uses the resolution of the quotient $X/G$. Indeed, by
our assumptions on $G$, all singularities of $X/G$ are isolated, and
the minimal resolution of all its singularities yields a K3 surface $Y$. 
Moreover, {in \mbox{\rm\cite{ni80b}} Nikulin proves $L_G\otimes\Q\cong M_Y\otimes\Q$} with $M_Y$ the sublattice of 
$H_2(Y,\Z)$ generated by the exceptional classes in the resolution. Since
$M_Y$ is negative definite by \mbox{\rm\cite{mu62}}, the claim follows.
\end{remark}
Now the
restriction to finite symplectic automorphism groups of 
K3 surfaces conveniently translates into the restriction to
symplectic automorphism groups which preserve some dual
K\"ahler class:
\begin{prop}\label{mukaifinite}
Consider a K3 surface $X$, and denote by $G$ a non-trivial group of symplectic 
automorphisms of $X$. 

Then $G$ is finite if and only if $X$ possesses a dual K\"ahler class which is invariant under $G$.
Equivalently, the sublattice $L^G\subset H_\ast(X,\Z)$  which is pointwise invariant under
the induced action of $G$ on 
$H_\ast(X,\Z)$ has signature $(4,m_-)$ with $m_-\in\N$ and $m_-\geq1$.

If $X$ is an algebraic K3 surface, then $G$ is finite if and only if $X$ possesses a 
polarization which is invariant under $G$.
\end{prop}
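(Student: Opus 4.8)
The plan is to establish the equivalence of three conditions---(A) $G$ is finite, (B) $X$ carries a $G$-invariant dual K\"ahler class, and (C) $\mathrm{sig}(L^G)=(4,m_-)$ with $m_-\ge 1$---and then to deduce the algebraic refinement. I use throughout that every symplectic automorphism fixes $H_0(X,\Z)\oplus H_4(X,\Z)\cong U$ and, by the remarks following Theorem \ref{torellithm}, acts as the identity on the positive-definite plane $\Omega\subset H_2(X,\R)$; hence $U\oplus\Omega\subseteq H_\ast(X,\R)^G$ is a subspace of signature $(3,1)$, and $G$ is a discrete subgroup of $O(H_2(X,\R))=O(3,19)$ because it preserves the lattice $H_2(X,\Z)$. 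For (B)$\Rightarrow$(A), an invariant dual K\"ahler class $\omega$ makes $\Sigma:=\Omega\oplus\R\omega$ a positive-definite three-plane fixed pointwise by $G$; then $G$ respects the orthogonal splitting $H_2(X,\R)=\Sigma\oplus(\Sigma^\perp\cap H_2(X,\R))$ and acts trivially on $\Sigma$, so it embeds into the compact group $O(\Sigma^\perp\cap H_2(X,\R))\cong O(19)$, and a discrete subgroup of a compact group is finite.

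For (A)$\Rightarrow$(B) I average over $G$. Since $X$ is K\"ahler, I fix a genuine K\"ahler class whose Poincar\'e dual $\omega_0$ lies in the K\"ahler cone $\mathcal C\subset\Omega^\perp\cap H_2(X,\R)$. Because each $g\in G$ is biholomorphic, $g_\ast$ maps $\mathcal C$ onto itself; as $\mathcal C$ is an open convex cone, the average $\omega:=\frac{1}{|G|}\sum_{g\in G}g_\ast\omega_0$ again lies in $\mathcal C$, so $\omega\in\Omega^\perp\cap H_2(X,\R)$ with $\langle\omega,\omega\rangle>0$, and $\omega$ is $G$-invariant by construction; thus $\omega$ is the desired $G$-invariant dual K\"ahler class.

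The signature equivalence then follows by comparison with (B). Given an invariant $\omega$ (so that $G$ is finite and the form on $H_\ast(X,\R)^G=L^G\otimes\R$ is nondegenerate), the space $U\oplus\Sigma\subseteq H_\ast(X,\R)^G$ has signature $(4,1)$; since $H_\ast(X,\R)$ has signature $(4,20)$ its positive part is at most four-dimensional, so $\mathrm{sig}(L^G)=(4,m_-)$ with $m_-\ge 1$, the negative direction being supplied by $U$. Conversely, if $\mathrm{sig}(L^G)=(4,m_-)$ then the orthogonal complement of $U\oplus\Omega$ (of signature $(3,1)$) inside $H_\ast(X,\R)^G$ has signature $(1,m_--1)$ and hence contains a vector $\omega$ of positive norm; being orthogonal to $U$ it lies in $H_2(X,\R)$, being orthogonal to $\Omega$ it lies in $\Omega^\perp$, and it is $G$-invariant, i.e.\ a $G$-invariant dual K\"ahler class, which returns us to (B). For the algebraic statement, an invariant polarization is in particular an invariant dual K\"ahler class, so finiteness is immediate from (B)$\Rightarrow$(A); conversely, if $X$ is algebraic and $G$ is finite I choose an integral polarization $h_0\in H_2(X,\Z)\cap\Omega^\perp$ in the ample cone and form the unnormalized sum $h:=\sum_{g\in G}g_\ast h_0\in H_2(X,\Z)$, which is integral because $G$ preserves the lattice, lies in the convex $G$-stable ample cone, and is $G$-invariant, hence is a $G$-invariant polarization.

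The step requiring genuine care is the positivity of the averaged class: in the indefinite space $\Omega^\perp\cap H_2(X,\R)$ of signature $(1,19)$ the positive-norm vectors split into two opposite cones, so an average of positive-norm vectors need not be positive. The argument therefore relies essentially on the geometric input that $G$ preserves the K\"ahler (respectively ample) cone---which lies in a single component and is convex---rather than merely preserving the quadratic form; once this is secured, the compactness-versus-discreteness step and the signature bookkeeping are routine.
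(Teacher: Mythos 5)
Your argument is correct, but it reaches the key implications by a genuinely different route than the paper. For ``finite $\Rightarrow$ invariant dual K\"ahler class'' the paper does not average: it first invokes the Torelli Theorem \ref{niktorelli} to conclude that $L_G=(L^G)^\perp\cap H_\ast(X,\Z)$ is negative definite, reads off the signature $(4,m_-)$ of $L^G$ directly, and only then extracts $\omega$ from the signature count (using Nikulin's Lemma 4.2 to place $\Omega$ inside $L^G\otimes\R$); your averaging of a K\"ahler class over the finite group, with the convexity and $G$-stability of the K\"ahler cone guaranteeing positivity of the average, is a more classical and self-contained argument that bypasses Theorem \ref{niktorelli} in this direction, at the price of importing the geometric fact that biholomorphisms preserve the K\"ahler cone. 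Likewise, in the algebraic refinement the paper finds an integral invariant class of positive square by a signature computation in $T_X^\perp$ (again via Nikulin's lemma), whereas you average an ample class; both yield a $G$-invariant polarization in the sense of Definition \ref{pol}. The converse direction (invariant $\omega$ $\Rightarrow$ finite) is essentially the paper's argument in different clothing: the paper phrases ``discrete subgroup of a compact group'' as a permutation action on the finite sets of vectors of fixed length in the negative definite lattice $\Sigma^\perp\cap H_2(X,\Z)$. One point you should make explicit: finiteness of the \emph{image} of $G$ in $O(H_\ast(X,\R))$ only gives finiteness of $G$ because the representation is faithful, which the paper derives from the uniqueness statement in the Torelli Theorem \ref{niktorelli}; your write-up silently identifies $G$ with its image from the outset. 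Similarly, your signature bookkeeping uses that $H_\ast(X,\R)^G=L^G\otimes\R$ and that the form restricted to it is nondegenerate --- both true (the fixed subspace is defined over $\Q$, and the averaging projector is self-adjoint), but worth a sentence.
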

\begin{proof}
Consider the \textsc{Picard lattice} 
$\Pic(X):=\Omega^\perp\cap H_2(X,\Z)$ of rank
$\rho\in\N, 0\leq\rho\leq20$, and {its orthogonal complement
$T_X:=\Pic(X)^\perp\cap H_2(X,\Z)$. Then by \cite[Lemma 4.2]{ni80b},
the lattice $H_0(X,\Z)\oplus T_X\oplus H_4(X,\Z)$ is a sublattice of
$L^G$, so in particular  $\Omega\subset \left( L^G\cap H_2(X,\Z)\right)\otimes\R$}.

Assume first that $G$ is finite. 
Then by Remark \ref{finitetorelli}, the 
lattice 
obtained as $L_G:=\left(L^G\right)^\perp\cap H_2(X,\Z)$ is negative definite, 
such that $L^G$ is a lattice of signature $(4,m_-)$
for some $m_-\in\N$. Since $H_0(X,\Z)\oplus H_4(X,\Z)\subset L^G$, we have
$m_-\geq1$ as claimed. 

Next assume that $L^G$ has signature $(4,m_-)$ with $m_-\geq1$.
Recall from the above that $\Omega\subset \left( L^G\cap H_2(X,\Z)\right)\otimes\R$,
where $H_0(X,\Z)\oplus H_4(X,\Z)\subset L^G$ implies that $L^G\cap H_2(X,\Z)$
has signature $(3,m_--1)$, while $\Omega$ is positive definite of dimension $2$.
Hence there exists some $\omega\in \left( L^G\cap H_2(X,\Z)\right)\otimes\R$ with
$\omega\in\Omega^\perp$ and
$\langle\omega,\omega\rangle>0$. In other words, $X$ possesses a dual K\"ahler
class which is invariant under $G$. 

Conversely, assume that $X$ possesses a dual K\"ahler class $\omega$ which is invariant under $G$.
Hence the induced action of $G$ on $H_\ast(X,\R)$ leaves the $3$-dimensional vector space
$\Sigma\subset H_\ast(X,\R)$ generated by $\Omega$ and $\omega$ invariant. Thus the non-trivial
action of $G$ on $H_\ast(X,\Z)$ amounts to an action on the negative definite lattice 
$\Sigma^\perp\cap H_2(X,\Z)$. It is therefore given by a permutation group on every set of 
vectors of same length in that lattice, a finite set for every length. Thus $G$ is represented on $H_\ast(X,\R)$
by the action of a finite group. Moreover,
the uniqueness statement in the Torelli Theorem \ref{niktorelli} implies that $G$ acts
faithfully on $H_\ast(X,\R)$. It follows that $G$ is finite.

Now assume that $X$ is algebraic.
Since by definition \ref{pol} every polarization of $X$ is a dual K\"ahler class,
it remains to be shown that there exists a polarization $\omega$ fixed by $G$ if
$L^G$ has signature $(4,m_-)$ with $m_-\geq1$.
Indeed, if $X$ is algebraic, then
$\Pic(X)$ has signature $(1,\rho-1)$
with $1\leq\rho\leq20$, and thus {its orthogonal complement}
$T_X$ has signature $(2,20-\rho)$. Moreover, {by the above
the lattice $H_0(X,\Z)\oplus T_X\oplus H_4(X,\Z)$
of signature $(3,21-\rho)$ is a sublattice of
$L^G$}. Hence for $L^G$ of signature $(4,m_-)$, $m_-\geq1$,
there exists some $\omega\in L^G$ with
$\langle\omega,\omega\rangle>0$ and 
$\omega\in T_X^\perp\cap H_2(X,\Z)\subset\Omega^\perp$. In other words,
there exists a polarization of $X$ which is invariant under $G$. 
\end{proof}

In this work, we restrict our attention to the investigation of finite symplectic automorphism
groups of K3 surfaces $X$. 
Hence by theorem \ref{mukaifinite}
there exists a dual K\"ahler class $\omega$ of $X$ which is invariant under the action of $G$. 
Once a complex structure $\Omega$ and dual K\"ahler class $\omega$ of our K3 surface
have been chosen, the group $G$ of all holomorphic symplectic automorphisms with respect
to $\omega$ is uniquely determined. By {the above}, it solely depends on the relative
position of the $3$-dimensional space $\Sigma\subset H_\ast(X,\R)$ generated by $\Omega$
and $\omega$ with respect to the lattice $H_\ast(X,\Z)$. As explained at the end of section
\ref{complex}, $\Sigma$ uniquely determines a hyperk\"ahler structure on $X$ and is uniquely
determined by such a hyperk\"ahler structure. Working with a fixed marking, our study of finite
symplectic automorphism groups of K3 surfaces thus is naturally carried out on the smooth
connected cover $\wt\MMM_{hk}$ of the moduli space of hyperk\"ahler structures on K3.

Mukai shows in 
\cite{mu88} that the action of a finite symplectic automorphism group
$G$ on $H_\ast(X,\Q)$ is a \textsc{Mathieu representation}, that is: 
The character $\mu$ of this representation is given by
$$
\mu(g) = 24 \left( \ord(g) \prod_{p\mid \ord(g)} \left( 1+{1\over p}\right)\right)^{-1}
\quad\fa g\in G.
$$
Furthermore, with $M_{23}\subset M_{24}$ the stabilizer group in the Mathieu group 
$M_{24}$ of one label in $\m I=\{1,\ldots,24\}$, 
Mukai found the following
seminal result \cite{mu88}:
\begin{theorem}\label{mukai}
Consider a K3 surface $X$, and a finite group $G$
of symplectic automorphisms of $X$. Then $G$ is isomorphic to a subgroup of one of those 
$11$ subgroups of $M_{23}$ which decompose $\m I$ into at least five orbits. 
\end{theorem}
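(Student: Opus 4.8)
The plan is to reprove Mukai's classification along the lattice-theoretic lines of Kondo, since all the required machinery---the Torelli Theorem \ref{niktorelli}, Nikulin's embedding criterion \ref{nik}, and the description of $\Aut(N)$ in terms of $M_{24}$ (Proposition \ref{M24onN})---is already at hand. First I would use the Torelli Theorem \ref{niktorelli} to replace $G$ by its faithful action on $H_\ast(X,\Z)$ and pass to the invariant lattice $L^G = H_\ast(X,\Z)^G$ and its orthogonal complement $L_G = (L^G)^\perp\cap H_\ast(X,\Z)$. By Proposition \ref{mukaifinite} finiteness of $G$ forces $L^G$ to have signature $(4,m_-)$ with $m_-\geq 1$, so that $L_G$ is an even \emph{negative definite} lattice of rank $20-m_-\leq 19$; the final hypothesis of Theorem \ref{niktorelli} moreover guarantees that $L_G$ contains no $v$ with $\langle v,v\rangle=-2$. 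Thus $G$ acts on a root-free negative definite lattice with trivial fixed sublattice, and this is the object I want to fit into a Niemeier lattice.

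The structural lemma I would isolate next is that $G$ acts \emph{trivially} on the discriminant group $L_G^\ast/L_G$. This follows from unimodularity of $H_\ast(X,\Z)$: by Proposition \ref{glueperp} the mutually orthogonal primitive sublattices $L^G$ and $L_G$ are glued through a canonical anti-isometry $\gamma\colon (L^G)^\ast/L^G\to L_G^\ast/L_G$, and since $G$ preserves $H_\ast(X,\Z)$ together with both sublattices, $\gamma$ is $G$-equivariant; as $G$ fixes $L^G$ pointwise it acts trivially on $(L^G)^\ast/L^G$, hence trivially on $L_G^\ast/L_G$. The point of this lemma is that it makes the following \emph{extension principle} work: whenever a lattice on which $G$ is defined to act with trivial induced action on its discriminant group is embedded primitively in an even lattice, the $G$-action extends to the whole lattice by acting as the identity on the orthogonal complement, because triviality on the discriminant guarantees that the extended map respects the gluing and hence preserves the ambient lattice.

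Now I would embed, following Kondo, not $L_G$ but $L_G\oplus\langle-2\rangle$ (with $G$ acting trivially on the extra summand, so the extension principle applies). Applying Theorem \ref{nik} to $\Lambda=(L_G\oplus\langle-2\rangle)(-1)$ against the target signature $(\gamma_+,\gamma_-)=(24,0)$ of a Niemeier lattice (Definition \ref{niemeierdef}), the extra summand is precisely what splits off the form $(\tfrac12)$ required by condition 3.i, while the length and Sylow-$p$ bounds of conditions 2 and 4 are checked directly; one concludes that $(L_G\oplus\langle-2\rangle)(-1)$ embeds primitively into some Niemeier lattice $\wt N$ by Theorem \ref{latticeclassification}. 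Crucially, the image of the $\langle-2\rangle$ summand is a root of $\wt N$, so $\wt N$ \emph{has} roots and is therefore not the Leech lattice; this keeps us among the $23$ root-containing Niemeier lattices. By the extension principle I then obtain an injection $G\hookrightarrow\Aut(\wt N)$ whose image fixes pointwise both the orthogonal complement $W$ and the distinguished root coming from $\langle-2\rangle$.

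It remains to read the image off as a permutation group. Modulo the Weyl group $W(\wt R)$ generated by reflections in the roots of $\wt N$, every automorphism of a root-containing Niemeier lattice is a glue-preserving permutation of the root components, and Conway's analysis identifies this quotient $\Aut(\wt N)/W(\wt R)$ with a subgroup of $M_{24}$---the extreme case $\wt N=N$ of type $A_1^{24}$ being Proposition \ref{M24onN}. Since the coinvariant lattice of every $g\in G$ is contained in the embedded copy of $(L_G\oplus\langle-2\rangle)(-1)$, which apart from the single fixed $\langle-2\rangle$-root is root-free, no non-trivial $g$ can lie in $W(\wt R)$ (whose non-trivial elements have coinvariant lattices spanned by roots); hence $G$ injects into $\Aut(\wt N)/W(\wt R)\subseteq M_{24}$, and the fixed $\langle-2\rangle$-root pins one of the $24$ labels, placing $G$ in a point stabiliser $M_{23}$. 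Finally, Mukai's identification of the $G$-representation on $H_\ast(X,\Q)$ as the Mathieu representation with the stated character $\mu$ equates the number of $G$-orbits on $\m I$ with $\langle\mu,1\rangle=\dim_\Q H_\ast(X,\Q)^G=\rk L^G=4+m_-\geq 5$, so $G$ has at least five orbits. I expect the two genuinely laborious ingredients to be group-theoretic rather than lattice-theoretic: the uniform identification of each quotient $\Aut(\wt N)/W(\wt R)$ with a subgroup of $M_{24}$ across the twenty-three root-containing Niemeier lattices, and the combinatorial classification of subgroups of $M_{23}$ with at least five orbits that ultimately singles out the eleven maximal ones---the latter being the heart of Mukai's original computation.
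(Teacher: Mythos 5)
Your proposal follows exactly the Kondo lattice-theoretic route that the paper itself sketches for this theorem: pass to $L^G$ and $L_G$, primitively embed $(L_G\oplus\langle-2\rangle)(-1)$ into a Niemeier lattice via Nikulin's criterion \ref{nik}, extend the $G$-action trivially across the gluing, and read $G$ off inside $M_{23}$ as the stabiliser of the distinguished root, with the orbit count coming from the rank of the invariant lattice. The one caution is that your blanket claim that $\Aut(\wt N)/W(\wt R)$ embeds in $M_{24}$ for every root-containing Niemeier lattice is not literally true (for type $A_2^{12}$ the quotient is $2\times M_{12}$, which does not sit inside $M_{24}$); what Kondo actually proves, case by case, is that the image of $G$ itself lands in $M_{23}$ --- but you correctly flag this as the deferred group-theoretic input, exactly as the paper does.
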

Following Mukai, two further independent proofs of the above theorem
were given, namely by Xiao \cite{xi96} and by Kondo \cite{ko98}, see also \cite{ma86}. 
We use a number of ideas from Kondo's proof throughout this work. 
Therefore,  we briefly review its main steps, adjusting it slightly to fit our purposes:
\vspace{0.5em}

\noindent
\begin{proofsketch}
Assume that $X$ is a K3 surface and that $G$ is a non-trivial finite group acting as 
symplectic automorphism group on $X$. Let $L^G\subset H_\ast(X,\Z)$ denote the invariant sublattice 
of the integral homology\footnote{Here, we slightly modify Kondo's conventions: First, we work in 
homology instead of cohomology, which by Poincar\'e duality is equivalent. Second, instead of 
restricting to $H_2(X,\Z)$ we consider the total integral K3-homology, such that our lattice 
$L^G$ differs from the one in Kondo's work by a summand $H_0(X,\Z)\oplus H_4(X,\Z)\cong U$, 
a hyperbolic lattice. Since the latter is unimodular, the arguments carry through identically.} 
of $X$ and $L_G:=\left(L^G\right)^\perp\cap H_\ast(X,\Z)$. 
By proposition \ref{mukaifinite}, $L_G$ is negative definite of rank at most $19$, while 
$L^G$ has at least rank $5$. Moreover, denoting by $\upsilon_0,\, \upsilon$  a choice of generators of 
$H_0(X,\Z),\, H_4(X,\Z)$ with $\langle\upsilon_0,\upsilon\rangle=1$, we have a lattice vector 
$\upsilon_0- \upsilon\in L^G$ on which the quadratic form takes value $-2$. 

As an application of theorem \ref{nik}, Kondo proves that a lattice $N_G\oplus\langle2\rangle$ isometric
to $(L_G\oplus\langle-2\rangle)(-1)$ can be primitively embedded in some Niemeier lattice $\wt N$
(see definition \ref{niemeierdef}), 
where $\langle2\rangle$ denotes a lattice of rank $1$ with quadratic form $(2)$ on a generator $f$, 
and $\langle-2\rangle=\langle2\rangle(-1)$. In fact, since $G$ acts trivially on $L^G$, it acts trivially 
on the discriminant group $(L^G)^\ast/L^G$. This in turn implies a trivial action of $G$ on the 
discriminant group of $L_G$, since $H_\ast(X,\Z)$ is obtained by the gluing techniques of
proposition  \ref{glueperp} from $L^G$ and $L_G$. 
Hence on the Niemeier lattice $\wt N$, which can be obtained by the gluing techniques from 
$N_G\cong L_G(-1)$ and its orthogonal complement $N^G:=(N_G)^\perp\cap\wt N$, 
the action of $G$ on $N_G\cong L_G(-1)$ can be extended to $\wt N$, leaving $N^G$ 
invariant (see \cite[Prop.~1.1]{ni80b}). Note that in particular, by construction, the invariant sublattice 
$N^G$ of $\wt N$ has rank $\rk(N^G)=\rk(L^G)\geq5$, and it contains the vector $f$ on which the 
quadratic form takes value $2$. While $N^G$ and $L^G$ in general have little in common, apart 
from their ranks and their discriminant groups, note that we can naturally 
identify\footnote{To obtain such an interpretation for $f$, our modification of Kondo's conventions is 
crucial.} $f\in N^G$ with $\upsilon_0-\upsilon\in L^G\subset H_\ast(X,\Z)$.

Next, for each Niemeier lattice $\wt N$ with root sublattice $\wt R$, Kondo shows that the induced 
action on $\wt N/\wt R$ 
gives an injective image of the $G$-action and that it yields an embedding of $G$ in $M_{23}$. 
The latter is readily seen 
in the case of the Niemeier lattice $N$ of type $A_1^{24}$: Here, 
$N/R\cong \m G_{24}\subset\F_2^{24}$ with $\m G_{24}$ the Golay code by proposition \ref{niefromroot}. 
Hence the action of $G$ yields a group of automorphisms of the Golay code. Since $M_{24}$ is the 
automorphism group of $\m G_{24}$, 
this yields an embedding of $G$ in the Mathieu group $M_{24}$. Moreover, 
the invariant part $N^G$ of $N$ by 
construction contains the root $f$. Hence the induced action of $G$ on the Golay code 
stabilizes the corresponding label in 
$\m{I}$. Therefore, Kondo's construction indeed embeds $G$ in 
the subgroup $M_{23}\subset M_{24}$ which stabilizes that label. 
\end{proofsketch}

In fact, by Mukai's appendix to Kondo's 
paper, the Niemeier lattice $N$  of type $A_1^{24}$ can be used to construct a symplectic 
action of each of the $11$ groups $G$ in Mukai's classification. Note that this does not
imply that the Niemeier lattices $\wt N$ can be replaced by $N$ within the above proof. 
Mukai also proves in \cite{mu88} that each
of the $11$ groups in his classification actually
occurs as the symplectic automorphism group of some algebraic 
K3 surface. The largest of these subgroups of $M_{23}$
is the polarization-preserving symplectic automorphism
group $M_{20}$ of  a particular deformation of the Fermat quartic, 
a group of order $960$.
%
\subsection{Holomorphic symplectic automorphisms of Kummer surfaces}\label{Kummerautos}
%
Throughout this subsection we assume that $X$ is a Kummer surface with
underlying torus $T=T(\Lambda)$, $\Lambda\subset\C^2$. According to definition \ref{defKummer},
$X$ carries the complex structure which is induced from the universal cover $\C^2$ of $T$.
Furthermore, according to definition \ref{pol}, we use the induced dual K\"ahler class on $X$. We
are interested in the group of holomorphic symplectic automorphisms of $X$ with respect
to that dual K\"ahler class. It is important
to keep in mind that we are making a very special choice of dual K\"ahler class,
which severely restricts the types of symmetry groups that are accessible to our methods.
In this subsection, we determine the generic structure of such 
holomorphic symplectic automorphism groups of Kummer surfaces. While
the description of these groups themselves, which we
give in proposition \ref{sdproduct}, along with the ideas that lead to
it are known to the experts, our observation about their action on the K3-homology in
theorem \ref{biglattice} is new.

We begin by specifying a group of holomorphic symplectic automorphisms 
that all our Kummer surfaces share:
\begin{prop}\label{findgeneric}
Consider a Kummer surface $X$ with underlying torus $T$, equipped
with the induced dual K\"ahler class. Let $G$ denote the holomorphic symplectic
automorphism group of $X$ with respect to that dual K\"ahler class. 
Then $G$ contains an abelian subgroup $G_t$ which is 
isomorphic to $(\Z_2)^4$.  With notations as in definition
\mbox{\rm\ref{defKummer}} and proposition \mbox{\rm\ref{Kummerform}},
$$
G_t = \left\{ t^{\vec b} \mid \vec b\in\F_2^4\right\},
$$
where each $\alpha^{\vec b}:=t^{\vec b}_\ast$ acts trivially on $K=\pi_\ast (H_2(T,\Z))$, and its action on 
the Kummer lattice $\Pi$ is induced by
$E_{\vec a}\mapsto E_{\vec a+\vec b}$ for all $\vec a\in\F_2^4$.
\end{prop}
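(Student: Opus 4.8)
The plan is to realise each $t^{\vec b}$ as the automorphism of $X$ induced by the half-period translation on the underlying torus $T=\C^2/\Lambda$, and then to read off its action on $H_\ast(X,\Z)$ from the functoriality of $\pi_\ast$ together with the behaviour of the translation on the fixed points. First I would fix $\vec b\in\F_2^4$ and consider the translation $\tau_{\vec b}\colon T\to T$, $[\vec x]\mapsto[\vec x+\tfrac12\sum_{i=1}^4 b_i\vec\lambda_i]$. Being a translation, $\tau_{\vec b}$ is biholomorphic and preserves both the holomorphic $(2,0)$-form $dz_1\wedge dz_2$ and the standard K\"ahler form \req{standardKaehler} on $T$. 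The key elementary observation is that $\tau_{\vec b}$ commutes with the involution $(z_1,z_2)\mapsto(-z_1,-z_2)$ modulo $\Lambda$, since applying the involution after and before $\tau_{\vec b}$ yields representatives differing by $\sum_i b_i\vec\lambda_i\in\Lambda$. Hence $\tau_{\vec b}$ descends to a biholomorphism of the orbifold $T/\Z_2$, which on the fixed point set acts by $\vec F_{\vec a}\mapsto\vec F_{\vec a+\vec b}$: indeed $\tfrac12\sum a_i\vec\lambda_i+\tfrac12\sum b_i\vec\lambda_i=\tfrac12\sum(a_i+b_i)\vec\lambda_i$, and each $\vec\lambda_i\in\Lambda$ absorbs the carries so that the result is $\vec F_{\vec a+\vec b}$ with $\vec a+\vec b$ taken in $\F_2^4$.

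Next I would invoke the canonical nature of the minimal resolution of $A_1$-singularities. Since the descended map permutes the sixteen $A_1$-points among themselves, it lifts uniquely to a biholomorphism $t^{\vec b}$ of the Kummer surface $X$, carrying the exceptional divisor over $\vec F_{\vec a}$ to the one over $\vec F_{\vec a+\vec b}$, so that $\alpha^{\vec b}(E_{\vec a})=E_{\vec a+\vec b}$ on $\Pi$. Because $\tau_{\vec b}^{\,2}=\tau_{2\vec b}=\mathrm{id}_T$ and $t^{\vec b}\circ t^{\vec c}=t^{\vec b+\vec c}$, the assignment $\vec b\mapsto t^{\vec b}$ is a homomorphism $\F_2^4\to\Aut(X)$; it is injective because for $\vec b\neq 0$ the induced permutation $E_{\vec a}\mapsto E_{\vec a+\vec b}$ of the $E_{\vec a}$ is nontrivial. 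Its image is the desired group $G_t\cong(\Z_2)^4$, and in particular each $t^{\vec b}$ has finite order and is a symplectic automorphism in the sense of definition \ref{symmgrp}.

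It remains to determine the action on $K$ and to verify that $\omega=\pi_\ast\omega_T$ is preserved, so that $t^{\vec b}\in G$. Both follow from the commutation $t^{\vec b}_\ast\circ\pi_\ast=\pi_\ast\circ(\tau_{\vec b})_\ast$ of induced maps on homology. A translation is homotopic to the identity via the family $s\mapsto\tau_{s\vec b}$, so $(\tau_{\vec b})_\ast$ is the identity on $H_2(T,\R)$; consequently $t^{\vec b}_\ast$ acts trivially on $K=\pi_\ast(H_2(T,\Z))$ and fixes $\omega=\pi_\ast\omega_T\in\pi_\ast(H_2(T,\R))$. Thus each $t^{\vec b}$ is a holomorphic symplectic automorphism with respect to $\omega$ and lies in $G$, which establishes the claim.

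The only genuinely delicate point in this argument is the unique lift of the orbifold automorphism across the resolution. I would settle it by the functoriality of the minimal resolution of rational double points: any automorphism of $T/\Z_2$ extends to the minimal resolution $X$, and it acts on each exceptional $(-2)$-curve exactly according to its action on the corresponding singular point, which forces $E_{\vec a}\mapsto E_{\vec a+\vec b}$ and simultaneously guarantees compatibility with $\pi_\ast$.
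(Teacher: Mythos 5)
Your proof is correct, but it takes a genuinely different route from the paper. You construct each $t^{\vec b}$ geometrically: the half-period translation $\tau_{\vec b}$ on $T$ commutes with the $\Z_2$-involution, descends to $T/\Z_2$ permuting the sixteen $A_1$-points by $\vec F_{\vec a}\mapsto\vec F_{\vec a+\vec b}$, and lifts canonically through the minimal resolution to a biholomorphism of $X$; triviality of the action on $K$ and invariance of $\omega=\pi_\ast\omega_T$ then follow from the homotopy-invariance of $(\tau_{\vec b})_\ast$. The paper instead works entirely on the lattice side: it \emph{defines} the automorphism $\alpha^{\vec b}$ of $H_2(X,\Z)$ by its stated action on $K$ and $\Pi$, checks the hypotheses of the Torelli Theorem \ref{niktorelli} (the invariant lattice contains $K\supset\Omega$, effectiveness is preserved, $L_{\alpha^{\vec b}}\subset\Pi$ is negative definite and contains no $(-2)$-vectors because $e$ is invariant), and invokes Torelli to produce a uniquely determined symplectic automorphism $t^{\vec b}$ realising $\alpha^{\vec b}$; the geometric origin via half-period shifts appears only as a remark after the proposition. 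Your approach is more elementary and self-contained (no Torelli needed, only functoriality of minimal resolutions), and it makes the geometric meaning manifest from the start. The paper's approach buys two things: it yields uniqueness of $t^{\vec b}$ given its lattice action for free, and it rehearses exactly the Torelli-verification template that is reused for the non-translational symmetries later in the paper. The one point you should make fully explicit is the lifting step across the resolution, which rests on the universal property of the minimal resolution of rational double points; you do flag this and your justification is adequate.
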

\begin{proof}
From proposition \ref{Kummerform} it immediately follows that the action of each 
$\alpha^{\vec b}$
described above induces a lattice automorphism on $H_\ast(X,\Z)$, such that for
the sublattice $L^{\alpha^{\vec b}}$ of $H_2(X,\Z)$ which is invariant under $\alpha^{\vec b}$ we have 
$L^{\alpha^{\vec b}}\supset K$. It follows that $L_{\alpha^{\vec b}}:=(L^{\alpha^{\vec b}})^\perp\cap H_2(X,\Z)$ obeys
$L_{\alpha^{\vec b}}\subset\Pi$. Thus $L_{\alpha^{\vec b}}$ is negative definite. Moreover, by construction 
${\alpha^{\vec b}}$ preserves effectiveness, and 
since $e:=\hf\sum_{\vec a\in\F_2^4} E_{\vec a}\in L^{\alpha^{\vec b}}$ and 
$L_{\alpha^{\vec b}}= (L^{\alpha^{\vec b}})^\perp\cap\Pi$,
$\langle v,v\rangle\neq -2$ for all $v\in L_{\alpha^{\vec b}}$.
Hence by the Torelli Theorem \ref{niktorelli}, ${\alpha^{\vec b}}$ is indeed induced by a uniquely
determined holomorphic symplectic automorphism $t^{\vec b}$ of $X$.
\end{proof}

With $T=T(\Lambda)$, recall from \req{labels} and definition \ref{defKummer} that
each $E_{\vec a}$, $\vec a\in\F_2^4$, is obtained by blowing up a singular point 
$\vec{F}_{\vec{a}}$
in $T/\Z_2$. Hence for every lattice vector 
$\vec\lambda\in\Lambda$, the shift symmetry $\vec x\mapsto \vec x+\hf\vec\lambda$  
for $\vec x\in\R^4$ induces a symmetry on $T/\Z_2$ which permutes the singular points 
by the corresponding shift on the hypercube $\F_2^4$.  If 
$\lambda=\sum_{i=1}^4 b_i\vec\lambda_i$ with generators $\vec\lambda_1,\ldots,\vec\lambda_4$
of $\Lambda$ as in \req{labels}
and $b_i\in\{0,1\}$ for all $i$, then this symmetry induces the holomorphic
symplectic automorphism $t^{\vec b}$ of the above proposition. 
This motivates our terminology of
\begin{definition}\label{defgeneric}
Consider a Kummer surface $X$ with underlying torus $T$, equipped
with the induced dual K\"ahler class. We call the group $G_t\cong (\Z_2)^4$ of holomorphic
symplectic automorphisms $t^{\vec b}$, $\vec b\in\F_2^4$, obtained in proposition \mbox{\rm\ref{findgeneric}}
the \textsc{translational automorphism group of $X$}.
\end{definition}
We now proceed to determine the structure of the symmetry group 
$G$ of a Kummer surface. First, the action of $G$ preserves 
the special structure of the homology of Kummer surfaces that we described in 
proposition \ref{Kummerform}; the statements of the following proposition follow
immediately from the results of \cite{ni75}:

\begin{prop}\label{kummeraffine}
Consider a Kummer surface $X$ with underlying torus $T$, equipped
with the induced dual K\"ahler class. Let $f$ denote a holomorphic symplectic automorphism
of $X$ and $\alpha:=f_\ast$ its induced action on $H_\ast(X,\Z)$. 

With notations as in definition
\mbox{\rm\ref{defKummer}} and proposition \mbox{\rm\ref{Kummerform}}, 
for the Kummer lattice $\Pi$ we have $\alpha(\Pi)=\Pi$, and
for $K=\pi_\ast (H_2(T,\Z))$ we have $\alpha(K)=K$. 

There is an affine linear map $A$ on $\F_2^4$ such that the action
of $\alpha$ on $\Pi$ is induced by $\alpha(E_{\vec a})=E_{A(\vec a)}$, $\vec a\in\F_2^4$. 
In fact, $A$ uniquely determines $f$.
\end{prop}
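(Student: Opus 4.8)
The plan is to treat the four assertions in turn, first extracting the set $\{E_{\vec a}\mid\vec a\in\F_2^4\}$ intrinsically from the geometry, and then reading off its affine structure from the half-sum generators of the Kummer lattice in \req{Kummer}. I would begin from the dual K\"ahler class $\omega=\pi_\ast\omega_T$. By proposition \ref{Kummerform} we have $\omega\in K\otimes\R=(\Pi\otimes\R)^\perp$, and since $E_{\vec a}\in\Pi$ this gives $\langle\omega,E_{\vec a}\rangle=0$; moreover each $E_{\vec a}$ is the class of an exceptional $\mathbb P^1$ of the resolution $X\to T/\Z_2$, hence effective with $\langle E_{\vec a},E_{\vec a}\rangle=-2$. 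As $\omega$ lies on a wall of the K\"ahler cone it is nef, so every irreducible curve meets $\omega$ non-negatively. Writing an effective $(-2)$-class $D$ orthogonal to $\omega$ as a sum $\sum n_i C_i$ of irreducible curves forces each $C_i$ to be $\omega$-orthogonal, hence contracted by $X\to T/\Z_2$, hence some $E_{\vec a}$; the count $\langle D,D\rangle=-2\sum n_{\vec a}^2=-2$ then forces $D=E_{\vec a}$ for a single $\vec a$. Thus the effective $(-2)$-classes orthogonal to $\omega$ are exactly the $E_{\vec a}$. Now $\alpha=f_\ast$ preserves the intersection form, fixes $\omega$, and (as $f$ is biholomorphic) sends effective classes to effective classes, so it permutes $\{E_{\vec a}\}$; I record this as a permutation $\sigma$ of $\F_2^4$ with $\alpha(E_{\vec a})=E_{\sigma(\vec a)}$. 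Since $\Pi$ is the smallest primitive sublattice containing all the $E_{\vec a}$, and $\sigma$ merely permutes this generating set, $\alpha(\Pi)=\Pi$ follows at once.

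For the second assertion, $\alpha$ is a lattice automorphism of $H_2(X,\Z)$ preserving $\Pi$, so it preserves $\Pi^\perp\cap H_2(X,\Z)=K$, giving $\alpha(K)=K$. For the third, I would exploit \req{Kummer}: a subset $S\subset\F_2^4$ satisfies $\hf\sum_{\vec a\in S}E_{\vec a}\in\Pi$ if and only if $S$ lies in the binary code $\m C\subset\F_2^{16}$ spanned by the indicator vectors of affine hyperplanes. A discriminant comparison, $\disc(\spann_\Z\{E_{\vec a}\})=2^{16}$ against $|\Pi^\ast/\Pi|=2^6$, gives $[\Pi:\spann_\Z\{E_{\vec a}\}]=2^5$, so $\dim_{\F_2}\m C=5$ and $\m C$ is the first-order Reed--Muller code $\mathrm{RM}(1,4)$, whose weight-$8$ words are precisely the affine hyperplanes of $\F_2^4$. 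Applying $\alpha$ to $\hf\sum_{\vec a\in S}E_{\vec a}$ and using $\alpha(\Pi)=\Pi$ shows $\sigma$ preserves $\m C$, hence permutes affine hyperplanes; since the permutation automorphism group of $\mathrm{RM}(1,4)$ is the affine group $\F_2^4\rtimes\GL(4,\F_2)$, the permutation $\sigma=:A$ is affine linear, as claimed.

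The final, and to my mind most delicate, point is that $A$ determines $f$. Suppose $f,f'$ are holomorphic symplectic automorphisms (with respect to $\omega$) inducing the same $A$; then $g:=f^{-1}\circ f'$ lies in the finite symmetry group of $X$ (proposition \ref{mukaifinite}) and induces $A_g=\mathrm{id}$, so $g_\ast$ fixes every $E_{\vec a}$ and hence fixes $\Pi$ pointwise. Then $\rk(H_2(X,\Z)^{g_\ast})\geq\rk\Pi=16$, so the coinvariant lattice $L_{g_\ast}=(H_2(X,\Z)^{g_\ast})^\perp\cap H_2(X,\Z)$ has rank at most $6$. By Nikulin's classification \cite{ni80b}, the coinvariant lattice of any nontrivial finite-order symplectic automorphism of a K3 surface has rank at least $8$; the contradiction forces $g_\ast=\mathrm{id}$ on $H_2(X,\Z)$, whence $g=\mathrm{id}$ by the Torelli Theorem \ref{niktorelli} and $f=f'$. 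The main obstacle is precisely this uniqueness step: unlike the action on $\Pi$, the action of $\alpha$ on $K$ is a priori only pinned down modulo $K^\ast/K$ through the gluing of proposition \ref{Kummerform}, so a direct argument on $K$ does not suffice; one must instead invoke the global rank bound on coinvariant lattices to exclude a nontrivial symplectic automorphism fixing all sixteen exceptional classes.
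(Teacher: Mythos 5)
Your proof is correct, but it takes a genuinely different route from the paper, which offers no argument of its own here: the text states only that the proposition ``follows immediately from the results of \cite{ni75}'', where Nikulin proceeds geometrically --- a symplectic automorphism compatible with the induced structures descends from an affine automorphism $z\mapsto Mz+b$ of the underlying torus, whose action on the two-torsion points ${1\over2}\Lambda/\Lambda\cong\F_2^4$ is visibly affine, and uniqueness is read off from that picture. You instead stay entirely on the lattice side: you characterise the $E_{\vec a}$ intrinsically as the effective $(-2)$-classes orthogonal to the nef class $\omega$ (correctly using that the only irreducible curves contracted by the resolution are the sixteen exceptional ones), identify the binary code $\Pi/\spann_\Z\{E_{\vec a}\}\subset\F_2^{16}$ as $\mathrm{RM}(1,4)$ via the discriminant count, and invoke $\Aut(\mathrm{RM}(1,4))=\Aff(\F_2^4)$ to obtain affinity of $A$; for uniqueness you play the Torelli theorem off against Nikulin's lower bound $\rk(L_{g_\ast})\geq 8$ for the coinvariant lattice of a nontrivial finite-order symplectic automorphism \cite{ni80b}, which contradicts the bound $\rk(L_{g_\ast})\leq 6$ forced by $g_\ast|_\Pi=\mathrm{id}$. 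Each step checks out --- in particular the group generated by $f$ and $f'$ preserves $\omega$ and is therefore finite by proposition \ref{mukaifinite}, so $g$ does have finite order and the rank bound applies, and you rightly observe that a direct argument on $K$ alone would not close the uniqueness gap. What your route buys is self-containedness within the lattice framework the paper actually works in, at the cost of importing two classical external facts (the Reed--Muller automorphism group and Nikulin's coinvariant ranks); the geometric route through the torus is shorter and is in any case needed for proposition \ref{sdproduct}, which is presumably why the authors simply defer to \cite{ni75}.
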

%
%
%

Note that the translational automorphisms $t^{\vec b}$, $\vec b\in\F_2^4$, found in
proposition \ref{findgeneric} are precisely the holomorphic symplectic
automorphisms of a Kummer surface which are given by translations
$A(\vec a)=\vec a+\vec b$
on the hypercube $\F_2^4$ according to proposition \ref{kummeraffine}.
\vspace*{0.5em}
We are now ready to describe the generic form of every symmetry 
group of a Kummer surface. As mentioned above, this
description along with the ideas which lead to it are known to the experts.
However, for the reader's convenience and since we have not found an appropriate reference 
in the literature, we recall its derivation:
\begin{prop}\label{sdproduct}
Consider a Kummer surface $X$ with underlying torus $T=T(\Lambda)$, equipped
with the induced dual K\"ahler class. Let $G$ denote its symmetry group. Then the following holds:
\begin{enumerate}
\item
The group $G$ is a semi-direct product $G_t\rtimes G_T$ with $G_t$ the 
translational automorphism group of definition \mbox{\rm\ref{defgeneric}} and 
$G_T$ the group of those symplectic automorphisms of $X$ that are induced by the 
holomorphic symplectic automorphisms of $T$ that fix $0\in\C^2/\Lambda=T$. 
In other words\footnote{The
notion of holomorphic symplectic automorphisms on complex tori is 
defined completely analogously to our definition for K3 surfaces.},
$G_T\cong G_T^\prime/\Z_2$ with $G_T^\prime$ the group of 
non-translational holomorphic
symplectic automorphisms of $T$.
\item 
The group $G$ is isomorphic to a subgroup of $(\Z_2)^4\rtimes A_7$ with
$A_7$ the group of even permutations on $7$ elements.
\end{enumerate}
\end{prop}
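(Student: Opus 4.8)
The plan is to establish the two parts of Proposition~\ref{sdproduct} in sequence, leaning heavily on Proposition~\ref{kummeraffine}, which already tells us that every holomorphic symplectic automorphism $f$ of $X$ acts on the Kummer lattice $\Pi$ through an affine linear map $A$ on $\F_2^4$, with $A$ uniquely determining $f$. For part~1, I would first observe that this assignment $f\mapsto A$ defines a group homomorphism from $G$ into the affine group $\Aff(\F_2^4)=(\Z_2)^4\rtimes\GL(4,\F_2)$, and that it is injective precisely because $A$ determines $f$. By Proposition~\ref{findgeneric} the translational automorphisms $t^{\vec b}$ correspond exactly to the pure translations $A(\vec a)=\vec a+\vec b$, so $G_t$ is the preimage of the translation subgroup $(\Z_2)^4\subset\Aff(\F_2^4)$ and hence is a normal subgroup of $G$. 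The quotient $G/G_t$ then embeds into the linear part $\GL(4,\F_2)$, and I would identify it with the group $G_T$ of automorphisms fixing the origin: these are the symmetries whose affine map $A$ has no translation part, i.e.\ is genuinely linear and fixes $\vec 0\in\F_2^4$, matching the blow-up $E_{\vec 0}$ of the fixed point $\vec F_{\vec 0}$. The splitting $G=G_t\rtimes G_T$ follows because the origin-fixing automorphisms furnish a section of the projection $G\to G/G_t$.

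The geometric content of part~1 is the identification $G_T\cong G_T'/\Z_2$ with $G_T'$ the non-translational holomorphic symplectic automorphisms of the torus $T$. Here I would argue that any holomorphic symplectic automorphism of $X$ fixing $E_{\vec 0}$ commutes with the $\Z_2$-involution on $T$ and hence descends from (equivalently, lifts to) an automorphism of $T/\Z_2$ fixing the image of $0$; since such automorphisms of the resolved surface come from linear automorphisms of $\C^2/\Lambda$ preserving the lattice $\Lambda$ and the induced K\"ahler class, they are exactly the elements of $G_T'$. The involution $-\mathrm{id}$ on $T$ acts trivially on $X$ (it is the deck transformation of the rational double cover $\pi$), which is why one quotients by $\Z_2$. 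This is essentially the content attributed to \cite{ni75}, and I would cite it rather than reprove the correspondence between torus automorphisms and Kummer automorphisms from scratch.

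For part~2, the target is to show $G\hookrightarrow(\Z_2)^4\rtimes A_7$. The inclusion $G_t\cong(\Z_2)^4$ is already in place, so it suffices to show that the linear part $G_T\hookrightarrow\GL(4,\F_2)$ in fact lands in a subgroup isomorphic to $A_7$ rather than all of $\GL(4,\F_2)\cong A_8$. The natural route is to invoke the exceptional isomorphism $\GL(4,\F_2)=L_4(2)\cong A_8$, which is already quoted in the discussion following Proposition~\ref{piinn} (via \cite[Thm.~2.10]{co71}, \cite{jo70}), together with the constraint that the automorphisms must be \emph{symplectic} and preserve the \emph{induced} K\"ahler class. Concretely, I expect that preserving the induced polarization forces the linear action on $\F_2^4$ to commute with, or fix, a distinguished structure (the symplectic/quadratic form encoding $\omega_T=e_1\vee e_2+e_3\vee e_4$ from \req{homologyinv2}), cutting $A_8$ down to an index-$8$ subgroup, which under the isomorphism $A_8\cong\GL(4,\F_2)$ is precisely a point stabilizer $A_7$.

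The main obstacle will be part~2: pinning down \emph{exactly} which subgroup of $\GL(4,\F_2)$ arises and proving it is $A_7$ rather than merely contained in some $A_7$. The subtlety is that $A_8\cong\GL(4,\F_2)$ is not an inner isomorphism, so the condition ``symplectic and K\"ahler-preserving'' has to be translated carefully through the identification \req{map} of $\F_2^4$ with labels in $\m I\setminus\m O_9$; under that dictionary the relevant stabilizer of a point of the $8$-element set is an $A_7$, and I would argue that the image of $G_T$ sits inside this $A_7$ because every torus automorphism preserving the standard K\"ahler form \req{standardKaehler} acts through $\mathrm{Sp}(4,\F_2)\cong S_6$ or a suitable subgroup thereof, which is visibly contained in $A_7$. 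Verifying that the K\"ahler-preservation condition is equivalent to fixing the correct point, and not merely implying containment in some conjugate $A_7$, is the delicate step; everything else reduces to the homomorphism-and-section bookkeeping of part~1 together with the already-cited exceptional isomorphism.
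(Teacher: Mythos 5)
Your part~1 is essentially sound and close in spirit to the paper's own argument: the paper also reduces everything to proposition~\ref{kummeraffine}, produces the torus automorphism $f_T'$ by applying the global Torelli theorem for complex tori to the Hodge isometry that $f_\ast$ induces on $K=\pi_\ast(H_2(T,\Z))$, and then checks that $t_\ast:=f_\ast\circ((f_T)_\ast)^{-1}$ acts trivially on $K$ and hence lies in $G_t$. Your packaging via the injection $G\hookrightarrow\Aff(\F_2^4)$ and the stabilizer of $\vec 0$ is fine, provided you really justify (rather than only assert) that every origin-fixing element of $G$ lifts to an automorphism of $T$; citing \cite{ni75} for this is acceptable, and the paper does the same in combination with the Torelli theorem for tori.

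Part~2, however, has a genuine gap. The paper's proof rests entirely on Fujiki's classification \cite[Lemma 3.1\,\&\,3.2]{fu88} of the non-translational holomorphic symplectic automorphism groups of complex tori: $G_T'$ is a subgroup of $\Z_4$, $\Z_6$, a binary dihedral group of order $8$ or $12$, or the binary tetrahedral group, so $G_T=G_T'/\Z_2$ is a subgroup of $\Z_2$, $\Z_3$, $\Z_2\times\Z_2$, $S_3$ or $A_4$, hence of $A_6\subset A_7$. You propose to bypass this and instead constrain the image of $G_T$ inside $\GL(4,\F_2)\cong A_8$ by the requirement that it preserve a symplectic form on $\F_2^4$ coming from the induced K\"ahler class, landing in $\mathrm{Sp}(4,\F_2)\cong S_6$, which you claim is ``visibly contained in $A_7$''. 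This is false: $|S_6|=720$ does not divide $|A_7|=2520$, so $S_6$ is \emph{not} a subgroup of $A_7$; under $\GL(4,\F_2)\cong A_8$ the symplectic group $\mathrm{Sp}(4,\F_2)$ is a maximal subgroup of $A_8$ different from the point stabilizer $A_7$. So the structure you identify is not strong enough to force containment in $A_7$, and the delicate step you flag is not merely delicate — as formulated it fails. To close the gap you either need Fujiki's classification (which immediately gives $G_T\subseteq A_4$ or $S_3$, etc.), or at minimum an argument that $G_T$ lands in the commutator subgroup $\mathrm{Sp}(4,\F_2)'\cong A_6$ rather than all of $\mathrm{Sp}(4,\F_2)$; neither is supplied by K\"ahler-class preservation alone as you have stated it.
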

\begin{proof}
Throughout this proof we use the notations introduced in definition
\ref{defKummer} and proposition \ref{Kummerform}.
\begin{enumerate}
\item
Let $f\in G$ and $\alpha:=f_\ast$ as before. 
Proposition \ref{kummeraffine} implies that $\alpha$
induces an automorphism of $K=\pi_\ast (H_2(T,\Z))\cong H_2(T,\Z)(2)$
and thus an automorphism $\alpha_T^\prime$ of $H_2(T,\Z)$. Then $\alpha_T^\prime$ 
acts as a Hodge isometry on $H_2(T,\Z)$, since
$\alpha$ does so on $H_2(X,\Z)$ and complex structure and dual K\"ahler class of $X$ are induced
by those on $T$. Hence by the global Torelli Theorem for complex tori, this means that
$\alpha_T^\prime=(f^\prime_T)_\ast$ for  some holomorphic symplectic
automorphism $f^\prime_T$ of $T$.
Let $f_T\in G_T$ denote the automorphism which $f^\prime_T$ induces on $X$
(see for example the proof of \cite[Thm.~2]{ni75}). 
We claim that $f = t^{\vec b}\circ f_T$ for some $\vec b\in\F_2^4$
and $t^{\vec b}\in G_t$ as in proposition \ref{findgeneric}. 
Indeed, by construction,
$t_\ast:=\alpha\circ ((f_T)_\ast)^{-1}$ acts trivially on $K=\pi_\ast (H_2(T,\Z))$. 
Under gluing $\Pi$ to
$K$ according to proposition \ref{glueperp}, this trivial
action must be compatible with the action of $t_\ast$ on $\Pi$. 
In particular, $t_\ast$ must induce an action on the hypercube
$\F_2^4$ of labels $\vec a$ of the $E_{\vec a}$ which maps every affine plane in $\F_2^4$
to a parallel plane. One checks that this implies that {$t_\ast\in G_t$}.

The uniqueness of the decomposition $f=t^{\vec b}\circ f_T$ with $t^{\vec b}\in G_t$
and $f_T\in G_T$ and then $G\cong G_t\rtimes G_T$
now follow from proposition \ref{kummeraffine}.
\item
Since we already know that $G\cong G_t\rtimes G_T$ with $G_t\cong (\Z_2)^4$, 
it remains to show that $G_T\subset A_7$. However, this immediately follows
from Fujiki's classification of holomorphic symplectic automorphism groups
of complex tori \cite[Lem\-ma 3.1\&3.2]{fu88}. Indeed, Fujiki proves that $G_T^\prime$
is isomorphic to a subgroup of one of the following groups: The cyclic groups $\Z_4$
or $\Z_6$ with $\Z_4/\Z_2\cong\Z_2$ and $\Z_6/\Z_2\cong\Z_3$, the binary dihedral
groups $\m O$ or $\m D$ of orders $8$ or $12$ with $\m O/\Z_2\cong\Z_2\times\Z_2$
and $\m D/\Z_2\cong S_3$ (the permutation group on $3$ elements), or
the binary tetrahedral group $\m T$ with $\m T/\Z_2\cong A_4$ (the group of even permutations
on $4$ elements). Hence $G_T$ is isomorphic to a subgroup of $A_6$, the group of even permutations
on $6$ elements. With $A_6\subset A_7$ the claim follows.\vspace*{-1.5em}
\end{enumerate}
\end{proof}

Note that by our proof of proposition \ref{sdproduct}, every holomorphic symplectic
automorphism group of a Kummer surface is in fact isomorphic to a subgroup of
$(\Z_2)^4\rtimes A_6$. However, for reasons that will become clear later, we prefer working
with the bigger group $(\Z_2)^4\rtimes A_7$.
{The proposition implies that the translational group $G_t\cong(\Z_2)^4$ is the symmetry
group of generic Kummer surfaces with induced (dual) K\"ahler class.}
\begin{definition}\label{overarch}
We call the group $(\Z_2)^4\rtimes A_7$, where $A_7$ denotes the group of even permutations
on $7$ elements,
the \textsc{overarching finite symmetry group of Kummer surfaces}.
\end{definition}
Let us now take a closer look at the lattices that are involved in the action of a
symmetry group of  a Kummer surface (see table \ref{latticetableG} for a snapshot of lattices used in the following):

\begin{prop}\label{latticechar}
Consider a Kummer surface $X$ with underlying torus $T$, equipped
with the induced dual K\"ahler class. Let $G$ denote its symmetry group. 
Then the following holds, with notations as in
proposition \mbox{\rm\ref{fullk3}}:
\begin{enumerate}
\item
For the Kummer lattice $\Pi$ and the induced action of $G$ on it, 
$$
\Pi^G=\Pi\cap H_\ast(X,\Z)^G=\mbox{span}_\Z\{e\} 
\quad \mbox{ with }\quad
e:={1\over2}\sum_{\vec a\in\F_2^4} E_{\vec a}.
$$
\item
Let $L_G:= \left( H_\ast (X,\Z)^G \right)^\perp \cap H_\ast(X,\Z)$ as before
and $\upsilon_0\in H_0(X,\Z)$, $\upsilon\in H_4(X,\Z)$ such that
$\langle\upsilon_0,\upsilon\rangle=1$. With $G_T^\prime$ the group of holomorphic
symplectic automorphisms of $T$ that fix $0\in\C^2/\Lambda$, the lattice 
$$ 
M_G:=L_G
\oplus \mbox{span}_\Z \left\{ e, \upsilon_0-\upsilon \right\}
$$
is the orthogonal complement of the lattice $M_G^\prime:=\pi_\ast(H_2 (T,\Z)^{G_T^\prime})
 \oplus \mbox{span}_\Z \left\{  \upsilon_0+\upsilon \right\}$ in $H_\ast(X,\Z)$.
\end{enumerate}
\end{prop}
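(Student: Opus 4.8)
The plan is to reduce both parts to the rational $G$-module decomposition of $H_\ast(X,\Z)$ and then upgrade the resulting rational identities to integral ones using the explicit gluing of proposition \ref{fullk3}.

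For part 1, I would first note that $e=\tfrac12\sum_{\vec a}E_{\vec a}$ really lies in $\Pi$: writing $\F_2^4$ as a disjoint union of two parallel affine hyperplanes $H,H'$ gives $e=\tfrac12\sum_{\vec a\in H}E_{\vec a}+\tfrac12\sum_{\vec a\in H'}E_{\vec a}$, a sum of two generators in \req{Kummer}. Since by proposition \ref{kummeraffine} each $\alpha=f_\ast$ acts on $\Pi$ through an affine bijection $E_{\vec a}\mapsto E_{A(\vec a)}$ of the $16$ labels, it fixes $\sum_{\vec a}E_{\vec a}$ and hence $e$, so $\spann_\Z\{e\}\subseteq\Pi^G$. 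For the reverse inclusion I would restrict to the translational subgroup $G_t\cong(\Z_2)^4$ of proposition \ref{findgeneric}, which permutes the basis $\{E_{\vec a}\}$ of $\Pi\otimes\Q$ by the transitive regular action of $\F_2^4$; its space of invariants is one-dimensional, equal to $\Q\sum_{\vec a}E_{\vec a}=\Q e$. Thus $\Pi^G\subseteq\Pi^{G_t}\subseteq(\Pi\otimes\Q)^{G_t}\cap\Pi=\Q e\cap\Pi$, and because $\langle e,E_{\vec a}\rangle=-1$ forces $e$ to be primitive in $\Pi$ (if $e=kw$ with $w\in\Pi$ then $k\mid 1$), we obtain $\Q e\cap\Pi=\spann_\Z\{e\}$. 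The equality $\Pi^G=\Pi\cap H_\ast(X,\Z)^G$ is immediate since $\Pi\subseteq H_\ast(X,\Z)$.

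For part 2, I would use the orthogonal, $G$-invariant decomposition $H_\ast(X,\R)=(U\otimes\R)\oplus(K\otimes\R)\oplus(\Pi\otimes\R)$, where $U=H_0(X,\Z)\oplus H_4(X,\Z)$ and $K=\Pi^\perp\cap H_2(X,\Z)$ by proposition \ref{Kummerform}. By proposition \ref{sdproduct}, $G$ acts trivially on $U$, through the induced action of $G_T'$ on $K\cong\pi_\ast H_2(T,\Z)$, and affinely on $\Pi$. Computing fixed spaces, and using part 1 on the $\Pi$-summand, yields $H_\ast(X,\R)^G=(U\otimes\R)\oplus\pi_\ast(H_2(T,\R)^{G_T'})\oplus\R e$; hence, with $W:=H_2(T,\Z)^{G_T'}$, the complement $L_G$ satisfies $L_G\otimes\R=((\pi_\ast W)^\perp\cap(K\otimes\R))\oplus((\R e)^\perp\cap(\Pi\otimes\R))$. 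A short bookkeeping step then matches the rational spans: splitting $U\otimes\R=\R(\upsilon_0+\upsilon)\oplus\R(\upsilon_0-\upsilon)$ and $\Pi\otimes\R=\R e\oplus((\R e)^\perp\cap(\Pi\otimes\R))$ turns both $M_G\otimes\R$ and $(M_G')^\perp\cap H_\ast(X,\R)$ into $((\pi_\ast W)^\perp\cap(K\otimes\R))\oplus(\Pi\otimes\R)\oplus\R(\upsilon_0-\upsilon)$. Along the way I would check that $e,\upsilon_0-\upsilon\in H_\ast(X,\Z)^G$, so the sum defining $M_G$ is orthogonal, that $M_G\perp M_G'$, and that $\rk M_G+\rk M_G'=24$, giving the inclusion $M_G\subseteq(M_G')^\perp\cap H_\ast(X,\Z)$ with complementary ranks.

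The main obstacle is the passage from this rational equality to the integral statement, i.e.\ verifying that no gluing vector enlarges $M_G$ inside its saturation. The effective handle is the primitivity of $M_G'=\pi_\ast W\oplus\spann_\Z\{\upsilon_0+\upsilon\}$: it sits inside $\KKK=K\oplus\spann_\Z\{\upsilon_0+\upsilon\}$, which by the gluing graph of proposition \ref{fullk3} is itself primitive in $H_\ast(X,\Z)$, because injectivity of the glue map forces any element of $H_\ast(X,\Z)$ lying in $\KKK\otimes\Q$ to lie in $\KKK$; combined with the primitivity of the fixed lattice $W$ in $H_2(T,\Z)$ and the fact that $\pi_\ast$ is an isometry onto $K$, this makes $M_G'$ primitive, so that $M_G'=(M_G'\otimes\Q)\cap H_\ast(X,\Z)$ and $M_G'$ is exactly the orthogonal complement of $M_G$. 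The delicate point to keep in view is the asymmetry created by $\langle e,e\rangle=-8$ against the unit pairings $\langle e,E_{\vec a}\rangle=-1$: the needed saturation control must be run on the $\KKK$-side rather than the $\PPP=\Pi\oplus\spann_\Z\{\upsilon_0-\upsilon\}$-side, and it is the concrete gluing data \req{kijchoice}--\req{pdef} of proposition \ref{fullk3} that makes this last verification explicit.
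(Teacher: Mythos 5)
Part 1 of your proposal is correct and is essentially the paper's own (very terse) argument: $e$ is fixed because every $f_\ast$ permutes the labels affinely, the transitive translational subgroup $G_t$ pins the rational fixed space of $\Pi$ down to $\Q e$, and $\langle e,E_{\vec a}\rangle=-1$ gives the integral statement. No complaints there.

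The gap is in the final step of part 2, and it is exactly at the point you yourself flag as "the main obstacle". Your rational computation and the inclusion $M_G\subseteq (M_G')^\perp\cap H_\ast(X,\Z)$ with equal ranks are fine, but the primitivity argument is run on the wrong lattice. Primitivity of $M_G'$ gives $M_G'=(M_G'\otimes\Q)\cap H_\ast(X,\Z)=M_G^\perp\cap H_\ast(X,\Z)$, i.e.\ it proves that $M_G'$ is the orthogonal complement of $M_G$ --- which is the assertion with the roles reversed. Since an orthogonal complement is automatically saturated, what the stated claim requires is saturation of $M_G$ itself, and nothing in your argument addresses that. Worse, for the literal internal direct sum $M_G=L_G\oplus\spann_\Z\{e,\upsilon_0-\upsilon\}$ saturation \emph{fails}: any single class $E_{\vec a}$ lies in $\Pi\subset(M_G')^\perp\cap H_\ast(X,\Z)$, but writing $E_{\vec a}=\ell+ae+b(\upsilon_0-\upsilon)$ with $\ell\in L_G$ (so $\ell\perp e$, because $e\in H_\ast(X,\Z)^G$) and pairing with $e$ forces $-1=-8a$, which has no integer solution. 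So the asymmetry $\langle e,e\rangle=-8$ versus $\langle e,E_{\vec a}\rangle=-1$ that you correctly single out is precisely the obstruction, and it lives on the $\PPP$-side, not on the $\KKK$-side where you propose to run the saturation check.

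The statement is only true --- and is only used later in the paper in this form, e.g.\ the assertion $\Pi\subset M_G$ in the proof of theorem \ref{biglattice} --- if $M_G$ is understood as the sublattice generated by $L_G$, the \emph{whole} Kummer lattice $\Pi$, and $\upsilon_0-\upsilon$; part 1 records the $G$-invariant part of $\Pi$, not the part of $\Pi$ captured by $L_G\oplus\Z e$. With that reading your strategy does close, and the fix is short: given $v\in(M_G')^\perp\cap H_\ast(X,\Z)$ with $H_2$-component $w$ (orthogonal to $\pi_\ast(H_2(T,\Q)^{G_T'})$ by hypothesis), subtract $p:=-\langle w,e\rangle\, E_{\vec a}\in\Pi$; then $\langle w-p,e\rangle=0$, so $w-p\in L_G$ and $w\in L_G+\Pi$. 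This uses $\langle \Pi,e\rangle=\Z$ and replaces the appeal to primitivity of $M_G'$, which, while true, proves the wrong inclusion.
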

\begin{proof}
\begin{enumerate}
\item
This follows immediately from the fact that $G$ contains the translational group
$G_t\cong(\Z_2)^4$ by proposition \ref{findgeneric} {along with proposition \ref{kummeraffine}}.
\item
From propositions \ref{kummeraffine} and \ref{sdproduct} 
it follows that $H_2(X,\Z)^G$ is obtained by gluing $\pi_\ast (H_2(T,\Z)^{G_T^\prime})$ 
and $\Pi^G$ according to proposition \ref{glueperp}. Hence
the claim follows from $H_\ast(X,\Z)^G = H_2(X,\Z)^G \oplus \mbox{span}_\Z\{ \upsilon_0, \upsilon\}$
together with $\Pi^G=\mbox{span}_\Z\{e\} $ along with proposition \ref{glueperp}.\vspace*{-1.5em}
\end{enumerate}
\end{proof}

We are now ready to prove and appreciate the following result, which is crucial to our investigations:
\begin{theorem}\label{biglattice}
Let $X$ denote a Kummer surface with underlying torus $T$, equipped with the 
complex structure and dual K\"ahler class which are induced from $T$. Let $G$ denote the
symmetry group of $X$. With notations
as in proposition \mbox{\rm\ref{latticechar}}, in particular 
$L_G= \left( H_\ast (X,\Z)^G \right)^\perp \cap H_\ast(X,\Z)$ and
$\upsilon_0\in H_0(X,\Z)$,
$\upsilon\in H_4(X,\Z)$ such that
$\langle\upsilon_0,\upsilon\rangle=1$, and $e:={1\over2}\sum_{\vec a\in\F_2^4} E_{\vec a}$,
the lattice $M_G(-1)$ with
$$ 
M_G:= L_G \oplus \mbox{span}_\Z \left\{ e, \upsilon_0-\upsilon \right\}
$$
can be primitively embedded in the Niemeier lattice $N$ of type $A_1^{24}$.
\end{theorem}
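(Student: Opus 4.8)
The plan is to reduce the assertion to a primitive embedding of $M_G(-1)$ into an \emph{arbitrary} Niemeier lattice and then to pin that lattice down to $N$. For the pinning step I would first show $\Pi\subseteq M_G$. By proposition \ref{latticechar} the lattice $M_G$ is the orthogonal complement in $H_\ast(X,\Z)$ of $M_G^\prime=\pi_\ast(H_2(T,\Z)^{G_T^\prime})\oplus\spann_\Z\{\upsilon_0+\upsilon\}$, and since $M_G^\prime\subseteq\m K=K\oplus\spann_\Z\{\upsilon_0+\upsilon\}$, passing to orthogonal complements and invoking proposition \ref{fullk3} yields $\m P=\m K^\perp\cap H_\ast(X,\Z)\subseteq M_G$; in particular $\Pi\subset\m P\subseteq M_G$. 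Because $\Pi$ is primitive in $H_\ast(X,\Z)$ by proposition \ref{Kummerform}, it is primitive in the intermediate lattice $M_G$ as well. Granting a primitive embedding $M_G(-1)\hookrightarrow\wt N$ into some Niemeier lattice $\wt N$, and using that primitivity composes, the chain $\Pi(-1)\hookrightarrow M_G(-1)\hookrightarrow\wt N$ is a primitive embedding of $\Pi(-1)$; by proposition \ref{piprim} the only Niemeier lattice admitting such a sublattice is $N$ of type $A_1^{24}$, forcing $\wt N=N$. This is the structurally clean half of the argument, and it is exactly where the distinguished role of the Kummer geometry enters.

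It remains to produce the primitive embedding of $M_G(-1)$ into some Niemeier lattice, which I would obtain from Nikulin's theorem \ref{nik} applied to $\Lambda=M_G(-1)$, a positive definite lattice of signature $(\rk L_G+2,0)$, with target signature $(24,0)$. The decomposition $M_G=L_G\oplus\spann_\Z\{e\}\oplus\spann_\Z\{\upsilon_0-\upsilon\}$ is orthogonal: both $e$ and $\upsilon_0-\upsilon$ lie in $H_\ast(X,\Z)^G$ and are thus orthogonal to $L_G$, while $e\in H_2(X,\Z)$ is orthogonal to $\upsilon_0-\upsilon\in H_0(X,\Z)\oplus H_4(X,\Z)$. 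With $\langle e,e\rangle=-8$ and $\langle\upsilon_0-\upsilon,\upsilon_0-\upsilon\rangle=-2$ this gives, after reversal of signature, a discriminant group $A_{M_G}\cong A_{L_G}\oplus\Z_8\oplus\Z_2$ in which the $\Z_2$ coming from $\upsilon_0-\upsilon$ carries the form $\hf$. That summand supplies precisely the $({1\over2})$ demanded by condition 3.i of theorem \ref{nik}---the same device Kondo exploits---so condition 3 holds, and condition 1 is immediate from $\rk M_G\le 21<24$.

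The main obstacle is verifying the length conditions 2 and 4 against the co-rank $24-\rk M_G=\rk(H_2(T,\Z)^{G_T^\prime})+1$. Here I would use that orthogonal complements in the unimodular lattice $H_\ast(X,\Z)$ share discriminant forms, so that $A_{M_G}\cong A_{M_G^\prime}$ with $M_G^\prime\cong(H_2(T,\Z)^{G_T^\prime})(2)\oplus\spann_\Z\{\upsilon_0+\upsilon\}$. Since $H_2(T,\Z)^{G_T^\prime}$ is the invariant sublattice of $U^3$ under one of the few torus automorphism groups appearing in Fujiki's classification (already invoked in proposition \ref{sdproduct}), its discriminant is explicitly controlled: the rescaling by $2$ contributes a $2$-elementary summand of rank $\rk(H_2(T,\Z)^{G_T^\prime})$, the generator $e$ contributes the extra $\Z_8$, and the odd part is governed solely by the odd discriminant of $H_2(T,\Z)^{G_T^\prime}$. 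One then checks case by case---the binary tetrahedral group being tightest, where the co-rank is only $3$ and the length meets the bound with equality---that conditions 2 and 4 hold. Should Nikulin's sufficient conditions fail to be tight in a borderline case, the fallback is the explicit route suggested by the later bijection $\Theta$: extend the primitive embedding $\m P(-1)\cong\wt{\m P}_{n_0}\hookrightarrow N$ of proposition \ref{fullniemeierprop} by embedding the torus-moving part $\pi_\ast\bigl((H_2(T,\Z)^{G_T^\prime})^\perp\cap H_2(T,\Z)\bigr)$, after signature reversal, into the orthogonal complement $\wt{\m K}_{n_0}\cong A_1^7$, and match the gluing data as in the uniqueness argument of proposition \ref{uniquepi}.
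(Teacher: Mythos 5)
Your proposal is correct and follows essentially the same route as the paper's proof: Nikulin's theorem \ref{nik} applied to $M_G(-1)$ with target signature $(24,0)$, condition 3.i supplied by the $\langle-2\rangle$-summand $\spann_\Z\{\upsilon_0-\upsilon\}$, conditions 1 and 2 controlled through the discriminant group of the orthogonal complement $M_G^\prime$, and the identification $\wt N\cong N$ forced by $\Pi(-1)\subset M_G(-1)$ together with proposition \ref{piprim}. The only divergence is that your case-by-case verification of condition 4 (and the fallback you describe) is unnecessary: the odd part of $A_{M_G}\cong A_{M_G^\prime}$ is carried entirely by $\pi_\ast(H_2(T,\Z)^{G_T^\prime})$, so its length is at most $\rk H_2(T,\Z)^{G_T^\prime}=6-k<7-k$ uniformly, whereas the paper obtains the analogous uniform bound $\ell(A_{q_p})\le\rk\wh K_G=k\le 3<7-k$ via the inclusions $\wh M_G\subset M_G\subset M_G^\ast\subset\wh M_G^\ast$.
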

\begin{proof}
We apply Nikulin's theorem \ref{nik} to the 
lattice $\Lambda:=M_G(-1)$ of signature $(\ell_+,\ell_-)$ and to the pair of integers $(\gamma_+, \gamma_-)=(24,0)$.
Throughout the proof we use the notations introduced in definition
\ref{defKummer}, proposition \ref{Kummerform} and proposition \ref{latticechar}. 

Notice first that by proposition \ref{latticechar} our lattice $M_G$ contains the lattice
\begin{eqnarray*}
\wh M_G &:=& \wh K_G \oplus \Pi \oplus \mbox{span}_\Z\{\upsilon_0-\upsilon\}\\ 
&&\quad\quad\mbox{ with }
 \wh K_G:= \pi_\ast\left( \left( H_2(T,\Z)^{G_T^\prime} \right)^\perp\cap H_2(T,\Z)\right).
\end{eqnarray*}
In fact property 2. in
proposition \ref{latticechar} implies $\rk(M_G)=\rk(\wh M_G)$. 
Hence the lattice $M_G(-1)$ has signature $(l_+,l_-)=(17+k,0)$,
where $17=\rk(\Pi\oplus\mbox{span}_\Z\{\upsilon_0-\upsilon\})$
and $k:=\rk(\wh K_G)$.  In particular, we have $k\leq3$, since
$H_2(T,\R)^{G_T^\prime}=H_2(T,\Z)^{G_T^\prime}\otimes\R$ contains the positive definite 
$3$-dimensional subspace 
$\Sigma_T\subset H_2(T,\R)$ 
yielding complex structure and dual K\"ahler class of $T$,
such that $H_2(T,\Z)^{G_T^\prime}$ has at least rank $3$. Hence condition 1.~in theorem \ref{nik} holds. 

By condition 2.~in proposition \ref{latticechar}, $M_G$ is the orthogonal complement of the primitive
sublattice $M_G^\prime=\pi_\ast(H_2 (T,\Z)^{G_T^\prime})
 \oplus \mbox{span}_\Z \left\{  \upsilon_0+\upsilon \right\}$ in the unimodular lattice $H_\ast(X,\Z)$.
Hence in particular $A_q=(M_G)^\ast/M_G \cong (M_G^\prime)^\ast/M_G^\prime$ and 
$$
\ell( A_q) = \ell( (M_G^\prime)^\ast/M_G^\prime ) \leq \rk( M_G^\prime ) = 7-k = \gamma_++\gamma_--l_+-l_-,
$$
proving condition 2.~of theorem \ref{nik}.
 
Since $M_G(-1)$ contains the lattice $\mbox{span}_\Z\{\upsilon_0-\upsilon\}(-1)$
of type $A_1$ as direct summand, condition
3.i.~of Nikulin's theorem is also immediate.

It hence remains to be shown that
\begin{equation}\label{lengths}
\ell(A_{q_p})< 7-k = \gamma_++\gamma_--l_+-l_- \mbox{ for every prime } p\neq2.
\end{equation}
Using $\wh M_G\subset M_G \subset M_G^\ast \subset \wh M_G^\ast$, we find
$M_G^\ast / M_G \cong (M_G^\ast/\wh M_G) / (M_G/\wh M_G)$ and therefore
$$
A_q = M_G^\ast/M_G \subset \left (\wh M_G^\ast / \wh M_G\right) \slash 
\left(M_G/\wh M_G\right).
$$
Since $\wh M_G^\ast / \wh M_G = \wh K_G^\ast/\wh K_G \times (\Z_2)^7$
with $\ell ( \wh K_G^\ast/\wh K_G )\leq \rk( \wh K_G)\leq 3$, for
every prime $p\neq2$ this implies
$$
\ell ( A_{q_p} ) \leq \rk( \wh K_G)\leq 3 < 7-k = \gamma_++\gamma_--l_+-l_-.
$$
Equation (\ref{lengths}) is proved, so the assumptions of Nikulin's theorem \ref{nik} hold,
which implies that $M_G(-1)$ can be primitively embedded into some Niemeier lattice $\widetilde N$.
By the above, 
$\Pi(-1) \subset \widehat{M}_G (-1) \subset M_G (-1)$  and thus $\widetilde N\cong N$ by proposition \ref{piprim}.
\end{proof}

To fully appreciate the theorem, note that it immediately implies the following
\begin{corollary}\label{biglatticeinterpretation}
Let $X$ denote a Kummer surface with underlying torus $T$, equipped with the induced
dual K\"ahler class. Let $G$ denote the
symmetry group of $X$. Then in Kondo's proof
of Mukai's theorem \mbox{\rm\ref{mukai}}, one can use the Niemeier lattice $N$ of type $A_1^{24}$. 
Moreover, there is an isometry $i_G$ of a primitive sublattice $M_G$ of $H_\ast(X,\Z)$ to a
primitive sublattice of $N(-1)$ which is
equivariant with respect to the natural actions of $G$,
where $M_G$ contains the lattice $L_G=\left(H_\ast(X,\Z)^G\right)^\perp\cap H_\ast(X,\Z)$, the Kummer lattice
$\Pi$ of $X$, and the vector $\upsilon_0-\upsilon$. Here, $G$-equivariance of $i_G$ means that 
$i_G(g_\ast(v))=g_\ast(i_G(v))$ for every $g\in G$ and $v\in M_G$. 
\end{corollary}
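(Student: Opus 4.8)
The plan is to deduce the corollary from Theorem~\ref{biglattice} together with the gluing formalism of Proposition~\ref{glueperp}; the one substantive point beyond Theorem~\ref{biglattice} is equivariance, which I would obtain from the fact that $G$ acts trivially on the discriminant group of $M_G$. First I would record that $M_G$ is primitive in $H_\ast(X,\Z)$: by Proposition~\ref{latticechar}(2) it is the orthogonal complement of $M_G^\prime=\pi_\ast(H_2(T,\Z)^{G_T^\prime})\oplus\spann_\Z\{\upsilon_0+\upsilon\}$ in the unimodular lattice $H_\ast(X,\Z)$, and an orthogonal complement is always primitive. Theorem~\ref{biglattice} then furnishes a primitive embedding $i_G\colon M_G\hookrightarrow N(-1)$. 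Its proof already notes $\Pi(-1)\subset M_G(-1)$ and invokes Proposition~\ref{piprim} to force the receiving Niemeier lattice to be of type $A_1^{24}$; this is exactly the first assertion, that $N$ may be used in Kondo's argument. Concretely, $L_G\oplus\spann_\Z\{\upsilon_0-\upsilon\}$ is primitive in $M_G$ (the quotient being the free module $\Z e$), so its image under $i_G$ is a primitive copy of Kondo's lattice $L_G\oplus\langle-2\rangle$ inside $N(-1)$.

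Next I would check the stated containments. That $L_G$ and $\upsilon_0-\upsilon$ lie in $M_G$ is immediate from $M_G=L_G\oplus\spann_\Z\{e,\upsilon_0-\upsilon\}$. For the Kummer lattice I would use Proposition~\ref{latticechar}(1), giving $\Pi^G=\spann_\Z\{e\}$: the complementary sublattice $\Pi_G:=(\spann_\Z\{e\})^\perp\cap\Pi$ is orthogonal to $K$, to $e$, and to $H_0(X,\Z)\oplus H_4(X,\Z)$, hence to all of $H_\ast(X,\Z)^G$, so $\Pi_G\subset L_G$. Therefore $\Pi\otimes\Q=(\Pi_G\otimes\Q)\oplus\Q e\subset(L_G\otimes\Q)\oplus\Q e\subset M_G\otimes\Q$, and since $\Pi$ and $M_G$ are both primitive in $H_\ast(X,\Z)$, intersecting $\Pi\otimes\Q$ with $H_\ast(X,\Z)$ gives $\Pi\subset M_G$.

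The heart of the matter is $G$-equivariance, which I would establish exactly as in the proof sketch of Kondo's theorem. The group $G$ preserves $M_G$, fixing $e\in\Pi^G$ and the generators $\upsilon_0,\upsilon$ and stabilizing $L_G=(H_\ast(X,\Z)^G)^\perp$. The decisive observation is that $G$ acts trivially on the discriminant group $M_G^\ast/M_G$: by Propositions~\ref{findgeneric} and~\ref{sdproduct} the complement $M_G^\prime$ is fixed pointwise by $G$ --- the factor $G_t$ acts trivially on $K=\pi_\ast(H_2(T,\Z))$, the factor $G_T$ fixes $H_2(T,\Z)^{G_T^\prime}$, and $\upsilon_0+\upsilon$ is fixed --- so $G$ acts trivially on $(M_G^\prime)^\ast/M_G^\prime$, which is $G$-equivariantly isomorphic to $M_G^\ast/M_G$ under the gluing of Proposition~\ref{glueperp}. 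Writing $V:=(i_G(M_G))^\perp\cap N(-1)$ with gluing isomorphism $\gamma$, I would pair the induced $G$-action on $i_G(M_G)$ with the trivial action on $V$; since $G$ now acts trivially on both discriminant groups, this pair is compatible with $\gamma$ and, by the extension result \cite[Prop.~1.1]{ni80b} used by Kondo, extends to an isometry of $N(-1)$ for each $g\in G$. Uniqueness of the extension (the glue vectors generate $N(-1)$ over $i_G(M_G)\oplus V$) makes $g\mapsto\rho(g)$ a homomorphism $\rho\colon G\to\Aut(N)$, and by construction $i_G(g_\ast v)=\rho(g)\,i_G(v)$ for all $g\in G$, $v\in M_G$.

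I expect the triviality of the $G$-action on $M_G^\ast/M_G$ to be the one step requiring care, since it is precisely what licenses the extension to $N(-1)$; it depends on identifying $M_G^\perp$ with the pointwise-fixed lattice $M_G^\prime$ from Proposition~\ref{latticechar}(2) and on the compatibility of the $G$-action with the gluing of $\Pi$ to $K$ recorded in Proposition~\ref{kummeraffine}. Once this is in hand, the extension and the homomorphism property are formal consequences of Nikulin's gluing theory, so no further computation is needed.
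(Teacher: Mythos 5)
Your proposal is correct and follows essentially the same route as the paper: the paper presents the corollary as an immediate consequence of Theorem~\ref{biglattice}, with the equivariance supplied by exactly the mechanism you describe — triviality of the $G$-action on the discriminant group of $M_G$ (via the pointwise-fixed complement $M_G'$ from Proposition~\ref{latticechar}) followed by Nikulin's extension result \cite[Prop.~1.1]{ni80b}, which is the same argument the paper rehearses in its sketch of Kondo's proof and reuses in Proposition~\ref{translationonN}. Your derivation of $\Pi\subset M_G$ from primitivity of $M_G$ and the inclusion $\Pi\otimes\Q\subset M_G\otimes\Q$ is a correct filling-in of a containment the paper simply asserts in the proof of Theorem~\ref{biglattice}.
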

This means that our theorem \ref{biglattice} offers an improvement of Kondo's 
techniques for all Kummer surfaces with induced dual K\"ahler class, since
our lattice $M_G$  contains the sublattice $L_G$
which Kondo identifies in his Niemeier lattices and has rank
$\rk(M_G)=\rk(L_G)+2$.

Recall that by means of the Kummer construction, we are using a fixed marking for all
our Kummer surfaces, as was explained at the end of section \ref{exampleK3}. In particular,
we may {smoothly vary the generators $\vec\lambda_1,\ldots,\vec\lambda_4$ of}
the underlying lattice $\Lambda\subset\C^2$ of $T(\Lambda)$ and thus vary 
between distinct Kummer surfaces $\wt{T(\Lambda)/\Z_2}$. This amounts to a variation
along a Kummer path in the smooth connected cover $\wt\MMM_{hk}$ of 
{the moduli space of} hyperk\"ahler
structures, as was explained at the end of section \ref{complex}. Along a generic such
path, both the symmetry group $G$ and the lattice $M_G$ change, but 
as $G$ varies all the lattices $M_G$
share the same sublattice $\Pi\oplus\spann_\Z\{\upsilon_0-\upsilon\}$. Since according to 
proposition \ref{uniquepi} the Kummer lattice $\Pi$ allows a unique primitive embedding $\Pi\hookrightarrow N(-1)$
up to automorphisms of $N$, we can always find an embedding $i_G\colon M_G\hookrightarrow N(-1)$
as in theorem \ref{biglattice} such that $(i_G)_{\mid\Pi}$ is induced by $E_{\vec a}\mapsto f_{I^{-1}(\vec a)}$
for all $\vec a\in\F_2^4$ with $I$ as in \req{map}. For brevity, we say that we choose $i_G$ 
\textsc{with constant $i_G(\Pi)=\wt\Pi$ along Kummer paths in $\wt\MMM_{hk}$}. 
{Let us briefly discuss some properties of $i_G(\Pi)\hookrightarrow M_G\hookrightarrow N(-1)$.
To this end we remark that 
the only automorphisms $\gamma\in\Aut(N)$ which act as the identity 
on the sublattice $\wt\Pi\subset N$
in \req{specialchoice} are compositions of sign flips $f_n\mapsto -f_n$
with $n\in\OOO_9$. Indeed, without loss of generality assume that $\gamma$ acts as
permutation on $\{f_1,\ldots, f_{24}\}$. Then $\gamma_{|\wt\Pi}=\mbox{id}_{\wt\Pi}$ implies that
$\gamma$ is induced by an element of the maximal subgroup
$(\Z_2)^4\rtimes A_8$ of $M_{24}$ which {stabilizes the octad $\OOO_9$}. This group acts
as $\Aff(\F_2^4)=(\Z_2)^4\rtimes \GL_4(\F_2)\cong (\Z_2)^4\rtimes A_8$ on the hypercube
$\F_2^4$ underlying the complement octad of $\OOO_9$ (see \cite[Thm.~2.10]{co71}), thus
in particular its induced action on $\wt\Pi$ is faithful. Hence $\gamma_{|\wt\Pi}=\mbox{id}_{\wt\Pi}$ implies
$\gamma=\mbox{id}$.}
As a consequence, we cannot expect
$i_G\left( \Pi\oplus\spann_\Z\{\upsilon_0-\upsilon\}\right)\subset N(-1)$ to be constant
along Kummer paths in $\wt\MMM_{hk}$. In other words, $i_G(\upsilon_0-\upsilon)$ cannot 
be chosen freely among the roots $\pm f_n$ with $n\in\OOO_9$. We will come
back to this observation in the next section and in the {conclusions}. 
%
\section{The overarching finite symmetry group of Kummer surfaces}\label{overarching}
 %
By Mukai's theorem \ref{mukai}, certain comparatively small subgroups $G$ of 
the Mathieu group $M_{24}$ occur as holomorphic symplectic automorphism
groups of K3 surfaces $X$. The correlation with $M_{24}$
is made more explicit by Kondo's proof of this 
theorem, by which for the smallest primitive sublattice $L_G\subset H_\ast(X,\Z)$ with the 
property that $G$ acts trivially on $(L_G)^\perp$ there also exists a primitive embedding 
of $L_G$
into some Niemeier lattice $\wt N(-1)$, such that the embedding is equivariant with 
respect to a natural action of $G$ on 
$\wt N$. Restricting our attention to Kummer surfaces with their
induced dual K\"ahler classes, our theorem \ref{biglattice}
allows us to replace the lattice $L_G$ by a lattice $M_G$ of higher rank 
(namely, $\rk(M_G)=\rk(L_G)+2$), and to replace the Niemeier lattice $\wt N$, which is not further
specified by Kondo's construction, by the Niemeier lattice $N$ of type $A_1^{24}$.  
In other words, a part of the
K3-homology can be isometrically identified with a sublattice of the Niemeier lattice $N(-1)$,
where the group $M_{24}$, whose role we would like to understand better, acts naturally
(see  proposition \ref{M24onN}). This section is devoted to developing a technique by which we
extend this identification even further, namely to a linear bijection between the two lattices 
$H_\ast(X,\Z)$ and $N(-1)$. This bijection identifies two different primitive sublattices 
$M_{\m T_{192}}$ and $M_{\m T_{64}}$ of
rank $20$ in $H_\ast(X,\Z)$, corresponding to two distinct Kummer surfaces, 
isometrically with their images. Moreover, 
these embeddings are equivariant with respect to the
actions of the respective symmetry groups $\m T_{192}$ on $M_{\m T_{192}}$ and $\m T_{64}$ on
$M_{\m T_{64}}$, which allows us to combine these groups to a bigger one,
yielding the overarching finite symmetry 
group of Kummer surfaces (see definition \ref{overarch}).
This group is by orders of magnitude larger
than the largest holomorphic symplectic automorphism group of any K3 surface.
In fact, we find a Kummer path in $\wt\MMM_{hk}$ with constant $i_G(\Pi)=\wt\Pi$
along it, which connects the two above mentioned Kummer surfaces smoothly
and such that $\Theta$ restricts to a $G$-equivariant isometry
$i_G\colon M_G\hookrightarrow N(-1)$ for each 
Kummer surface along the path.
 %
 \subsection{The translational automorphism group $G_t\cong(\Z_2)^4$}\label{genericKummer}
As a first step of our construction, we consider the translational automorphism
group $G_t$ of definition \ref{defgeneric}, which all Kummer surfaces share. 
We proceed analogously to Kondo's proof of Mukai's theorem \ref{mukai}.
 
If $G=G_t$ is the symmetry group of a Kummer surface $X$
with underlying torus $T$, 
then with notations as in proposition \ref{Kummerform} and theorem \ref{biglattice},
in particular $K=\pi_\ast (H_2(T,\Z))$, we have 
$$
H_\ast(X,\Z)^G=H_0(X,\Z)\oplus K \oplus \spann_\Z\{ e\} \oplus H_4(X,\Z)
$$
and thus $M_G=\Pi\oplus \spann_\Z\{ \upsilon_0-\upsilon\}$, which can be primitively
embedded in $N(-1)$ by theorem \ref{biglattice}. In fact, by proposition \ref{fullk3}, 
we have $M_G=\m P$, where we know how to 
glue $H_\ast(X,\Z)$ from the lattices $\m P$ and $\m K=\m P^\perp\cap H_\ast(X,\Z)$, and 
\req{fullniemeier} shows $\m P\cong\wt{\m P}_{n_0}(-1)$ for a primitive sublattice $\wt{\m P}_{n_0}$ of
$N$, $n_0\in\m O_9$ arbitrary,
along with the relevant gluing prescription for $N$. By the proof of proposition \ref{piinn},
an isometry $\m P\longrightarrow\wt{\m P}_{n_0}(-1)$ can be induced by 
$E_{\vec a}\mapsto f_{I^{-1}(\vec a)}$
for every $\vec a\in\F_2^4$, with $I$ as in \req{map}, along with $\upsilon_0-\upsilon\mapsto f_{n_0}$.
Since $G_t$ acts on $\m P$ by automorphisms, we obtain an induced action
of $G_t\cong(\Z_2)^4$ on $\wt{\m P}_{n_0}$
by enforcing our isometry $\m P\longrightarrow\wt{\m P}_{n_0}(-1)$ to
be $G_t$-equivariant. By construction, $G_t$ acts trivially on 
$\m P^\ast/\m P\cong \wt{\m P}_{n_0}^\ast/\wt{\m P}_{n_0}$,
and thus the action can be extended to one on $N$ acting trivially on the orthogonal
complement of $\wt{\m P}_{n_0}$ (see \cite[Prop.~1.1]{ni80b}). 
The construction yields the following
\begin{prop}\label{translationonN}
Consider a Kummer surface $X$ with underlying torus $T$, equipped with
the induced dual K\"ahler class. Let $G$ denote the symmetry
group of $X$ and $G_t\subset G$ the translational automorphism group according to 
proposition \mbox{\rm\ref{findgeneric}}.

There exists a primitive embedding $i_G\colon M_G\hookrightarrow N(-1)$
of the lattice $M_G\subset H_\ast(X,\Z)$ 
defined in theorem \mbox{\rm\ref{biglattice}} into the Niemeier lattice $N(-1)$ of type $A_1^{24}$ 
with the following properties:
Consider the lattices $\m P$ and
$\wt{\m P}_{n_0}$ with the notations 
of propositions \mbox{\rm\ref{fullk3}} and \mbox{\rm\ref{fullniemeierprop}}, where 
$\m P\subset M_G$ by corollary \mbox{\rm\ref{biglatticeinterpretation}}.
Then 
$$
i_G\colon E_{\vec a}\longmapsto f_{I^{-1}(\vec a)}\quad\mbox{ for every }\quad \vec a\in\F_2^4, 
$$
where $I$ is the map \mbox{\rm\req{map}}, and 
$i_G(\m P)=\wt{\m P}_{n_0}(-1)$ for some $n_0\in\m O_9$.

Moreover, by means of the embedding $i_G\colon M_G\hookrightarrow N(-1)$
the action of the translational automorphism group
$G_t$ on $X$ induces an action of $(\Z_2)^4\subset M_{24}$ on $N$ which is generated
by the following four involutions: 
\be\label{genericc24}
\begin{array}{rcl}
\iota_1&:=& (1,11)(2,22)(4,20)(7,12)(8,17)(10,18)(13,21)(14,16),\\[5pt]
\iota_2&:=& (1,13)(2,12)(4,14)(7,22)(8,10)(11,21)(16,20)(17,18),\\[5pt]
\iota_3&:=& (1,14)(2,17)(4,13)(7,10)(8,22)(11,16)(12,18)(20,21),\\[5pt]
\iota_4&:=& (1,17)(2,14)(4,12)(7,20)(8,11)(10,21)(13,18)(16,22),
\end{array}
\ee
where as in proposition \mbox{\rm\ref{M24onN}} the elements of $M_{24}$ are viewed as permutations
in $S_{24}$ whose action on $N$ is induced by a permutation of the $f_1,\ldots, f_{24}$.
\end{prop}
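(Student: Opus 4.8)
The plan is to specialise the constructions of propositions \ref{fullk3} and \ref{fullniemeierprop} to the case $G=G_t$, and then to read the induced permutations off the explicit table \req{map}. First I would record that for $G=G_t$ the discussion preceding the proposition gives $H_\ast(X,\Z)^G=H_0(X,\Z)\oplus K\oplus\spann_\Z\{e\}\oplus H_4(X,\Z)$, whence $M_G=\Pi\oplus\spann_\Z\{\upsilon_0-\upsilon\}=\m P$ with $\m P$ as in proposition \ref{fullk3}. For any $n_0\in\m O_9$, proposition \ref{fullniemeierprop} exhibits the \emph{primitive} sublattice $\wt{\m P}_{n_0}=\wt\Pi\oplus\spann_\Z\{f_{n_0}\}$ of $N$ together with an isometry $\wt{\m P}_{n_0}\cong\m P(-1)$; by proposition \ref{piinn} its $\wt\Pi$-part is induced by $f_n\mapsto E_{I(n)}$ for $n\not\in\m O_9$, and its remaining direction by $f_{n_0}\mapsto\upsilon_0-\upsilon$. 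I would therefore define $i_G\colon M_G\to N(-1)$ by $E_{\vec a}\mapsto f_{I^{-1}(\vec a)}$ for every $\vec a\in\F_2^4$ and $\upsilon_0-\upsilon\mapsto f_{n_0}$; this is precisely the inverse of the above isometry and so identifies $M_G=\m P$ isometrically with the primitive sublattice $\wt{\m P}_{n_0}(-1)\subset N(-1)$, establishing the first assertion.

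For the induced $(\Z_2)^4$-action, I would use that $G_t$ is generated by the four translations $t^{\vec b}$ with $\vec b$ ranging over the standard unit vectors of $\F_2^4$, each acting on $\Pi$ by $E_{\vec a}\mapsto E_{\vec a+\vec b}$ (proposition \ref{findgeneric}), while fixing $\upsilon_0$ and $\upsilon$. Transporting this action through $i_G$, the automorphism $t^{\vec b}$ becomes the map $f_{I^{-1}(\vec a)}\mapsto f_{I^{-1}(\vec a+\vec b)}$; that is, the permutation $\sigma_{\vec b}\colon n\mapsto I^{-1}(I(n)+\vec b)$ of the labels $\m I\setminus\m O_9$, extended by the identity on $\m O_9$ since $\upsilon_0-\upsilon\mapsto f_{n_0}$ is fixed. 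Reading \req{map} then reproduces $\iota_1,\ldots,\iota_4$: for $\vec b=(1,0,0,0)$ one has $1\mapsto(0,0,0,0)\mapsto(1,0,0,0)\mapsto 11$, $2\mapsto(0,0,1,1)\mapsto(1,0,1,1)\mapsto 22$, and so on, giving $\iota_1$, with the other three cases entirely analogous. Each $\sigma_{\vec b}$ is an involution because $2\vec b=0$ in $\F_2^4$, and its transpositions are disjoint by injectivity of $I$.

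The point that requires care is that the extension of each $\sigma_{\vec b}$ from $\wt\Pi$ to all of $N$ lies in $M_{24}$ rather than in the larger group $\Aut(N)=(\Z_2)^{24}\rtimes M_{24}$, i.e.\ that no sign flips $f_n\mapsto -f_n$ are introduced. Here I would invoke that $G_t$ acts trivially on the discriminant group $\m P^\ast/\m P\cong\wt{\m P}_{n_0}^\ast/\wt{\m P}_{n_0}$, so by \cite[Prop.~1.1]{ni80b} the action extends to $N$ acting as the identity on the orthogonal complement $\wt{\m K}_{n_0}$; in particular every root $f_n$ with $n\in\m O_9$ is fixed, with no sign. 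On the roots $f_n$, $n\not\in\m O_9$, the extension is by construction the sign-free permutation $f_n\mapsto f_{\sigma_{\vec b}(n)}$. Since $N\subset\hf R$, any automorphism of $N$ is determined by its action on the root set $\Delta=\{\pm f_1,\ldots,\pm f_{24}\}$ (cf.\ the proof of proposition \ref{M24onN}); as this action is a pure permutation, proposition \ref{M24onN} places it in $M_{24}$. Hence $\iota_1,\ldots,\iota_4$ generate a copy of $(\Z_2)^4$ inside $M_{24}$, as claimed. The only residual work is the finite bookkeeping of verifying all sixteen entries of \req{map}, which is direct.
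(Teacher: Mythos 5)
Your second half---deriving $\iota_1,\ldots,\iota_4$ by imposing $I(\iota_k(n))=I(n)+\vec b_k$ on $\m I\setminus\m O_9$, extending by the identity on $\m O_9$, and placing the result in $M_{24}$ via the trivial action on the discriminant group together with proposition \ref{M24onN}---is sound and essentially what the paper does (the paper merely adds a MOG cross-check where you give a structural argument for membership in $M_{24}$).

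The gap is in the first half. You fix $G=G_t$ at the outset, so that $M_G=\m P$ and the embedding reduces to the explicit rank-$17$ isometry $\m P\cong\wt{\m P}_{n_0}(-1)$ of propositions \ref{piinn} and \ref{fullniemeierprop}. But the proposition is stated for an arbitrary Kummer surface whose symmetry group $G$ may strictly contain $G_t$; the lattice $M_G$ of theorem \ref{biglattice} is then strictly larger than $\m P$ (it has rank $20$ for $X_{D_4}$ and $X_0$, and the proposition is later invoked precisely for $G=\m T_{192}$ and $G=\m T_{64}$). Your formula defines $i_G$ only on the sublattice $\m P$ and gives no reason why it extends to a primitive isometric embedding of all of $M_G$. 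The paper argues in the opposite direction: it starts from the abstract primitive embedding $i\colon M_G\hookrightarrow N(-1)$ supplied by theorem \ref{biglattice}, observes that $i(\Pi)$ is a primitively embedded copy of the Kummer lattice and that $i(\upsilon_0-\upsilon)$ must be a root orthogonal to it, and then uses the uniqueness statement of proposition \ref{uniquepi} to produce an automorphism $\alpha$ of $N$ with $\alpha(i(E_{I(n)}))=f_n$ for all $n\in\m I\setminus\m O_9$, finally composing with the sign flip $f_{n_0}\mapsto-f_{n_0}$ if necessary so that $\upsilon_0-\upsilon\mapsto f_{n_0}$. To repair your argument you need exactly these two ingredients---the existence of the full embedding from theorem \ref{biglattice} and the normalisation via proposition \ref{uniquepi}; as written, your proof establishes the claim only for generic Kummer surfaces, where $G=G_t$.
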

\begin{proof}
With notations as above, theorem \ref{biglattice}
implies the existence of a primitive embedding $i\colon M_G\hookrightarrow N(-1)$. In particular,
we find a primitive sublattice $\wh{\m P}(-1)\subset i(M_G)$
such that $\wh\PPP$ is isometric to any $\wt{\m P}_{n_0}$ with $n_0\in\m O_9$. 
By proposition \ref{fullniemeierprop}
we have $\wt{\m P}_{n_0}=\wt\Pi\oplus\spann_\Z\left\{f_{n_0}\right\}$ with 
$\wt\Pi\cong\Pi(-1)$ for the Kummer lattice $\Pi$ of \req{Kummer}.
Since the $\pm f_n$ are the 
only vectors in $N$ on which the quadratic form takes value $2$ by \req{nieroot}, 
this implies $\wh{\m P}=\wh\Pi\oplus\spann_\Z\left\{f_{\wh n_0}\right\}$ with $\wh\Pi\cong\Pi(-1)$
and for some $\wh n_0\in\m I$.  
Our uniqueness result in proposition \ref{uniquepi} implies that there is a lattice
automorphism $\alpha$ of $N$ with $\alpha(\wh{\Pi})=\wt{\Pi}$ and
$\alpha( i(E_{I(n)}) )= f_{n}$ for every
$n\in\m I\setminus \m O_9$. In other words, $\alpha\circ i$ gives
a primitive embedding of $M_G$ in $N(-1)$ which embeds the lattice $\Pi$
as claimed. Moreover, $\alpha\circ i(\upsilon_0-\upsilon)=\pm\alpha(f_{\wh n_0})=\pm f_{n_0}$
for some $n_0\in\m O_9$ due to \req{nieroot} and $\upsilon_0-\upsilon \in\Pi^\perp$.
Hence $\alpha\circ i$ or its composition with the automorphism of $N$ induced
by $f_{n_0}\mapsto -f_{n_0}$ (see proposition \ref{M24onN}) yields an embedding
$i_G\colon M_G\hookrightarrow N(-1)$ with the properties claimed.

We now apply the methods described before the statement of our proposition to induce
an action of $G_t\cong(\Z_2)^4$ on $N$ with respect to which the embedding $i_G$
is equivariant. In other words,
with $\vec b_1,\ldots, \vec b_4$ the standard basis of $\F_2^4$,
for each $k\in\{1,\ldots,4\}$ we impose
$I(\iota_k(n)) = I(n)+\vec b_k$ for all $n\in\m I\setminus\m O_9$,
where $I$ is the map \req{map}. This uniquely determines the permutations $\iota_1,\ldots,\iota_4$
of \req{genericc24}, which as one confirms as a 
cross-check\footnote{See appendix \ref{MOG} for a definition of the 
extended binary Golay code $\m G_{24}$ and for the description of a technique that may 
be used to prove that a given permutation of $\m I$ preserves $\m G_{24}$,
for instance by verifying  that its action on the Golay code basis 
\req{Golaybasis}
yields Golay codewords.} are elements of $M_{24}$.
\end{proof}

The precise form of the permutations in \req{genericc24} depends on our specific
choice of primitive embedding for the Kummer lattice $\Pi$ in $N(-1)$, described
in proposition \ref{piinn} and in its proof. Indeed, the image $\wt\Pi(-1)$ of $\Pi$ under
the primitive embedding solely depends on the choice of the special octad $\m O_9$ 
used in \req{specialchoice}. The isometry from $\Pi$ to $\wt\Pi(-1)$ induced by
this embedding then solely depends on the choice of the octads
$\m H_1,\ldots,\m H_5$ in \req{hyperplanes}. As explained in the discussion
of proposition \ref{piinn}, the subspace of the Golay code generated by 
$\m H_1,\ldots,\m H_4$ is dual to the hypercube $\F_2^4$ built on
$\m I\setminus\m O_9$ by means of our map $I$. Then the choice of 
$\m H_1,\ldots,\m H_5$ amounts to the choice of an affine basis of the
underlying affine $4$-dimensional space.

Note that in general, the symmetry group $G$ of a Kummer
surface contains the translational automorphism group $G_t$ as a proper subgroup.
The primitive embedding of $M_G$ in $N(-1)$ found in proposition \ref{translationonN},
where $\wt{\m P}_{n_0}(-1)$ is a primitive sublattice of the image of $M_G$, can then
be used to induce a faithful $G$-action on the Niemeier lattice $N$ by enforcing
the embedding $i_G\colon M_G\hookrightarrow N(-1)$ to be $G$-equivariant, 
by the very same
ideas that lead to proposition \ref{translationonN}. 

We claim that the embedding
$i_G\colon M_G\hookrightarrow N(-1)$ can be extended to a linear bijection {$\theta$}
between
the K3-homology $H_\ast(X,\Z)$ and $N(-1)$. Indeed, $H_\ast(X,\Z)$ 
can be glued from $\m P$ and its orthogonal complement $\m K$, while
$N$ can be glued from $\wt{\m P}_{n_0}$ and its orthogonal complement $\wt{\m K}_{n_0}$,
by means of propositions \ref{fullk3} and \ref{fullniemeierprop}, respectively. 
Since $\m P$ is already isometrically
identified with $\wt{\m P}_{n_0}(-1)$, the extension to a bijection 
${\theta}\colon H_\ast(X,\Z)\longrightarrow N(-1)$ 
amounts to the construction of
a bijection $\m K\rightarrow \wt{\m K}_{n_0}(-1)$ which induces an isometry from
$M_G\cap\m K$ to its image, and which
is compatible with gluing. 
The latter in turn amounts to compatibility with the isometries (up to signature inversion)
$g\colon {\m K}^\ast/{\m K}\longrightarrow {\m P}^\ast/{\m P}$ and 
$\wt g_{n_0}\colon  \wt{\m K}_{n_0}^\ast/\wt{\m K}_{n_0}\longrightarrow\wt{\m P}_{n_0}^\ast/\wt{\m P}_{n_0}$ 
which are so fundamental to the 
gluing procedure, and which in particular, by means of $\m P\cong \wt{\m P}_{n_0}(-1)$,
yield an isometry ${\m K}^\ast/ {\m K}\longrightarrow \wt{\m K}_{n_0}^\ast/\wt{\m K}_{n_0}$. 
In other words, we need to find a lift of 
${\m K}^\ast/ {\m K}\longrightarrow \wt{\m K}_{n_0}^\ast/\wt{\m K}_{n_0}$
to a bijection ${\m K}^\ast\rightarrow \wt{\m K}_{n_0}^\ast(-1)$ which induces
a $G$-equivariant  isometry from
$M_G\cap\m K$ to its image. This is always possible,
and our compatibility conditions severely restrict the number of choices. Indeed,
by construction $M_G\cap \m K$ can be isometrically identified with $(\wt M_G\cap \wt{\m K}_{n_0})(-1)$, where $\wt M_G$
denotes the image of $M_G$ under its primitive embedding into $N$. This  determines the
lift on a sublattice whose rank $7-r_G$ depends on the group $G$. 
Hence we only need to extend the isometry
$M_G\cap \m K\longrightarrow (\wt M_G\cap \wt{\m K}_{n_0})(-1)$ to a lift
of ${\m K}^\ast/ {\m K}\longrightarrow \wt{\m K}_{n_0}^\ast/\wt{\m K}_{n_0}$.
Since
${\m K}^\ast/ {\m K}\cong(\Z_2)^7$ according to proposition \ref{fullk3}, and thus ${\m K}^\ast/ {\m K}$
can be generated by no less than $7$ elements, while the lattice $\m K$ has rank $7$
as well, we are certain to find $r_G=\rk\left( (M_G)^\perp\cap \m K\right)$ 
compatibility conditions for the lift to $(M_G)^\perp\cap \m K$.
{Since the translational group $G_t\cong(\Z_2)^4$ is the symmetry group of a generic 
Kummer surface, any bijection $\theta$ as above is compatible with the generic symmetry
group of Kummer surfaces.}

In our construction we
work with the tetrahedral Kummer surface where $G=\m T_{192}$, a group of order
$192$ (see section \ref{tetrahedral}), and with the Kummer surface associated to the 
square torus where $G=\m T_{64}$,
a group of order $64$ (see section \ref{square}). In both cases, the lattice 
$(M_G)^\perp\cap \m K$, on which our
lift is not uniquely determined up to lattice automorphisms by imposing 
that it induces $i_G\colon M_G\hookrightarrow N(-1)$, has 
the minimal possible rank $r_G=4$. 

Let us discuss the compatibility conditions for our 
bijection ${\theta}\colon H_\ast(X,\Z)\longrightarrow N(-1)$
which already arise from 
the construction so far, imposing equivariance only with respect to 
the translational automorphism group $G_t$. In other words, let us discuss possible lifts of 
${\m K}^\ast/ {\m K}\longrightarrow \wt{\m K}_{n_0}^\ast/\wt{\m K}_{n_0}$ to $\m K^\ast$.
Recall from proposition \ref{fullk3} 
that ${\m K}^\ast/ {\m K}$ is generated by
the $\qu{{1\over2}\pi_\ast\lambda_{ij}}$ along with $\qu{{1\over2}(\upsilon_0+\upsilon)}$
which under the composition of $g\colon {\m K}^\ast/{\m K}\longrightarrow {\m P}^\ast/{\m P}$ 
with the map induced by our isometry $\m P\cong\wt{\m P}_{n_0}(-1)$ are mapped to the
$\qu{\hf\sum_{n\in \wt P_{ij}} f_n}$ and $\qu{\hf f_{n_0}}$, respectively. 
Here, $\wt P_{ij}$ is the quadruplet of labels in $\m I$ which under the map $I$ 
corresponds to the plane $P_{ij}\subset\F_2^4$ in \req{planes}.
Taking preimages under
the map $\wt g_{n_0}\colon  \wt{\m K}_{n_0}^\ast/\wt{\m K}_{n_0}\longrightarrow\wt{\m P}_{n_0}^\ast/\wt{\m P}_{n_0}$
of proposition \ref{fullniemeierprop}, we find that 
${\m K}^\ast/ {\m K}\longrightarrow \wt{\m K}_{n_0}^\ast/\wt{\m K}_{n_0}$
is given by
$$
\qu{{\textstyle\hf}\pi_\ast\lambda_{ij}} \mapsto \qu{q_{ij}}
\mbox{ for }
ij=12,\, 34,\, 13,\, 24,\, 14,\, 23, \quad\quad
\qu{{\textstyle\hf}(\upsilon_0+\upsilon)} \mapsto \qu{{\textstyle\hf} \sum_{n\in\m O_9\setminus\{n_0\}} f_n},
$$
where the $\qu{q_{ij}}$  can be represented, for example, by the $q_{ij}$ of \req{kijchoice}.
We impose the additional constraint that the $\hf\pi_\ast\lambda_{ij}$ and $\hf(\upsilon+\upsilon_0)$ 
should map to 
representatives of elements in $\wt{\m K}_{n_0}^\ast/\wt{\m K}_{n_0}$ of minimal length under
our lift. {This is an aesthetic choice which is not 
required mathematically. If this constraint is not imposed, then in equation \req{signchoices} 
we have to add terms
of the form $2\Delta_{ij}$ with $\Delta_{ij}\in\wt\KKK_{n_0}$ on the right hand side. 
We will see that in the examples studied in this work, it is possible to impose
$\Delta_{ij}=0$ for all $i,\, j$, and then} 
lifting to a bijection $\m K^\ast\rightarrow\wt{\m K}_{n_0}^\ast$ 
amounts to implementing 
\be\label{signchoices}
K\ni \pi_\ast\lambda_{ij}\mapsto \sum_{n\in Q_{ij}} (\pm f_n)\in\wt K,
\quad\quad
\upsilon_0+\upsilon\mapsto \sum_{n\in\m O_9\setminus\{n_0\}} (\pm f_n),
\ee
where $\qu{\hf\sum_{n\in Q_{ij}} (\pm f_n)}= \qu{q_{ij}}\in\wt K^\ast/\wt K$ and thus 
$\hf\sum_{n\in Q_{ij}}(\pm f_n)$ can be glued to $\hf\sum_{n\in \wt P_{ij}} f_n$ in $N$. 
In other words,  $Q_{ij}\subset \m O_9$ must be a quadruplet of labels such that 
$Q_{ij}\cup \wt P_{ij}$ gives an octad in the Golay code $\m G_{24}$. 
In fact, each such quadruplet $Q_{ij}$ must form an octad of the Golay code with every quadruplet of labels 
which under $I$ corresponds to a hypercube plane parallel to $P_{ij}$. This turns out to leave a choice of two complementary 
quadruplets in $\m O_9$ for each label $ij$:
\be\label{Qijchoice}
\begin{array}{l@{\mbox{ or $\;\;$}}l}
Q_{12}=\{3, 6, 15, 19\}& \{5,9,23,24\},\\[5pt]
Q_{13}=\{6, 15, 23, 24\}& \{3,5,9,19\},\\[5pt]
Q_{14}=\{3, 9, 15, 24\}& \{5,6,19,23\},\\[5pt]
Q_{23}=\{3, 9, 15, 23\}& \{5,6,19,24\},\\[5pt]
Q_{24}=\{15, 19, 23, 24\}& \{3,5,6,9\},\\[5pt]
Q_{34}=\{6, 9, 15, 19\}& \{3,5,23,24\},
\end{array}
\ee
where
\be\label{discriminantidentification}
K^\ast/ K \stackrel{\cong}{\longrightarrow} \wt K^\ast/\wt K,\quad\quad
\textstyle
 \qu{{1\over2}\pi_\ast\lambda_{ij}}  \mapsto   \qu{\hf\sum\limits_{n\in Q_{ij}} f_n}.
\ee
Note that in \req{kijchoice} we have chosen the first quadruplet listed in 
\req{Qijchoice} for the $Q_{ij}$, throughout. 
In the following subsections, we 
show that with this choice, surprisingly, {the resulting} bijection $\Theta$ induces 
isometric embeddings of the lattices $M_G$ for two \textsl{distinct} Kummer 
surfaces, the tetrahedral Kummer 
surface  $X_{D_4}$ with $G=\m T_{192}$ and the Kummer surface $X_0$ of the 
square torus with $G=\m T_{64}$.
Our choice of the $Q_{ij}${, up to a few choices of signs,} is in fact unique with the property
that for both groups $G$, the bijection
$\Theta$ induces a $G$-equivariant isometric embedding of  $M_G$ in $N(-1)$.
%
\subsection{The tetrahedral Kummer surface}\label{tetrahedral}
%
By the above, we are looking
for a linear bijection
$\Theta\colon H_\ast(X,\Z)\longrightarrow N(-1)$ which
induces an isometry between as large sublattices of $H_\ast(X,\Z)$ and of $N(-1)$
as possible. By theorem \ref{biglattice}, for a Kummer surface $X$ with its
induced dual K\"ahler class and with symmetry
group $G$, we already know that 
a primitive sublattice $M_G$ of $H_\ast(X,\Z)$, with $\rk(M_G)=\rk(L_G)+2$ for the
lattice $L_G$ found by Kondo, can be primitively embedded in 
$N(-1)$. Note that $\rk(M_{G_1})\geq\rk(M_{G_2})$ if $G_1\supseteq G_2$.
Our construction
therefore sets out from the study of the Kummer surface
whose holomorphic symplectic automorphism group has maximal order among
all Kummer surfaces with their induced dual K\"ahler classes. 
According to proposition \ref{sdproduct}, this amounts to a Kummer surface whose
underlying torus $T=T(\Lambda)$ has the largest group $G_T^\prime$ of 
non-translational holomorphic symplectic automorphisms. 
By Fujiki's results \cite[Lemma 3.1\&3.2]{fu88}, this is the torus $T(\Lambda_{D_4})$
which we call the \textsc{$D_4$-torus}, whose associated Kummer surface
$X_{D_4}$ we call the \textsc{tetrahedral Kummer surface} for reasons to
be explained below. Its group of 
holomorphic symplectic automorphisms\footnote{Note that the full symplectic automorphism group of the tetrahedral Kummer surface
(disregarding the dual K\"ahler class) is infinite, 
see e.g.\ \cite{shin77}.} is the finite group ${\m T}_{192}:=(\Z_2)^4\rtimes A_4$, with 
$A_4$ the group of even permutations on $4$ elements.

This subsection is devoted to the investigation of the tetrahedral Kummer surface
$X_{D_4}$. In particular, we describe its symmetry group
${\m T}_{192}$ as a subgroup of $F_{384}$, one of the $11$ subgroups of 
$M_{24}$ which have the property that every finite group of symplectic automorphisms 
of a K3 surface is isomorphic to a subgroup of one of them by Mukai's
theorem \ref{mukai}. Moreover, we construct a linear bijection 
$\Theta\colon H_\ast(X,\Z)\longrightarrow N(-1)$ 
which induces an isometry of the lattice $M_{{\m T}_{192}}$ of theorem \ref{biglattice}
with its image which is equivariant with respect to $\m T_{192}$.
\subsubsection{The $D_4$-torus and the tetrahedral Kummer surface}\label{d4}
As already mentioned above, we have the
\begin{definition}\label{tetrahedraldef}
Consider the lattice $\Lambda_{D_4}\subset\C^2$ which is 
generated by the four vectors
\be\label{d4gen}
\vec\lambda_1=(1, 0), \quad\vec\lambda_2=(i, 0),\quad 
\vec\lambda_3=(0,1),\quad\vec\lambda_4=\textstyle\hf( i+1,i+1)\quad\in\C^2.
\ee
We call $\Lambda_{D_4}$ the \textsc{$D_4$-lattice}, and 
$T=T(\Lambda_{D_4})$ is the \textsc{$D_4$-torus}. Moreover,
we call the Kummer
surface $X_{D_4}$ with underlying torus $T=T(\Lambda_{D_4})$,
equipped with the induced dual K\"ahler class, which in fact yields a polarization,
the \textsc{tetrahedral Kummer surface}.
\end{definition}
Note that the $D_4$-lattice is isomorphic to the root lattice of the simple Lie algebra 
$\mathfrak{d}_4$, thus our terminology. For the tetrahedral
Kummer surface $X_{D_4}$  our terminology is motivated by its symmetry group. 
Namely, according to \cite[Table 9]{fu88} we have 
\begin{prop}\label{tetrahedralsym}
The group $\m T$ of holomorphic
symplectic automorphisms of the $D_4$-torus has the maximal order $24$ among 
all translation-free groups of holomorphic symplectic automorphisms of 
complex K\"ahler tori. The group $\m T$ is the 
\textsc{binary tetrahedral group}, and its action on the universal cover
$\C^2$ of $T(\Lambda_{D_4})$ with standard
complex coordinates $(z_1,z_2)$ is generated by
\be \label{t4gen}
\begin{array}{rcl}
\gamma_1\colon (z_1, z_2)&\mapsto&(iz_1, -iz_2),\\[5pt]
\gamma_2\colon (z_1, z_2)&\mapsto&(-z_2, z_1),\\[5pt]
\gamma_3\colon (z_1, z_2)&\mapsto&\frac{i+1}{2}( i(z_1-z_2), -(z_1+z_2)).
\end{array}
\ee
\end{prop}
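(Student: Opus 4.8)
The plan is to note first that most of the abstract content is supplied by Fujiki. Indeed, the three assertions that $\m T$ has order $24$, that this is the maximal order among all translation-free groups of holomorphic symplectic automorphisms of complex K\"ahler tori, and that $\m T$ is the binary tetrahedral group, are precisely the content of Fujiki's classification \cite[Lemma 3.1\,\&\,3.2 and Table 9]{fu88}, which was already invoked in the proof of proposition \ref{sdproduct}. Thus the real task is to verify that the three explicit maps $\gamma_1,\gamma_2,\gamma_3$ of \req{t4gen} are holomorphic symplectic automorphisms of $T(\Lambda_{D_4})$ fixing $0$, and that together they generate the full order-$24$ group $\m T$.

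For the first part I would write each $\gamma_k$ as a $2\times2$ complex matrix; rewriting $\gamma_3$ in this form gives $\left(\begin{smallmatrix}\frac{-1+i}{2}&\frac{1-i}{2}\\[2pt]-\frac{1+i}{2}&-\frac{1+i}{2}\end{smallmatrix}\right)$. Since each $\gamma_k$ is $\C$-linear it is automatically holomorphic, and a short direct computation shows $\det\gamma_k=1$ and that $\gamma_k$ is unitary, so that $\gamma_k\in\SU(2)$; the determinant condition guarantees invariance of the holomorphic two-form $dz_1\wedge dz_2$, while unitarity guarantees invariance of the standard Hermitian and hence the induced K\"ahler form of $T(\Lambda_{D_4})$. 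Next I would check lattice invariance by computing $\gamma_k(\vec\lambda_m)$ for each $k$ and $m=1,\dots,4$ and expressing the result as an integral combination of the generators \req{d4gen}; for instance $\gamma_1(\vec\lambda_3)=(0,-i)=\vec\lambda_1+\vec\lambda_2+\vec\lambda_3-2\vec\lambda_4$ and, with $z_1-z_2=0$ on $\vec\lambda_4$, also $\gamma_3(\vec\lambda_4)=(0,-i)$ lies in $\Lambda_{D_4}$. Each such check is a routine linear solve over $\Z$, and together they establish $\gamma_1,\gamma_2,\gamma_3\in\m T$.

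For the generation statement I would pass to the standard identification of $\SU(2)$ with the unit quaternions, under which a short computation identifies $\gamma_1\leftrightarrow\mathbf i$, $\gamma_2\leftrightarrow-\mathbf j$, and $\gamma_3\leftrightarrow\tfrac12(-1+\mathbf i+\mathbf j-\mathbf k)$. Then $\langle\gamma_1,\gamma_2\rangle=\langle\mathbf i,\mathbf j\rangle=Q_8$, the quaternion group of order $8$, which sits inside the binary tetrahedral group $\m T\cong 2T$ as a normal subgroup of index $3$. Since $\gamma_3$ is one of the sixteen half-integer unit quaternions $\tfrac12(\pm1\pm\mathbf i\pm\mathbf j\pm\mathbf k)$, it belongs to $\m T$ but not to $Q_8$; as $\m T/Q_8\cong\Z_3$, adjoining $\gamma_3$ to $Q_8$ already produces all of $\m T$. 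Hence $\langle\gamma_1,\gamma_2,\gamma_3\rangle$ has order $24$, and being contained in the order-$24$ group $\m T$ it equals $\m T$.

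The conceptual content here is light, since Fujiki hands us the abstract group; the only place where errors can realistically enter is the explicit lattice-invariance verification for $\gamma_3$, whose half-integer matrix entries make the bookkeeping of the integral coefficients delicate. Once $\gamma_3$ is correctly matched with the quaternion $\tfrac12(-1+\mathbf i+\mathbf j-\mathbf k)$, the generation step is immediate from the index-$3$ normality of $Q_8$ in $2T$, so I expect the $\gamma_3$ computation to be the main (though still routine) obstacle.
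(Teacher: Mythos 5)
Your proposal is correct, and its skeleton matches the paper's: the paper also leans on Fujiki's classification \cite[Lemma 3.1\,\&\,3.2, Table 9]{fu88} for the abstract statements and then verifies that the explicit maps \req{t4gen} preserve $dz_1\wedge dz_2$ and the standard K\"ahler form. Where you differ is in completeness rather than in strategy. The paper contents itself with a cross-check -- it records that $\gamma_1,\gamma_2$ have order $4$ and $\gamma_3$ has order $3$, defers the lattice-invariance computation to the later formulas \req{t24onli}, and never actually argues that $\langle\gamma_1,\gamma_2,\gamma_3\rangle$ exhausts the order-$24$ group (knowing the orders of three generators does not by itself give this). Your quaternion identification $\gamma_1\leftrightarrow\mathbf i$, $\gamma_2\leftrightarrow-\mathbf j$, $\gamma_3\leftrightarrow\tfrac12(-1+\mathbf i+\mathbf j-\mathbf k)$ closes that gap cleanly: $\langle\gamma_1,\gamma_2\rangle=Q_8$ is the index-$3$ normal subgroup of $2T$, and adjoining any half-integer unit quaternion forces the whole group. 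Your sample lattice checks ($\gamma_1(\vec\lambda_3)=(0,-i)=\vec\lambda_1+\vec\lambda_2+\vec\lambda_3-2\vec\lambda_4$ and $\gamma_3(\vec\lambda_4)=(0,-i)$) agree with the paper's \req{t24onli}. In short: same route, but you supply the generation argument that the paper leaves implicit, which is the one step where a reader could legitimately ask for more.
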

Note that the binary tetrahedral group $\m T$ is a $\Z_2$-extension of the 
group $A_4$ of orientation preserving symmetries of a regular tetrahedron in 
$\R^3$. Indeed, the group $A_4$  of even permutations
on $4$ elements
acts as a subgroup of $\SO(3)$, and $\m T$ is its 
lift to the universal cover $\SU(2)$ of $\SO(3)$.
Clearly, $\gamma_1$ and $\gamma_2$ in \req{t4gen}
have order $4$, while $\gamma_3$ has order $3$,
and as a cross-check one confirms 
that the standard holomorphic $(2,0)$-form 
$\widehat{\Omega}=dz_1 \wedge dz_2$  
and the standard K\"ahler form 
$\wh\omega_T=\frac{1}{2i}(dz_1\wedge d\bar{z}_1+dz_2\wedge d\bar{z}_2)$ 
of $T(\Lambda_{D_4})$ as in \req{cohomologyinv}, \req{standardKaehler} 
are invariant under $\m T$. 
Hence indeed, $\m T$ acts as a holomorphic symplectic automorphism group on $T(\Lambda_{D_4})$. 
Since $\gamma_1^2=\gamma_2^2$, for example, yields the $\Z_2$-orbifold action 
$(z_1,z_2)\mapsto (-z_1,-z_2)$ used in the Kummer construction of definition \ref{defKummer}, 
the action of $\m T$ on $T(\Lambda_{D_4})$ induces a symplectic action of 
the tetrahedral symmetry group
$A_4=\m T/\{\pm1\}$ on the corresponding Kummer surface 
$X_{D_4}=\wt{T(\Lambda_{D_4})/\Z_2}$, thus our terminology. 
Moreover, $\wh\omega_T$ induces the polarization $\omega$ of this K3 surface. 
We have the following
\begin{prop}\label{tetrahedralcpxstr}
Consider the tetrahedral Kummer surface $X_{D_4}$ of definition \mbox{\rm\ref{tetrahedraldef}}
with its induced polarization, and with notations as in definition \mbox{\rm\ref{defKummer}}.

The symmetry group of $X_{D_4}$
is the group $\m T_{192}:=(\Z_2)^4\rtimes A_4$, where $A_4$ denotes the 
group of even permutations on $4$ elements. 

By means of the Torelli Theorem \mbox{\rm\ref{torellithm}} and with respect to
our usual marking described at the end of section \mbox{\rm\ref{exampleK3}}, 
the complex structure of $X_{D_4}$
is specified by the $2$-dimensional oriented subspace $\Omega\subset H_2(X_{D_4},\R)$
with orthonormal basis 
$$
\begin{array}{rcl}
\Omega_1&=& -\pi_\ast\lambda_{12}+\pi_\ast\lambda_{13}+\pi_\ast\lambda_{23}-2\pi_\ast\lambda_{24},\\[5pt]
\Omega_2&=&-\pi_\ast\lambda_{12}-\pi_\ast\lambda_{13}+\pi_\ast\lambda_{23}+2\pi_\ast\lambda_{14},
\end{array}
$$
where $\lambda_{ij}=\lambda_i\vee\lambda_j$ as in \mbox{\rm\req{lambdaij}}
with generators $\lambda_1, \ldots, \lambda_4$ of $H_1(T_{D_4},\Z)$ represented
by the $\vec\lambda_1,\ldots,\vec\lambda_4$ of \mbox{\rm\req{t4gen}}.
Furthermore, the induced polarization is 
$$
\omega
=\pi_\ast\lambda_{12}+\pi_\ast\lambda_{13}+\pi_\ast\lambda_{23}+2\pi_\ast\lambda_{34}
\in H_2(X_{D_4},\Z).
$$
\end{prop}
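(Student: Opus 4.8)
The plan is to establish the three assertions in turn, each by specialising machinery already in place. For the symmetry group I would invoke proposition \ref{sdproduct} together with proposition \ref{tetrahedralsym}. By proposition \ref{sdproduct} the symmetry group of any Kummer surface with induced dual K\"ahler class is $G\cong G_t\rtimes G_T$ with $G_t\cong(\Z_2)^4$ the translational group and $G_T\cong G_T'/\Z_2$, where $G_T'$ is the full group of non-translational holomorphic symplectic automorphisms of the underlying torus. For $T(\Lambda_{D_4})$, proposition \ref{tetrahedralsym} identifies $G_T'=\m T$, the binary tetrahedral group of order $24$, whose central involution $-1$ acts as the Kummer involution $(z_1,z_2)\mapsto(-z_1,-z_2)$ used in definition \ref{defKummer}. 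Hence $G_T\cong\m T/\{\pm1\}\cong A_4$, and $G=(\Z_2)^4\rtimes A_4=\m T_{192}$, of order $192$.

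For the complex structure and the polarization I would run the coordinate computation described at the end of section \ref{complex}. First I translate the generators \req{d4gen} into the real coordinates $\vec x=(x_1,\ldots,x_4)$ with $z_1=x_1+ix_2$, $z_2=x_3+ix_4$, obtaining $\vec\lambda_1=\vec e_1$, $\vec\lambda_2=\vec e_2$, $\vec\lambda_3=\vec e_3$ and $\vec\lambda_4=\hf(\vec e_1+\vec e_2+\vec e_3+\vec e_4)$, so that $\vec e_4=2\vec\lambda_4-\vec\lambda_1-\vec\lambda_2-\vec\lambda_3$. Substituting this into the expressions \req{homologyinv1} for $\Omega_1=e_1\vee e_3-e_2\vee e_4$, $\Omega_2=e_1\vee e_4+e_2\vee e_3$ and into \req{homologyinv2} for $\omega=e_1\vee e_2+e_3\vee e_4$, expanding bilinearly and using $\lambda_i\vee\lambda_i=0$, $\lambda_j\vee\lambda_i=-\lambda_{ij}$, and then applying the linear map $\pi_\ast$ of definition \ref{defKummer}, yields precisely the three displayed formulas.

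It then remains to confirm that these vectors have the properties demanded by the Torelli Theorem \ref{torellithm} and definition \ref{pol}. I would compute the intersection numbers directly from $\langle e_i\vee e_j,\,e_k\vee e_l\rangle=\epsilon_{ijkl}$ on the torus, together with the rescaling $\langle\pi_\ast\alpha,\pi_\ast\beta\rangle=2\langle\alpha,\beta\rangle_T$ implied by $K\cong U^3(2)$ in proposition \ref{Kummerform}. This gives $\langle\Omega_1,\Omega_1\rangle=\langle\Omega_2,\Omega_2\rangle=4$ and $\langle\Omega_1,\Omega_2\rangle=0$, so that $\Omega=\spann_\R\{\Omega_1,\Omega_2\}$ is a positive definite oriented plane with the orthogonal, equal-length basis required for a complex structure, and $\langle\omega,\omega\rangle=4>0$ with $\langle\omega,\Omega_1\rangle=\langle\omega,\Omega_2\rangle=0$, so that $\omega\in\Omega^\perp\cap H_2(X_{D_4},\Z)$ is an integral dual K\"ahler class, i.e.\ a polarization, as claimed.

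No single step is a genuine obstacle; the computation is elementary linear algebra once the generators are in real coordinates. The one place demanding care is the bookkeeping of signs and orientation: the antisymmetry of $\vee$, the Levi-Civita signs $\epsilon_{ijkl}$, and the factor $2$ from $\pi_\ast$ must all be tracked consistently. As a safeguard against sign errors I would check the recipe against the square-torus example, where the same procedure reproduces the transcendental form $\diag(4,4)$ of \req{standardKummer}, confirming that the conventions agree.
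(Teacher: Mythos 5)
Your proposal is correct and follows essentially the same route as the paper: the group structure via propositions \ref{sdproduct} and \ref{tetrahedralsym}, and the formulas for $\Omega_1,\Omega_2,\omega$ by writing $\vec e_4=2\vec\lambda_4-\vec\lambda_1-\vec\lambda_2-\vec\lambda_3$ and substituting into \req{homologyinv1}, \req{homologyinv2}. Your extra verification of the intersection numbers is a sound addition; the paper performs essentially that check in the discussion immediately following the proposition, via the generators $I_1,I_2$ of the transcendental lattice.
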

\begin{proof}
By proposition \ref{tetrahedralsym}, the group $G^\prime_{{D_4}}$ of 
holomorphic symplectic automorphisms of $T(\Lambda_{D_4})$ that fix
$0\in\C^2/\Lambda_{D_4}$ is the binary tetrahedral group $G^\prime_{{D_4}}=\m T$
with $G_T=G^\prime_{{D_4}}/\Z_2=A_4$ by the above. Hence proposition \ref{sdproduct}
shows that $\m T_{192}=(\Z_2)^4\rtimes A_4$ is the holomorphic
symplectic automorphism group of the tetrahedral Kummer surface. One checks
that this group has order $192$, thus the notation.

As explained in 
section \ref{complex}, in terms of local coordinates 
the complex structure and polarization of $X_{D_4}$ are determined by
$\Omega_1,\, \Omega_2,\, \omega\in H_2(X_{D_4},\R)$ as given in \req{homologyinv1},
\req{homologyinv2},
where $\vec e_1,\ldots,\vec e_4$ are the standard 
Euclidean coordinate vectors of $\R^4\cong\C^2$.  
Here, indeed, we may use standard coordinates on the universal cover
of $T(\Lambda_{D_4})$ to induce local coordinates on $T(\Lambda_{D_4})/\Z_2$ away
from the singular points. The generators \req{d4gen} of the lattice 
$\Lambda_{D_4}$ are
$\vec {\lambda}_1=\vec e_1,\, \vec {\lambda}_2=\vec e_2, \,\vec {\lambda}_3=\vec e_3$ and 
$\vec {\lambda}_4 =\hf (\vec e_1+\vec e_2+\vec e_3+\vec e_4)$. 
Inserting these expressions in \req{homologyinv1},
\req{homologyinv2}, one obtains the formulas for $\Omega_1, \Omega_2, \omega$
in terms of the images $\pi_\ast\lambda_{ij}\in H_2(X_{D_4},\Z)$ of the 
$\lambda_{ij}$ as claimed.
\end{proof}

By proposition \ref{tetrahedralcpxstr},  we in particular have
$\Omega_1,\,\Omega_2,\,\omega\in H_2(X_{D_4},\Z)$, such that analogously to the 
example of the Kummer surface $X_0$ with underlying torus $T_0=\C^2/\Z^4$ discussed 
in section \ref{complex}, the results of Shioda and Inose \cite{shin77} apply: 
The quadratic form associated to the transcendental lattice 
$\Omega\cap H_2(X_{D_4},\Z)$ of $X_{D_4}$ uniquely determines the complex 
structure of this Kummer surface. Since generators of this lattice are given by
\be\label{homology}
\begin{array}{rcl}
I_1:=\hf (\Omega_1+\Omega_2)&=&-\pi_\ast\lambda_{12}+\pi_\ast\lambda_{14}+\pi_\ast\lambda_{23}-\pi_\ast\lambda_{24},\\[5pt]
I_2:=\hf (\Omega_1-\Omega_2)&=&\pi_\ast\lambda_{13}-\pi_\ast\lambda_{14}-\pi_\ast\lambda_{24},
\end{array}
\ee
the relevant quadratic form is $\langle I_i,I_j\rangle =4\delta_{ij}$, in agreement with
\req{standardKummer}. In other words, $X_{D_4}$ and the Kummer surface $X_0$ 
constructed from the square torus $T_0$ share the same complex structure, and 
according to the final remark of \cite{shin77}, they agree with the elliptic modular 
surface of level $4$ defined over $\Q\left(\sqrt{-1}\right)$ of \cite[p.~57]{sh72}. 
However, our tetrahedral Kummer surface comes equipped with the polarization 
$\omega$, which is invariant under the action of the group $\m T_{192}$
described in proposition \ref{tetrahedralcpxstr}. 
With $\Sigma:=\spann_\R\{\Omega_1,\Omega_2,\omega\}$, the lattice 
$\Sigma\cap H_2(X_{D_4},\Z)$ has generators $I_1,\, I_2$ as above and 
\be\label{homology3}
I_3:=\textstyle\hf (\Omega_2+\omega)=\pi_\ast\lambda_{14}+\pi_\ast\lambda_{23}+\pi_\ast\lambda_{34},
\ee
such that the associated quadratic form is
$$
\left( \begin{array}{rrr}4&0&2\\0&4&-2\\2&-2&4\end{array}\right),
$$
in contrast to \req{standardPolar}. 
In other words, the Kummer surfaces $X_0$ and $X_{D_4}$ carry 
different induced polarizations.

For later convenience we note that the 
following three vectors generate the lattice $\Sigma^\perp\cap\pi_\ast (H_2(T(\Lambda_{D_4}),\Z))$
of rank $3$:
\be \label{homologyperp}
\begin{array}{rcr}
I_1^{\perp}&:=&\pi_\ast\lambda_{14}+\pi_\ast\lambda_{24}-\pi_\ast\lambda_{23},\\[5pt]
I_2^{\perp}&:=&\pi_\ast\lambda_{13}+\pi_\ast\lambda_{24}+\pi_\ast\lambda_{34},\\[5pt]
I_3^{\perp}&:=&-\pi_\ast\lambda_{12}+\pi_\ast\lambda_{14}+\pi_\ast\lambda_{34}.
\end{array}
\ee
 \subsubsection{The holomorphic symplectic automorphism group of the tetrahedral 
 Kummer surface as a subgroup of $M_{24}$}
As was explained in section \ref{genericKummer}, our theorem \ref{biglattice} implies
that the holomorphic symplectic automorphism group $\m T_{192}$ of our
tetrahedral Kummer surface $X_{D_4}$ also acts faithfully on the Niemeier lattice
$N$ of type $A_1^{24}$. In the following, we calculate this action, and 
thereby we identify $\m T_{192}$ as a subgroup of the Mathieu group $M_{24}$. 
Throughout, we use the same notations as in theorem \ref{biglattice}.

Recall that the action of $\m T_{192}$ on $N$ can be induced by a primitive embedding
of the lattice $M_{\m T_{192}}$ in $N(-1)$ such that its image $\widetilde M_{\m T_{192}}$
contains one of the lattices $\wt{\m P}_{n_0}$ 
of proposition \ref{fullniemeierprop}, $n_0\in\m O_9$. Such an
embedding exists according to proposition \ref{translationonN}, which in particular states that
we may require that
the embedding of the Kummer lattice $\Pi\subset M_{\m T_{192}}$  in $N(-1)$ is induced by
$E_{I(n)}\mapsto f_n$ for every $ n\in\m I\setminus\m O_9$,
by means of the map $I$ of \req{map}. 
Moreover, according
to that proposition, the involutions $\iota_1, \ldots, \iota_4\in M_{24}$ of \req{genericc24} then
yield the action which the translational automorphism group $G_t\cong(\Z_2)^4$
in $\m T_{192}$ induces on $N$, if we impose equivariance of our embedding
$i_{\m T_{192}}\colon M_{\m T_{192}}\hookrightarrow N(-1)$ with respect to $\m T_{192}$.
By proposition \ref{tetrahedralcpxstr}, the symplectic action of the group
$\m T_{192}$ on the tetrahedral Kummer surface is generated by this translational 
automorphism group along with a symplectic action of $A_4$, which is induced by
the symplectic action of the binary tetrahedral group $\m T$ on $T(\Lambda_{D_4})$. 
With respect to standard Euclidean coordinates on $\R^4\cong\C^2$, by \req{t4gen} 
the generators of the latter group are
\be \label{t4action}
\begin{array}{rcl}
\gamma_1(x_1,x_2,x_3,x_4)&=&(-x_2,x_1,x_4,-x_3)\\[5pt]
\gamma_2(x_1,x_2,x_3,x_4)&=&(-x_3,-x_4,x_1,x_2)\\[5pt]
\gamma_3(x_1,x_2,x_3,x_4)&=&\hf([-x_1-x_2+x_3+x_4], [x_1-x_2-x_3+x_4], \\[5pt]
&&\qquad[-x_1+x_2-x_3+x_4], [-x_1-x_2-x_3-x_4]).
\end{array}
\ee
On the generators $\vec {\lambda}_1, \ldots,\vec\lambda_4$ of $\Lambda_{D_4}$ given in \req{d4gen} we thus have
\be\label{t24onli}
\begin{array}{lll}
\gamma_1\colon 
&\vec \lambda_1\mapsto \vec \lambda_2,& \vec \lambda_2\mapsto -\vec \lambda_1,
\quad\quad \vec \lambda_3\mapsto -2\vec \lambda_4+\vec \lambda_1+\vec \lambda_2+\vec \lambda_3,\\[5pt]
&&\vec \lambda_4\mapsto -\vec \lambda_4+\vec \lambda_2+\vec \lambda_3,\\[5pt]
\gamma_2\colon 
& \vec \lambda_1\mapsto \vec \lambda_3,
&\vec \lambda_2\mapsto 2\vec \lambda_4-\vec \lambda_1-\vec \lambda_2-\vec \lambda_3,\\[5pt]
&\vec \lambda_3\mapsto -\vec \lambda_1,
&\vec \lambda_4\mapsto \vec \lambda_4-\vec \lambda_1-\vec \lambda_2,\\[5pt]
\gamma_3\colon 
& \vec \lambda_1\mapsto \vec \lambda_2-\vec \lambda_4,
&\vec \lambda_2\mapsto \vec \lambda_3-\vec \lambda_4,\\[5pt]
&\vec \lambda_3\mapsto \vec \lambda_1-\vec \lambda_4,
&\vec \lambda_4\mapsto-2\vec \lambda_4+\vec \lambda_1+\vec \lambda_2+\vec \lambda_3.
\end{array}
\ee
These transformations induce permutations of the singular points in $T(\Lambda_{D_4})/\Z_2$ 
and thus affine linear maps $[\gamma_1], [\gamma_2], [\gamma_3]$ on the  
hypercube $\mathbb{F}_2^4$ which labels the $E_{\vec a}\in\Pi$ of 
proposition \ref{Kummerform}. For the induced permutations $\wh\gamma_k$ on 
$\m I\setminus\m O_9$, imposing equivariance of our embedding $i_{\m T_{192}}$ 
then amounts to
the condition $\wh\gamma_k(n) = [\gamma_k](I(n))$ for all $n\in \m I\setminus\m O_9$. We obtain
\be\label{gammahat}
\begin{array}{rcl}
\wh{\gamma}_1&=&(2, 8)(7,18)(10,22)(11,13)(12,17)(14,20),\\[5pt]
\wh{\gamma}_2&=&(2,18)(7,8)(10,17)(11,14)(12,22)(13,20),\\[5pt]
\wh{\gamma}_3&=&(2, 12, 13)(4, 16, 21)(7, 17, 20)(8, 22, 14)(10, 11, 18).
\end{array}
\ee
These permutations must be accompanied by appropriate permutations $\sigma_k$,
$k\in\{1,\,2,\,3\}$, of the
labels in $\m O_9$ in order to yield automorphisms of the Niemeier lattice $N$
or equivalently of the Golay code $\m G_{24}$. However, one checks that for each
$\wh\gamma_k$, there exists a unique permutation $\sigma_k$ of $\m O_9$ yielding
$\gamma_k:=\sigma_k\circ\wh\gamma_k\in M_{24}$. Altogether we
obtain
\begin{prop}\label{t192inM24}
Consider the tetrahedral Kummer surface $X_{D_4}$ of definition \mbox{\rm\ref{tetrahedraldef}} with its
induced polarization. Let $i_{\m T_{192}}\colon M_{\m T_{192}}\hookrightarrow N(-1)$ denote
a primitive embedding of the type constructed in proposition \mbox{\rm\ref{translationonN}}. 
This embedding is equivariant with respect to the 
holomorphic symplectic automorphism group $\m T_{192}$, where the
action of this group on $N$  is generated by
 \be\label{t192}
\begin{array}{rcl}
\iota_1&=& (1,11)(2,22)(4,20)(7,12)(8,17)(10,18)(13,21)(14,16),\\[5pt]
\iota_2&=& (1,13)(2,12)(4,14)(7,22)(8,10)(11,21)(16,20)(17,18),\\[5pt]
\iota_3&=& (1,14)(2,17)(4,13)(7,10)(8,22)(11,16)(12,18)(20,21),\\[5pt]
\iota_4&=& (1,17)(2,14)(4,12)(7,20)(8,11)(10,21)(13,18)(16,22),\\[5pt]
\gamma_1&=&(2,8)(7,18)(9,24)(10,22)(11,13)(12,17)(14,20)(15,19),\\[5pt]
\gamma_2&=&(2,18)(7,8)(9,19)(10,17)(11,14)(12,22)(13,20)(15,24),\\[5pt]
\gamma_3&=&(2,12,13)(4,16,21)(7,17,20)(8,22,14)(9,19,24)(10,11,18).
\end{array}
\ee
In particular, since $f_{n_0}$ is invariant under the action of $\m T_{192}$,
we find $n_0\in\{ 3, 5, 6, 23\}$.
\end{prop}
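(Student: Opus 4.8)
The plan is to separate the generating set of the induced $N$-action into its translational part, already in hand, and the genuinely new tetrahedral part. The four involutions $\iota_1,\ldots,\iota_4$ are exactly the permutations \req{genericc24} produced in proposition \ref{translationonN}: by proposition \ref{sdproduct} the translational group $G_t\cong(\Z_2)^4$ is a normal subgroup of $\m T_{192}$, and enforcing $\m T_{192}$-equivariance of $i_{\m T_{192}}$ restricts on $G_t$ to precisely the data of proposition \ref{translationonN}. So I would first record that a $\m T_{192}$-equivariant action on $N$ exists by the same mechanism as there: by theorem \ref{biglattice} the lattice $M_{\m T_{192}}(-1)$ embeds primitively in $N$, the group $\m T_{192}$ acts on $M_{\m T_{192}}$ by isometries and trivially on the discriminant group $M_{\m T_{192}}^\ast/M_{\m T_{192}}$ (the argument reviewed after theorem \ref{mukai}), and hence by \cite[Prop.~1.1]{ni80b} this action extends to $N$ fixing the orthogonal complement of the image.

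The heart of the computation is the $A_4$-factor generated by $\gamma_1,\gamma_2,\gamma_3$, obtained by transporting the binary tetrahedral action on $T(\Lambda_{D_4})$ through the Kummer construction. Since \req{t24onli} shows that each $\gamma_k$ preserves $\Lambda_{D_4}$ and fixes $0$, it induces a linear (hence affine) map $[\gamma_k]$ on $\tfrac12\Lambda_{D_4}/\Lambda_{D_4}\cong\F_2^4$, permuting the fixed points $\vec F_{\vec a}$ and thereby the exceptional classes $E_{\vec a}$. Reading off the matrix of $\gamma_k$ on the generators $\vec\lambda_i$ and reducing modulo $2$ produces $[\gamma_k]$ explicitly. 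Transporting through the bijection $I$ of \req{map} and imposing equivariance, namely $\wh\gamma_k(n)=[\gamma_k](I(n))$ for all $n\in\m I\setminus\m O_9$, then determines the permutations $\wh\gamma_k$ of the complement recorded in \req{gammahat}; this step is pure bookkeeping once the matrices are fixed.

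The main obstacle, and the step I would treat most carefully, is extending each $\wh\gamma_k$ to a genuine element of $M_{24}$: I claim there is a \emph{unique} permutation $\sigma_k$ of the octad $\m O_9$ with $\gamma_k:=\sigma_k\circ\wh\gamma_k\in M_{24}$. For existence I would invoke Conway's theorem \cite[Thm.~2.10]{co71}, by which the setwise stabilizer of $\m O_9$ in $M_{24}$ is isomorphic to $\Aff(\F_2^4)=(\Z_2)^4\rtimes\GL(4,\F_2)$ and acts on the complement $\m I\setminus\m O_9$ as the full affine group; since $\wh\gamma_k$ is affine, some $g$ in this stabilizer restricts to $\wh\gamma_k$ on the complement, and $\sigma_k:=g_{|\m O_9}$ gives $\gamma_k=g\in M_{24}$. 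For uniqueness, if $\sigma_k$ and $\sigma_k'$ both work then $\sigma_k'\circ\sigma_k^{-1}\in M_{24}$ fixes $\m I\setminus\m O_9$ pointwise; but the octad stabilizer acts faithfully on the complement, so the pointwise stabilizer of the sixteen complementary labels is trivial, forcing $\sigma_k'=\sigma_k$. As a cross-check I would verify directly, by the MOG computation of appendix \ref{MOG} applied to the Golay basis \req{Golaybasis}, that the resulting $\gamma_k$ of \req{t192} preserve $\m G_{24}$; uniqueness then guarantees that these pure permutations induce exactly the equivariant action, with no residual sign flips on $\wt K$.

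Finally I would pin down $n_0$. The class $\upsilon_0-\upsilon$ is fixed by every symplectic automorphism, hence by all of $\m T_{192}$, and by proposition \ref{translationonN} its image is $\pm f_{n_0}$ with $n_0\in\m O_9$. Equivariance therefore forces each $\gamma_k$ to fix $n_0$, so $n_0$ must be a common fixed point of $\gamma_1,\gamma_2,\gamma_3$ within $\m O_9=\{3,5,6,9,15,19,23,24\}$. Reading the cycle structures in \req{t192}, the labels $9,15,19,24$ are moved by $\gamma_1$ and $\gamma_2$ (and $9,19,24$ by $\gamma_3$), whereas $3,5,6,23$ are fixed by all three; hence $n_0\in\{3,5,6,23\}$, as claimed.
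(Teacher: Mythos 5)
Your proof is correct and follows essentially the same route as the paper: translational generators taken from proposition \ref{translationonN}, the affine maps $[\gamma_k]$ read off from \req{t24onli} and transported through the map $I$ to give \req{gammahat}, and the fixed-point analysis on $\m O_9$ to conclude $n_0\in\{3,5,6,23\}$. The only difference is that where the paper settles existence and uniqueness of the completions $\sigma_k$ by direct verification (``one checks''), you derive both from the fact that the stabilizer of the octad $\m O_9$ in $M_{24}$ is $(\Z_2)^4\rtimes A_8$ acting on the complement as the full affine group $\Aff(\F_2^4)$ and faithfully --- a slightly cleaner argument that the paper itself invokes elsewhere (via \cite[Thm.~2.10]{co71}, at the end of section \ref{Kummerautos}) and which is perfectly valid here.
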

The above proposition identifies $\m T_{192}$ as a subgroup 
of $M_{24}$
which leaves the octad ${\m O}_9=\{9, 5, 24, 19, 23, 3, 6, 15\}$ invariant.
This group may be constructed as a succession of stabilizers of the Mathieu group $M_{24}$, 
starting with the stabilizer  in $M_{24}$ of the element $5\in\m I$, which is $M_{23}$, followed 
by the stabilizer in $M_{23}$ of the element $3\in\m I$, which is $M_{22}$. The next step is 
to construct the stabilizer in $M_{22}$ of the element $6\in\m I$, which is $\PSL(3,4)$, and 
the stabilizer in $\PSL(3,4)$ of the element $23\in\m I$, whose structure is 
$(\Z_2)^4\rtimes A_5$, and finally to obtain the stabilizer in that group of the set 
$\{9,15,19,24\}$. This last stabilizer group has order $192$ and coincides with the 
copy of ${\m T}_{192}$ generated above.
Equivalently, one obtains $\m T_{192}$ from $M_{23}$, the stabilizer of 
$5\in\m I$, by first stabilizing
the set $\m O_9$. This yields the maximal subgroup 
$(\Z_2)^4\rtimes A_7$ of $M_{23}$ which we will recover in 
section \ref{overmen}. Then
successively stabilizing $3,\, 6,\, 23\in\m I$ in that group one obtains the (maximal) subgroups 
$(\Z_2)^4\rtimes A_6,\, (\Z_2)^4\rtimes A_5,\, (\Z_2)^4\rtimes A_4=\m T_{192}$ of
$M_{22},\, \PSL(3,4),\, (\Z_2)^4\rtimes A_5$, respectively.

We remark that the Mathieu group $M_{24}$ may be generated from the set of $7$ 
permutations given in \req{t192}, augmented by one extra involution, for instance 
$$
\iota_5= (1,9)(2,5)(3,19)(4,15)(6,22)(7,18)(8,20)(10,17)(11,12)(13,16)(14,24)(21,23).
$$
This involution is one of the $7$ involutions that are seen on the Klein map, and that generate $M_{24}$ according to \cite{cu07}.

Now recall our discussion in section \ref{genericKummer} leading to 
\req{signchoices}--\req{discriminantidentification}, the consistency conditions 
for an extension of the $G_t$-equivariant embedding $i_{\m T_{192}}$
of $M_{\m T_{192}}$ in 
$N(-1)$ to a linear bijection $\Theta\colon H_\ast(X,\Z)\longrightarrow N(-1)$.
By proposition \ref{t192inM24}, the embedding $i_{\m T_{192}}$ is actually equivariant with respect
to the larger group $\m T_{192}$ and imposes an isometry of $M_{\m T_{192}}\cap \m K$
with its image. This amounts to the condition that the map
\req{signchoices} is equivariant with respect to 
the action of $A_4$ induced by \req{t4gen}
on the $\pi_\ast\lambda_{ij}\in H_2(X_{D_4},\Z)$ and by the 
$\gamma_k$ of \req{t192}
on our choices of $Q_{ij}$ from \req{Qijchoice}. 
Thus, next we need to determine the action of each $\gamma_k$ induced from \req{t24onli}
on the $\lambda_{ij}$: 
$$ 
\left( \begin{array}{c} \gamma_1(\lambda_{12})\\ \gamma_1(\lambda_{13})\\ \gamma_1(\lambda_{14})\\ \gamma_1(\lambda_{23})\\  \gamma_1(\lambda_{24})\\ \gamma_1(\lambda_{34})  \end{array}\right)  =
\left(  \begin{array}{cccccc} 1&0&0&0&0&0\\-1&0&0&1&-2&0\\0&0&0&1&-1&0\\-1&-1&2&0&0&0
\\-1&-1&1&0&0&0\\1&1&-1&0&1&1 \end{array}\right) \left( \begin{array}{c} 
\lambda_{12}\\  \lambda_{13}\\ \lambda_{14}\\ \lambda_{23}\\ \lambda_{24}\\\lambda_{34} \end{array}\right),
$$
$$
\left( \begin{array}{c} \gamma_2(\lambda_{12})\\ \gamma_2(\lambda_{13})\\ \gamma_2(\lambda_{14})\\ \gamma_2(\lambda_{23})\\  \gamma_2(\lambda_{24})\\ \gamma_2(\lambda_{34})  \end{array}\right) = 
\left(  \begin{array}{cccccc} 0&1&0&1&0&2\\0&1&0&0&0&0\\0&1&0&1&0&1\\-1&-1&2&0&0&0
\\0&-1&1&-1&1&-1\\1&0&-1&0&0&0 \end{array}\right) \left( \begin{array}{c} 
\lambda_{12}\\  \lambda_{13}\\ \lambda_{14}\\ \lambda_{23}\\ \lambda_{24}\\\lambda_{34} \end{array}\right),
$$
$$
\left( \begin{array}{c} \gamma_3(\lambda_{12})\\ \gamma_3(\lambda_{13})\\ \gamma_3(\lambda_{14})\\ \gamma_3(\lambda_{23})\\  \gamma_3(\lambda_{24})\\ \gamma_3(\lambda_{34})  \end{array}\right)  =
\left(  \begin{array}{cccccc} 0&0&0&1&-1&1\\-1&0&1&0&-1&0\\-1&0&1&1&-1&1\\0&-1&1&0&0&-1
\\0&-1&1&-1&1&-1\\1&1&-1&0&1&1 \end{array}\right) \left( \begin{array}{c} 
\lambda_{12}\\  \lambda_{13}\\ \lambda_{14}\\ \lambda_{23}\\ \lambda_{24}\\\lambda_{34} \end{array}\right).
$$
The task now is to make a choice for each $Q_{ij}$ in \req{Qijchoice}
such that in the map \req{signchoices} there exists a choice of signs making it
equivariant under $\gamma_1,\,\gamma_2,\,\gamma_3$.
A first step towards a solution to this task, though not uniquely determined at this stage, is 
\begin{prop}\label{Ibarsolution}
The following choices in \mbox{\rm\req{Qijchoice}} yield 
a map $\qu I(\pi_\ast\lambda_{ij}):= Q_{ij}$ with 
$\qu I(\lambda+\lambda^\prime):=\qu I(\lambda)+\qu I(\lambda^\prime)$ 
by means of symmetric differences of sets which for each $k=1,\,2,\,3$ obeys 
$\qu I(\pi_\ast\gamma_k(\lambda_{ij})) 
=\gamma_k (Q_{ij})$ for all labels $ij$:
\be\label{identifymuij}
\begin{array}{rclrcl}
Q_{12}&=&\{3, 6, 15, 19\}, &Q_{34}&=&\{6, 9, 15, 19\}, \\[5pt]
Q_{13}&=&\{6, 15, 23, 24\}, & Q_{24}&=&\{15, 19, 23, 24\}, \\[5pt]
Q_{14}&=&\{3, 9, 15, 24\}, & Q_{23}&=&\{3, 9, 15, 23\}. 
\end{array}
\ee
All these $Q_{ij}$ avoid the label $n_0:=5\in\m O_9$. 
\end{prop}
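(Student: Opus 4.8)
The plan is to split the claim into two layers: equivariance at the level of discriminant groups, which I expect to come for free, and equivariance at the level of honest subsets of $\m O_9$, where the only remaining freedom is the two-fold complement ambiguity exhibited in \req{Qijchoice}. First I would record that $\qu I$ is the unique $\F_2$-linear map from $K$ to the power set of $\m O_9$ (with symmetric difference as addition) determined by $\qu I(\pi_\ast\lambda_{ij})=Q_{ij}$ on the basis $\{\pi_\ast\lambda_{ij}\}$; it factors through $K/2K$, so that $\qu I(\pi_\ast\gamma_k(\lambda_{ij}))$ is computed by reducing the displayed $\gamma_k$-matrices modulo $2$ and taking the corresponding symmetric difference of the $Q_{kl}$. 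Note that symmetric differences of four-element sets are always of even cardinality.

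For the first layer I would argue that the identification \req{discriminantidentification} of $K^\ast/K$ with $\wt K^\ast/\wt K$ is automatically $\gamma_k$-equivariant, because it is the composite $K^\ast/K\to\Pi^\ast/\Pi\to\wt\Pi^\ast/\wt\Pi\to\wt K^\ast/\wt K$ of the Kummer gluing $\gamma$ of \req{Kummerglue}, the isometry $\Pi\cong\wt\Pi(-1)$ induced by the map $I$, and the Niemeier gluing $\wt\gamma$ of \req{Niemeierglue}. The outer two arrows are equivariant because $\gamma_k$ is a genuine lattice automorphism of $H_\ast(X,\Z)$ and of $N$ preserving the orthogonal pairs $(K,\Pi)$ and $(\wt K,\wt\Pi)$ — here one uses proposition \ref{kummeraffine} on the torus side and the fact, from proposition \ref{t192inM24}, that $\gamma_k\in M_{24}$ stabilises $\m O_9$ — so each is compatible with its gluing isomorphism in the sense of proposition \ref{glueperp}; the middle arrow is equivariant precisely because \req{gammahat} defines $\wh\gamma_k$ by $\wh\gamma_k(n)=[\gamma_k](I(n))$. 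Consequently, for every $k$ and every $ij$, the even-cardinality subsets $\qu I(\pi_\ast\gamma_k(\lambda_{ij}))$ and $\gamma_k(Q_{ij})$ represent the same class in $\wt K^\ast/\wt K$. Since the kernel of the map $S\mapsto\qu{\hf\sum_{n\in S}f_n}$ on even-cardinality subsets of the octad $\m O_9$ is exactly $\{\emptyset,\m O_9\}$, two even subsets in the same class differ by complementation in $\m O_9$; in particular $\qu I(\pi_\ast\gamma_k(\lambda_{ij}))$ is either $\gamma_k(Q_{ij})$ or its complement $\m O_9\setminus\gamma_k(Q_{ij})$.

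The second layer is where the choice enters, and it is the crux. I would observe that each $\gamma_k$ of \req{t192} fixes the label $5$, so $5$ is a common $\m T_{192}$-fixed point inside $\m O_9$; this is precisely why $n_0=5$ is admissible by proposition \ref{t192inM24}. Requiring every $Q_{ij}$ to avoid $5$ then selects, by inspection, a unique representative from each complementary pair in \req{Qijchoice}, and these are exactly the sets listed in \req{identifymuij}. With this choice both $\gamma_k(Q_{ij})$ (as $\gamma_k$ fixes $5$ and $5\notin Q_{ij}$) and $\qu I(\pi_\ast\gamma_k(\lambda_{ij}))$ (a symmetric difference of sets avoiding $5$) avoid $5$; but a four-element subset of $\m O_9$ and its complement can never both avoid $5$, since one of them must contain it. Hence the ``equal or complementary'' dichotomy of the first layer collapses to equality, giving $\qu I(\pi_\ast\gamma_k(\lambda_{ij}))=\gamma_k(Q_{ij})$ for all $k$ and all $ij$. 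The closing assertion that all $Q_{ij}$ avoid $5$ is then immediate by inspection of \req{identifymuij}.

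The main obstacle I anticipate is resisting the temptation to grind through the eighteen individual set identities by hand: the genuine content is not the arithmetic but the recognition that discriminant-level equivariance is forced by the functoriality of Nikulin's gluing, so the whole problem reduces to resolving one two-fold ambiguity uniformly, which the common fixed label $n_0=5$ accomplishes. I would therefore take care to verify the two supporting facts that make the argument clean: that on even-cardinality subsets of an octad the gluing map identifies precisely a set with its complement, and that $5$ is indeed fixed by each of $\gamma_1,\gamma_2,\gamma_3$ in \req{t192}.
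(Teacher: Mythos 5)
Your proof is correct, but it takes a genuinely different route from the paper: the paper states proposition \ref{Ibarsolution} with no proof at all, leaving the reader to verify the eighteen identities $\qu I(\pi_\ast\gamma_k(\lambda_{ij}))=\gamma_k(Q_{ij})$ by direct computation from the displayed matrices and the permutations \req{t192}. Your two-layer argument replaces that grind with structure: discriminant-level agreement is forced because the map $\qu{{1\over2}\pi_\ast\lambda_{ij}}\mapsto\qu{q_{ij}}$ of \req{discriminantidentification} factors as $\wt\gamma^{-1}\circ\iota_\ast\circ\gamma$ through the two Nikulin gluings and the isometry $\Pi\cong\wt\Pi(-1)$, each of which is equivariant for the reasons you give; the kernel computation on even subsets of the octad (which is correct — an even subset $S$ has $\hf\sum_{n\in S}f_n\in\wt K$ iff $S=\emptyset$ or $S=\m O_9$) reduces everything to an "equal or complementary" dichotomy; and the common fixed label $5$ of $\gamma_1,\gamma_2,\gamma_3$ collapses it to equality. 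This is sound and non-circular, since proposition \ref{t192inM24} (the $\gamma_k$ lie in $M_{24}$ and stabilise $\m O_9$) precedes the statement. What your approach buys is an explanation of \emph{why} the list \req{identifymuij} works and why it is exactly characterised by avoiding a $\m T_{192}$-fixed label — which also makes transparent the paper's remark that the choice is "not uniquely determined at this stage," since avoiding $3$, $6$ or $23$ would do equally well at this point; what the direct verification buys is independence from the correctness of the bookkeeping in \req{gammahat} and \req{t192}, which your argument inherits as hypotheses. The only cosmetic caveat is that you quote the paper's slightly abusive condition $\wh\gamma_k(n)=[\gamma_k](I(n))$, which should read $I(\wh\gamma_k(n))=[\gamma_k](I(n))$; this does not affect the argument.
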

In section \ref{overmen} we shall show that there exists 
a linear bijection $\Theta\colon H_\ast(X,\Z)\longrightarrow N(-1)$
which is compatible with the overarching finite symmetry group $(\Z_2)^4\rtimes A_7$
of definition \ref{overarch}  in the following sense:
On two distinct sublattices $M_G$ of rank $20$ of $H_\ast(X,\Z)$ the map $\Theta$ restricts 
to isometric embeddings $M_G\hookrightarrow N(-1)$ that are equivariant 
with respect to appropriate holomorphic symplectic
automorphism groups $G$ that generate $(\Z_2)^4\rtimes A_7$. 
Under both groups $G$, the vector $\upsilon_0-\upsilon\in M_G$ is invariant.
It follows that in order for such a map $\Theta$ to exist, there must be
a label $n_0\in\m O_9$ which occurs in none of the $Q_{ij}$. Indeed,
$f_{n_0}$ is the image of $\upsilon_0-\upsilon$ under the
embedding $i_{\m T_{192}}$. Since the symmetry group of our Kummer surface stabilizes the generators
$\upsilon_0$ and $\upsilon$ of $H_0(X,\Z)$ and $H_4(X,\Z)$, and 
since $i_{\m T_{192}}$ is equivariant with respect to $\m T_{192}$, 
the root $f_{n_0}$ must be stabilized by $\m T_{192}$. If $f_{n_0}$
is stabilized by the induced action of the entire overarching finite symmetry 
group, then $f_{n_0}$ cannot occur in any of the
$q_{ij}$. In other words, the label $n_0\in\m O_9$ cannot occur in any of the $Q_{ij}$.
\subsubsection{A linear bijection between even unimodular lattices}
For the tetrahedral Kummer surface $X_{D_4}$ of definition \ref{tetrahedraldef}
with holomorphic symplectic automorphism group $\m T_{192}$, 
in the following we shall show that the primitive embedding
$i_{\m T_{192}}\colon M_{\m T_{192}}\hookrightarrow N(-1)$ of proposition
\ref{translationonN}, which yields $E_{\vec a}\mapsto f_{I^{-1}(\vec a)}$ for each $\vec a\in\F_2^4$
and $\upsilon_0-\upsilon\mapsto f_{n_0}$,
can be extended to a linear bijection $\Theta\colon H_\ast(X,\Z)\longrightarrow N(-1)$
which obeys \req{signchoices} with the choices of proposition \ref{Ibarsolution}
and with $n_0=5$.

First, by proposition \ref{Ibarsolution} the map $\qu I$  yields
\be\label{pairidentification}
\begin{array}{rcl}
\qu I(\pi_\ast\lambda_{12}+\pi_\ast\lambda_{34}+\pi_\ast\lambda_{14}) &=& \{ 15,24\},\\[5pt]
\qu I(\pi_\ast\lambda_{12}+\pi_\ast\lambda_{34}+\pi_\ast\lambda_{23}) &=& \{ 15,23\},\\[5pt]
\qu I( \pi_\ast\lambda_{13}+\pi_\ast\lambda_{24}+\pi_\ast\lambda_{12}) &=& \{ 3,15\},\\[5pt]
\qu I( \pi_\ast\lambda_{13}+\pi_\ast\lambda_{24}+\pi_\ast\lambda_{34}) &=& \{ 9,15\},\\[5pt]
\qu I( \pi_\ast\lambda_{14}+\pi_\ast\lambda_{23}+\pi_\ast\lambda_{13}) &=& \{ 6,15\},\\[5pt]
\qu I( \pi_\ast\lambda_{14}+\pi_\ast\lambda_{23}+\pi_\ast\lambda_{24}) &=& \{ 15,19\}.
\end{array}
\ee
Therefore, if a map $\Theta$ exists as claimed, then it obeys
$\Theta\colon \pi_\ast\lambda_{12}\pm \pi_\ast\lambda_{34}
\pm \pi_\ast\lambda_{14}\mapsto\pm f_{15}\pm f_{24}$, 
etc., with four signs to be chosen for each such identification. 
{If we need to relax the constraint posed in \req{signchoices}, namely
that every ${1\over2}\pi_\ast\lambda_{ij}$ is mapped to a representative of
an element in $\wt\KKK_{n_0}^\ast/\wt\KKK_{n_0}$ of minimal length, then 
additional summands of the form $2\Delta$ with $\Delta\in\wt K$ may occur.
Our} choices of signs 
are severely restricted by imposing \req{signchoices}, 
and one checks that the following yields a lift of 
\req{discriminantidentification} which is consistent with all gluing prescriptions, 
with \req{signchoices} as well as the above \req{pairidentification}
\be\label{doublef}
\begin{array}{rcrcl}
I_3^\perp&=&-\pi_\ast\lambda_{12}+\pi_\ast\lambda_{34}+\pi_\ast\lambda_{14}
&\longmapsto& f_{24}-f_{15},\\[5pt]
J_1^\perp-I_3^\perp&=&\pi_\ast\lambda_{12}-\pi_\ast\lambda_{34}-\pi_\ast\lambda_{23} 
&\longmapsto& f_{15}-f_{23},\\[5pt]
-J_3^\perp+I_2^\perp&=&\pi_\ast\lambda_{13}+\pi_\ast\lambda_{24}+\pi_\ast\lambda_{12}
&\longmapsto& f_3-f_{15},\\[5pt]
I_2^\perp&=&\pi_\ast\lambda_{13}+\pi_\ast\lambda_{24}+\pi_\ast\lambda_{34}
&\longmapsto& f_9-f_{15},\\[5pt]
J_2^\perp-I_1^\perp&=&-\pi_\ast\lambda_{14}+\pi_\ast\lambda_{23}+\pi_\ast\lambda_{13}
&\longmapsto& f_{15}-f_6,\\[5pt]
I_1^\perp&=&\pi_\ast\lambda_{14}-\pi_\ast\lambda_{23}+\pi_\ast\lambda_{24}
&\longmapsto& f_{19}-f_{15},
\end{array}
\ee
(see \req{homologyperp} and \req{doublefreduced} for the definitions of the 
$I_k^\perp$ and $J_k^\perp$), or equivalently
\be\label{lambdaq}
\begin{array}{rcrcr}
\Theta\colon \pi_\ast \lambda_{12}
&\longmapsto&  2q_{12}&=& f_3+f_6-f_{15}-f_{19},\\[5pt]
\Theta\colon \pi_\ast \lambda_{34}
&\longmapsto& 2q_{34}&=&f_6+f_9-f_{15}-f_{19},\\[5pt]
\Theta\colon \pi_\ast \lambda_{13} 
&\longmapsto& 2q_{13}&=&-f_6+f_{15}-f_{23}+f_{24},\\[5pt]
\Theta\colon \pi_\ast \lambda_{24}
&\longmapsto& 2q_{24}&=&-f_{15}+f_{19}+f_{23}-f_{24},\\[5pt]
\Theta\colon \pi_\ast \lambda_{14}
&\longmapsto& 2q_{14}&=&f_{3}-f_{9}-f_{15}+f_{24},\\[5pt]
\Theta\colon \pi_\ast \lambda_{23}
&\longmapsto& 2q_{23}&=&f_{3}-f_{9}-f_{15}+f_{23},
\end{array}
\ee
where $2q_{ij}=\sum_{n\in Q_{ij}} (\pm) f_n$ in accord with \req{signchoices}
and \req{identifymuij}.
In \req{tetracompare} we will see that this choice of signs allows us to isometrically
identify the lattice generated by $I_1^\perp,\, I_2^\perp,\, I_3^\perp\in M_{\m T_{192}}$ with its image in
$N(-1)$. 
In particular, the choices of signs on the left hand side of \req{doublef} are already
fixed by enforcing this property together with \req{signchoices}.
Similarly, in \req{squarecompare} we will see that our choice of signs
also allows us to isometrically
identify the lattice generated by $J_1^\perp,\, J_2^\perp,\, J_3^\perp\in M_{\m T_{64}}$ with its image in
$N(-1)$. 
Our choice of signs is unique with these properties, up to the freedom for each 
$n\in\m I$ to replace $f_n$ by $-f_n$ everywhere, which induces an isometry of $N$
according to proposition \ref{M24onN}. 
This freedom of choice is in accord with the fact that we are actually interested in
$M_{24}$, where by proposition \ref{M24onN} we have
$M_{24} = \Aut(N)/(\Z_2)^{24}$, with $(\Z_2)^{24}$ implementing precisely the freedom
of choice of replacing any $f_n$ by $-f_n$ everywhere.

Finally, with our choice $n_0=5$ from proposition \ref{Ibarsolution},
$\Theta(\upsilon_0-\upsilon)=f_5$, and \req{signchoices}
lifts to
$$
\Theta\colon \upsilon_0+\upsilon \longmapsto f_3+f_6+f_9-f_{15}-f_{19}-f_{23}-f_{24}.
$$
The signs on the right hand side are arbitrary, up to the fact 
that the right hand
side together with the $q_{ij}$ in \req{lambdaq} must generate the lattice
$\wt{\m K}_{n_0}$ of proposition \ref{fullniemeierprop} with $n_0=5$,
$\wt{\m K}_5=\spann_\Z\left\{ f_3,\, f_6,\, f_9,\-f_{15},\, f_{19},\, f_{23},\, f_{24}\right\}$
(see the proof of proposition \ref{fullniemeierprop}).
We thus have
\be\label{upsilonf}
\begin{array}{rcl}
\Theta\colon \upsilon_0 &\longmapsto& \hf \left(f_3+f_5+f_6+f_9-f_{15}-f_{19}-f_{23}-f_{24}\right),\\[5pt]
\Theta\colon \upsilon &\longmapsto&  \hf \left(f_3-f_5+f_6+f_9-f_{15}-f_{19}-f_{23}-f_{24}\right).
\end{array}
\ee
Collecting the various ingredients, we claim 
\begin{theorem}\label{Thetaok}
Consider the tetrahedral Kummer surface $X_{D_4}$ of definition \mbox{\rm\ref{tetrahedraldef}}
with its induced polarization and 
the symmetry group
$\m T_{192}$. 

There exists a linear bijection 
$\Theta\colon H_\ast(X,\Z)\longrightarrow N(-1)$ which is induced by
\mbox{\rm\req{lambdaq}} and \mbox{\rm\req{upsilonf}} along with the map $I$ of \mbox{\rm\req{map}} which induces
\be\label{finalpi}
\Theta\colon E_{\vec a}\longmapsto f_{I^{-1}(\vec a)} \quad \mbox{ for every } \vec a\in\F_2^4.
\ee
The map $\Theta$ isometrically embeds the lattice
$M_{\m T_{192}}$ of theorem \mbox{\rm\ref{biglattice}} into $N(-1)$, such that the
restriction $i_{\m T_{192}}$ of $\Theta$ to $M_{\m T_{192}}$ yields
an embedding $i_{\m T_{192}}\colon M_{\m T_{192}}\hookrightarrow N(-1)$
of the type constructed in proposition \mbox{\rm\ref{translationonN}}. Moreover, $i_{\m T_{192}}$ is 
equivariant with respect to the group
$\m T_{192}$, where on $H_\ast(X,\Z)$ the action of $\m T_{192}$ is induced by its action 
on $X_{D_4}$ as holomorphic symplectic automorphism group, and
on the Niemeier lattice $N$ of type $A_1^{24}$ its action is generated by 
$\iota_1, \ldots, \iota_4,\, \gamma_1,\, \gamma_2,\,\gamma_3$ as in \mbox{\rm\req{t192}}. 
The lattice $M_{\m T_{192}}$ has rank $20$.
\end{theorem}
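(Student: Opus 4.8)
The plan is to verify that the assignment given by \req{finalpi}, \req{lambdaq} and \req{upsilonf} extends to a well-defined linear bijection, and then to establish its three claimed properties in turn. First I would invoke the two gluing descriptions: by proposition \ref{fullk3} the lattice $H_\ast(X,\Z)$ is glued from $\m K=K\oplus\spann_\Z\{\upsilon_0+\upsilon\}$ and $\m P=\Pi\oplus\spann_\Z\{\upsilon_0-\upsilon\}$, while by proposition \ref{fullniemeierprop} the Niemeier lattice $N$ is glued from $\wt{\m K}_{n_0}$ and $\wt{\m P}_{n_0}$ with $n_0=5$. On $\m P$ the prescription \req{finalpi} together with $\upsilon_0-\upsilon\mapsto f_5$ is exactly the isometry $\m P\to\wt{\m P}_5(-1)$ supplied by proposition \ref{piinn} and \req{fullniemeier}, so only the restriction of $\Theta$ to $\m K$ remains to be understood. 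There \req{lambdaq} sends $\pi_\ast\lambda_{ij}\mapsto 2q_{ij}$ and \req{upsilonf} fixes the image of $\upsilon_0+\upsilon$; I would check that these seven vectors form a $\Z$-basis of $\wt{\m K}_5=\spann_\Z\{f_n\mid n\in\m O_9\setminus\{5\}\}$, so that $\Theta|_{\m K}$ is a bijection $\m K\to\wt{\m K}_5(-1)$.

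The crucial compatibility step is that these two bijections glue. By \req{discriminantidentification} the induced map $\m K^\ast/\m K\to\wt{\m K}_5^\ast/\wt{\m K}_5$ sends $\qu{\hf\pi_\ast\lambda_{ij}}$ to $\qu{q_{ij}}$, which is precisely the image prescribed by the gluing isomorphism $\wt g_{n_0}$ of proposition \ref{fullniemeierprop}; together with $\qu{\hf(\upsilon_0+\upsilon)}\mapsto\qu{\hf\sum_{n\in\m O_9\setminus\{5\}}f_n}$ this shows that $\Theta$ carries every glue vector of the description in proposition \ref{fullk3} to a glue vector in the description of proposition \ref{fullniemeierprop}. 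Hence $\Theta$ maps $H_\ast(X,\Z)$ onto $N(-1)$, and since both lattices have rank $24$ it is a bijection.

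Next I would treat the isometry and equivariance on $M_{\m T_{192}}$. Here $G_T^\prime=\m T$ is the binary tetrahedral group of proposition \ref{tetrahedralsym}, and $\wh K_{\m T_{192}}=\pi_\ast\bigl((H_2(T,\Z)^{G_T^\prime})^\perp\cap H_2(T,\Z)\bigr)$ is spanned by the three explicit vectors $I_1^\perp,I_2^\perp,I_3^\perp$ of \req{homologyperp}, hence has rank $3$; combined with the bound $\rk(\wh K_G)\leq 3$ from the proof of theorem \ref{biglattice} this forces $\rk(M_{\m T_{192}})=\rk(\wh K_{\m T_{192}})+\rk(\Pi)+1=3+16+1=20$, giving the final rank claim. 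On the summand $\m P\subset M_{\m T_{192}}$ the map $\Theta$ is already an isometry, so it suffices to compare quadratic forms on the $I_k^\perp$; this is the computation recorded in \req{tetracompare}, and enforcing it together with \req{signchoices} is what fixes the signs on the left of \req{doublef}. For equivariance of $i_{\m T_{192}}$ I would go through the generators in \req{t192} summand by summand: on $\Pi$ the identity $\Theta(g_\ast E_{\vec a})=g_\ast\Theta(E_{\vec a})$ holds because the $\iota_k$ and the restrictions of the $\gamma_k$ to $\m I\setminus\m O_9$ were defined in proposition \ref{translationonN} and \req{gammahat} to realise the affine hypercube action through $I$; the vector $\upsilon_0-\upsilon$ maps to $f_5$, which is fixed by every generator in \req{t192}; and on $\wh K_{\m T_{192}}$ equivariance amounts to the statement that, with the sign choice \req{lambdaq}, the permutation action of the $\gamma_k$ on $N$ intertwines the action of $\m T$ on the $\pi_\ast\lambda_{ij}$.

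The main obstacle is exactly this last point. Proposition \ref{Ibarsolution} only guarantees that the unsigned quadruplets transform correctly, $\qu I(\pi_\ast\gamma_k(\lambda_{ij}))=\gamma_k(Q_{ij})$, whereas genuine $\m T_{192}$-equivariance of $\Theta$ requires the \emph{signed} vectors $2q_{ij}=\sum_{n\in Q_{ij}}(\pm f_n)$ to be permuted consistently by the $\gamma_k$, which act on $N$ without any sign flips. Showing that a single global sign assignment in \req{lambdaq} simultaneously (i) is compatible with all the gluing data, (ii) produces the isometry on $\spann_\Z\{I_1^\perp,I_2^\perp,I_3^\perp\}$ of \req{tetracompare}, and (iii) intertwines the $\gamma_k$-action with the action of $\m T$ on the $\lambda_{ij}$ given by the three displayed matrices preceding proposition \ref{Ibarsolution} is the real content. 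I expect to settle it as a finite but delicate verification against those matrices, the residual freedom $f_n\mapsto-f_n$ being the reason the bijection is determined only up to a few choices of signs.
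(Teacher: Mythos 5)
Your proposal is correct and follows essentially the same route as the paper's proof: glue $H_\ast(X,\Z)$ from $\m K\Join\m P$ and $N$ from $\wt{\m K}_5\Join\wt{\m P}_5$, check that \req{lambdaq} and \req{upsilonf} send $\m K$ onto $\wt{\m K}_5$ compatibly with the discriminant-group identification \req{discriminantidentification}, identify $M_{\m T_{192}}=\spann_\Z\{I_1^\perp,I_2^\perp,I_3^\perp\}\oplus\m P$ plus glue vectors (rank $20$), verify the isometry via the Gram-matrix comparison \req{tetracompare}, and obtain equivariance from the construction in proposition \ref{t192inM24}. Your flagged obstacle --- that proposition \ref{Ibarsolution} only controls the unsigned quadruplets while equivariance needs the signed vectors $2q_{ij}$ to be intertwined by the sign-free permutation action of the $\gamma_k$ --- is exactly the point the paper settles by the explicit sign choices in \req{doublef}/\req{lambdaq}, so your planned finite verification is the same computation the authors perform.
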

\begin{proof}
We first need to show that \req{lambdaq}, \req{upsilonf}, \req{finalpi} can be extended
to a linear bijection $\Theta\colon H_\ast(X,\Z)\longrightarrow N(-1)$. With notations as in 
proposition \ref{fullniemeierprop} and using \req{doublef}
along with \req{upsilonf} we see that $\Theta( \m K ) = \wt{\m K}_5$. Similarly, \req{finalpi}
implies $\Theta(\Pi)=\wt\Pi$ according to the proof of proposition \ref{piinn}, and since
$\Theta(\upsilon_0-\upsilon) = f_5$ by \req{upsilonf} we find $\Theta( \m P ) = \wt{\m P}_5$.
By proposition \ref{fullk3} we can now glue $H_\ast(X,\Z)$ from the sublattices
$\m K$ and $\m P$, while by proposition \ref{fullniemeierprop} we can glue $N$ from
the sublattices $\wt{\m K}_5$ and $\wt{\m P}_5$. By the discussion following proposition 
\ref{translationonN}, the linear bijection 
$\Theta\colon \m K\oplus \m P\longrightarrow \wt{\m K}_5\oplus \wt{\m P}_5(-1)$
is compatible with this gluing if and only if it respects \req{signchoices}. We have
ensured that this is the case by construction, namely by \req{lambdaq} and \req{upsilonf}. 
Hence under gluing,
$\Theta$ extends to a linear bijection $\Theta\colon H_\ast(X,\Z)\longrightarrow N(-1)$
as claimed.

Next we claim that $\Theta$ induces a primitive embedding 
$i_{\m T_{192}}\colon M_{\m T_{192}}\hookrightarrow N(-1)$ of the type constructed in
proposition \ref{translationonN}. By proposition \ref{latticechar}, the lattice
$M_{\m T_{192}}$ is the orthogonal complement of the lattice 
$\pi_\ast(H_2(T(\Lambda_{D_4}),\Z)^{\m T})\oplus\spann_\Z\{\upsilon_0+\upsilon\}$,
where $\m T$ is the binary tetrahedral group acting as non-translational holomorphic 
symplectic automorphism group of the $D_4$-torus. 
By proposition \ref{tetrahedralcpxstr} together with \req{homology}, \req{homology3}, 
the lattice $\pi_\ast(H_2(T(\Lambda_{D_4}),\Z)^{\m T})$  is generated by  
$I_1$, $ I_2$, $ I_3$ as defined there. Hence $M_{\m T_{192}}$ consists of
$$
\spann_\Z\left\{  I_1^{\perp}, I_2^{\perp}, I_3^{\perp}\right\} \oplus\m P,
$$
with  $I_k^{\perp},\, k=1,2,3$ as in \req{homologyperp}, along with the appropriate 
rational combinations of contributions from $K$ and $\Pi$ obtained by our gluing for 
generic Kummer surfaces. In particular, $M_{\m T_{192}}$ has rank $20$.
We need to show that $\Theta$ maps this lattice isometrically 
to a primitive sublattice of $N(-1)$. We already know that 
$\Theta(\m P)=\wt{\m P}_5(-1)$ isometrically, due to the above and the proof of proposition 
\ref{piinn}. 
Since we also know that $\Theta$ is compatible with the gluing of
$H_\ast(X,\Z)$ and $N$ from $\m P$ and $\wt{\m P}_5$ and their orthogonal complements, 
it suffices to show that  the 
lattice generated by the $I_k^{\perp},\, k\in\{1, 2, 3\}$, is isometric to the lattice generated 
by the three vectors $\Theta(I_k^{\perp}),\, k\in\{1, 2, 3\}$, up to an inversion of signature. 
In fact, note that by \req{doublef} we have
 $$
I_1^{\perp}\longmapsto  f_{19}-f_{15} ,\quad 
I_2^{\perp} \longmapsto  f_{9}-f_{15},\quad
I_3^{\perp} \longmapsto f_{24}-f_{15}, 
 $$
 yielding the quadratic form of the corresponding lattices with respect to these generators as
 \be\label{tetracompare}
 \left( \begin{array}{rrr}
 -4&-2&-2\\-2&-4&-2\\-2&-2&-4
 \end{array}\right)
 \quad \longmapsto\quad
  \left( \begin{array}{rrr}
 4&2&2\\2&4&2\\2&2&4
 \end{array}\right)
 \ee
on both sides, as required. 

Finally, equivariance with respect to $\m T_{192}$ follows by construction for the isometric embedding 
$i_{\m T_{192}}\colon M_{\m T_{192}}\hookrightarrow N(-1)$ obtained by restricting
$\Theta$ to $M_{\m T_{192}}$,
see proposition \ref{t192inM24}.  
\end{proof}

Note that the comparison of the quadratic forms in \req{tetracompare} uniquely determines the 
relative signs in front of $f_{15}$ in the formulas for the $I_k^\perp$, $k\in\{1,\, 2,\, 3\}$,
in \req{doublef}. Note furthermore that the remaining choices of signs in \req{doublef} amount to 
choosing one sign for each $f_n$ separately in the formulas for the $I_k^\perp$ as well as in the identifications
\be\label{doublefreduced}
\begin{array}{rcrcl}
J_1^\perp&:=&\pi_\ast\lambda_{14}-\pi_\ast\lambda_{23}
& \longmapsto& f_{24}-f_{23},\\[5pt]
J_2^\perp&:=&\pi_\ast\lambda_{13}+\pi_\ast\lambda_{24}
& \longmapsto& f_{19}-f_{6},\\[5pt]
J_3^\perp&:=&\pi_\ast\lambda_{34}-\pi_\ast\lambda_{12}
& \longmapsto& f_9-f_3.
\end{array}
\ee
We wish to emphasize that our gluing strategy leading to theorem \ref{Thetaok}
differs from that used by Kondo in
\cite{ko98}, which was explained in section \ref{autos}. Indeed, due to our
theorem \ref{biglattice}, instead of basing our construction on the lattices 
$L_G$ and $N_G$ which Kondo uses, we can work with the lattice $M_G$ which contains
$L_G$ and exceeds it by rank $2$. In particular, $M_G$ contains the Kummer lattice $\Pi$,
which simplifies our analysis greatly. Since the vector 
$e=\hf \sum_{\vec a\in\F_2^4} E_{\vec a}\in\Pi$  is invariant under every
holomorphic symplectic automorphism of our Kummer surface (with its
induced dual K\"ahler class), the Kummer lattice can never be contained 
in Kondo's lattice $L_G$. In this sense, our techniques improve Kondo's construction
for all Kummer surfaces with induced dual K\"ahler class.
%
\subsection{The {square Kummer} surface}\label{square}
%
Let us now carry out a similar analysis as for the tetrahedral Kummer surface $X_{D_4}$
for another example of an algebraic Kummer surface:
\begin{definition}\label{squaretorus}
We call the standard torus $T_0=T(\Lambda_0)=\C^2/\Lambda_0$
the \textsc{square torus}, where the generators of the lattice $\Lambda_0$ are given by
\be\label{standardgenerators}
\vec{\lambda}_1 = (1,0), \quad \vec{\lambda}_2=(i,0),\quad \vec{\lambda}_3=(0,1), \quad \vec{\lambda}_4=(0,i).
\ee
The Kummer surface with underlying torus $T_0$ is denoted by $X_0:=\widetilde{T_0/\Z_2}$, and
we call it the \textsc{square Kummer surface}.
\end{definition}
In the example discussed around \req{standardKummer} and \req{standardPolar}
we already found
\begin{prop}\label{standardcpxstr}
Consider the square Kummer surface $X_0$ 
of definition \mbox{\rm\ref{squaretorus}}, where we are
working with our usual marking described at the end of section \mbox{\rm\ref{exampleK3}}.

The real homology classes $\Omega_1,\, \Omega_2,\, \omega$ which
determine the complex structure of $X_0$, by means of the Torelli Theorem \mbox{\rm\ref{torellithm}},
and the induced dual K\"ahler class, which in fact yields a polarization,  are given by
$$
\Omega_1=\pi_\ast\lambda_{13}-\pi_\ast\lambda_{24},\quad \Omega_2=\pi_\ast\lambda_{14}+\pi_\ast\lambda_{23},\quad
\omega=\pi_\ast\lambda_{12}+\pi_\ast\lambda_{34}\in H(X_0,\Z).
$$
Furthermore, the  $3$-dimensional subspace $\Sigma$ of  $H_2(X_0,\R)$ 
containing these three 2-cycles yields a lattice $\Sigma\cap H_2(X_0,\Z)$  of \mbox{\rm(}the maximal possible\mbox{\rm)}
rank $3$, with quadratic form
$$
\left( \begin{array}{ccc}4&0&0\\0&4&0\\0&0&4\end{array}\right).
$$ 
\end{prop}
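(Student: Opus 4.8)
The plan is to read the three homology classes off the general formulas \req{homologyinv1} and \req{homologyinv2}, and then to compute their Gram matrix from the intersection form on $\pi_\ast(H_2(T,\Z))$. First I would note that for the square torus the generators \req{standardgenerators} are exactly the standard coordinate vectors, $\vec\lambda_i=\vec e_i$ for $i=1,\ldots,4$. Substituting $\vec e_i=\vec\lambda_i$ into \req{homologyinv1} and \req{homologyinv2} and applying $\pi_\ast$ then yields $\Omega_1=\pi_\ast\lambda_{13}-\pi_\ast\lambda_{24}$, $\Omega_2=\pi_\ast\lambda_{14}+\pi_\ast\lambda_{23}$ and $\omega=\pi_\ast\lambda_{12}+\pi_\ast\lambda_{34}$, which is the first assertion; this is essentially the computation already indicated in the example around \req{standardKummer} and \req{standardPolar}.

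Next I would compute the quadratic form. Since $\pi$ has degree $2$, the induced map obeys $\langle\pi_\ast\alpha,\pi_\ast\beta\rangle=2\langle\alpha,\beta\rangle_T$ for $\alpha,\beta\in H_2(T,\Z)$, so it suffices to record the intersection numbers on $H_2(T,\Z)\cong U^3$. With the standard orientation these are $\langle\lambda_{12},\lambda_{34}\rangle_T=+1$, $\langle\lambda_{13},\lambda_{24}\rangle_T=-1$, $\langle\lambda_{14},\lambda_{23}\rangle_T=+1$, all self-intersections vanish, and classes lying in distinct hyperbolic blocks $\{12,34\}$, $\{13,24\}$, $\{14,23\}$ pair to zero. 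A short expansion then gives $\langle\Omega_1,\Omega_1\rangle=\langle\Omega_2,\Omega_2\rangle=\langle\omega,\omega\rangle=4$ with all mixed products zero, so the Gram matrix of $\Omega_1,\Omega_2,\omega$ is $\diag(4,4,4)$; in particular $\Sigma=\spann_\R\{\Omega_1,\Omega_2,\omega\}$ comes out positive definite, as it must for a hyperk\"ahler structure.

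The step that requires genuine care — and which is the real content of the rank claim — is to show that $\Omega_1,\Omega_2,\omega$ form a $\Z$-basis of $\Sigma\cap H_2(X_0,\Z)$, not merely an $\R$-basis of $\Sigma$, so that $\diag(4,4,4)$ really is the quadratic form of the lattice. Here I would exploit the block structure of $K:=\pi_\ast(H_2(T,\Z))\cong U^3(2)$. Because all three classes lie in $K$, and $K$ is a primitive sublattice of $H_2(X_0,\Z)$ by proposition \ref{Kummerform}, we have $\Sigma\cap H_2(X_0,\Z)=\Sigma\cap K$. Moreover $\omega$, $\Omega_1$ and $\Omega_2$ lie respectively in the three mutually orthogonal hyperbolic summands $U(2)$ of $K$ spanned by $\{\pi_\ast\lambda_{12},\pi_\ast\lambda_{34}\}$, $\{\pi_\ast\lambda_{13},\pi_\ast\lambda_{24}\}$ and $\{\pi_\ast\lambda_{14},\pi_\ast\lambda_{23}\}$, and each is a \emph{primitive} vector of its summand, its coordinates being $(\pm1,\pm1)$ and hence coprime. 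Projecting an arbitrary element $a\Omega_1+b\Omega_2+c\omega\in K$ with $a,b,c\in\Q$ onto these orthogonal summands therefore forces $a,b,c\in\Z$. This gives $\Sigma\cap H_2(X_0,\Z)=\spann_\Z\{\Omega_1,\Omega_2,\omega\}$, of rank exactly $3$ (the maximal possible, since $\dim_\R\Sigma=3$) and with quadratic form $\diag(4,4,4)$, completing the proof.
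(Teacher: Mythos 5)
Your proposal is correct and follows essentially the same route as the paper, which gives no separate proof but simply points back to the examples around \req{standardKummer} and \req{standardPolar}, where the substitution $\vec e_i=\vec\lambda_i$ into \req{homologyinv1}--\req{homologyinv2} and the rescaled intersection form on $K\cong U^3(2)$ yield exactly these formulas. Your final paragraph, using primitivity of $K$ in $H_2(X_0,\Z)$ and the block structure of $U^3(2)$ to show that $\Omega_1,\Omega_2,\omega$ are a $\Z$-basis of $\Sigma\cap H_2(X_0,\Z)$, makes explicit a point the paper merely asserts, and it is correct.
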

One checks that the group of non-translational holomorphic symplectic 
automorphisms of the standard torus $T_0$ is 
the binary dihedral group $\m O$ of order $8$ mentioned in the proof of 
proposition \ref{sdproduct},
which is generated by
the symmetries
$$
\begin{array}{rcl}
\alpha_1\colon (z_1, z_2)&\mapsto&(iz_1, -iz_2),\\[5pt]
\alpha_2\colon (z_1, z_2)&\mapsto&(-z_2, z_1)
\end{array}
$$
on the universal cover $\C^2$ of $T_0$, and which induces an action of $(\Z_2)^2=\m O/\Z_2$ on
the Kummer surface $X_0$ obtained from $T_0$
(see \cite[Table 9]{fu88} or e.g.\ \cite[Thm.7.3.12]{diss}). By proposition \ref{sdproduct}
it then follows that the holomorphic symplectic automorphism group of 
$X_0$ is the group $G_t\rtimes (\Z_2)^2$ of order $64$, where $G_t\cong(\Z_2)^4$ 
is the translational automorphism group of definition \ref{defgeneric}.

Note that in terms of standard Euclidean coordinates
of $\C^2$, the symmetries $\alpha_1$ and $\alpha_2$ agree with $\gamma_1$, 
respectively $\gamma_2$
in (\ref{t4gen}). However, the induced actions on the lattice $H_2(T_0,\Z)$ and thereby on 
$H_\ast(X_0,\Z)$ are different, since the defining lattices of the underlying tori are different.
Indeed, one immediately checks that the induced actions of $\alpha_1$ and $\alpha_2$ on the
generators $\vec\lambda_1, \ldots, \vec\lambda_4$ of $\Lambda_0$ as in \req{standardgenerators} 
are the following:
\be\label{alphaaction}
\begin{array}{lllll}
\alpha_1\colon 
&\vec\lambda_1\mapsto \vec\lambda_2,&  \vec\lambda_2\mapsto -\vec\lambda_1,
&\vec\lambda_3\mapsto -\vec\lambda_4, &\vec\lambda_4\mapsto \vec\lambda_3,\\[5pt]
\alpha_2\colon 
& \vec\lambda_1\mapsto \vec\lambda_3,& \vec\lambda_2\mapsto \vec\lambda_4,
&\vec\lambda_3\mapsto -\vec\lambda_1, &\vec\lambda_4\mapsto -\vec\lambda_2.
\end{array}
\ee
We therefore obtain
\begin{prop}\label{squareautos}
Consider the square Kummer surface $X_0$ 
of definition \mbox{\rm\ref{squaretorus}}, equipped with the induced polarization. 

The symmetry group of $X_0$ is the group
$\m T_{64}:=G_t\rtimes (\Z_2)^2$ of order $64$, where 
$G_t\cong(\Z_2)^4$ is the translational automorphism group of
definition \mbox{\rm\ref{defgeneric}}. The induced action of the non-translational quotient group $(\Z_2)^2$ on
$H_2(T_0,\Z)$ with respect to the basis $\lambda_{ij}$ of \mbox{\rm\req{lambdaq}} is generated by
\be \label{alpha1H2T}
\left( \begin{array}{c} \alpha_1(\lambda_{12})\\ \alpha_1(\lambda_{13})\\ \alpha_1(\lambda_{14})\\ \alpha_1(\lambda_{23})\\  \alpha_1(\lambda_{24})\\ \alpha_1(\lambda_{34})  \end{array}\right)  =
\left(  \begin{array}{cccccc} 
1&0&0&0&0&0\\
0&0&0&0&-1&0\\
0&0&0&1&0&0\\
0&0&1&0&0&0\\
0&-1&0&0&0&0\\
0&0&0&0&0&1 \end{array}\right) 
\left( \begin{array}{c} 
\lambda_{12}\\  \lambda_{13}\\ \lambda_{14}\\ \lambda_{23}\\ \lambda_{24}\\\lambda_{34} \end{array}\right),
\ee
\be\label{alpha2H2T}
\left( \begin{array}{c} \alpha_2(\lambda_{12})\\ \alpha_2(\lambda_{13})\\ \alpha_2(\lambda_{14})\\ \alpha_2(\lambda_{23})\\  \alpha_2(\lambda_{24})\\ \alpha_2(\lambda_{34})  \end{array}\right) = 
\left(  \begin{array}{cccccc} 
0&0&0&0&0&1\\
0&1&0&0&0&0\\
0&0&0&1&0&0\\
0&0&1&0&0&0\\
0&0&0&0&1&0\\
1&0&0&0&0&0 
\end{array}\right) \left( \begin{array}{c} 
\lambda_{12}\\  \lambda_{13}\\ \lambda_{14}\\ \lambda_{23}\\ \lambda_{24}\\\lambda_{34} \end{array}\right).
\ee
\end{prop}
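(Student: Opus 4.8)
The plan is to split the argument into a group-theoretic part, establishing $G\cong\m T_{64}=G_t\rtimes(\Z_2)^2$ of order $64$, and a computational part producing the two matrices \req{alpha1H2T} and \req{alpha2H2T}. For the structural part I would simply specialise proposition \ref{sdproduct}, which gives $G\cong G_t\rtimes G_T$ with $G_t\cong(\Z_2)^4$ the translational automorphism group of definition \ref{defgeneric} and $G_T\cong G_T^\prime/\Z_2$, where $G_T^\prime$ is the group of non-translational holomorphic symplectic automorphisms of $T_0$. As recalled in the discussion preceding the proposition, Fujiki's classification \cite[Lemma 3.1\&3.2]{fu88} identifies $G_T^\prime$ for the square torus with the binary dihedral group $\m O$ of order $8$ generated by $\alpha_1,\alpha_2$, and $\m O/\Z_2\cong(\Z_2)^2$. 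Hence $G_T\cong(\Z_2)^2$ and $|G|=|G_t|\cdot|G_T|=16\cdot 4=64$, so $G\cong\m T_{64}$.

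For the matrices I would use that on a complex torus $H_2(T_0,\Z)$ is the second exterior power of $H_1(T_0,\Z)$, so that the linear action of each $\alpha_k$ on the universal cover $\C^2$ induces on $H_2(T_0,\Z)$ exactly the wedge square of its action on $H_1(T_0,\Z)$. Concretely, with $\lambda_{ij}=\lambda_i\vee\lambda_j$ as in \req{lambdaij} and using bilinearity of $\vee$ together with antisymmetry $\lambda_j\vee\lambda_i=-\lambda_{ij}$, I would evaluate $\alpha_k(\lambda_{ij})=\alpha_k(\lambda_i)\vee\alpha_k(\lambda_j)$ by substituting the generator action from \req{alphaaction}. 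For example $\alpha_1(\lambda_{13})=\lambda_2\vee(-\lambda_4)=-\lambda_{24}$ and $\alpha_2(\lambda_{34})=(-\lambda_1)\vee(-\lambda_2)=\lambda_{12}$; performing this for the six basis vectors $\lambda_{12},\lambda_{13},\lambda_{14},\lambda_{23},\lambda_{24},\lambda_{34}$ reproduces the entries of \req{alpha1H2T} and \req{alpha2H2T} row by row.

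The computation is routine, so the only genuine care lies in the sign bookkeeping arising from the antisymmetry of $\vee$ combined with the minus signs in \req{alphaaction}; I expect this to be the main, though mild, obstacle. To guard against sign errors I would run two consistency checks. First, I would confirm that each matrix is an isometry of the intersection form on $H_2(T_0,\Z)\cong U^3$ and that $\alpha_1^2=\alpha_2^2$ induces the orbifold involution $(z_1,z_2)\mapsto(-z_1,-z_2)$ of the Kummer construction, which pins down both $\m O$ and its quotient $(\Z_2)^2$. Second, using proposition \ref{standardcpxstr} I would verify that both matrices fix $\Omega_1,\Omega_2$ and the polarization $\omega=\pi_\ast\lambda_{12}+\pi_\ast\lambda_{34}$, confirming that $\alpha_1,\alpha_2$ indeed act as holomorphic symplectic automorphisms preserving the induced polarization in the sense of definition \ref{symmgrp}.
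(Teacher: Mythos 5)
Your proposal is correct and follows essentially the same route as the paper: the group structure is obtained by specialising proposition \ref{sdproduct} with $G_T'\cong\m O$ (the binary dihedral group of order $8$) for the square torus, and the matrices come from computing $\alpha_k(\lambda_{ij})=\alpha_k(\lambda_i)\vee\alpha_k(\lambda_j)$ via \req{alphaaction}; your sample evaluations and all sign bookkeeping check out against \req{alpha1H2T} and \req{alpha2H2T}. The extra consistency checks you propose (isometry of the intersection form, $\alpha_1^2=\alpha_2^2$ giving the orbifold involution, invariance of $\Omega_1,\Omega_2,\omega$) are sensible safeguards not present in the paper's derivation but fully compatible with it.
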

We now investigate the compatibility of the linear bijection
$\Theta$ of theorem \ref{Thetaok}
with the symmetry group 
$\m T_{64}$ of $X_0$:
\begin{prop}\label{alphaactionprop}
Consider the square Kummer surface $X_0$ with its holomorphic symplectic automorphism group 
$\m T_{64}$ according to proposition \mbox{\rm\ref{squareautos}}.
Let $M_{\m T_{64}}$ denote the sublattice of $H_\ast(X,\Z)$ introduced in theorem \mbox{\rm\ref{biglattice}},
and let $i_{\m T_{64}}\colon M_{\m T_{64}}\longrightarrow N(-1)$ denote the restriction of the map 
$\Theta$ of theorem \mbox{\rm\ref{Thetaok}} to $M_{\m T_{64}}$.

The map $i_{\m T_{64}}$ is equivariant with respect to the symmetry 
group $\m T_{64}$, where the action of $\iota_1,\, \ldots,\, \iota_4$
on $N$ is given by \mbox{\rm\req{genericc24}}, and the generators $\alpha_1,\, \alpha_2$ act by 
\be\label{alphapermutations}
\begin{array}{rcl}
\alpha_1&=&(4,8)(6,19)(10,20)(11,13)(12,22)(14,17)(16,18)(23,24),\\[5pt]
\alpha_2&=&(2,21)(3,9)(4,8)(10,12)(11,14)(13,17)(20,22)(23,24).
\end{array}
\ee
\end{prop}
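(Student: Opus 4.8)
The plan is to prove the proposition by a single direct verification: show that the one fixed linear bijection $\Theta$ of theorem \ref{Thetaok} is \emph{already} equivariant when restricted to the lattice $M_{\m T_{64}}$ of the square Kummer surface. Since $\Theta$ is defined once and for all by the surface-independent formulas \req{lambdaq}, \req{upsilonf} and \req{finalpi}, the whole statement is a finite check. First I would pin down the correct generators of $M_{\m T_{64}}$, then confirm that the permutations in \req{alphapermutations} preserve the Golay code (so that by proposition \ref{M24onN} they define automorphisms of $N$), and finally check the intertwining relation $\Theta(g_\ast v)=g_\ast(\Theta v)$ generator by generator, splitting the roots into the two blocks $\m O_9$ and $\m I\setminus\m O_9$.

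First I would describe $M_{\m T_{64}}$ explicitly. By proposition \ref{squareautos} the non-translational group is $G_T^\prime=\m O$, and by proposition \ref{standardcpxstr} the invariant lattice $H_2(T_0,\Z)^{\m O}$ is spanned by $\lambda_{13}-\lambda_{24}$, $\lambda_{14}+\lambda_{23}$, $\lambda_{12}+\lambda_{34}$. Its orthogonal complement in $K$ is therefore spanned by the anti-invariant vectors $J_1^\perp,J_2^\perp,J_3^\perp$ of \req{doublefreduced}, so that by proposition \ref{latticechar} one has $M_{\m T_{64}}=\spann_\Z\{J_1^\perp,J_2^\perp,J_3^\perp\}\oplus\m P$ together with the rational gluing vectors of section \ref{genericKummer}; in particular $\rk(M_{\m T_{64}})=3+17=20$. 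That $i_{\m T_{64}}=\Theta|_{M_{\m T_{64}}}$ is an isometric embedding then follows from theorem \ref{Thetaok}, which already gives $\Theta(\m P)=\wt{\m P}_5(-1)$ isometrically and compatibly with gluing, together with the quadratic-form comparison
\be\label{squarecompare}
\left( \begin{array}{rrr} -4&0&0\\0&-4&0\\0&0&-4 \end{array}\right)
\quad\longmapsto\quad
\left( \begin{array}{rrr} 4&0&0\\0&4&0\\0&0&4 \end{array}\right)
\ee
which records that $\Theta$ sends $J_1^\perp,J_2^\perp,J_3^\perp$ to the pairwise orthogonal roots $f_{24}-f_{23}$, $f_{19}-f_6$, $f_9-f_3$ in $N$, realising an isometry up to signature reversal.

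With this in hand I would verify equivariance on a generating set of $M_{\m T_{64}}$. On the Kummer lattice $\Pi$ the identity $\Theta(E_{\vec a})=f_{I^{-1}(\vec a)}$ of \req{finalpi} reduces the claim to matching the restriction of $\alpha_k$ to $\m I\setminus\m O_9$ with the permutation $\wh\alpha_k$ induced, through the map $I$ of \req{map}, by the affine action of $\alpha_k$ on the hypercube $\F_2^4$ coming from \req{alphaaction}; this is the exact analogue of the computation \req{gammahat}. On the vector $\upsilon_0-\upsilon\mapsto f_5$ equivariance is immediate, since $\alpha_1,\alpha_2$ fix the label $5$ and holomorphic symplectic automorphisms fix $\upsilon_0,\upsilon$. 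On the anti-invariant generators $J_k^\perp$ I would combine \req{alpha1H2T}, \req{alpha2H2T} with \req{doublefreduced}: one finds $\alpha_1(J_1^\perp)=-J_1^\perp$, $\alpha_1(J_2^\perp)=-J_2^\perp$, $\alpha_1(J_3^\perp)=J_3^\perp$, and applying the $\m O_9$-part $(6,19)(23,24)$ of $\alpha_1$ to $f_{24}-f_{23}$, $f_{19}-f_6$, $f_9-f_3$ reproduces precisely $-(f_{24}-f_{23})$, $-(f_{19}-f_6)$, $f_9-f_3$; the analogous check for $\alpha_2$ uses its $\m O_9$-part $(3,9)(23,24)$. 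The translational generators $\iota_1,\ldots,\iota_4$ are inherited verbatim from proposition \ref{translationonN}, and since each $\iota_k$ fixes every label of $\m O_9$, equivariance on the $J_k^\perp$ is automatic; together the $\iota_k$ and $\alpha_k$ furnish the full $\m T_{64}$-action.

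The conceptual point I would stress, and the only real subtlety, is that equivariance must be tested on the anti-invariant combinations $J_k^\perp$ and \emph{not} on the individual $\pi_\ast\lambda_{ij}$: the latter are not elements of $M_{\m T_{64}}$, since they mix the $\m T_{64}$-invariant and anti-invariant parts of $K$, and indeed no pure permutation of the $f_n$ intertwines the $\alpha_k$-action with $\Theta$ on all of $K$. It is exactly because each $\Theta(J_k^\perp)$ is a difference $f_m-f_n$ of roots that the sign changes produced by $\alpha_k$ on $H_2(T_0,\Z)$ are absorbed by the pure transpositions $(6,19)(23,24)$ and $(3,9)(23,24)$, with no auxiliary sign flips needed. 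The most laborious ingredient is the hypercube computation fixing $\wh\alpha_1,\wh\alpha_2$ and the attendant Golay verification via the MOG techniques of appendix \ref{MOG}, but this is entirely parallel to the tetrahedral case leading to \req{gammahat} and \req{t192}, so no genuinely new difficulty arises.
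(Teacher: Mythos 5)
Your proposal is correct and follows essentially the same route as the paper: compute $\wh\alpha_k$ on $\m I\setminus\m O_9$ from the induced affine action on the hypercube via $I$, determine the $\m O_9$-part from the induced action of $\alpha_k$ on $H_2(T_0,\Z)$, and verify that the resulting permutations preserve the Golay code. The only (harmless) difference is bookkeeping: the paper fixes the $\m O_9$-part by requiring the set-level action on the quadruplets $Q_{ij}$ to match \req{alpha1H2T}--\req{alpha2H2T}, deferring the sign check to theorem \ref{moreTheta}, whereas you read it off directly from the anti-invariant generators $J_k^\perp$ and their images $f_{24}-f_{23},\,f_{19}-f_6,\,f_9-f_3$, which packages the sign consistency into the same step.
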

\begin{proof}
The equivariance of $i_{\m T_{64}}$ with respect to  $\iota_1,\, \ldots,\, \iota_4$ follows immediately from
proposition \ref{translationonN}. 

For the generators $\alpha_1,\,\alpha_2$ of $\m T_{64}$, we find that 
the transformations (\ref{alphaaction}) induce permutations $[\alpha_k]$ 
on the singular points of $T_0/\Z_2$ and thus on the elements of our hypercube 
$\F_2^4$. For the induced permutations $\wh\alpha_k$ 
on $\m I\setminus\m O_9$, imposing equivariance of $i_{\m T_{64}}$ amounts to
the condition $\wh\alpha_k(n)=[\alpha_k](I(n))$ for all $n\in\m I\setminus\m O_9$,
where $I$ is  the map \req{map}. We obtain 
$$
\begin{array}{rcl}
\wh{\alpha}_1&=&(4,8)(10,20)(11,13)(12,22)(14,17)(16,18),\\[5pt]
\wh{\alpha}_2&=&(2,21)(4,8)(10,12)(11,14)(13,17)(20,22).
\end{array}
$$
From \req{alpha1H2T} we obtain the induced action of $\alpha_1$
on the $\pi_\ast\lambda_{ij}$, such that equivariance of \req{signchoices}
implies that the induced action of $\alpha_1$ on the $Q_{ij}$ must 
fix $Q_{12}$ and $Q_{34}$ and yield
$Q_{13}\leftrightarrow Q_{24}, \; Q_{14}\leftrightarrow Q_{23}$. Given
our choices for the $Q_{ij}$ of
proposition \ref{Ibarsolution} which lead to \req{lambdaq}, this 
action amounts to the permutation $\tau_1:=(6,19)(23,24)$.

Similarly, \req{alpha2H2T} implies that $\alpha_2$ must fix $Q_{13}$ and $Q_{24}$ 
and must induce
$Q_{12}\leftrightarrow Q_{34}, \; Q_{14}\leftrightarrow Q_{23}$, which amounts to
the permutation
$\tau_2:=(3,9)(23,24)$.
One now checks that the permutations $\alpha_k:=\wh\alpha_k\circ\tau_k$, 
which agree with the permutations $\alpha_1,\,\alpha_2$ in our claim \req{alphapermutations}, 
are elements of the Mathieu group $M_{24}$.
\end{proof}

We actually arrive at the surprising results of
\begin{theorem}\label{moreTheta}
Consider the square Kummer surface $X_0$
of definition \mbox{\rm\ref{squaretorus}} with its holomorphic symplectic automorphism group 
$\m T_{64}$.
Let $M_{\m T_{64}}$ denote the sublattice of $H_\ast(X,\Z)$ introduced in theorem \mbox{\rm\ref{biglattice}},
and let $i_{\m T_{64}}$ denote the restriction of the map 
$\Theta$ of theorem \mbox{\rm\ref{Thetaok}} to $M_{\m T_{64}}$.

The map $i_{\m T_{64}}$ yields an isometric embedding 
$i_{\m T_{64}}\colon M_{\m T_{64}}\hookrightarrow N(-1)$ of the type constructed in 
proposition \mbox{\rm\ref{translationonN}}. Moreover, $i_{\m T_{64}}$
is equivariant with respect to the group $\m T_{64}$, where on
$H_\ast(X,\Z)$ the action of $\m T_{64}$
is induced by its action on $X_0$ as holomorphic symplectic automorphism
group, and on the Niemeier lattice $N$ of type $A_1^{24}$ its action
is generated by $\iota_1,\,\ldots,\iota_4,\,\alpha_1,\,\alpha_2$
as in \mbox{\rm\req{genericc24}} and \mbox{\rm\req{alphapermutations}}. The lattice $M_{\m T_{64}}$ has rank $20$.
\end{theorem}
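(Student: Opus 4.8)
The plan is to re-use verbatim the single linear bijection $\Theta$ produced in theorem \ref{Thetaok} and merely to verify that it happens to embed $M_{\m T_{64}}$ isometrically as well. First I would stress that $\Theta\colon H_\ast(X,\Z)\longrightarrow N(-1)$ is one fixed $\Z$-linear bijection, independent of which Kummer surface we look at, because the marking identifies every $H_\ast(X,\Z)$ with the same standard lattice. Theorem \ref{Thetaok} already tells us that $\Theta$ respects the gluing of $H_\ast(X,\Z)$ from $\m P$ and $\m K$ against the gluing of $N$ from $\wt{\m P}_5$ and $\wt{\m K}_5$ (i.e.\ it satisfies \req{signchoices} with $n_0=5$), that $\Theta(\m P)=\wt{\m P}_5(-1)$ isometrically, and that $\Theta$ restricts to $E_{\vec a}\mapsto f_{I^{-1}(\vec a)}$ on $\Pi$. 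Since $\Theta$ is an isomorphism of abelian groups it carries the torsion-free quotient $H_\ast(X,\Z)/M_{\m T_{64}}$ isomorphically onto $N(-1)/\Theta(M_{\m T_{64}})$, so primitivity of the image will follow automatically from primitivity of $M_{\m T_{64}}$, which holds by proposition \ref{latticechar}.

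The second step is to pin down $M_{\m T_{64}}$ explicitly. By proposition \ref{latticechar} it is the orthogonal complement of $\pi_\ast(H_2(T_0,\Z)^{\m O})\oplus\spann_\Z\{\upsilon_0+\upsilon\}$, where $\m O$ is the binary dihedral group of order $8$ of proposition \ref{squareautos}. Reading off the fixed space of the matrices \req{alpha1H2T} and \req{alpha2H2T}, the invariant lattice $\pi_\ast(H_2(T_0,\Z)^{\m O})$ is the rank-$3$ lattice generated by $\Omega_1,\Omega_2,\omega$ of proposition \ref{standardcpxstr}, and a direct check of the pairing shows that its orthogonal complement inside $\pi_\ast(H_2(T_0,\Z))$ is exactly $\spann_\Z\{J_1^\perp,J_2^\perp,J_3^\perp\}$ with the $J_k^\perp$ of \req{doublefreduced}. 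Hence $M_{\m T_{64}}=\spann_\Z\{J_1^\perp,J_2^\perp,J_3^\perp\}\oplus\m P$, together with the rational gluing contributions, a lattice of rank $17+3=20$.

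The heart of the argument is the isometry check \req{squarecompare}. I would compute the intersection pairings of $J_1^\perp,J_2^\perp,J_3^\perp$ in $H_\ast(X,\Z)$, noting that the three hyperbolic blocks of $\pi_\ast H_2(T_0,\Z)$ contribute pairings $\pm2$ with alternating sign, so that each $J_k^\perp$ has self-intersection $-4$ and distinct ones are orthogonal, and then compare with the images $f_{24}-f_{23},\,f_{19}-f_6,\,f_9-f_3$ of \req{doublefreduced}, whose Gram matrix in $N$ is $\diag(4,4,4)$. The two forms agree up to the overall signature reversal built into $N(-1)$, so $\Theta$ restricts to an isometry on $\spann_\Z\{J_1^\perp,J_2^\perp,J_3^\perp\}$. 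Combined with $\Theta(\m P)=\wt{\m P}_5(-1)$ and the gluing compatibility inherited from theorem \ref{Thetaok}, this shows $\Theta$ maps all of $M_{\m T_{64}}$ isometrically into $N(-1)$; as the image contains $\wt{\m P}_5$ and restricts correctly on $\Pi$, it is an embedding of the type produced in proposition \ref{translationonN}, and it is primitive by the quotient argument above. Finally, equivariance with respect to $\m T_{64}$ is precisely proposition \ref{alphaactionprop}, the non-translational generators acting through $\alpha_1,\alpha_2$ of \req{alphapermutations} (both of which fix the label $5$, consistent with the invariance of $\upsilon_0-\upsilon\mapsto f_5$).

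The genuine obstacle is not any isolated computation but the \emph{a priori} unexpected fact that the sign and $Q_{ij}$ choices of \req{identifymuij} and \req{lambdaq}, which were fixed to make $\Theta$ work for the tetrahedral surface, simultaneously yield an exact isometry for the square surface. I expect the delicate point to be verifying that the same six images $2q_{ij}$ are at once $\alpha_1$- and $\alpha_2$-equivariant, so that no re-choice of signs is forced upon us; this is exactly what proposition \ref{alphaactionprop} secures, and once that compatibility is in hand the remaining Gram-matrix comparison \req{squarecompare} is routine.
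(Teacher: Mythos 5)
Your proposal is correct and follows essentially the same route as the paper's own proof: identify $M_{\m T_{64}}$ via proposition \ref{latticechar} and proposition \ref{standardcpxstr} as $\spann_\Z\{J_1^\perp,J_2^\perp,J_3^\perp\}\oplus\m P$ plus gluing contributions, reduce the isometry claim to the Gram-matrix comparison \req{squarecompare} using the already-established facts $\Theta(\m P)=\wt{\m P}_5(-1)$ and gluing compatibility from theorem \ref{Thetaok}, and quote proposition \ref{alphaactionprop} for equivariance. Your additional remark that primitivity of the image follows from $\Theta$ being a group isomorphism is a small but valid supplement that the paper leaves implicit.
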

\begin{proof}
By proposition \ref{latticechar}, the lattice
$M_{\m T_{64}}$ is the orthogonal complement of the lattice 
$\pi_\ast(H_2(T_0,\Z)^{\m O})\oplus\spann_\Z\{\upsilon_0+\upsilon\}$,
where $\m O$ is the binary dihedral group of order $8$ acting as non-translational 
holomorphic symplectic automorphism group of the  square torus. 
By proposition \ref{standardcpxstr} 
the lattice $\pi_\ast(H_2(T_0,\Z)^{\m O})$  is generated by 
$J_1:=\pi_\ast\lambda_{13}-\pi_\ast\lambda_{24},\, 
J_2:=\pi_\ast\lambda_{14}+\pi_\ast\lambda_{23},\,
J_3:=\pi_\ast\lambda_{12}+\pi_\ast\lambda_{34}$.
Hence the lattice $M_{\m T_{64}}$ consists of
$$
\spann_\Z\left\{  J_1^{\perp}, J_2^{\perp}, J_3^{\perp}\right\} \oplus\m P,
$$
with  $J_k^{\perp},\, k=1,2,3$ as in \req{doublefreduced}, along with the appropriate 
rational combinations of contributions from $K$ and $\Pi$ obtained by our gluing for 
generic Kummer surfaces. In particular, $M_{\m T_{64}}$ has rank $20$.

We need to show that $\Theta$ maps this lattice isometrically 
to a primitive sublattice of $N(-1)$. We already know that 
$\Theta(\m P)=\wt{\m P}_5(-1)$ isometrically, due to the  proof of theorem 
\ref{Thetaok}. 
Since we also know that $\Theta$ is compatible with the gluing of
$H_\ast(X,\Z)$ and $N$ from $\m P$ and $\wt{\m P}_5$ and their orthogonal complements
by that same proof,
it suffices to show that  the 
lattice generated by the $J_k^{\perp},\, k\in\{1, 2, 3\}$, is isometric to the lattice generated 
by the three vectors $\Theta(J_k^{\perp}),\, k\in\{1, 2, 3\}$, up to an inversion of signature. 
In fact, note that by \req{doublef} we have
$$
J_1^\perp\longmapsto f_{24}-f_{23},
\quad J_2^\perp \longmapsto  f_{19}-f_6, \quad  J_3^\perp \longmapsto f_9-f_{3},
$$
 yielding the quadratic form of the corresponding lattices with respect to these generators as
 \be\label{squarecompare}
 \left( \begin{array}{rrr}
 -4&0&0\\0&-4&0\\0&0&-4
 \end{array}\right)
 \quad \longmapsto\quad
  \left( \begin{array}{rrr}
 4&0&0\\0&4&0\\0&0&4
 \end{array}\right)
 \ee
on both sides, as required. 

Finally, equivariance with respect to $\m T_{64}$ now follows from proposition \ref{alphaactionprop}.  
\end{proof}

Summarising, we have confirmed the surprising fact that the same bijection $\Theta$
of theorem \ref{Thetaok} is compatible with the symmetry groups of two \textsl{distinct} Kummer surfaces, namely the 
tetrahedral Kummer surface and the square Kummer surface.
%
\subsection{Generating the overarching finite symmetry group of Kummer surfaces}\label{overmen}
%
Let us now discuss the consequences of theorems \ref{Thetaok} and \ref{moreTheta},
which state that our map $\Theta$ induces two 
isometric embeddings $M_{\m T_{192}}\hookrightarrow N(-1)$
and $M_{\m T_{64}}\hookrightarrow N(-1)$. We can view $\Theta$ as an 
\textsc{overarching bijection} for these two embeddings. Theorems \ref{Thetaok} 
and \ref{moreTheta} also show that $\Theta$ is compatible with enforcing
equivariance on each  $i_{\m T_k}$, {$k\in\{64, 192\}$}, with respect to the symmetry group $\m T_k$,
inducing actions of these groups on the Niemeier lattice $N$ of type $A_1^{24}$ as 
subgroups of the Mathieu group
$M_{24}$. Hence the lattice $N$ serves as a device which carries both these actions
simultaneously, allowing us to make sense of combining these groups to a bigger 
group: 
\begin{theorem}\label{totalgroup}
Consider the subgroup $M_\Pi$ of the Mathieu group $M_{24}$ which is
obtained by the combined actions of $\m T_{192}$ and $\m T_{64}$
of theorems \mbox{\rm\ref{Thetaok}}
and \mbox{\rm\ref{moreTheta}} on the Niemeier lattice $N$ of type $A_1^{24}$. 
This group is isomorphic to the overarching finite symmetry
group $(\Z_2)^4\rtimes A_7$ of Kummer surfaces of definition \mbox{\rm\ref{overarch}}. 

If $X$ is an arbitrary Kummer surface with
induced dual K\"ahler class and holomorphic symplectic automorphism group $G$,
let $i_G\colon M_G\hookrightarrow N(-1)$ denote an isometric
embedding of the lattice $M_G$ of theorem \mbox{\rm\ref{biglattice}} according
to proposition \mbox{\rm\ref{translationonN}} such that $i_G(\m P)=\wt{\m P}_{n_0}(-1)$
with $n_0\in\m O_9$. 
By enforcing $G$-equivariance 
on $i_G$, the group $G$ acts faithfully on the Niemeier lattice
$N$. If $n_0=5$, then this realises $G$
 as a subgroup of the group $M_\Pi$. If $n_0\neq5$, then $G$ is a subgroup
 of a conjugate of $M_\Pi$ within the subgroup
 $(\Z_2)^4\rtimes A_8$ of $M_{24}$ which stabilizes the octad $\OOO_9$. 
 In this sense, $(\Z_2)^4\rtimes A_7$
contains the holomorphic symplectic automorphism group of
every Kummer surface with induced dual K\"ahler class.
\end{theorem}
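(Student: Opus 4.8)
The plan is to reduce the whole statement to a computation inside the octad stabilizer $(\Z_2)^4\rtimes A_8$ of $\OOO_9$ in $M_{24}$, which by \cite[Thm.~2.10]{co71} (see the discussion at the end of section \ref{Kummerautos}) acts as $\Aff(\F_2^4)=(\Z_2)^4\rtimes\GL_4(\F_2)$ on the hypercube $\m I\setminus\OOO_9$ and as $A_8$ on the eight labels of $\OOO_9$, with the translational normal subgroup $(\Z_2)^4$ acting as the kernel of the $A_8$-action. First I would observe that every generator listed in \req{t192} and \req{alphapermutations} fixes $\OOO_9$ setwise, so $M_\Pi\subset(\Z_2)^4\rtimes A_8$, and that $\langle\iota_1,\ldots,\iota_4\rangle=G_t\cong(\Z_2)^4$ is exactly this translational normal subgroup. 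Reading off the action of $\gamma_1,\gamma_2,\gamma_3,\alpha_1,\alpha_2$ on $\OOO_9$ yields permutations of $\{3,6,9,15,19,23,24\}$ all fixing the label $5$, so the image of $M_\Pi$ in $A_8$ lies in the point stabilizer $A_7=\mathrm{Stab}_{A_8}(5)$.

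The heart of the argument, and the step I expect to be the main obstacle, is to verify that these five images generate all of $A_7$ rather than a proper subgroup. This is a finite permutation-group computation: their images act transitively on the seven labels, contain the $3$-cycle $\gamma_3=(9,19,24)$, and one checks (by hand or machine) that the subgroup they generate has order $2520=|A_7|$. Granting this, $M_\Pi$ surjects onto $A_7$ with kernel the translational $(\Z_2)^4$; since the extension $(\Z_2)^4\rtimes A_8$ is split so is its point stabilizer, whence $M_\Pi\cong(\Z_2)^4\rtimes A_7$ of order $16\cdot 2520=40320$, and $M_\Pi$ is precisely the full stabilizer in $(\Z_2)^4\rtimes A_8$ of the label $5$.

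For an arbitrary Kummer surface $X$ with symmetry group $G$ I would first record faithfulness. The action of $G$ on $L_G\subset M_G$ is faithful: an element fixing $L_G$ pointwise also fixes $L^G=H_\ast(X,\Z)^G$ pointwise by definition, hence fixes the finite-index sublattice $L^G\oplus L_G$ and so all of $H_\ast(X,\Z)$, whence it is the identity by the Torelli Theorem \ref{niktorelli}. As $i_G$ is an equivariant embedding, $G$ then acts faithfully on $N$. To see that this action descends injectively to $M_{24}=\Aut(N)/(\Z_2)^{24}$, I would use that by proposition \ref{kummeraffine} each $g\in G$ acts on $\Pi$ by $E_{\vec a}\mapsto E_{A(\vec a)}$ for an affine map $A$, while by proposition \ref{translationonN} $i_G(E_{\vec a})=f_{I^{-1}(\vec a)}$ with positive sign; hence $g$ permutes the roots $\{f_n\mid n\not\in\OOO_9\}$ without sign changes. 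If the permutation part of $g$ were trivial, then $A=\mbox{id}$, and so $g=\mbox{id}$ since $A$ determines $f$ (proposition \ref{kummeraffine}). Thus $G\hookrightarrow M_{24}$.

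Finally I would locate the image. Since $G$ preserves $\Pi$ (proposition \ref{kummeraffine}) and $i_G(\Pi)=\wt\Pi$, whose only roots are the $\pm f_n$ with $n\not\in\OOO_9$ by \req{nieroot} and the proof of proposition \ref{piinn}, the group $G$ permutes these $16$ roots and therefore stabilizes $\OOO_9$, giving $G\subset(\Z_2)^4\rtimes A_8$. Moreover $G$ fixes $\upsilon_0$ and $\upsilon$ individually, so $i_G(\upsilon_0-\upsilon)=\pm f_{n_0}$ is $G$-invariant and the label $n_0$ is fixed; hence $G$ lies in the $(\Z_2)^4\rtimes A_7$ stabilizing $n_0$, which contains $G_t$ by proposition \ref{findgeneric}. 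For $n_0=5$ this stabilizer equals $M_\Pi$ by the first step, so $G\subset M_\Pi$; for $n_0\neq5$ it is the conjugate of $M_\Pi$ by any element of $(\Z_2)^4\rtimes A_8$ carrying $n_0$ to $5$. In either case $G$ embeds into a copy of $(\Z_2)^4\rtimes A_7$, which establishes the final assertion.
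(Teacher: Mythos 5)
Your proposal is correct and follows essentially the same route as the paper's own proof: identify $M_\Pi$ inside the octad stabilizer $(\Z_2)^4\rtimes A_8$ of $\OOO_9$, check that the $\iota_k$ give the full translational $(\Z_2)^4$ while the restrictions of $\gamma_1,\gamma_2,\gamma_3,\alpha_1,\alpha_2$ to $\OOO_9$ fix $5$ and generate $A_7$, and then show an arbitrary $G$ stabilizes $\OOO_9$ and fixes $n_0$, hence lands in $M_\Pi$ or a conjugate. You supply slightly more detail than the paper on the faithfulness of the $G$-action, on the descent to $M_{24}=\Aut(N)/(\Z_2)^{24}$, and on verifying that the images generate all of $A_7$ (where transitivity of prime degree $7$ plus the $3$-cycle $(9,19,24)$ already forces $A_7$ by Jordan's theorem), points the paper treats as immediate or handles elsewhere.
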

\begin{proof} 
By theorems \ref{Thetaok} and \ref{moreTheta}, the group $M_\Pi$ 
is generated by $\iota_1, \ldots, \iota_4,\, \gamma_1,\, \gamma_2,\, \gamma_3$
as in \req{t192} and $\alpha_1,\, \alpha_2$ as in \req{alphapermutations}. 
Clearly, the permutations $\iota_1, \ldots, \iota_4$ generate a faithful action of
$(\Z_2)^4$ on $\m I=\{1,\ldots,24\}$ which leaves the labels in the special octad 
$\m O_9$ invariant pointwise. Furthermore, one readily sees that 
$\gamma_1,\, \gamma_2,\, \gamma_3,\, \alpha_1,\, \alpha_2$
induce a faithful action of $A_7$ on $\m O_9\setminus\{5\}$. 
Hence $M_\Pi\supset (\Z_2)^4\rtimes A_7$.

On the other hand, by construction both our groups
$\m T_{192}$ and $\m T_{64}$ act on $N$ as subgroups
of the stabilizer subgroup of $M_{24}$ of our special octad $\m O_9$.
According to \cite{tod66} this group is a maximal subgroup of $M_{24}$, and 
it is isomorphic to $(\Z_2)^4\rtimes A_8$, where $A_8$ is realised as the
group of permutations undergone by the $8$ points of $\m O_9$. Since
in addition, both our groups $\m T_{192}$ and $\m T_{64}$ stabilize the
label $5=n_0\in\m O_9$, we find that both of them are subgroups of
the corresponding maximal subgroup $(\Z_2)^4\rtimes A_7$ of $M_{23}$. 
In particular, $M_\Pi\subset (\Z_2)^4\rtimes A_7$, thus by
the above $M_\Pi\cong(\Z_2)^4\rtimes A_7$ is isomorphic
to the overarching finite symmetry group of
Kummer surfaces, as claimed.

Consider a Kummer surface $X$ with
induced dual K\"ahler class and holomorphic symplectic automorphism group $G$,
and the corresponding lattice $M_G$ that was constructed in
theorem \ref{biglattice}.
Let $i_G\colon M_G\hookrightarrow N(-1)$ denote
the isometric embedding with  $i_G(\m P)=\wt{\m P}_{n_0}(-1)$ according
to proposition \ref{translationonN}, where $n_0\in\m O_9$. Assume first that $n_0=5$.
Then the action of $G$ on $N$ obtained by 
enforcing $G$-equivariance 
on $i_G$ maps $\wt{\m K}_5=\wt{\m P}_5^\perp\cap N$ onto itself.
By the proof of proposition \ref{fullniemeierprop} we know that
$\wt{\m K}_5=\spann_\Z\left\{ f_3,\, f_6,\, f_9,\-f_{15},\, f_{19},\, f_{23},\, f_{24}\right\}$.
Hence $G$ stabilizes the special octad $\m O_9=\{3,\, 5,\, 6,\, 9,\ {15},\, {19},\, {23},\, {24}\}$
and fixes $n_0=5$. Thus, by the above, $G$ is a subgroup of the maximal subgroup
$M_\Pi\cong (\Z_2)^4\rtimes A_7$ of $M_{23}$.
If $n_0\neq 5$, then analogously one finds $G\subset M_\Pi^\prime\cong(\Z_2)^4\rtimes A_7$
where $M_\Pi^\prime$ is the subgroup of $M_{24}$ containing those automorphisms that stabilize
the special octad $\m O_9$ and that fix $n_0$. Using the structure of the stabilizer
group of $\m O_9$ described above, one finds that $M_\Pi^\prime$ is obtained from $M_\Pi$
by conjugation within the subgroup
 $(\Z_2)^4\rtimes A_8$ of $M_{24}$ which stabilizes the octad $\OOO_9$. 
\end{proof}

The group $M_\Pi\cong(\Z_2)^4\rtimes A_7$ obtained in theorem \ref{totalgroup}
has order $40320$. By the results of the theorem it is
also the largest group we can possibly obtain by combining several
holomorphic symplectic automorphism groups of Kummer surfaces
through equivariant embeddings overarched by some linear bijection 
$H_\ast(X,\Z)\longrightarrow N(-1)$ like our map $\Theta$ of theorem 
\ref{Thetaok}.

The construction of an \textsl{overarching} map $\Theta\colon H_\ast(X,\Z)\longrightarrow N(-1)$
for the two primitive embeddings $i_{G_k}\colon M_{\TTT_k}\hookrightarrow N(-1)$ makes 
sense, since we are choosing a fixed marking to identify $H_\ast(X,\Z)$, $H_2(X,\Z)$ with standard unimodular
lattices of signatures $(4,20)$, $(3,19)$, which is induced by the Kummer construction, as was explained
at the end of section \ref{exampleK3}. According to our remarks at the end of section \ref{complex},
this marking allows us to construct Kummer paths in
the smooth connected cover $\wt\MMM_{hk}$ of the moduli space of hyperk\"ahler structures.
Recall that each hyperk\"ahler structure is
represented by the $3$-dimensional subspace $\Sigma\subset H_2(X,\R)=H_2(X,\Z)\otimes\R$
generated by the $2$-dimensional $\Omega\subset H_2(X,\R)$, which specifies the complex structure according to the
Torelli Theorem \ref{torellithm}, and the dual K\"ahler class $\omega$. 
Along Kummer paths, each complex structure and (degenerate) dual
K\"ahler class is induced from the universal cover $\C^2$ of an underlying torus $T(\Lambda)=\C^2/\Lambda$,
{$\Lambda=\mbox{span}_\Z\{\vec\lambda_1,\ldots,\vec\lambda_4\}$,}
for a Kummer surface $\wt{T(\Lambda)/\Z_2}$. Smooth
variation of the {generators $\vec\lambda_1,\ldots,\vec\lambda_4$ of the}
underlying lattice $\Lambda\subset\C^2$
yields a smooth variation between the hyperk\"ahler structures of
any two Kummer surfaces. Our marking allows us
to view the lattice $H_\ast(X,\Z)$ (with generators $\upsilon_0,\, \upsilon;\;
{1\over2}\pi_\ast\lambda_{ij}+{1\over2}\sum_{\vec a\in P_{ij}}\varepsilon_{\vec a}E_{\vec a+\vec b},\,
ij\in\{ 12, 34, 13, 42, 14, 23\},\,\vec b\in\F_2^4,\,\varepsilon_{\vec a}\in\{\pm1\}$) as fixed, while
the $3$-dimensional space $\Sigma$
generated by $e_1\vee e_3-e_2\vee e_4,\, e_1\vee e_4+e_2\vee e_3,\,
e_1\vee e_2+e_3\vee e_4$ (see \req{homologyinv1}, \req{homologyinv2}) varies, since the expressions
of each $e_i$ in terms of $\vec\lambda_1,\ldots,\vec\lambda_4$ vary with $\Lambda$. In view of this
remark it is natural to ask whether our overarching map $\Theta$ is compatible with the embeddings
$i_{G^s}\colon M_{G^s}\hookrightarrow N(-1)$ of proposition \ref{translationonN} along some Kummer
path connecting the tetrahedral and the square Kummer surface. Indeed, this is the case:
\begin{theorem}\label{Kummerpath}
There exists a smooth path in the smooth connected cover $\wt\MMM_{hk}$ of the moduli space
of hyperk\"ahler structures of \mbox{\rm K3} with the following properties: 

Let $\Sigma_s$ denote the positive
definite oriented $3$-dimensional subspace of $H_2(X,\R)$ generated by the $2$-dimensional $\Omega_s$, which 
specifies the complex structure according to the Torelli Theorem \mbox{\rm\ref{torellithm}}, and $\omega_s$, the
dual K\"ahler class, at time $s\in[0,1]$ along the path. These data are specified by their relative positions with
respect to the lattice $H_2(X,\Z)$ which is given in terms of the even unimodular lattice of
signature $(3,19)$ arising from the Kummer construction, as explained at the end of section
\mbox{\rm\ref{exampleK3}}. Then for each $s\in[0,1]$\mbox{\rm:}
\begin{itemize}
\item
$\Omega_s,\,\omega_s$ give the complex structure and \mbox{\rm(}degenerate\mbox{\rm)} dual K\"ahler class
of a Kummer surface $\wt{T(\Lambda_s)/\Z_2}$ and are induced from the standard structures
on $\C^2$, the universal cover of $T(\Lambda_s)=\C^2/\Lambda_s$.
\item
$\Omega_0,\,\omega_0$ give the complex structure and \mbox{\rm(}degenerate\mbox{\rm)} dual K\"ahler class
of the square Kummer surface $X_0$.
\item
$\Omega_1,\,\omega_1$ give the complex structure and \mbox{\rm(}degenerate\mbox{\rm)} dual K\"ahler class
of the tetrahedral Kummer surface $X_{D_4}$.
\item
Let $G^s$ denote the symmetry group of the Kummer surface with data $\Omega_s,\, \omega_s$
and $M_{G^s}\subset H_\ast(X,\Z)$ the lattice of theorem \mbox{\rm\ref{biglattice}}. Then for the map 
$\Theta\colon H_\ast(X,\Z)\longrightarrow N(-1)$ constructed in theorem \mbox{\rm\ref{Thetaok}},
$i_{G^s}:=\Theta_{\mid M_{G^s}}\colon M_{G^s}\hookrightarrow N(-1)$ is an 
isometric, primitive embedding. Enforcing $G^s$-equivariance, $G^s$ acts as subgroup
of the overarching symmetry group $M_\Pi\cong(\Z_2)^4\rtimes A_7\subset M_{24}$ on $N$ which
was found in theorem \mbox{\rm\ref{totalgroup}}.
\end{itemize}
In particular, $s\mapsto\Sigma_s$, $s\in[0,1]$ describes a Kummer path along which
$i_G(\Pi)=\wt\Pi$ is constant.
\end{theorem}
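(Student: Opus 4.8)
The plan is to join $X_0$ and $X_{D_4}$ by the shortest possible route. Since the lattices $\Lambda_0$ of \req{standardgenerators} and $\Lambda_{D_4}$ of \req{d4gen} agree in their first three generators $\vec\lambda_1=(1,0)$, $\vec\lambda_2=(i,0)$, $\vec\lambda_3=(0,1)$, I would keep these fixed and linearly interpolate only the fourth one,
\be
\vec\lambda_4(s):=(1-s)(0,i)+s\cdot{\textstyle\hf}(i+1,i+1)
=\Bigl(\tfrac s2,\tfrac s2,\tfrac s2,1-\tfrac s2\Bigr)\in\C^2\cong\R^4,\qquad s\in[0,1],
\ee
and set $\Lambda_s:=\spann_\Z\{\vec\lambda_1,\vec\lambda_2,\vec\lambda_3,\vec\lambda_4(s)\}$. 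The determinant of these four generators equals $1-\tfrac s2>0$ on $[0,1]$, so $\Lambda_s$ has rank $4$ and $T(\Lambda_s)=\C^2/\Lambda_s$ is a genuine complex $2$-torus, with $\wt{T(\Lambda_s)/\Z_2}$ a Kummer surface, for every $s$. Using the fixed marking of section \ref{exampleK3} and $\vec e_4=\tfrac1{1-s/2}\bigl(\vec\lambda_4(s)-\tfrac s2(\vec\lambda_1+\vec\lambda_2+\vec\lambda_3)\bigr)$, I would feed \req{homologyinv1} and \req{homologyinv2} through $\pi_\ast$ to obtain $\Omega_1^{(s)},\Omega_2^{(s)},\omega^{(s)}$ as explicit combinations of the $\pi_\ast\lambda_{ij}$, rational in $s$; at $s=0$ they specialise to the classes of proposition \ref{standardcpxstr} and at $s=1$ to those of proposition \ref{tetrahedralcpxstr}. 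Thus $s\mapsto\Sigma_s:=\spann_\R\{\Omega_1^{(s)},\Omega_2^{(s)},\omega^{(s)}\}$ is a smooth Kummer path in $\wt\MMM_{hk}$ with the two required endpoints.

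Next I would reduce the claim about $\Theta$ to a statement about symmetry groups alone. As $\Theta$ is a $\Z$-module isomorphism $H_\ast(X,\Z)\to N(-1)$ and $M_{G^s}$ is primitive in $H_\ast(X,\Z)$ by proposition \ref{latticechar}, the image $\Theta(M_{G^s})$ is automatically primitive in $N$; moreover $\Pi\subset M_{G^s}$ for all $s$ and $\Theta_{|\Pi}\colon E_{\vec a}\mapsto f_{I^{-1}(\vec a)}$ is independent of $s$, so $i_{G^s}(\Pi)=\wt\Pi$ is constant along the path. The remaining assertions---that $i_{G^s}$ is an isometry and is $G^s$-equivariant with image in $M_\Pi$---hold at the endpoints by theorems \ref{Thetaok} and \ref{moreTheta} together with theorem \ref{totalgroup}. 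For the interior I would invoke that $G^s\supseteq G_t$ always (proposition \ref{findgeneric}); whenever $G^s=G_t$ one has $M_{G^s}=\m P$ by section \ref{genericKummer}, and on $\m P$ the map $\Theta$ restricts precisely to the isometry onto $\wt{\m P}_5(-1)$ built in the proof of theorem \ref{Thetaok}, with $G_t$-equivariance and $G_t\subset M_\Pi$ furnished by proposition \ref{translationonN} and theorem \ref{totalgroup}. Hence the entire theorem reduces to showing that the path meets an enhanced symmetry group only at its two ends.

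The heart of the argument, and the step I expect to be the main obstacle, is therefore to prove $G^s=G_t$ for every $s\in(0,1)$. By proposition \ref{sdproduct} this is equivalent to the statement that the group $G_T^\prime$ of holomorphic symplectic automorphisms of $T(\Lambda_s)$ fixing $0$ reduces to $\{\pm1\}$. Any nontrivial class in $G_T^\prime/\{\pm1\}$ is represented by an $M\in\mathrm{SL}(2,\C)$ of finite order exceeding $2$ with $M\Lambda_s=\Lambda_s$; since $M$ preserves $dz_1\wedge dz_2$ it fixes $\Omega_s$ pointwise, and since it preserves the induced Kähler metric it fixes the whole $3$-plane $\Sigma_T^{(s)}$, acting as a finite-order isometry of the rank-$3$ negative definite lattice $(\Sigma_T^{(s)})^\perp\cap H_2(T,\Z)$. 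By Fujiki's classification (proposition \ref{tetrahedralsym} and \cite{fu88}) such an $M$ must have order $3$, $4$ or $6$, so its eigenvalues are primitive $n$-th roots of unity with $n\in\{3,4,6\}$ and its characteristic polynomial on $\Lambda_s\cong\Z^4$ is a product of cyclotomic polynomials. For each of these finitely many conjugacy types, the demand that $M$ have an integral matrix in the basis $\vec\lambda_1,\ldots,\vec\lambda_4(s)$ while commuting with the complex structure $J_s$ becomes a finite system of polynomial equations in the single parameter $s$; eliminating the entries of $M$, I would verify that the resulting polynomials vanish on $[0,1]$ only at $s=0$, where they produce the binary dihedral group $\m O$ of order $8$, and at $s=1$, where they produce the binary tetrahedral group $\m T$ of order $24$.

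This last verification is the genuinely arithmetic input: it asserts that the endomorphism ring of $T(\Lambda_s)$ acquires a finite-order unit beyond $\pm1$ at precisely the two endpoints, equivalently that the open arc $s\in(0,1)$ avoids every locus of complex tori carrying an automorphism of order $3$, $4$ or $6$. I expect this to be the delicate point, because a priori such a real one-parameter family could cross further special loci in the interior, so the transversality has to be established by the explicit elimination rather than by a soft dimension count. Granting it, $G^s=G_t$ throughout $(0,1)$, and the reduction of the preceding paragraph then yields all four listed properties; in particular $s\mapsto\Sigma_s$ is a Kummer path along which $i_G(\Pi)=\wt\Pi$ stays constant, which completes the proof.
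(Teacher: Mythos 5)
Your overall architecture matches the paper's: both arguments reduce the theorem to exhibiting a Kummer path from $X_0$ to $X_{D_4}$ along which the symmetry group collapses to the translational group $G_t\cong(\Z_2)^4$ at every interior time, and your observation that for $G^s=G_t$ one has $M_{G^s}=\m P$, on which $\Theta$ restricts to the $G_t$-equivariant isometry onto $\wt{\m P}_5(-1)$, is exactly the reduction the paper uses. The gap is in the one step you yourself flag as delicate: you never establish that $G^s=G_t$ for $s\in(0,1)$ along your path. For your specific choice --- the straight-line interpolation of $\vec\lambda_4$ with $\vec\lambda_1,\vec\lambda_2,\vec\lambda_3$ held fixed --- this is a genuine arithmetic assertion about avoiding a countable family of special loci (tori admitting a symplectic automorphism of order $3$, $4$ or $6$), and a rigid one-parameter segment has no a priori reason to miss all of them. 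The ``finite system of polynomial equations in $s$'' you propose to eliminate is neither written down nor solved, so the proof is incomplete precisely at its crux; note also that every $T(\Lambda_s)$ in your family contains $\C/\Z[i]$ as the $z_1$-axis, so the segment visibly lies in a non-generic sublocus and no soft genericity appeal is available.

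The paper sidesteps this difficulty by not interpolating the lattice generators at all. Instead it interpolates the negative definite $3$-plane $\Sigma_{T,s}^\perp\subset\pi_\ast H_2(T,\R)$ spanned by $(1-s)J_k^\perp+sI_k^\perp$ (with $J_k^\perp$, $I_k^\perp$ as in \req{doublefreduced} and \req{homologyperp}), perturbed by $2s\delta_k\cdot\pi_\ast\lambda_{12}$ for $s\in[0,\hf]$ and $2(1-s)\delta_k\cdot\pi_\ast\lambda_{12}$ for $s\in[\hf,1]$, with $\delta_1,\delta_2,\delta_3$ small and linearly independent over $\Q$. The perturbations vanish at the endpoints, so the path still joins $X_0$ to $X_{D_4}$, but for $s\in(0,1)$ the rational independence forces $\Sigma_{T,s}^\perp\cap\pi_\ast H_2(T,\Z)=\{0\}$, which via proposition \ref{sdproduct} yields $G^s=G_t$ by a one-line lattice-point argument; the path is then recovered as $\Sigma_s=(\Sigma_{T,s}^\perp)^\perp\cap\pi_\ast H_2(T,\R)$, which is automatically of Kummer type since it lies in $\pi_\ast H_2(T,\R)$. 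To repair your proof you must either carry out the elimination in full for your segment or build this kind of genericity into the interpolation as the paper does.
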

\begin{proof}
We show that there exists a Kummer path $s\mapsto \Sigma_s$, $s\in[0,1]$,
in $\wt\MMM_{hk}$ which connects the square Kummer surface $X_0$ with
the tetrahedral Kummer surface $X_{D_4}$ such that for every
$s\in(0,1)$, the only holomorphic symplectic automorphisms are the translational
ones, i.e.\ $G^s\cong(\Z_2)^4$, $M_{G^s}=\Pi\oplus\spann_\Z\{\upsilon_0-\upsilon\}$.
Then the claims immediately follow by construction, since compatibility with 
the translational group $G^s=G_t\cong(\Z_2)^4$ is 
incorporated at the beginning of our construction of $\Theta$ in section \ref{genericKummer}.

Since the desired path is a Kummer path, we must ensure $\Sigma_s\subset \pi_\ast H_2(T,\R)$
for all $s\in [0,1]$ with the notations of section \ref{exampleK3}. Proposition \ref{sdproduct} implies
that  {$G^s=G_t\cong(\Z_2)^4$ if $\left(H_2(X,\Z)^{G^s}\right)^\perp\cap H_2(X,\Z)=\Pi$}
or equivalently $\left(H_2(X,\Z)^{G^s}\right)^\perp\cap \pi_\ast H_2(T,\Z)=\{0\}$. In general, 
$\Sigma_s\subset H_2(X,\Z)^{G^s}\otimes\R$ and thus 
$(\Sigma_s)^\perp \supset \left(H_2(X,\Z)^{G^s}\right)^\perp$;
hence it suffices to ensure $(\Sigma_s)^\perp\cap \pi_\ast H_2(T,\Z)=\{0\}$ for all $s\in(0,1)$. 
Using the $I_k^\perp$ and $J_k^\perp$ of \req{homologyperp} and \req{doublefreduced}
we set $\Sigma_{T,s}^\perp:=\spann_\R\left\{ I_{1,s}^\perp,\, I_{2,s}^\perp,\, I_{3,s}^\perp\right\}
\subset\pi_\ast H_2(T,\R)$,
$$
I_{k,s}^\perp := \left\{
\begin{array}{ll}
(1-s) J_k^\perp + s I_k^\perp + 2s\delta_k\cdot \pi_\ast \lambda_{12} &\mbox{if } s\in[0,{1\over2}],\\[3pt]
(1-s) J_k^\perp + s I_k^\perp + 2(1-s)\delta_k\cdot \pi_\ast \lambda_{12}&\mbox{if } s\in[{1\over2},1],
\end{array}
\right.
$$
with sufficiently small $\delta_1,\,\delta_2,\,\delta_3\in\R$, such that $\Sigma_{T,s}^\perp$ is negative definite,
which are linearly independent over $\Q$. One checks that 
$I_{k,0}^\perp = J_k^\perp$ and $I_{k,1}^\perp = I_k^\perp$ for $k\in\{1,2,3\}$, and
$\Sigma_{T,s}^\perp\cap\pi_\ast H_2(T,\Z) =\{0\}$ for all $s\in(0,1)$. Hence
$\Sigma_s:=\left(\Sigma_{T,s}^\perp\right)^\perp\cap\pi_\ast H_2(T,\R)$ defines a
Kummer path from the square Kummer surface to the tetrahedral Kummer surface
with the desired properties.
\end{proof}

Although we have not discussed the detailed proof of this statement here,
it is worthwhile mentioning that our map $\Theta$ is almost uniquely determined
by the requirement that it induces isometric embeddings of the type
constructed in proposition \ref{translationonN}
for both lattices
$M_{\m T_{192}}$ and $M_{\m T_{64}}$. 
Our explicit formulas, of course, first of all depend on our choice of $\m O_9$ as
our special octad, and more precisely on the particular embedding of the 
Kummer lattice $\Pi$ in the Niemeier lattice $N(-1)$ of type $A_1^{24}$
that we constructed in the proof of proposition \ref{piinn}. However, by
proposition \ref{uniquepi} any other embedding of $\Pi$ is related to the
one used by us by an automorphism of the Niemeier lattice $N$. On
the level of subgroups of the Mathieu group $M_{24}$, this amounts
to conjugating by some element of $M_{24}$. 

Once our embedding of $\Pi$ in $N(-1)$ has been chosen, our requirements on $\Theta$ turn out
to enforce the choices \req{identifymuij} for the quadruplets $Q_{ij}$
in our special octad $\m O_9$. Concerning the choices of signs in \req{signchoices},
as mentioned in the discussion of equation \req{lambdaq}, these are unique up
to the freedom for each $n\in\m I$ to replace $f_n$ by $-f_n$ everywhere,
and up to some choices of signs in the image of $\upsilon_0+\upsilon$.

Now recall once again that the involutions on the Niemeier lattice $N$ of type $A_1^{24}$
induced by $f_n\mapsto -f_n$ should not play a role in describing symmetries of Kummer
surfaces, or in fact their associated superconformal field theories:
On the one hand, by proposition \ref{M24onN} the Mathieu group 
$M_{24}$, which we are interested in, is obtained as factor group of $\Aut(N)$
by the normal subgroup $(\Z_2)^{24}$ generated by these involutions. 
On the other hand, consider a symplectic automorphism of a K3 surface
$X$ whose induced action on $H_\ast(X,\Z)$ is given by the lattice automorphism
$\alpha$. We claim that $\alpha(E)\neq-E$
for every $E\in H_\ast(X,\Z)$ with $\langle E,E\rangle=-2$. Indeed,
$\alpha(v)=-v$ for some $v\in H_\ast(X,\Z)$ immediately implies $v\perp L^\alpha_\bullet$, where
$L^\alpha_\bullet:=H_\ast(X,\Z)^\alpha$ is the sublattice of $H_\ast(X,\Z)$ which is invariant
under $\alpha$. Since $L^\alpha_\bullet=H_0(X,\Z)\oplus L^\alpha\oplus H_4(X,\Z)$ with
$L^\alpha:=H_2(X,\Z)^\alpha$,
this implies that $v\in L_\alpha:=(L^\alpha_\bullet)^\perp\cap H_\ast(X,\Z)=(L^\alpha)^\perp\cap H_2(X,\Z)$.
Hence by the very Torelli Theorem \ref{niktorelli}, $\langle v,v\rangle\neq-2$.
By arguments analogous to those used for symplectic automorphisms of K3 surfaces, 
one may deduce the following from the definition of 
\textsc{symmetries of non-linear $\sigma$-models on K3 preserving the $\m N=(4,4)$
superconformal algebra}
given in \cite{ghv11}: Firstly, any such  symmetry induces a lattice automorphism 
$\alpha$ on $H_\ast(X,\Z)$, which
in turn uniquely determines the symmetry. Furthermore, 
$\alpha(E)\neq-E$
for every $E\in H_\ast(X,\Z)$ with $\langle E,E\rangle=-2$. In particular, an involution 
of the Niemeier lattice $N$ induced by $f_n\mapsto -f_n$ can never correspond
to a symplectic automorphism of a K3 surface or to a symmetry of a non-linear $\sigma$-model
on K3 which preserves the $\m N=(4,4)$ superconformal algebra.
%
\section{Conclusions and outlook}
This work gives a novel perspective on holomorphic symplectic automorphisms of  Kummer surfaces
whose dual K\"ahler class is induced by the underlying complex torus. 
While finite symplectic automorphism groups 
of K3 surfaces, in general, have been classified and described by Mukai and Kondo in
their seminal works \cite{mu88, ko98}, we improve their description 
in the case of Kummer surfaces through an identification 
of lattice automorphisms. 
This  provides a concrete representation in terms of 
permutations of $24$ elements which are symmetries of the binary extended Golay code,
and thus in terms of subgroups of the Mathieu group $M_{24}$. 

For a K3 surface $X$ 
with holomorphic symplectic automorphism group $G$ and $L_G\subset H_\ast(X,\Z)$ 
the orthogonal complement of the $G$-invariant part of the K3-homology,  one finds by the results of
Kondo \cite{ko98} a Niemeier lattice $\wt N$, which carries a faithful $G$-action, such that $L_G$ can be
primitively and $G$-equivariantly embedded in $\wt N(-1)$. We improve this result for
all Kummer surfaces whose dual K\"ahler class is induced from the underlying complex torus:
We prove that in this case the Niemeier lattice $\wt N$, which is not further specified in Kondo's construction,
can be replaced by the Niemeier lattice $N$ of type $A_1^{24}$. Moreover, the lattice $L_G$
can be replaced by a lattice $M_G$ with $\rk(M_G)=\rk(L_G)+2$ which contains $L_G$,
that is, $M_G$ can be primitively and $G$-equivariantly 
embedded in $N(-1)$. We deduce that for every Kummer surface whose dual K\"ahler class
is induced from the underlying torus, one can construct a $\Z$-linear bijection 
${\theta\colon}H_\ast(X,\Z)\longrightarrow N(-1)$ whose restriction to $M_G$ yields an
isometric, $G$-equivariant primitive embedding. 

We explicitly construct a  map {$\Theta$} with the above-mentioned properties for the tetrahedral Kummer
surface $X_{D_4}$ with holomorphic symplectic automorphism group $\m T_{192}$
of order $192$. 
The nature of this map depends on the details of the complex structure 
and dual K\"ahler class, and its analog for other Kummer surfaces must  therefore be constructed 
case by case, 
but is not difficult to obtain in the framework set up in our work.

Our construction uses a description of the lattice $H_\ast(X,\Z)$ by means of lattice
generators naturally arising in the Kummer construction. This provides us with a common,
fixed marking for all Kummer surfaces. Varying the {generators $\vec\lambda_1,\ldots,\vec\lambda_4$
of the} defining lattice of the underlying torus
then amounts to a deformation along a path in the smooth connected cover of the
moduli space of hyperk\"ahler structures on K3. In particular, having fixed a common,
natural marking for all Kummer surfaces allows us to simultaneously study distinct
Kummer surfaces.
Surprisingly, our map {$\Theta$}, apart from restricting to an isometric,
$\m T_{192}$-equivariant embedding of $M_{\m T_{192}}$ in $N(-1)$, also restricts to 
an isometric, $G$-equivariant embedding of $M_G$ in $N(-1)$ for a different Kummer surface $X_0$,
namely the one obtained from the square torus $T_0$. The holomorphic symplectic
symmetry group $G=\m T_{64}$ of $X_0$ has order $64$. 
We find a smooth path in the smooth connected cover of the moduli space of hyperk\"ahler structures
on K3 between these two Kummer surfaces, such that {$\Theta$} is compatible with the symmetries
of all Kummer surfaces along the path. The latter have the translational group 
$G_t \cong (\mathbb{Z}_2)^4$ as symmetry group, except {for the Kummer surfaces} at the endpoints, which have
symmetry groups $\m T_{64}$ and $\m T_{192}$ as explained above.
{The translational group $G_t\cong(\Z_2)^4$, which is compatible with $\Theta$ by construction, 
is the symmetry group of generic Kummer surfaces.}

By construction, the lattice $M_G$ is
negative definite for every Kummer surface. Both for $X_{D_4}$ and for $X_0$ it turns out to have the maximal 
possible rank $20$.
The smallest primitive sublattice
of $H_\ast(X,\Z)$ containing both lattices $M_{\m T_{192}}$ and $M_{\m T_{64}}$ is
$H_2(X,\Z)\oplus\spann_\Z\{\upsilon_0-\upsilon\}$,
where $\upsilon_0\in H_0(X,\Z)$ and $\upsilon\in H_4(X,\Z)$ with $\langle\upsilon_0,\upsilon\rangle=1$. 

Our map {$\Theta$} thus yields an
action of the groups $\m T_{192}$ and $\m T_{64}$ on the Niemeier lattice $N$,
realising each of them  as a subgroup of the Mathieu group $M_{24}$. Again,
the nature of the respective permutations depends on the details of the complex structure 
and dual K\"ahler class, and their analog for other Kummer surfaces can be worked out
similarly.
There are further Kummer surfaces whose symmetries are consistently described by our
specific map {$\Theta$}.
For example, the $\Z_3$-symmetric torus $T_{(3)}=\C^2/\Lambda_{(3)}$ with $\Lambda_{(3)}$ generated by 
\be
\textstyle
\vec\lambda_1=(1, 0), \quad\vec\lambda_2=(i,0),\quad 
\vec\lambda_3=(\hf, {\sqrt3\over2}),\quad\vec\lambda_4=(-{i\over2},{i\sqrt3\over2})\quad\in\C^2 
\ee
yields a Kummer 
surface whose holomorphic symplectic automorphism group $G$ is also 
compatible with our map {$\Theta$}, if for the two-cycles yielding complex structure and polarization 
one uses
$$
\wt\Omega_1 = e_1\vee e_2+e_3\vee e_4,\quad
\wt\Omega_2 = -e_1\vee e_4-e_2\vee e_3, \quad
\wt\omega = e_1\vee e_3-e_2\vee e_4.
$$
In this case our map {$\Theta$} restricts to a primitive $G$-equivariant map
from $M_G$ to $N(-1)$, which however is not isometric. 

While {$\Theta$} is by no means compatible with all the symmetries of all Kummer surfaces,
the surprising observation is the fact that it yields a device which allows to
simultaneously realise several finite symplectic automorphism groups of different Kummer
surfaces in terms of subgroups of $M_{23}$ acting on $N$. 
The combined action of these groups yields the overarching finite symmetry group 
 $(\Z_2)^4\rtimes A_7$ of Kummer surfaces, which
contains every holomorphic symplectic automorphism group of a Kummer surface
with induced dual K\"ahler class as a subgroup, 
and it is the maximal group that our techniques can describe, so far. 

Although we do not present the proof in this work, we remark that 
for the above-mentioned $\Z_3$-symmetric torus $T_{(3)}$ we can show that 
no isometric embedding $M_G\hookrightarrow N(-1)$ exists which maps
$\upsilon_0-\upsilon$ to $f_5=\Theta(\upsilon_0-\upsilon)$. It follows that 
along a path from $X_0$, say, to $\wt{T_{(3)}/\Z_2}$, starting with the
embedding $M_{G_0}\hookrightarrow N(-1)$ constructed in this work, 
$\upsilon_0-\upsilon$ must change its image. This probably implies
that an extension of our techniques to more general paths in moduli space will yield subgroups of
$M_{24}$ rather than $M_{23}$, as will be explained in \cite{tawe12}.

A crucial observation which underlies our construction is the fact that the Niemeier lattice
$N$ of type $A_1^{24}$ contains a primitive sublattice $\wt\Pi$ which is isometric to the 
Kummer lattice $\Pi$, up to a total inversion of signature. This observation -- though not
very hard to prove -- is new, and it gives a geometric meaning to a fact known
to group theorists, namely that the complement of an octad in the extended binary
Golay code naturally carries the structure of a $4$-dimensional vector space over $\F_2$. 
In fact, the Kummer lattice $\Pi$
is a primitive sublattice of our lattice $M_G$ for every Kummer surface, while 
$\Pi$ can never be a sublattice of the lattice $L_G\subset M_G$ used
by Kondo in his construction. Our lattice $M_G$ can be generated by Kondo's lattice
$L_G$ together with the Kummer lattice $\Pi$ and the vector $\upsilon_0-\upsilon$
mentioned above.
This vector gives a geometric interpretation to the invariant root needed 
in Kondo's construction, whose role had been mysterious, so far.

The lattices $\Pi$ and $\wt\Pi$, however, govern the
symmetries in our holomorphic symplectic automorphism groups $G$, as we
shall explain now. By the known structure of $\Pi$ and $\wt\Pi$, 
the group of isometries for both lattices is isomorphic to $\Aff(\F_2^4)\ltimes(\Z_2)^{16}$,
where each factor of $(\Z_2)^{16}$ is induced by inverting the sign of one lattice vector
 on which the quadratic form takes value $\pm 2$. Hence restricting to the
analogs of \textsc{effective automorphisms} known from geometry, we can restrict
our attention to the affine linear maps in $\Aff(\F_2^4)=(\Z_2)^4\rtimes \GL(4,\F_2)$.
 
Consider the action of a 
lattice automorphism $\alpha$ on $\Pi\cong\wt\Pi(-1)$ with $\alpha\in \Aff(\F_2^4)$.
If  $\alpha$ is an element of the normal subgroup $(\Z_2)^4$, then it corresponds to a translation on the hypercube 
underlying the lattices $\Pi$ and $\wt\Pi$, and it induces a trivial action on the
discriminant groups $\Pi^\ast/\Pi$ and $\wt\Pi^\ast/\wt\Pi$. Thus $\alpha$ can be
trivially extended to a lattice automorphism of the 
K3-homology $H_\ast(X,\Z)$ and  of the Niemeier lattice $N$ of type $A_1^{24}$.
We claim that extensions to lattice automorphisms of $H_\ast(X,\Z)$ and $N$
exist for every $\alpha\in \Aff(\F_2^4)$ which is represented by an action as automorphism
of $\Pi\cong\wt\Pi(-1)$. For the K3-homology $H_\ast(X,\Z)$ this follows 
immediately from \cite[Thm.~1.7]{ni80b} (using \cite[Remark 1.2]{ni80b}), 
since $\Pi^\perp\cap H_\ast(X,\Z)\cong U^3(2)\oplus U$ 
is an even lattice of signature $(4,4)$ whose discriminant group has length $6=4+4-2$.
For the Niemeier lattice $N$ it is a consequence of the fact that the stabilizer
group of an octad in the Golay code is isomorphic to $(\Z_2)^4\rtimes A_8$,
where $A_8$ is realised as the group of permutations undergone by the $8$
points of that octad according to \cite{tod66}. Indeed, Conway's proof of \cite[Thm.~2.10]{co71}
shows that the induced action on $\wt\Pi$ is precisely our 
group of effective lattice automorphisms $\Aff(\F_2^4)\cong (\Z_2)^4\rtimes A_8$.
Hence every $\alpha\in \Aff(\F_2^4)\subset\Aut(\wt\Pi)$ is the restriction of some lattice
automorphism of $N$ to the sublattice $\wt\Pi$, which in fact  is uniquely determined. 
In the case of extending $\alpha$ to the full K3-homology, we do not have a uniqueness
statement in general. However, restricting to lattice automorphisms which are induced by some
holomorphic symplectic automorphism of a K3 surface, uniqueness follows from the 
Torelli Theorem. 

In summary, the lattice automorphisms that are of interest in this work are completely
determined by their restrictions to $\Pi$ and $\wt\Pi$, respectively. This explains why
extending Kondo's lattice $L_G$ to a lattice $M_G$ which contains $\Pi$ is such a useful
idea. That this idea works is one of our main results. 
{It in particular implies that a bijection $\theta$ between $H_\ast(X,\Z)$ and $N(-1)$ 
can be constructed which is compatible with the symmetry group $G_t\cong(\Z_2)^4$ of 
generic Kummer surfaces.}

While it thus may be tempting to expect some natural geometric interpretation of the group
$(\Z_2)^4\rtimes A_8$ in terms of holomorphic symplectic automorphisms 
of Kummer surfaces, we emphasize again that the overarching  
symmetry group  $(\Z_2)^4\rtimes A_7$ is the largest group which can be induced by a map $\Theta$
obeying all our assumptions. As mentioned above, for other pairs of Kummer surfaces one
will be forced to drop the assumption that an overarching map fixes the images of $\upsilon_0-\upsilon$
in $N(-1)$. Then one may be able to generate $(\Z_2)^4\rtimes A_8$ {\cite{tawe12}}.

It should be noted that our map {$\Theta$} is an isometry only
when restricted to the appropriate lattices $M_G$. Equivalently, it need not be $G$-equivariant
outside of $M_G$. The virtue of the specific map {$\Theta$} which we construct in this work  
is therefore its compatibility with two \textsl{distinct} Kummer surfaces, namely the tetrahedral one
and the one obtained from the square torus, and with all the Kummer surfaces along
a special path connecting the two. 
This property defines {$\Theta$} uniquely, up to
a few choices of signs discussed in section \ref{overmen}, and only this property allows us to combine 
the actions of the symmetry groups $\m T_{192}$ and $\m T_{64}$ of the two
Kummer surfaces to obtain the overarching symmetry group
$(\Z_2)^4\rtimes A_7$.

This group has $40320$ elements and thus is by orders of magnitude larger than the biggest
finite symplectic automorphism group of any K3 surface. Our techniques thus 
mitigate the ``order of magnitude'' problem that was mentioned in the introduction.
We are currently generalising
our techniques by replacing the Niemeier lattice $N$ by the Leech lattice. This way we hope to obtain the action of the 
entire Mathieu group $M_{23}$ or $M_{24}$ on the Leech lattice
merely using classical symmetries of K3 surfaces. This
may mean that ``Mathieu moonshine'' has nothing to do with symmetries
beyond the classical ones.

\section*{Acknowledgements}
We wish to thank Robert Curtis for sharing with us important insights on $M_{24}$ and infecting 
us with the MOG virus. 
We also benefitted from conversations with Geoffrey Mason and Michael Tuite, 
and many participants of the workshop on `Mathieu Moonshine' at ETH Zurich in July 2011, 
including its organiser, Matthias Gaberdiel,
whom we wish to thank for such an inspiring meeting. Matthias Sch\"utt deserves our gratitude for 
a number of helpful comments. Moreover, we are grateful to a 
referee for a correction in the proof of 
Proposition 2.3.3 and comments leading to the Remark 3.2.3.
We thank the London Mathematical Society for a Scheme 4 Collaborative Small Grant. 
AT is  supported in part by a Leverhulme Research Fellowship RF/2012-335. She  thanks Augsburg and Freiburg Universities for their hospitality.
KW is supported in part by ERC-StG grant No.~204757-TQFT. She thanks Durham University for its hospitality.
We  furthermore thank the Hausdorff Research Institute of Mathematics and the Max-Planck Institute for
Mathematics in Bonn
for hospitality -- a major part of
the final version of this work was completed at Bonn.

The free open-source mathematics software system  SAGE\footnote{W.A. Stein et~al., \emph{{S}age 
{M}athematics {S}oftware \mbox{\rm(}{V}ersion
  4.6.2}\mbox{\rm)}}, The Sage Development Team, 2010, \url{http://www.sagemath.org}}
 and in particular its 
component GAP4\footnote{The GAP~Group, \emph{GAP -- Groups, Algorithms, and Programming, 
  Version 4.4.12}; 
  2008,
  \url{http://www.gap-system.org}} were used in checking several assertions made in the group 
theoretical component of our work.

\appendix

 \setcounter{equation}{0}

 \section{The Mathieu group \boldmath{$M_{24}$}, the binary Golay code and the MOG}
\label{MOG}
The relation between Kummer lattices and the group $M_{24}$ can be made explicit by thinking of $M_{24}$ 
as the proper subgroup of  $A_{24}$ - the group of even permutations of $24$ objects 
labelled by the elements of ${\cal I}=\{1,2,3,\ldots,24\}$ - that preserves the 
extended binary Golay code ${\mathcal G}_{24}$. The latter is the dimension $12$ quadratic residue code 
of length $23$ over the field $\mathbb{F}_2$, extended in such a way that each element is augmented by 
a zero-sum check digit as described in \cite{cosl88}. The vector space ${\mathcal G}_{24}$ contains 
$2^{12}$ vectors called \textsc{codewords}, each being an element of $\mathbb{F}_2^{24}$, with the 
restrictions that their
\textsc{weight} (the number of non-zero entries) is a multiple of $4$, bar $4$ itself {and $20$}. 
The code contains exactly one codeword zero and one codeword where all digits are one, together 
with $759$ octads, $2576$ dodecads and $759$ complement octads.

Besides its description as a vector of $\mathbb{F}_2^{24}$ with components $c_k, k=1,\ldots,24$, a non-zero codeword 
is often represented in the main text
by a subset of ${\cal I}$ of cardinality 8, 12, 16 or 24, whose elements are the integers $k$ labelling $c_k \neq 0$. So for instance, the word
$(1,1,1,0,1,0,0,0,0,0,0,1,0,0,0,1,0,1,0,0,0,1,0,0)$ may be represented by the set \break$\{1,2,3,5,12,16,18,22\}$.
That such a collection of eight labels is actually a weight $8$ codeword
of the binary Golay code can be readily checked by using an extremely powerful (and playful!) technique devised by 
Robert Curtis in the course of his extensive study of $M_{24}$ \cite{cu74} and that we refer to as `mogging', as it uses the Miracle Octad Generator (MOG).
We have used a variant of the original technique, which was developed by Conway 
shortly after, and combines the hexacode ${\m H}_6$ with the Miracle Octad Generator 
(MOG). These tools are well-documented in the literature (see  \cite{cosl88} for instance), 
and we therefore confine ourselves to the bare essentials.

The hexacode ${\m H}_6$ is a $3$-dimensional code of length $6$ over the field of four 
element $\mathbb{F}_4=\{0, 1, \omega, \omega^2\}$, with $\omega^3=1$, $1+\omega=\omega^2$, and 
$\bar{\omega}:=\omega^2$.
It may be defined as
 $$
 {\m H}_6=\{ (a, b, \phi(0), \phi(1), \phi(\omega), \phi (\bar{\omega})) | a, b, \phi(0) \in \mathbb{F}_4, \, \phi(x):=ax^2+bx+\phi(0)\}.
 $$
The MOG is given by a $4 \times 6$ matrix whose entries are elements of 
$\mathbb{F}_2=\{0, 1\}$, and therefore provides binary words of length $24$. 
To check which among those words are Golay codewords, one proceeds in three steps.
\begin{enumerate}
\item 
Step 1: take a MOG configuration and calculate the parity of each $4$-column of the 
MOG and the parity of the top row (the parity of a column or a row being the parity of the 
sum of its entries); they must be all equal.
\item  
Step 2: to each 4-column with entries $\alpha, \beta, \gamma, \delta \in \mathbb{F}_2$, 
associate the $\mathbb{F}_4$ element $\beta +\gamma \omega +\delta \bar{\omega}$ 
called its \textsc{score}.
\item 
Step 3: check whether the set of six scores calculated from a given MOG form a 
hexacode word. If they do, then the original MOG configuration corresponds to a 
Golay codeword. One may take advantage of the fact that if $(a, b, c, d, e, f)$ is a 
hexacode 
word, then so are $(c, d, a, b, e, f), (a, b, e, f, c, d)$ and $(b, a, d, c, e, f)$.
\end{enumerate}
For instance, the MOG configuration
$$
\begin{array}{|rr|rr|rr|}
\hline
0&1&1&0&0&0\\
0&1&0&0&1&0\\
\hline
0&0&1&1&1&0\\
0&0&0&1&0&0\\
\hline \end{array}
$$
is such that all parities of columns and of top row are even, so the configuration passes 
Step 1. The ordered scores are $(0, 1, \omega, 1, \bar{\omega}, 0)$, and one must 
attempt to rewrite this $6$-vector as $(a, b, \phi(0), \phi(1), \phi(\omega), \phi (\bar{\omega}))$ 
for a quadratic function $\phi(x)=ax^2+bx+\phi(0)$. In the present case, 
$a=0, b=1$ and $\phi(0)=\omega$, so we see that $\phi(x)=x+\omega$, and hence 
$\phi(1)=\bar{\omega}$, which differs from the fourth entry of the ordered scores vector. 
The latter is therefore not a hexacode word, and the MOG configuration does not yield a 
Golay codeword. The power of the MOG in this context resides in the fact that
all Golay codewords can be obtained as MOG codewords.  

The connection between subsets of ${\m I} =\{1,\ldots,24\}$ and Golay codewords is 
made possible through the use of a special $4 \times 6$ array whose entries are the 
elements of ${\m I}$, distributed in one of two ways, according to
\ba
\begin{array}{|rr|rr|rr|}
\hline
24&23&11&1&22&2\\
3&19&4&20&18&10\\
\hline
6&15&16&14&8&17\\
9&5&13&21&12&7\\
\hline \end{array}_{\,M}&&\mbox{ or }\qquad\quad
\begin{array}{|cc|cc|cc|}
\hline
23&24&1&11&2&22\\
19&3&20&4&10&18\\
\hline
15&6&14&16&17&8\\
5&9&21&13&7&12\\
\hline \end{array}_{\,M'}
\label{MOGarray}
\ea
The distribution $M$ is the original Curtis configuration, while the mirror 
distribution $M'$ is due to Conway. Our labelling conventions for the codewords are compatible with the second 
version $M'$, but our results could be rederived using the version $M$, provided an appropriate
relabelling. 

Starting with a subset of eight distinct elements of ${\m I}$, one constructs a MOG 
configuration using $M'$, where entries corresponding to elements in the subset are 
$1$'s and the $16$ other entries are $0$'s. It remains to apply Steps 1 to 3 to 
conclude whether or not the initial set corresponds to a Golay codeword. 
For instance, the set $\{1, 2, 3, 4, 5, 6, 7, 8\}$
corresponds to the MOG configuration
$$
\begin{array}{|rr|rr|rr|}
\hline
0&0&1&0&1&0\\
0&1&0&1&0&0\\
\hline
0&1&0&0&0&1\\
1&0&0&0&1&0\\
\hline \end{array}_{\,M'}\,\,,
$$
which fails the parity test (Step 1), and therefore does not yield a Golay codeword. 
The same technique may be used to check whether a subset of $12$ elements in 
${\m I}$ is a dodecad.

As an application of the MOG technique, one can check that the following
twelve codewords form a basis of the Golay code:

\begin{eqnarray}
{\m O}_1=\{1, 2, 16, 18, 5, 12, 22, 3\}, &&{\m O}_6=\{6, 3, 22, 4, 21, 17, 15, 9\},  \nonumber\\
{\m O}_7=\{7, 4, 19, 10, 12, 15, 8, 16\}, && {\m O}_8=\{8, 7, 15, 17, 11, 23, 18, 16\},  \nonumber\\ 
 {\m O}_{16}=\{16,1, 2, 21, 4, 7, 8, 18\},&& {\m O}_{18}=\{18, 1, 16, 8, 23, 13, 14, 5\},  \nonumber \\
{\m O}_{20}=\{20, 10, 11, 17, 14, 13, 22, 12\},&&{\m O}_{23}=\{23, 3, 9, 19, 13, 18, 8, 11\},  \nonumber\\
 {\m O}_{24}=\{24, 2, 10, 19, 9, 5, 14, 21\}, &&{\m D}_1=\{ 8, 7, 15, 9, 19, 23, 4, 22, 13, 18, 1, 16\},  \nonumber\\
 {\m D}_2=\{18, 23, 13, 22, 3, 1, 11, 10, 2, 16, 7, 8\},&&
{\m D}_3=\{16, 1, 2, 10, 12, 7, 5, 9, 15, 8, 23, 18\}. \nonumber\\
\label{Golaybasis}
\end{eqnarray}

We now indicate how one may show that the 
group ${\m T}_{192}$ preserves the Golay code. Act with each generator in 
\req{t192} on a basis of the Golay code, for instance, the basis introduced in 
\req{Golaybasis},
%
%
and show that the resulting sets of eight or twelve elements correspond to  Golay codewords.
For instance, take the first generator  from \req{t192},
$$
\iota_1=(1, 11)(2, 22)(4, 20)(7, 12)(8, 17)(10, 18)(13, 21)(14, 16),
$$
acting on the first basis vector ${\m O}_1$,
$$
\iota_1({\m O}_1)=\iota_1 (\{1, 2, 16, 18, 5, 12, 22, 3\})=\{11, 22, 14, 10, 5, 7, 2, 3\}.
$$
The corresponding MOG configuration is
$$
\begin{array}{|rr|rr|rr|}
\hline
0&0&0&1&1&1\\
0&1&0&0&1&0\\
\hline
0&0&1&0&0&0\\
1&0&0&0&1&0\\
\hline \end{array}_{\,M'}\quad,
$$
which passes the parity test. Furthermore, the score vector is 
$(\bar{\omega}, 1, \omega, 0, \omega, 0)$ and for the MOG configuration to 
correspond to a Golay codeword, one needs to  identify the score vector with 
$(\bar{\omega}, 1, \phi(0), \phi(1), \phi(\omega), \phi(\bar{\omega})$ where 
$\phi(x)=\bar{\omega}x^2+x+\omega$. Since 
$\phi(0)=\omega, \phi(1)=\bar{\omega}+1+\omega=0,
 \phi(\omega)=\bar{\omega}\omega^2+\omega+\omega= \omega$ 
 and $\phi(\bar{\omega})=\bar{\omega}^3+\bar{\omega}+\omega=0$, we are through: the set
$\iota_1(\{1, 2, 16, 18, 5, 12, 22, 3\})$ is an octad.

A related technique used in this work consists in constructing the unique octad associated with
$5$ given elements of ${\m I}$ via the MOG. Suppose we choose the set $A=\{3, 6, 14, 17, 18\}$ 
and wish to complete $A$ so that one obtains an octad. First, one constructs a MOG start 
configuration where one replaces the elements belonging to $A$ by $1$, and all elements in 
${\m I} \setminus A$ by nothing in the Conway MOG array $M'$ of \req{MOGarray},
$$
\begin{array}{|rr|rr|rr|}
\hline
&&&&&\\
&1&&&&1\\
\hline
&1&1&&1&\\
&&&&&\\
\hline \end{array}_{\,M'}.
$$
Then one observes that, were all the blanks replaced by $0$'s,  $3$ columns would have odd parity, 
and $3$ would have even parity, while the top row also would have even parity. One has three 
extra entries of $1$ to distribute in such a way that the configuration passes the parity test. 
If the solution corresponds to odd parity, columns $3$, $5$ and $6$ cannot accommodate 
more entries of $1$, so the score vector is
partially known and reads $(a, b, \omega, \phi(1), \omega, 1)$, with $a, b \in \mathbb{F}_4$ 
and $\phi(x)=ax^2+bx+c$. So $\phi(0)=c=\omega$ and the system  
$\phi(\omega)=a\bar{\omega}+b\omega+\omega=\omega; \phi(\bar{\omega})=a\omega+b\bar{\omega}+\omega=1$ 
has no solution for $a,b \in \mathbb{F}_4$. One thus tries a solution corresponding to even parity. 
In this case, columns $1$, $2$ and $4$ cannot accommodate more entries of $1$  if one hopes to pass the 
parity test. The partial score is $(0, \bar{\omega}, \omega+n, 0, \phi(\omega), \phi(\bar{\omega}))$, 
with $\phi(x)=\bar{\omega}x+\omega+n$ for $n=0, 1$ or $\bar{\omega}$. The equation 
$\phi(1)=\bar{\omega}+\omega+n=0$ implies that $n=1$. Thus 
$\phi(\omega)=1+\omega+1=\omega$ and $\phi(\bar{\omega})=\omega+\omega+1=1$.
The reconstructed hexacode word is $ (0, \bar{\omega}, \bar{\omega}, 0, \omega, 1)$ 
and the   octad MOG configuration  is thus,
$$
\begin{array}{|rr|rr|rr|}
\hline
0&0&0&0&1&1\\
0&1&1&0&0&1\\
\hline
0&1&1&0&1&0\\
0&0&0&0&0&0\\
\hline \end{array}_{\,M'}.
$$
In other words, the unique octad formed from the partial knowledge encoded in the set 
$\{3, 6, 14, 17, 18\}$ is given by $\{2, 3, 6, 14, 17, 18, 20, 22\}$.

\section{Tables}\label{latticeappendix}

The following tables provide a roadmap through our notations. They list the 
relevant lattices that we introduce in sections \ref{lattice} and \ref{k3geometry} 
and which we use throughout this
work. As a shorthand we introduce the following notation: Consider an even unimodular
lattice $\Gamma$.  To state that $\Gamma$ can be obtained by the gluing construction 
of proposition \ref{glueperp}
from a pair of primitive sublattices $\Lambda,\,\m V\subset\Gamma$, where 
$\Lambda$ and $\m V$ are orthogonal complements of one another in $\Gamma$,
we write $\Gamma=\Lambda\Join\m V$.
\begin{landscape}
\small
$$
\begin{array}{|c||c|c|c||c|c|c|}
\hline
\mbox{rank}&\mbox{sign. \&}&\mbox{lattice}&\mbox{ref.} &\mbox{sign. \&}&\mbox{lattice}&\mbox{ref.} \\
&\mbox{disc. grp.}&&\mbox{in text}&\mbox{disc. grp.}&&\mbox{in text}\\
\hline
&&H_\ast(X,\Z)=H_0(X,\Z)\oplus H_2(X,\Z)\oplus H_4(X,\Z)&&&&\\
24&(4,20)&\cong U^4\oplus E_8^2(-1)&&(24,0)&N=\{\nu \in (A_1^{24})^\ast\mid\bar{\nu} \in {\m G}_{24}\}&\mbox{Prop.~\ref{niefromroot}}\\
&&\mbox{full integral homology lattice of K3 surface } X& &&\mbox{Niemeier lattice of type } A_1^{24}&\\
&\{\mbox{id}\}&H_\ast(X,\Z)={\m K}\Join{\m P}&\mbox{Prop.~\ref{fullk3}}&\{\mbox{id}\}&N=\widetilde{K}\Join \widetilde{\Pi}=\widetilde{{\m K}}_{n_0}\Join \widetilde{\m P}_{n_0}&\mbox{Prop.~\ref{fullniemeierprop}}\\
\hline
&(3,19)&H_2(X,\Z)\cong U^3\oplus E_8^2(-1)&\mbox{after}&&&\\
22&&=K\Join \Pi&\mbox{Def.~\ref{k3definition}}&&&\\
&\{\mbox{id}\}&\mbox{K3-lattice}&\mbox{before \req{lambdaij}}&&&\\
\hline
&&&&&&\\
17&(0,17)&{\m P}:=\Pi\oplus \spann_\Z\{\upsilon_0-\upsilon\}&&(17,0)&\widetilde{{\m P}}_{n_0}=\widetilde{\Pi}\oplus \spann_\Z\{f_{n_0}\}\cong {\m P}(-1)&\\
&(\Z_2)^7&\mbox{primitive sublattice of }H_\ast(X,\Z)&\mbox{Prop.~\ref{fullk3}}&(\Z_2)^7&\mbox{primitive sublattice of }N
&\mbox{Prop.~\ref{fullniemeierprop}}\\
\hline
&&&&&\widetilde{\Pi}\cong\Pi(-1)&\\
16&(0,16)&\Pi=\spann_\Z\{E_{\vec{a}},\vec{a}\in \F_2^4, \hf \sum_{\vec{a}\in H_i}E_{\vec{a}}\} 
&&(16,0)&\widetilde{\Pi}=\spann_\Z\{f_n, n\notin {\m O}_9, \hf \sum_{n \in {\m H}_i}f_n\}&\mbox{Prop.~\ref{piinn}}\\
&&H_i, i=1,..,5 \mbox{ affine hyperplanes in }\F_2^4&&&{\m H}_i, i=1,..,5 \mbox{ octads, }{\m H}_i \cap {\m O}_9=\emptyset&\req{GolayKummer}\\
&(\Z_2)^6&\mbox{Kummer lattice; primitive sublattice of }H_2(X,\Z)&\mbox{Prop.~\ref{Kummerform}}&(\Z_2)^6&\mbox{primitive sublattice of }N&\mbox{Prop.~\ref{piprim}}\\
\hline
&&&&&&\\
8&&&&(8,0)&\widetilde{K}=\spann_\Z\{f_n, n \in {\m O}_9, \hf \sum_{n \in {\m O}_9}f_n\}&\\
&&&&(\Z_2)^6&\mbox{primitive sublattice of }N&\mbox{Prop.~\ref{piinn}}\\
\hline
&&&&&&\\
&(4,3)&{\m K}:={\m P}^{\perp}\cap H_\ast(X,\Z)&&(7,0)&\widetilde{\m K}_{n_0}:=\wt{\m P}_{n_0}^{\perp}\cap N&\\
7&&=K\oplus \spann_\Z\{\upsilon_0+\upsilon\}&&&=\spann_{\Z} \{f_n, n \in {\m O}_9\setminus \{n_0\}\}&\\
&(\Z_2)^7&\mbox{primitive sublattice of }H_\ast(X,\Z)&\mbox{Prop.~\ref{fullk3}}&(\Z_2)^7&\mbox{primitive sublattice of }N
&\mbox{Prop.~\ref{fullniemeierprop}}\\
\hline
&&K:=\pi_\ast(H_2(T,\Z))\cong U^3(2)&&&&\\
6&(3,3)&=\Pi^{\perp} \cap H_2(X,\Z)&\mbox{Prop.~\ref{Kummerform}}&&&\\
&&K=\spann_\Z\{\pi_\ast\lambda_{ij}, i < j, i,j=1,..,4\}&&&&\\
&(\Z_2)^6&\mbox{primitive sublattice of }H_2(X,\Z)&\mbox{after \req{Kummerglue}}&&&\\
\hline
2&(1,1)&H_0(X,\Z)\oplus H_4(X,\Z) \cong U&&&&\\
&\{\mbox{id}\}&\mbox{hyperbolic lattice}&\req{hyperbolic}&&&\\
\hline
\end{array}
$$
\refstepcounter{table}{1}
\label{latticetable}
\begin{center}
{\textbf{Table \thetable:} Notations for lattices introduced in section \ref{lattice} and used throughout the text.}
\end{center}
\end{landscape}

%
\begin{table}[ht]
$$
\begin{array}{|c|c|c|}
\hline
\mbox{signature}&\mbox{lattice}&\mbox{Ref.  in text}\\
&&\\
\hline
&&\\
(4,m_-),&L^G \mbox{ sublattice of }H_\ast(X,\Z) \mbox{ invariant}&\\
m_-\ge 1&\mbox{under a group } G \mbox{ of symplectic}&\mbox{Prop.~\ref{mukaifinite}}
\\
&\mbox{automorphisms preserving the}&\\
&\mbox{dual K\"ahler class}&\\

&&\\
\hline 
&&\\
(0,20-m_-),&L_G=(L^G)^{\perp}\cap H_\ast(X,\Z)&\mbox{Prop.~\ref{mukaifinite}}\\
m_-\ge 1&&\\
\hline
&&\\
(0,22-l_-),&M_G:=L_G\oplus \spann_\Z\{e,\upsilon_0-\upsilon\}&\mbox{Prop.~\ref{latticechar}}\\
l_- \ge 2&&\\
\hline
&&\\
(4,l_- -2),&M'_G:=\pi_\ast((H_2(T,\Z))^{G'_T})\oplus \spann_{\Z}\{\upsilon_0+\upsilon\}&\mbox{Prop.~\ref{latticechar}}\\
l_-\ge 2&=M_G^{\perp}\cap H_\ast(X,\Z)&\\
&&\\
\hline
&&\\
(0,1)&\Pi^G:=\Pi \cap H_\ast(X,\Z)^G&\mbox{Prop.~\ref{latticechar}}\\
&=\spann_\Z\{e\}&\\ 
&&\\
\hline
\end{array}
$$

\caption{\label{latticetableG}Notations for lattices introduced in section \ref{k3geometry} and used throughout the text.
$G$ denotes the group of  dual K\"ahler class preserving symplectic automorphisms  of a Kummer surface $X$
with complex structure and dual K\"ahler class induced from the underlying torus $T$.  
 The group of non-translational holomorphic symplectic automorphisms of $T$
 is $G_T^\prime$.}
\end{table}



%
\end{document}